\def\@biblabel#1{[#1]} 
\def\thebibliography#1{%
    \footnotesize
    \refsection*{{\refname}
        \@mkboth{\uppercase{\refname}}{\uppercase{\refname}}%
    }
    \list{\@biblabel{\@arabic\c@enumiv}}
       {\settowidth\labelwidth{\@biblabel{#1}}%
        \leftmargin\labelwidth
        \advance\leftmargin\bibindent
        \itemindent-\bibindent
        \itemsep2pt
        \parsep \z@
        \usecounter{enumiv}
        \let\p@enumiv\@empty
        \renewcommand\theenumiv{\@arabic\c@enumiv}%
    }%
    \let\newblock\@empty
    \sloppy
    \sfcode`\.=1000\relax
}
 \renewenvironment{thebibliography}[1]{%
   \begin{odlthebibliography}{#1}%
     \setlength{\parskip}{0ex}%
     \setlength{\itemsep}{3pt}%
     \fontsize{9.5}{9.5} 
     \selectfont
}%
 {%
   \end{odlthebibliography}%
 }
\newtheoremstyle{jamiestyle}
  {4pt}
  {0pt}
  {\it}
  {0pt}
  {\sc}
  {.}
  { }
  {}
\theoremstyle{jamiestyle}
\newtheorem{thrm}{Theorem}[subsection]
\newtheorem{prop}[thrm]{Proposition}
\newtheorem{lemm}[thrm]{Lemma}
\newtheorem{corr}[thrm]{Corollary}
\newtheoremstyle{jamienfstyle}
  {4pt}
  {0pt}
  {\normalfont}
  {0pt}
  {\sc}
  {.}
  { }
  {}
\theoremstyle{jamienfstyle}
\newtheorem{nttn}[thrm]{Notation}
\newtheorem{defn}[thrm]{Definition}
\newtheorem{xmpl}[thrm]{Example}
\newtheorem{rmrk}[thrm]{Remark}
\definecolor{mygreen}{rgb}{0,0.6,0}
\definecolor{mygray}{rgb}{0.5,0.5,0.5}
\definecolor{mymauve}{rgb}{0.58,0,0.82}
\definecolor{gray}{RGB}{128, 128, 128}
\definecolor{lightgray}{RGB}{200, 200, 200}
\definecolor{cyan}{RGB}{0, 255, 255}
\definecolor{blue}{RGB}{0, 0, 255}
\definecolor{red}{RGB}{255, 0, 0}
\definecolor{pink}{RGB}{255, 128, 128}
\definecolor{green}{RGB}{0, 128, 0}
\definecolor{lightyellow}{RGB}{255, 255, 200}
\definecolor{purple}{RGB}{128, 0, 128}
\lstdefinestyle{all}
    {basicstyle=\ttfamily\scriptsize,
     keywordstyle=\color{blue}\ttfamily\scriptsize,
     commentstyle=\color{green}\ttfamily\scriptsize,
     stringstyle=\color{red}\ttfamily\scriptsize}
\lstdefinelanguage{hask}{%
    frame=none,
    xleftmargin=2pt,
    belowcaptionskip=\bigskipamount,
    captionpos=b,
    tabsize=2,
    numbers=left,
    numberstyle=\tiny\color{gray},
    emphstyle={\bf},
	morecomment=[s][\color{green}]{\{-}{-\}},
    stringstyle=\mdseries\rmfamily,
    commentstyle=\color{green},
    keywords={},
    keywords=[1]{case, of, data, if, then, else, where, let, in, do},
    keywords=[2]{Chip, Config, CurrencySymbol, TokenName, PubKeyHash, Integer, Value, State, Action, Text, Maybe, Void, TxConstraints,  Contract},
    keywords=[3]{HasNative},
    keywords=[4]{=>},
    keywords=[5]{Just, Nothing, MkChip, MkConfig, SetPrice, Buy},
    keywordstyle=[1]\mdseries\sffamily\color{red},
    keywordstyle=[2]\mdseries\sffamily\color{blue},
    keywordstyle=[3]\mdseries\sffamily\color{green},
    keywordstyle=[4]\mdseries\sffamily,
    keywordstyle=[5]\mdseries\sffamily\color{purple},
    columns=flexible,
    basicstyle=\small\sffamily,
    showstringspaces=false,
    breaklines=false,
    showspaces=false,
    escapeinside={--}{\^^M},escapebegin={\color{green}--},escapeend={},
    literate= {+}{{$+$}}1 {/}{{$/$}}1 {*}{{$*$}}1 {=}{{$=$}}1
              {>}{{$>$}}1 {<}{{$<$}}1 {\\}{{$\lambda$}}1
              {\\\\}{{\char`\\\char`\\}}1
              {->}{{$\rightarrow$}}2 {>=}{{$\geq$}}2 {<-}{{$\leftarrow$}}2
              {<=}{{$\leq$}}2 {=>}{{$\Rightarrow$}}2
              {\ .}{{$\circ$}}2 {\ .\ }{{$\circ$}}2
              {>>}{{>>}}2 {>>=}{{>>=}}2
              {|}{{$\mid$}}1
              {\_}{{\underline{\hspace{2mm}}}}2
}
\lstdefinelanguage{solidity}{%
    frame=none,
    xleftmargin=2pt,
    belowcaptionskip=\bigskipamount,
    captionpos=b,
    tabsize=2,
    numbers=left,
    numberstyle=\tiny\color{gray},
    emphstyle={\bf},
	morecomment=[s][\color{green}]{\{-}{-\}},
    stringstyle=\mdseries\rmfamily,
    commentstyle=\color{green},
    keywords={},
    keywords=[1]{pragma, solidity, contract, event, constructor, require, function, return, emit},
    keywords=[2]{address, uint, mapping},
    keywords=[3]{public, payable, external, view, returns},
    keywords=[4]{=>, +=, -=, =, <=, ==},
    keywords=[5]{msg, sender, transfer, value},
    keywordstyle=[1]\mdseries\sffamily\color{red},
    keywordstyle=[2]\mdseries\sffamily\color{blue},
    keywordstyle=[3]\mdseries\sffamily\color{green},
    keywordstyle=[4]\mdseries\sffamily,
    keywordstyle=[5]\mdseries\sffamily\color{purple},
    columns=flexible,
    basicstyle=\small\sffamily,
    showstringspaces=false,
    breaklines=false,
    showspaces=false,
    escapeinside={--}{\^^M},escapebegin={\color{green}--},escapeend={},
    literate= {+}{{$+$}}1 {/}{{$/$}}1 {*}{{$*$}}1 {=}{{$=$}}1
              {>}{{$>$}}1 {<}{{$<$}}1 {\\}{{$\lambda$}}1
              {\\\\}{{\char`\\\char`\\}}1
              {->}{{$\rightarrow$}}2 {>=}{{$\geq$}}2 {<-}{{$\leftarrow$}}2
              {<=}{{$\leq$}}2 {=>}{{$\Rightarrow$}}2
              {\ .}{{$\circ$}}2 {\ .\ }{{$\circ$}}2
              {>>}{{>>}}2 {>>=}{{>>=}}2
              {|}{{$\mid$}}1
              {\_}{{\underline{\hspace{2mm}}}}2
}
\newcommand\mbot{\mathsf{e}}
\newcommand\mtop{\mathsf{f}}
\newcommand\deffont[1]{{\bfseries #1}}
\newcommand\powerset{\f{pow}}
\newcommand\finpow{\f{fin}}
\newcommand\f[1]{\mathit{#1}}
\newcommand\tf[1]{\mathsf{#1}}
\newcommand\ns[1]{\mathsf{#1}}
\newcommand\finsubseteq{\mathbin{\subseteq_{\text{\it fin}}}}
\newcommand\at{\text{@}}
\newcommand\tx{\f{tx}}
\newcommand\ty{\f{ty}}
\newcommand\ctx{\f{ctx}}
\newcommand\txs{\f{txs}}
\newcommand\utxi{\f{utxi}}
\newcommand\utxo{\f{utxo}}
\newcommand\stx{\f{stx}}
\newcommand\liff{\Longleftrightarrow}
\newcommand\ssm{{{:}\text{=}}}
\newcommand\plus{{\text{+}}}
\newcommand\pos{\f{pos}}
\newcommand\posi{\f{posi}}
\newcommand\atoms{{\mathbb A}}
\newcommand\Forall[1]{\forall #1.}
\newcommand\Exists[1]{\exists #1.}
\newcommand\supp{\f{supp}}
\newcommand\lmodel{[\hspace{-0.2em}[}
\newcommand\rmodel{]\hspace{-0.2em}]}
\newcommand\model[1]{{\lmodel #1 \rmodel}}
\newcommand\lF{F}
\newcommand\rG{G}
\newcommand\lf{f}
\newcommand\rg{g}
\newcommand\mone{{\text{-}1}}
\newcommand*\bigcdot{\mathpalette\bigcdot@{.5}}
\newcommand*\bigcdot@[2]{\mathbin{\vcenter{\hbox{\scalebox{#2}{$\m@th#1\bullet$}}}}}
\newcommand\pact{{\cdot}}
\newcommand\mact{\mathbin{\bigcdot}}
\begin{document}
\title{Algebras of UTxO blockchains}
\newcommand\titlerunning{\emph{Algebras of UTxO blockchains}}
\newcommand\authorrunning{\emph{Murdoch J. Gabbay}}
\author{Murdoch J. Gabbay}
\begin{abstract}
We condense the theory of UTxO blockchains down to a simple and compact set of four type equations (Idealised EUTxO), and to an algebraic characterisation (abstract chunk systems), and exhibit an adjoint pair of functors between them.
This gives a novel account of the essential mathematical structures underlying blockchain technology, such as Bitcoin. 
\end{abstract}
\maketitle
\thispagestyle{empty}

\tableofcontents

\section{Introduction}

Blockchain is a young field --- young enough that no consensus has yet developed as to its underlying mathematical structures. 
There are many blockchain implementations, but what (mathematically) are they implementations \emph{of}?

Two major blockchain architectures exist: 
\begin{itemize*}
\item
\emph{UTxO-based} blockchains (like Bitcoin), and 
\item
\emph{accounts-based} blockchains (like Ethereum).
\end{itemize*}
We consider UTxO-style blockchains in this paper, and specifically the \emph{Extended} UTxO-style model~\cite{chakravarty:extum}, which as the name suggests extends the UTxO structure (how, is described in Remark~\ref{rmrk.utxo}) of Bitcoin, which is still the canonical blockchain application.

So our question becomes: what, mathematically speaking, is an EUTxO blockchain?

In the literature, Figure~3 of~\cite{chakravarty:extum} exhibits Extended UTxO as an inductive datatype, designed with implementation and formal verification in mind.
Most blockchains exist only in code, so to have an inductive specification to work from in a published academic paper is a luxury for which we can be grateful.

However, this does not answer our question. 

It would be like answering \emph{``What are numbers?''} with the inductive definition $\mathbb N= 1 + \mathbb N$:
this an important structure (and to be fair, it yields an important inductive principle) but this does not tell us that $\mathbb N$ is a ring; or about primes and the fundamental theorem of arithmetic; or that $\mathbb N$ is embedded in $\mathbb Q$ and $\mathbb R$; or even about binary representations. 
In short: Figure~3 of~\cite{chakravarty:extum} 
gives us the raw data structure of one particular blockchain implementation, which is certainly important, but this is not a \emph{mathematics of blockchains}.

As we shall see, there is more to be said here.

\subsection{Map of the paper}
\label{subsect.map}

A map of this paper, and our answer, is as follows:
\begin{enumerate}
\item
In Section~\ref{sect.ieutxo} we present \emph{Idealised EUTxO} (Definition~\ref{defn.ieutxo.equations}), which is four type equations (Figure~\ref{fig.ieutxo}).

This captures the essence of~\cite[Figure~3]{chakravarty:extum}, but far more succinctly --- four lines vs. one full page.\footnote{This is not a criticism of the original inductive definition.}

So: EUTxO is a solution to the IEUTxO equations in Figure~\ref{fig.ieutxo}. 
\item
The approach to blockchain in this paper is novel --- we concentrate not on \emph{blockchains} but on \emph{blockchain segments}, which we call \emph{chunks} (Definition~\ref{defn.chunk}).

Chunks have many properties that blockchains do not have: if you cut a blockchain into pieces you get chunks, not blockchains; and chunks have more structure, e.g. they form a partially-ordered partial monoid (Theorem~\ref{thrm.popm}) which communicate across \emph{channels} (much like the $\pi$-calculus~\cite{Milner:comms}) and they display resource separation properties reminiscent of known systems such as separation logic (Remark~\ref{rmrk.separation.logic}).

A blockchain is the special case of a chunk with no active input channels (Definition~\ref{defn.blockchain}).

So: EUTxO is a system of chunks (a partially-ordered partial monoid with channels).
\item
IEUTxO models form a category (Definition~\ref{defn.ieutxo.category}).

So now EUTxO is the category of partially-ordered partial monoid solutions to the IEUTxO models, and arrows between them. 
\item
Our answers are still quite concrete, in the sense that objects are solutions to type equations.
To go further, we use algebra.

We introduce \emph{abstact chunk systems} (Definition~\ref{defn.acs.system}), which are oriented atomic monoids of chunks (Definitions~\ref{defn.oriented}, \ref{defn.atomic}, and~\ref{defn.monoid.of.chunks}).
These too form a category (Definition~\ref{defn.acs}), with objects and arrows.

Thus, we extract relevant properties of what makes solutions to the IEUTxO equations in Figure~\ref{fig.ieutxo} \emph{interesting}, as explicit and testable algebraic properties (see the discussion in Subsection~\ref{subsect.tests}).

Several design choices exist in this space: we discuss some of them in Remarks~\ref{rmrk.why.factorisation} and~\ref{rmrk.factor} and Proposition~\ref{prop.why.pure}.

So now EUTxO is a bundle of abstract algebraic, testable properties, which exists in a clean design space which could be explored in future work. 
\item
Finally, we pull this all together by constructing functors between the categories of IEUTxO models and of abstract chunk systems (Definitions~\ref{defn.lF} and~\ref{defn.rG}), and we exhibit a cycle of categorical embeddings between them (Theorem~\ref{thrm.adjoints}).

So finally, EUTxO becomes a pair of categories --- one of concrete solutions to some type equations, and this embedded in a category of abstract algebras --- related by adjoint functors mapping between them; or if the reader prefers, it becomes a loop of embeddings (illustrated in Remark~\ref{rmrk.loop.of.embeddings}), cycling between the concrete and abstract algebras.
\end{enumerate}

There are many definitions and results in this paper, and this leads to a broader point of it: that the mathematics we observe is \emph{possible}.
There is a mathematics of blockchain here which we have not seen commented on before.
 
This may help make blockchain more accessible and interesting to a mathematical audience, and improve communication --- since there is no more effective language for handling complexity than mathematics.
Furthermore, as we argue in Subsections~\ref{subsect.brief.discussion} and~\ref{subsect.tests}, our analysis of blockchain structure in this paper is not just of interest to mathematicians --- it may also be of practical interest to programmers --- by suggesting ways to structure and transform code, and establishing properties for unit tests, property-based testing, and formal verification of correctness --- and to designers of new UTxO-based systems, wishing to attain good design and security by working from a (relatively compact) mathematical reference model. 

More exposition and discussion is in the body of the paper and in Section~\ref{sect.discussion}, including discussions of future work.\footnote{Tip: search the pdf for `future work'.}

\section{Some background}
\label{sect.background}

\subsection{What this paper is (not) about}
\label{subsect.what.its.about}

\emph{We are parametric across possible data.}

Blockchains are best-known as stores of value but it is widely appreciated in the industry that a blockchain is just a particular kind of distributed database, and that the data stored on it, and consistency conditions imposed on its transactions, can vary with the application. 

This paper (in common with many real-life blockchain implementations) is parametric in the type of data stored on it.
Specifically, we include as a parameter an uninterpreted type $\beta$ of `data' in our `Idealised IEUTxO', as a user-determined black box; and we admit a choice of admissible transactions and validators (this is the subset inclusions in the last two lines of Figure~\ref{fig.ieutxo}).\footnote{E.g. currencies and smart contracts are eminently compatible with our models --- and most likely other as-yet-unimagined extensions too.  Algebra is good at accommodating examples; how many boolean algebras, groups, or vector spaces does the average person encounter in a lifetime (whether they know it or not)?  So if this maths stimulates the reader to think \emph{``Obviously this model could also accommodate \emph{X}; why did the author not mention this?''}, then everything will be just as it should be.}

\emph{We do not consider networks or security.}

Real blockchains rightly work hard to be efficient and secure. 
In the real world we hash data, network latency matters, as do good cryptography, incentives, permissions, user training, passwords, privacy, and more. 

Again we are parametric in these concerns.
We include an uninterpreted type $\alpha$ of `keys', but do not force any particular cryptographic content on them.\footnote{It's not that we don't care, or think these issues are trivial.  But consider: Java disallows pointer arithmetic and presents the user with an abstraction of infinite memory addressed by uninterpreted abstract pointers.  This does not imply that Java programmers believe that memory actually \emph{is} infinite, or that RAM is not linearly addressed.  It's just more productive --- and also actually safer --- to handle memory-management as a distinct design issue.}

\emph{We do have validators, but we elide their computational content.}

The UTxO model of adding a block to a chain is that the chain has `unspent outputs' --- meaning output ports that have not yet engaged in interaction --- and at each unspent output $o$ is located some \emph{data} $d$ and a \emph{validator} $v$.

A \emph{validator} is a machine that takes data-key pairs as input, and decides whether they are `good' or `bad'.
Good data-key pairs let the user interact with the blockchain; bad ones get rejected.
In more detail, to append a block to the blockchain, we 
\begin{itemize*}
\item
find a validator $v$ on an unspent output $o$ with data $d$, and 
\item
present it with a suitable key $k$ such $v$ considers $(d,k)$ to be `good'; 
\end{itemize*}
the output is now considered `spent' and our block is attached to the chain.\footnote{In practice, the code of $d$ and $v$ is usually public and can be read directly off the blockchain.  The cleverness of cryptography is to devise systems such that from $d$ and $v$ it is not trivial to deduce a $k$ such that $v$ will accept $(d,k)$ --- unless you already know a cryptographic secret.}

This is much like putting a key into a lock to open a door, except that appending a block is irreversible: an output, once spent, cannot be un-spent.\footnote{When we write `irreversible' what we mean in real life is so-called \emph{probabilistic finality}, that the practical chances of a block append getting reversed decreases rapidly and exponentially as other blocks are added after it.}
The idea is that our newly-attached block may introduce fresh outputs with validators which we designed and to which we have the keys;%
\footnote{The case where we deliberately attach a block that consumes an output and introduces no fresh outputs on the new block to replace it, is in some contexts called a \emph{burn}, because its effect is to destroy unspent outputs without replacing them with new ones.} 
these are new `unspent' outputs on the chain.%
\footnote{So for instance, if we had an oracle that could solve the cryptographic puzzles currently attached to validators on bitcoin, then we would be able to `spend' bitcoin by attaching new blocks to the chain.  As discussed, in the real world these puzzles are designed to be practically impossible to solve unless you have the key.  Fun fact: somewhere in a rubbish dump in Newport, UK is a hard drive with the keys to 7,500 bitcoin.}

We do model validators in this paper, but mathematically --- meaning that a validator is not a code-script (as it would be in real life).
Instead, we identify a validator directly with the set of data-key pairs that it accepts.\footnote{Contrast with~\cite[Figure~3]{chakravarty:extum}, which is truer to how an efficient implementation would actually work: it assumes an ``(opaque) type of scripts'' and a generic script application function $\model{\text{-}}$ which operates on a general-purpose type $\tf{Data}$ of data, into which it is assumed that inputs get encoded.

The validators-as-graphs-of-function paradigm of this paper rests on two basic observations: 1. we can identify a program-script with the function that it computes (then throw away the script and keep the function); and 2. a function can be represented as its \deffont{graph} $\{(x,y)\mid y=f(x)\}$.
(We could further insist that a validator be a \emph{computable} graph, but we don't need to for our results here, so we don't.)

We spell out points 1. and 2. here \emph{precisely because} they may be taken as obvious by some readers, but \emph{saying this} appears to be novel in the peer-reviewed blockchain literature; and I know from conversations that either point may be novel to some readers.}
\begin{rmrk}
\label{rmrk.in.summary}
In summary:
\begin{quote}
\emph{Idealised (E)UTxO is a mathematical model of the structural act of UTxO block combination and validation}
\end{quote}
and
\begin{quote}
\emph{Abstract chunk systems are an algebraic rendering of the same idea.}
\end{quote}

This basic idea of block combination and validation might not seem like much.\footnote{\dots neither did `plus one' when I learned to count in school, and look what comes of \emph{that} in university.  Basic ideas can be tricky like that.} 
However, this is \emph{the} fundamental operation of the UTxO architecture, and we shall find many interesting and unexpected observations to make about it.

So if there is one question that this paper addresses, it is this: 
\begin{quote}
\emph{What does it mean, mathematically, when we append a transaction to a UTxO blockchain?}
\end{quote}
\end{rmrk}

Our answer occupies the rest of this paper.

In the rest of this Section we will set up some basic mathematical machinery.
The reader is welcome to skip or skim it, and refer back to it as required.

\subsection{Basic data structures}

\begin{defn}\leavevmode
\label{defn.zfa.atoms}
\begin{enumerate}
\item
Fix a countably infinite set $\atoms=\{a,b,c,p,\dots\}$ of \deffont{atoms}.
\item
A \deffont{permutation} is a bijection on $\atoms$; write $\pi,\pi'\in\f{Perm}$ for permutations.
\end{enumerate}
\end{defn}

\begin{rmrk}
Following the ideas in~\cite{gabbay:equzfn,gabbay:newaas-jv} atoms will be the atoms of ZFA of Zermelo-Fraenkel set theory with Atoms\footnote{Also called \emph{urelemente} or \emph{urelements} in the set-theoretic literature.} --- this is a fancy way to say that $\atoms$ is a type of atomic identifiers. 

We will use atoms to locate inputs and outputs on a blockchain.
More on this in Subsection~\ref{subsect.permutations}.
\end{rmrk}

\begin{nttn}\leavevmode
\label{nttn.pointed}
\begin{enumerate*}
\item
Write $\mathbb N=\{0,1,2,\dots\}$.
\item\label{pointed.finite.set}
If $\ns X$ is a set, write $\finpow(\ns X)$ for the finite powerset of $\ns X$, and $\finpow_!(\ns X)$ for the \deffont{pointed finite powerset}. 
In symbols: 
$$
\finpow_!(\ns X) = \{ (X,x)\in \finpow(\ns X)\times \ns X\mid x\in X\}.
$$
Above, $(X,x)$ is a pair, and $\finpow(\ns X)\times\ns X$ is a cartesian product.
\item\label{pointed.functorial}
If $\ns X$ and $\ns Y$ are sets, we use a convenient shorthand in Figure~\ref{fig.ieutxo} by writing
$$
(\finpow(\ns X),\ns Y)_!
\quad\text{as shorthand for}\quad 
(\finpow(\ns X)_!,\ns Y) .
$$
That is, we take $(\text{-})_!$ to act on a pair functorially, on the first component.
We do this in Figure~\ref{fig.ieutxo} when we write $\tf{Transaction}_!$ in the definition of $\tf{Validator}$.
\item\label{list.concat}
If $\ns X$ is a set then write $[\ns X]$ for the set of (possibly empty) finite lists of elements from $\ns X$.
We write $\mact$ for list concatenation, so $l\mact l'$ is $l$ followed by $l'$.

More generally, we will write $\mact$ for any monoid composition; list concatenation is one instance.
It will always be clear what is intended.
\item\label{sublist.inclusion}
If $\ns X$ is a set then order $l,l'\in [\ns X]$ by the \deffont{sublist inclusion} relation, where $l\leq l'$ when $l$ can be obtained from $l'$ by deleting (but not rearranging) some of its elements. 
\item
If $\ns X$ is a set and $x\in\ns X$ then we may call the one-element list $[x]\in[\ns X]$ a \deffont{singleton}.
\item
If $V\in \ns X\to\tf{Bool}$ and $x\in\ns X$ then we may write $V(x)$ or $V\,x$ for $V\,x = \tf{True}$. 
\end{enumerate*}
\end{nttn}

\subsection{The permutation action}
\label{subsect.permutations}

\begin{rmrk}
We spend this Subsection introducing permutations and their action on elements.
We will need this most visibly in two places:
\begin{enumerate}
\item
To state the key Definition~\ref{defn.posi}.
\item
To prove Lemma~\ref{lemm.perm.utxi}, and thus Proposition~\ref{prop.pos.eq.posi}.
\end{enumerate}
Because we assume atoms and are working in a ZFA universe, everything has a standard permutation action.
We describe it in Definition~\ref{defn.perm}.
Programmers can think of the permutation action as a \emph{generic} definition in the ZFA universe (given below in this Remark), which is sufficiently generic that it exists for all the datatypes considered in this paper.
By this perspective, Definition~\ref{defn.perm} specifies how this generic action interacts with the specific type-formers of interest for this paper.
\end{rmrk}

\begin{defn}
\label{defn.zfa.perm}
For reference we write out the ZFA generic definition, which is by $\epsilon$-induction on the sets universe:
$$
\begin{array}{r@{\ }l@{\qquad}l}
\pi\pact a=&\pi(a) & a\in\atoms
\\
\pi\pact X =& \{\pi\pact x\mid x\in X\} &X\text{  a set}.
\end{array}
$$ 
\end{defn}
More information on this sets inductive definition is in~\cite{gabbay:equzfn,gabbay:thesis}.
Definition~\ref{defn.perm} can be usefully viewed as a collection of concrete instances of Definition~\ref{defn.zfa.perm} for the datatypes of interest in this paper:

\begin{defn}
\label{defn.perm}
Permuations $\pi$ act concretely as follows:
\begin{enumerate}
\item
If $\pi\in\f{Perm}$ and $a\in\mathbb A$ then $\pi$ acts on $a$ as a function:
$$
\pi\pact a=\pi(a) .
$$
\item
If $\pi\in\f{Perm}$ and $X$ is any set then $\pi$ acts \deffont{pointwise} on $X$ as follows:
$$
\pi\pact X=\{\pi\pact x\mid x\in X\} .
$$
Note as a corollary of this that $x\in X \liff \pi\pact x\in\pi\pact X$.
\item
If $\pi\in\f{Perm}$ and $(x_1,\dots,x_n)$ is a tuple then $\pi$ acts \deffont{pointwise} on $(x_1,\dots,x_n)$ as follows:
$$
\pi\pact(x_1,\dots,x_n) = (\pi\pact x_1,\dots,\pi\pact x_n) .
$$ 
Note therefore that $\pi\pact((x_1,\dots,x_n)!!i)=\pi\pact x_i$, where $1\leq i\leq n$ and $!!$ indicates lookup.
\item
If $\pi\in\f{Perm}$ and $i\in\mathbb N$ then $\pi\pact i = i$.
\item
If $\pi\in\f{Perm}$ and $(X,x)$ is a pointed set (Notation~\ref{nttn.pointed}(\ref{pointed.finite.set}))
then $\pi$ acts \deffont{pointwise} on $(X,x)$ as follows:
$$
\pi\pact (X,x) = (\pi\pact X,\pi\pact x) .
$$
(This is indeed just a special case of the previous case, for tuples.)
\item
If $\pi\in\f{Perm}$ and $f$ is a function then $\pi$ has the \deffont{conjugation action} on $f$ as follows:
$$
(\pi\pact f)(x) = \pi\pact(f(\pi^\mone\pact x)) .
$$
Note therefore that $\pi\pact (f(x))=(\pi\pact f)(\pi\pact x)$, and $\pi\pact f$ maps $\pi\pact x$ to $\pi\pact (f(x))$.
\item
In particular, $\pi$ acts as the above on the inputs, outputs, sets of inputs, sets of outputs, transactions, and validators from Figure~\ref{fig.ieutxo}.
\item
If $\pi\in\f{Perm}$ and $R$ is a relation, then $\pi$ acts \deffont{pointwise} such that
$$
\pi\pact R = \{(\pi\pact x,\pi\pact y)\mid (x,y)\in R\}
$$
so that
$$
x\mathrel{\pi\pact R} y
\liff
\pi^\mone\pact x\mathrel{R} \pi^\mone\pact y .
$$
\end{enumerate}
\end{defn}

We use Definition~\ref{defn.fix} in Definition~\ref{defn.posi}, but it is a useful concept so we include it here:
\begin{defn}
\label{defn.fix}
If $a\in\atoms$ then write $\f{fix}(a)$ for the set of permutations $\pi\in\f{Perm}$ such that $\pi(a)=a$.
In symbols:
$$
\f{fix}(a) = \{\pi\in\f{Perm} \mid \pi(a)=a\} .
$$
\end{defn}

\begin{defn}
\label{defn.equivariant}
Call an element \deffont{equivariant} when $\pi\pact x=x$ for every $\pi\in\f{Perm}$.
Concretely:
\begin{enumerate}
\item
$\atoms$ is equivariant, and no individual atom $a\in\atoms$ is equivariant.
\item\label{equivariant.set}
A set $X$ is equivariant when 
$$
\Forall{\pi{\in}\f{Perm}}(x\in X \liff \pi\pact x\in X).
$$
In words: 
\begin{quote}
\emph{A set is equivariant precisely when it is closed in the orbits of its elements under the permutation action.}
\end{quote}
\item
$(x_1,\dots,x_n)$ is equivariant precisely when $x_i$ is equivariant for every $1\leq i\leq n$.
\item
$\mathbb N$ is equivariant, and every $i\in\mathbb N$ is equivariant.
\item
A pointed set $(X,x)$ is equivariant precisely when $X$ and $x$ are equivariant.
\item\label{equivariant.function}
A function $f$ is equivariant when $\pi\pact (f(x))=f(\pi\pact x)$ for every $x$ and every $\pi$.
\item
A relation $R$ is equivariant when 
$$
\Forall{\pi{\in}\f{Perm}}\Forall{x,y}(x\mathrel{R} y \liff \pi\pact x\mathrel{R}\pi\pact y) .
$$
\end{enumerate}
\end{defn}

\section{Idealised EUTxO: \texorpdfstring{$\tf{IEUTxO}$}{IEUTxO}}
\label{sect.ieutxo}

\subsection{IEUTxO equations and solutions}

\begin{defn}
\label{defn.ieutxo.equations}
Let \deffont{idealised EUTxO} be the type equations in Figure~\ref{fig.ieutxo}\ ($\tf{Transaction}_!$ is from Notation~\ref{nttn.pointed}(\ref{pointed.functorial})).
\end{defn}

\begin{figure}[tp]
$$
\begin{array}{r@{\ }l}
\tf{Input}=&\atoms\times\alpha  
\\
\tf{Output}=&\mathbb A\times\beta\times\tf{Validator}  
\\
\tf{Transaction}\subseteq&\bigl(\finpow(\tf{Input})\times\finpow(\tf{Output})\bigr)\setminus\{(\varnothing,\varnothing)\}
\\
\tf{Validator}\subseteq&\powerset(\beta\times\tf{Transaction}_!)  
\end{array}
$$
\caption{Type equations of Idealised EUTxO}
\label{fig.ieutxo}
\end{figure}

\begin{defn}
\label{defn.solution}
Let a \deffont{solution} or \deffont{model} 
of the IEUTxO type equations in Figure~\ref{fig.ieutxo} be a tuple 
$$
\mathbb T=(\alpha,\beta,\tf{Transaction},\tf{Validator},\nu : \tf{Validator}\hookrightarrow \powerset(\beta\times\tf{Transaction}_!))
$$
where:
\begin{enumerate*}
\item
$\alpha$ is an uninterpreted\footnote{\emph{Uninterpreted} means: ``Pick whatever you want.  We'll never look inside it so we don't care.  Just make sure this set is large enough and has whatever structure that \emph{you} need for \emph{your} application, and the maths will work for that choice.''} equivariant (Definition~\ref{defn.equivariant}(\ref{equivariant.set})) set of \emph{keys}.
\item
$\beta$ is an uninterpreted equivariant set of \emph{data}.
\item
$\tf{Validator}$ is an equivariant set of \emph{validators}.
\item\label{transaction.subset}
$\tf{Transaction}$ is an equivariant subset of $\finpow(\atoms\times\alpha)\times\finpow(\atoms\times\beta\times\tf{Validator})$ and we disallow the empty transaction, having no inputs or outputs. 
\item\label{validator.injection}
$\nu$ is an equivariant injective function (Definition~\ref{defn.equivariant}(\ref{equivariant.function})) from $\tf{Validator}$ to $\powerset(\beta\times\tf{Transaction}_!)$.
\end{enumerate*} 
\end{defn}


\newcommand\nameOaa{$a$}
\newcommand\nameOab{$b$}
\newcommand\nameOac{$c$}
\newcommand\nameIba{\nameOab}
\newcommand\nameObb{$d$}
\newcommand\nameIca{\nameOaa}
\newcommand\nameOca{$e$}
\newcommand\nameOcb{$f$}
\newcommand\nameOcc{$g$}
\newcommand\nameIda{\nameOca}
\newcommand\nameIdb{\nameOcb}
\newcommand\nameIdc{\nameObb}
\newcommand\nameOda{$h$}
\newcommand\nameOdb{$i$}
\newcommand\nameOdc{$j$}
\newcommand\nameOdd{$k$}

\newcommand\drawCircle[2]{
    \path[fill, #1]    (#2) circle[radius=0.3];
    \path[fill, white] (#2) circle[radius=0.15];
}
\newcommand\drawBox[2]{
        \draw[ultra thick, fill=lightgray]
            ($#1 + (-1,-1)$)
            rectangle ($#1 + (1,1)$);
        \node at #1 {$#2$};
}
\newcommand\drawLine[4]{
        \draw[ultra thick] (#1) to[out=0, in=180] node[#2]{#3} (#4);
}
\newcommand\drawLineToCircle[5]{
        \drawLine{#1}{#2}{#3}{#4}
        \drawCircle{#5}{#4}
}
\newcommand\drawLineFromCircle[5]{
        \drawLine{#1}{#2}{#3}{#4}
        \drawCircle{#5}{#1}
}

\newcommand\mkBackbone{
        \coordinate (a) at (0*\ma,0);    
        \coordinate (d) at (3*\ma,0);    

        \coordinate (ab1) at (\mb, \mc);  
        \coordinate (ab2) at (\mb, 0);  
        \coordinate (ab3) at (\mb,-\mc);  

        \coordinate (bc1) at (1.5*\ma,0.5*\mc); 

        \coordinate (de1) at (3.5*\ma, \mc);
        \coordinate (de2) at (3.5*\ma, 1/3*\mc);
        \coordinate (de3) at (3.5*\ma,-1/3*\mc);
        \coordinate (de4) at (3.5*\ma, -\mc);
        
        \coordinate (x2) at (2.5*\ma,-\mc);

        \drawBox{(a)}{\tx_1}
        \drawBox{(b)}{\tx_2}
        \drawBox{(c)}{\tx_3}
        \drawBox{(d)}{\tx_4}
        
        \drawLine{$(a) + (1, 0.5)$}{above, xshift=-1, yshift=1}{\nameOaa}{ab1}
        \drawLineToCircle{$(d) + (1, 0.6)$}{above}{\nameOda}{de1}{red}
        \drawLineToCircle{$(d) + (1, 0.2)$}{above right}{\nameOdb}{de2}{red}
        \drawLineToCircle{$(d) + (1,-0.2)$}{below right}{\nameOdc}{de3}{red}
        \drawLineToCircle{$(d) + (1,-0.6)$}{below}{\nameOdd}{de4}{red}
        \drawLine{$(c) + (1, 0.5)$}{above, xshift=-1}{\nameOca}{cd1}
        \drawLineToCircle{$(c) + (1,-0.5)$}{below, xshift=-1}{\nameOcc}{cd3}{red}

}
\newcommand\mkBackboneBC[1]{
        \coordinate (bc2) at (1.5*\ma-#1*\ma,-\mc);    
        \coordinate (bc3) at (1.5*\ma, 0);

        \coordinate (cd4) at (2.5*\ma+#1*\ma,-\mc);
        
        \coordinate (cd1) at (2.5*\ma, \mc);  
        \coordinate (cd2) at (2.5*\ma, 0);    
        \coordinate (cd3) at (2.5*\ma-#1*\ma,-2/3*\mc+#1*\mc);    

        \coordinate (b) at (1*\ma,0);    
        \coordinate (c) at (2*\ma,0);    
        \coordinate (x1) at (1.5*\ma+#1*\ma,\mc);     

        \mkBackbone

        \drawLineToCircle{$(a) + (1,-0.5)$}{below, xshift=-1, yshift=-1}{\nameOac}{ab3}{red}

        \drawLine{ab2}{above, yshift=-1}{\nameIba}{$(b) + (-1, 0)$}
        \drawLineToCircle{$(a) + (1, 0  )$}{above, yshift=-1}{\nameOab}{ab2}{blue}
   
        \drawLine{$(c) + (1, 0  )$}{above, yshift=-1.5}{\nameOcb}{cd2};

}
\newcommand\mkBackboneCB[1]{
        \coordinate (bc2) at (1.5*\ma,-\mc);    

        \coordinate (bd1) at (2.5*\ma, \mc);
        \coordinate (bd2) at (2.5*\ma,-\mc);
        \coordinate (bd3) at (2.5*\ma, 3/5*\mc);
        \coordinate (bd4) at (2.5*\ma,-1/3*\mc);
 
        \coordinate (cd1) at (1.5*\ma-#1*\ma, \mc);
        \coordinate (cd2) at (1.5*\ma-#1*\ma, 3/5*\mc);
        \coordinate (cd3) at (1.5*\ma-#1*\ma,-2/5*\mc);
        
        \coordinate (c) at (1*\ma,0);       
        \coordinate (b) at (2*\ma,0);       
        \coordinate (x1) at (1.5*\ma,\mc);  

        \mkBackbone

        \drawLineToCircle{$(a) + (1,-0.5)$}{below, xshift=-1, yshift=-1}{\nameOab}{ab3}{red}
        
        \drawLineToCircle{$(a) + (1, 0  )$}{above, yshift=-1}{\nameOac}{ab2}{red}

        \drawLine{$(c) + (1, 0  )$}{below, yshift=1}{\nameOcb}{cd2}

        \coordinate (cb1) at (1.5*\ma+#1*\ma, -1*\mc);
        \drawLine{cb1}{above, xshift=-2}{\nameIba}{$(b) + (-1, 0)$}
}

\begin{figure}[b]
    \centering
    \newcommand\ma{10}
    \newcommand\mb{5}
    \newcommand\mc{3}
    \begin{tikzpicture}[scale=0.25]
        \coordinate (tx) at (  0, 0);
        \coordinate (i1) at (-\ma, \mb);
        \coordinate (i2) at (-\ma, 0);
        \coordinate (i3) at (-\ma,-\mb);
        \coordinate (o1) at (0.95*\ma, \mc);
        \coordinate (o2) at (0.95*\ma,-\mc);

        \drawBox{(tx)}{tx}

        \drawLineFromCircle{i1}{above right}{$(a,x_1)$}{$(tx) + (-1, 0.5)$}{red}
        \drawLineFromCircle{i2}{above, xshift=-10}{$(b,x_2)$}{$(tx) + (-1, 0  )$}{red}
        \drawLineFromCircle{i3}{below right}{$(c,x_3)$}{$(tx) + (-1,-0.5)$}{red}
        \drawLineToCircle{$(tx) + (1, 0.3)$}{below right, xshift=2, yshift=5}{$(d,y_1,v_1)$}{o1}{red}
        \drawLineToCircle{$(tx) + (1,-0.3)$}{above right, xshift=2, yshift=-5}{$(e,y_2,v_2)$}{o2}{red}

        \coordinate (ty) at (2.00*\ma, 0);
        \coordinate (i4) at (1.05*\ma, \mc);
        \coordinate (i5) at (1.05*\ma,-\mc);
        \coordinate (o3) at (3.00*\ma, \mc);

        \drawBox{(ty)}{ty}

        \drawLineFromCircle{i4}{above right, xshift=-10}{$(d,x_4)$}{$(ty) + (-1, 0.5)$}{red}
        \drawLineFromCircle{i5}{below right, xshift=-10}{$(e,x_5)$}{$(ty) + (-1, 0  )$}{red}
        \drawLineToCircle{$(ty) + (1, 0.3)$}{below right, xshift=2, yshift=5}{$(f,y_1\plus y_2,v_1)$}{o3}{red}
    \end{tikzpicture}
    \caption{A pair of transactions $\tx$ and $\ty$} 
    \label{fig.transaction}
    \ \\[5mm]
    \begin{tikzpicture}[scale=0.25]
        \coordinate (tx) at (  0, 0);
        \coordinate (i1) at (-\ma, \mb);
        \coordinate (i2) at (-\ma, 0);
        \coordinate (i3) at (-\ma,-\mb);
        \coordinate (o1) at ( \ma, \mc);
        \coordinate (o2) at ( \ma,-\mc);

        \drawBox{(tx)}{tx}

        \drawLineFromCircle{i1}{above right}{$(a,x_1)$}{$(tx) + (-1, 0.5)$}{red}
        \drawLineFromCircle{i2}{above, xshift=-10}{$(b,x_2)$}{$(tx) + (-1, 0  )$}{red}
        \drawLineFromCircle{i3}{below right}{$(c,x_3)$}{$(tx) + (-1,-0.5)$}{red}
        \drawLineToCircle{$(tx) + (1, 0.3)$}{below right, xshift=2, yshift=5}{$(d,y_1,v_1)$}{o1}{red}
        \drawLineToCircle{$(tx) + (1,-0.3)$}{above right, xshift=2, yshift=-5}{$(e,y_2,v_2)$}{o2}{red}

        \coordinate (ty) at (2.0*\ma, 0);
        \coordinate (i4) at (1.0*\ma, \mc);
        \coordinate (i5) at (1.0*\ma,-\mc);
        \coordinate (o3) at (3.0*\ma, \mc);

        \drawBox{(ty)}{ty}

        \drawLineFromCircle{i4}{above right, xshift=-10}{$(d,x_4)$}{$(ty) + (-1, 0.5)$}{blue}
        \drawLineFromCircle{i5}{below right, xshift=-10}{$(e,x_5)$}{$(ty) + (-1, 0  )$}{blue}
        \drawLineToCircle{$(ty) + (1, 0.3)$}{below right, xshift=2, yshift=5}{$(f,y_1\plus y_2,v_1)$}{o3}{red}
    \end{tikzpicture}
    \caption{A pair of transactions $\tx$ and $\ty$, successfully validated and combined} 
    \label{fig.transaction.combined}
\end{figure}

Some notation will be helpful:
\begin{nttn}
\label{nttn.ty.points.to}
If $\tx=(I,O)\in\tf{Transaction}$ and $i\in I$ then write $\tx\at i\in\tf{Transaction}_!$ for the input-in-context $((I,i),O)$ obtained by pointing $I$ at $i\in I$ (Notation~\ref{nttn.pointed}).
In symbols:
$$
\text{if }\tx=(I,O)\in\tf{Translation}
\ \text{and}\  i\in I 
\quad\text{then}\quad
\tx\at i = ((I,i),O) \in \tf{Transaction}_!
$$
\end{nttn}

\begin{xmpl}
In Figures~\ref{fig.transaction} and~\ref{fig.transaction.combined} we take a short diagrammatic tour of Definition~\ref{defn.solution}, before continuing with the technical development.
Since we have yet to build our machinery, this discussion is informal and intuitive.
We include precise forward pointers to later definitions where appropriate; additional diagrams will be discussed in detail in Example~\ref{xmpl.example.transactions}. 

$\tx\in\tf{Transaction}$ in Figure~\ref{fig.transaction} is a pair of a set of three inputs, and a set of two outputs:
$$
\tx\ =\ \Bigl(\{(a,x_1),(b,x_2),(c,x_3)\}\ ,\ \{(d,y_1,v_1),(e,y_2,v_2)\}\Bigr) .
$$
Above:
\begin{itemize}
\item
$a$, $b$, $c$, $d$, and $e$ are atoms in $\mathbb A$.
They serve as unique labels for the inputs and outputs.

Note that every input and output in $\tx$ has a distinct label.
This is not directly enforced in the raw datatype in Figure~\ref{fig.ieutxo}, but it will follow as a consequence of well-formedness conditions which we introduce later, in Definition~\ref{defn.chunk}.

Since each input is labelled with a unique atom, and similarly for each output, we may treat atoms as \emph{positions}, \emph{locations}, or \emph{channels}.
Thus for example the input $(a,x_1)$ is labelled with $a$ and so we can think intuitively that it is located at position $a$; or waiting to communicate its data $x_1$ on channel $a$.
\item
$x_1$ and $x_2$ are elements of $\alpha$.
This is just an uninterpreted datatype, but a good intuition is that these are cryptographic keys.
\item
$y_1$ and $y_2$ are elements of $\beta$.
This is just an uninterpreted datatype: the intuition is that $y_1$ and $y_2$ are fragments of state data.

Note that state data is stored per-output and not per-transaction.
We suppose for concreteness that $\beta=\mathbb N$, so state data can be summed (as we do in the output $f$ of $\ty$; more in this shortly).

Note that $\tf{Transaction}$ is a \emph{subset} of $\finpow(\tf{Input})\times\finpow(\tf{Output})$ in Figure~\ref{fig.transaction}; which subset, is a parameter of the model selected.
So for example we could enforce that all state data on all inputs and outputs should be equal, and this would in effect ensure that state data is a per-transaction quantity.
This may or may not be what we want: the definition allows us to choose.
\item
$v_1$ and $v_2$ are validators.
Their role is, intuitively, to decide which transactions $\tx$ will interact with (precise definition in Definition~\ref{defn.chunk}(\ref{blockchain.the.point})).
\end{itemize} 
$\ty\in\tf{Transaction}$ is another transaction, with two inputs and one output.
Note that its inputs are located at $d$ and $e$, meaning that the inputs of $\ty$ \emph{point to} the outputs of $\tx$ (Definition~\ref{defn.i.points.to.o}).
Note also that it performs an addition, in the sense that the state data of its single output is the sum of the state data on its two inputs. 

If in addition the following validation conditions are satisfied ($\at$ from Notation~\ref{nttn.ty.points.to})
\begin{itemize*}
\item
$(y_1,\ty\at (d,x_4))\in \nu(v_1)$ 
\\
(in words: validator $v_1$ in state $y_1$ validates the pointed transaction $\ty\at (d,x_4))$), and
\item
$(y_2,\ty\at (e,x_5))\in \nu(v_2)$ 
\\
(in words: validator $v_2$ in state $y_2$ validates the pointed transaction $\ty\at (e,x_5)$)
\end{itemize*}
then the two-element sequence $[\tx,\ty]$ is considered to be a valid combination.
In the terminology we define later, this is called a \emph{chunk} (Definition~\ref{defn.chunk}).

Supposing this is so.
Then we can join the two transactions as illustrated in Figure~\ref{fig.transaction.combined}, using blue circles to indicate successful validations.
Congratulations: we have performed our first blockchain concatenation.

Note that \emph{both} validations must succeed for the combination $[\tx,\ty]$ to be considered a valid chunk.
More diagrams and discussion follow in Example~\ref{xmpl.example.transactions}.
We now return to the definitions. 
\end{xmpl}

\begin{nttn}
\label{nttn.equivariant}
Sets that do not include atoms --- including $\mathbb N$, inductive types built using $\mathbb N$, and function types built using $\mathbb N$ --- are automatically equivariant. 
Thus, if the reader unfamiliar with nominal techniques and ZFA wonders whether particular choices they need for $\alpha$, $\beta$, and $\tf{Validator}$ are equivariant --- then the answer is `yes'. 

We may elide equivariance conditions Definition~\ref{defn.solution} henceforth.
Any such type-like definition will be equivariant --- i.e. closed under taking orbits of the permutation action --- unless stated otherwise.
\end{nttn}

\begin{rmrk}
Equivariance comes from the underlying ZFA universe.
Notation~\ref{nttn.equivariant} can be viewed as an assertion that the definition exists in the category of equivariant ZFA sets and equivariant functions between them (or if the reader prefers: sets with a permutation action, and equivariant functions between them).

This paper will be light on sets and categorical foundations:
we use just enough so that readers from various backgrounds get a hook on the ideas that speaks to them, and so it is always clear what is meant and how it could be made fully formal.

Note that just because a set is equivariant does not mean all its elements must be; for instance, $\mathbb A$ is equivariant (and consists of a single permutation orbit), but none of its elements $a\in\atoms$ are equivariant.
\end{rmrk}

\begin{rmrk}
Compare Figure~\ref{fig.ieutxo} with~\cite[Figure~3]{chakravarty:extum}:\footnote{These correspondences are not intended to be read as mathematically precise; they illustrate how a shared intuitive underlying structure manifests itself across the two different definitions.} 
$\alpha$ here corresponds to \emph{redeemer} there, $\beta$ here corresponds to \emph{datum} there (though $\alpha$ here lives on inputs and $\beta$ lives on outputs, whereas there both redeemer and datum live on inputs); the by-hash referencing there is replaced here with a nominal treatment using atoms; and validators exist on the output here and on the input there.
There is less to this latter difference (validator moving from input in \cite{chakravarty:extum} to output here) than meets the eye: what is key is the interaction between an output and a later input, so it a matter of perspective and our convenience whether we view the output as validating the input, or vice versa, or indeed both.
\end{rmrk}

\begin{rmrk}
\label{rmrk.NQR}
Definition~\ref{defn.solution}(\ref{transaction.subset}) uses a subset inclusion, whereas Definition~\ref{defn.solution}(\ref{validator.injection}) uses an injection $\nu$.
Why?

First, in practice we would expect $\nu(v)$ to be a \emph{computable} subset of $\beta\times\tf{Transaction}_!$, since we have implementations in mind (though nothing in the mathematics to follow will depend on this).
 
Also, sets are well-founded, so a pure subset inclusion solution to Figure~\ref{fig.ieutxo}, for both clauses, would be difficult; we use $\nu$ to break the downward chain of sets inclusions.%
\footnote{A universe of non-wellfounded sets~\cite{aczel:nonwfs} would also solve this.  That might be overkill for this paper, but this might become relevant if the theorems of this paper are implemented in a universe with more direct support for non-wellfounded objects than ZFA provides.} 

Yet just as we may write 
$$
\mathbb N\subseteq\mathbb Q\subseteq\mathbb R, 
$$
eliding (or neglecting to consider) that their realisations may differ (finite cardinals vs. equivalence classes of pairs vs. Dedekind cuts), so --- since $\nu$ is an injection --- it may be convenient to treat $\tf{Validator}$ as a literal subset of $\powerset(\beta\times\tf{Transaction}_!)$.\footnote{This can also depend on which aspects of a structure we wish to emphasise.  For example, $\mathbb N$ and $\omega$ are isomorphic, as are $\powerset(X)$ and $2^X$: the usage chosen in context suggests which aspects of the underlying entity (arithmetic vs. ordinals; sets vs. functions) we find most relevant.}
This is standard, provided we clearly state what is intended and are confident that we could unroll the injections if required.

Thus Definition~\ref{defn.ieutxo.model} rephrases Definition~\ref{defn.solution}, with a focus on extensional sets behaviour rather than on internal structure (and for reference, see Definition~\ref{defn.rG} for an example of where the internal structure is required): 
\end{rmrk}

\begin{defn}
\label{defn.ieutxo.model}
An IEUTxO model $\mathbb T$ from Definition~\ref{defn.solution} 
can be presented modulo Remark~\ref{rmrk.NQR} as a tuple 
$$
\mathbb T=(\alpha_{\mathbb T},\beta_{\mathbb T},\tf{Transaction}_{\mathbb T},\tf{Validator}_{\mathbb T})
$$ 
of sets that solves the equations in Figure~\ref{fig.ieutxo}. 
We may drop the $\mathbb T$ subscripts where these are clear from the context.
\end{defn}

\subsection{Positions}

\begin{rmrk}
The \emph{positions} of a transaction are intuitively the interfaces or channels along which it may connect with other transactions to form a chunk (see next subsection).
The notion of connection is called \emph{pointing to} (Definition~\ref{defn.i.points.to.o}).

In Figure~\ref{fig.transaction.combined}, the positions of $\tx$ and $\ty$ are $\{a,b,c,d,e\}$ and $\{d,e,f\}$ respectively, and input $d$ of $\ty$ points to output $d$ of $\tx$.
Also, $\tx$ is \emph{earlier} than $\ty$, and $\ty$ is \emph{later} than $\tx$. 
\end{rmrk}

\begin{nttn}\leavevmode
\label{nttn.points.to}
\begin{enumerate*}
\item
If $\tx\in\tf{Transaction}$ appears in $\txs\in[\tf{Transaction}]$ then write $\tx\in\txs$.
\item
If $\tx,\tx'\in\txs$ and $\tx$ appears before $\tx'$ in $\txs$, then call $\tx$ \deffont{earlier} than $\tx'$ and $\tx'$ \deffont{later} than $\tx$ (in $\txs$). 
\item
If $\tx=(I,O)\in\tf{Transaction}$ and $o\in\tf{Output}$, say $o$ \deffont{appears in} $\tx$ and write $o\in \tx$ when $o\in O$; similarly for an input $i\in\tf{Input}$.

We may silently extend this notation to larger data structures, writing e.g. $i\in\txs$ when $i\in\tx\in\txs$ for some $\tx$. 
\end{enumerate*}
\end{nttn}

\begin{figure}[t]
$$
\begin{array}{r@{\ }l@{\qquad}r@{\ }l}
i=(p,k)\in&\tf{Input} & 
\pos(i)=&\{p\} 
\\
o=(p,d,V)\in&\tf{Output} & 
\pos(o)=&\{p\}
\\
\tx=(I,O)\in&\tf{Transaction} &
\pos(\tx)=&\bigcup\{\pos(x)\mid x\in I\cup O\} 
\\
\ctx=((I,i),O)\in&\tf{Transaction}_!  &
\pos(\ctx)=&\pos(I,O) 
\\
\txs\in&[\tf{Transaction}] &
\pos(\txs)=&\bigcup\{\pos(\tx)\mid \tx\in\txs\} 
\end{array}
$$
\caption{Positions of}
\label{fig.positions}
\end{figure}

\begin{defn}
\label{defn.pos}
Suppose $\mathbb T$ is an IEUTxO model.
We define \deffont{positions of}, written $\pos$, as in Figure~\ref{fig.positions} (see also Definition~\ref{defn.inputs.and.outputs}).
\end{defn}

\begin{rmrk}
Intuitively, $\pos(x)$ collects the positions mentioned explicitly on the inputs or outputs of a structure.
Note that validators may also act depending on positions of their inputs, but this information is not detected by $\pos$.
For instance, consider a (arguably odd, but imaginable) output $o$ having the form 
$$
o=(b,0,\{i\in\tf{Input} \mid \pos(i)=\{a\}\}) .
$$ 
So this output is at position $b$, $\beta=\mathbb N$ and $o$ carries data $0$, and $o$ has a validator that validates an input precisely when it is at position $a$.
Then $\pos(o)=\{b\}$.\footnote{Note to experts in nominal techniques: so the support of validators, if any, is not counted in $\pos$.  See also the discussion in Subsection~\ref{subsect.nominal}.}
\end{rmrk}

\begin{lemm}
\label{lemm.supp.empty.empty}
Suppose $\mathbb T$ is an IEUTxO model and $\txs\in[\tf{Transaction}]$.
Then 
$$
\pos(\txs)=\varnothing
\quad\text{implies}\quad
\txs=[].
$$
\end{lemm}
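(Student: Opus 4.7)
The statement is essentially the contrapositive of the observation that every transaction is nonempty (by clause~(\ref{transaction.subset}) of Definition~\ref{defn.solution}, or equivalently by the ``$\setminus\{(\varnothing,\varnothing)\}$'' in Figure~\ref{fig.ieutxo}), and every input or output carries at least one atom as its position. So the strategy is a short direct contrapositive: assume $\txs\neq[]$ and exhibit a position.

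\textbf{Steps.} First, suppose for contradiction that $\txs\neq[]$ and $\pos(\txs)=\varnothing$. Pick any $\tx\in\txs$ (possible since $\txs$ is nonempty as a list) and write $\tx=(I,O)$. Next, invoke the key input on the datatype: by Definition~\ref{defn.solution}(\ref{transaction.subset}), $(I,O)\neq(\varnothing,\varnothing)$, so $I\cup O$ is nonempty. Pick some $x\in I\cup O$. Depending on whether $x\in I$ or $x\in O$, we have either $x=(p,k)$ with $\pos(x)=\{p\}$ or $x=(p,d,V)$ with $\pos(x)=\{p\}$, by the first two clauses of Figure~\ref{fig.positions}. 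In either case there is an atom $p\in\pos(x)$.

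Finally, chase this atom up through the definition of $\pos$: by the transaction clause of Figure~\ref{fig.positions}, $p\in\pos(x)\subseteq\pos(\tx)$, and by the list clause, $p\in\pos(\tx)\subseteq\pos(\txs)$. This contradicts $\pos(\txs)=\varnothing$, so we must have $\txs=[]$.

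\textbf{Main obstacle.} There is essentially no obstacle --- the result is a direct unfolding of Figure~\ref{fig.positions}. The only substantive ingredient is the exclusion of the empty transaction in Definition~\ref{defn.solution}(\ref{transaction.subset}); without that constraint the lemma would fail (a list containing $(\varnothing,\varnothing)$ would be a counterexample). So the proof's only real content is to explicitly flag this use of nonemptiness of transactions.
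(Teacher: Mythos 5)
Your proposal is correct and follows essentially the same route as the paper's proof, which likewise combines the exclusion of the empty transaction (Definition~\ref{defn.solution}(\ref{transaction.subset})) with an inspection of Figure~\ref{fig.positions}; you have simply spelled out the unfolding in more explicit detail. No gaps.
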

\begin{proof}
Recall from Definition~\ref{defn.solution}(\ref{transaction.subset}) that the empty transaction is disallowed.
Now examining Figure~\ref{fig.positions} we see that the only way $\txs$ can mention no positions \emph{at all}, is by having no transactions and so being empty.
\end{proof}

\begin{defn}
\label{defn.i.points.to.o}
\begin{enumerate*}
\item\label{i.points.to.o}
Suppose $i\in\tf{Input}$ and $o\in\tf{Output}$.
Then say that $i$ \deffont{points to} $o$ and write $i\mapsto o$ when they share a position:
$$
i\mapsto o \quad\text{when}\quad \pos(i)=\pos(o) .
$$
(The use of `point' here is unrelated to the `pointed sets' from Notation~\ref{nttn.pointed}.)
\item\label{i.points.to.o.validates}
Recall the notation $\tx\at i$ from Notation~\ref{nttn.ty.points.to}.
Suppose that:
$$
\begin{array}{l}
i=(p,k)\in\tf{Input}
\\
i\in\tx\in\tf{Transaction}
\ \ \text{and}\ \ 
\\
o=(p,d,V)\in\tf{Output}. 
\end{array}
$$
Then write 
\begin{equation}
\label{eq.validates}
\f{validates}(o,\tx\at i)
\quad\text{when}\quad
(d,\tx\at i) \in V 
\end{equation}
and say that the output $o$ \deffont{validates} the input-in-context $\tx\at i$.
\end{enumerate*}
\end{defn}

\subsection{Why we have \texorpdfstring{$\alpha$}{alpha} and \texorpdfstring{$\beta$}{beta}}

We are now ready to more rigorously explain the roles of $\alpha$ and $\beta$ in Figure~\ref{fig.ieutxo}.
We touched on this already in Subsection~\ref{subsect.what.its.about} ($\alpha$ is `keys'; $\beta$ is `data'), but now we can be more specific in our discussion: 
\begin{itemize}
\item 
$\alpha$ is intuitively a datatype for \emph{keys}.

If $i=(a,k)$, then $k$ is supposed to be a cryptographic secret that we need to attach to a suitable unspent output (e.g. a solution to a cryptographic puzzle posted by its validator).
\item
$\beta$ is intuitively a datatype of abstract data.

If $o=(p,d,V)$ then $d$ stores some kind of state (for instance, an account balance).
\end{itemize}
We should note:
\begin{enumerate*}
\item
Nothing mathematical enforces this usage. 
$\alpha$ and $\beta$ are abstract type parameters, to instantiate as we please.
\item 
$\beta$ is redundant and can be isomorphically removed. 

We briefly sketch a suitable isomorphism: We set $\tf{Output}=\atoms\times\tf{Validator}$ and $\tf{Validator}\subseteq\powerset(\tf{Transaction}_!)$ in Figure~\ref{fig.ieutxo}.
$\alpha$ is replaced by $\alpha\times\beta$, and information that was stored as $d\in\beta$ on an output would instead be stored in the validator of that output, which would be set to accept only inputs with $\beta$-component $d$.  
\item
If we only care about countable datatypes and computable data (a reasonable simplification), then everything could be G\"odel encoded into $\mathbb N$. 
So we could then just fix $\alpha=\mathbb N$. 
\end{enumerate*}
But there is a design tension here: we want something that is compact, but also implementable (e.g. as proof-of-concept reference code; see Remark~\ref{rmrk.hypothetical}).
Having explicit types of keys $\alpha$ and data $\beta$ is useful for clarity,
so even though $\alpha$ and $\beta$ introduce (a little) redundancy, the cost is minimal and the practical returns worthwhile.

Furthermore, in the real world validators are generated by code.
This validator code is typically given state data which is carried on the same output as the validator is stored, to help the validator decide whether to validate prospective input.
So when we write $\tf{Validator}\subseteq\powerset(\beta\times\tf{Transaction}_!)$, this effectively makes the validator into a function over \emph{local state data} and a transaction input. 

We see this happen concretely in equation~\eqref{eq.validates} in Definition~\ref{defn.i.points.to.o}: we have $o=(p,d,V)$ and we pass the state data $d$ of $o$ to $V$ the validator of $o$ --- 
$$
(d,\tx\at i)\in V 
$$ 
--- even though this is mathematically redundant, since $d$ and $V$ are both in $o$ so the $d$ could in principle be curried into the $V$.
We retain $\beta$ and set $\tf{Validator}\subseteq\powerset(\beta\times\tf{Transaction}_!)$ rather than just $\tf{Validator}\subseteq\powerset(\tf{Transaction}_!)$, to reflect this practical architecture.

\begin{rmrk}
\label{rmrk.hypothetical}
The discussion above is more than hypothetical and corresponds to the experience of creating executable Haskell code.
The IEUTxO type equations in this paper have been realised as a Haskell implementation; a reference system has been implemented following it; and some of the results of this paper converted into QuickCheck properties.  
The package is on Hackage and GitHub~\cite{gabbay:idealisedeutxo}.\footnote{%
There is no claim to have created a practical blockchain implementation (there is no cryptographic assurance, for example), but the Haskell code comes out as a surprisingly direct translation of the maths in this paper, and demonstrates that this paper has enough executable content to be a plausible reference implementation, for example, within the framework of its own abstractions.

In particular, the design of Figure~\ref{fig.ieutxo} was informed by the experience of creating executable code, and having types $\alpha$ and $\beta$ explicitly available, was useful.}
\end{rmrk}

\subsection{Chunks and blockchains}

\subsubsection{Chunks}

Definition~\ref{defn.chunk} (chunks) is a central concept in this paper, so we will summarise it twice: once now, and again after the definition:
\begin{quote}
A list of transactions is a \emph{chunk} when all input positions are distinct, all output positions are distinct, and all inputs point to at most one earlier validating output.
\end{quote}
Examples are illustrated in Example~\ref{xmpl.example.transactions} (along with examples of blockchains).
In formal detail:
\begin{defn}
\label{defn.chunk} 
Suppose $\mathbb T=(\alpha,\beta,\tf{Transaction},\tf{Validator})$ is an IEUTxO model.
Call a transaction-list $\txs\in[\tf{Transactions}]$ a \deffont{chunk} when:
\begin{enumerate*}
\item\label{blockchain.at.most.one.output}
Distinct outputs appearing in $\txs$ have distinct positions.
\item\label{blockchain.in.out}
Distinct inputs appearing in $\txs$ have distinct positions.  

It may be that the position of an input $i\in\tx\in\txs$ equals that of some output $o\in\tx'\in\txs$, or using the notation of Definition~\ref{defn.i.points.to.o}(\ref{i.points.to.o}): $i\mapsto o$.
If so, by condition~\ref{blockchain.at.most.one.output} there is at most one such. 
We write the \deffont{unique output pointed to by $i$} 
$$
\txs(i)\in\tf{Output}
$$ 
where it exists, so $o=\txs(i)\in\tx'\in\txs$.
\item\label{chunk.earlier}
For every input $i\in\tx\in\txs$, if the output $\txs(i)$ is defined then that output must occur in a transaction $\tx'$ that is strictly earlier than $\tx$ in $\txs$.
\item\label{blockchain.the.point}
For every $i\in\tx\in\txs$, if $\txs(i)$ is defined then $\f{validates}(\txs(i),\tx\at i)$ (Definition~\ref{defn.i.points.to.o}(\ref{i.points.to.o.validates})).
\end{enumerate*}
\end{defn}

\begin{rmrk}
So to sum Definition~\ref{defn.chunk} up in a single line: 
\begin{quote}
a \emph{chunk} is a list of transactions such that a position can only ever be shared between a single pair of an \emph{earlier} output $o$ to a \emph{later} input $i$, which $o$ \emph{validates} --- and otherwise positions are distinct.
\end{quote}
\end{rmrk}

\begin{nttn}
\label{nttn.chunk}
\begin{enumerate}
\item
Write 
$$
\tf{Chunk}_{\mathbb T}\subseteq[\tf{Transaction}_{\mathbb T}]
\quad\text{for the set of chunks over $\mathbb T$.}
$$
We may drop the $\mathbb T$ subscripts; the meaning will always be clear.
\item
We may also call a transaction-list \deffont{valid}, when it is a chunk.
That is: chunks are precisely the \emph{valid} transaction-lists.
\end{enumerate}
\end{nttn}

\begin{rmrk}
\label{rmrk.chunks.and.channels}
The way we have formulated the structure of chunks in Definition~\ref{defn.chunk} 
reminds this author of the $\pi$-calculus, where positions correspond to $\pi$-calculus channel names (and outputs are outputs and inputs are inputs).

When we have this intuition in mind, we may occasionally call positions \deffont{channels}, as in the \emph{blocked channels} of Subsection~\ref{subsect.blocked.channel}.
See also the discussion in Remark~\ref{rmrk.lambda.pi}.
\end{rmrk}

With the intuition of Remark~\ref{rmrk.chunks.and.channels} in mind, we give a simple definition, which refines Definition~\ref{defn.pos}:
\begin{defn}
\label{defn.inputs.and.outputs}
Suppose $\mathbb T=(\alpha,\beta,\tf{Transaction},\tf{Validator})$ is an IEUTxO model and suppose $\tx=(I,O)\in\tf{Transaction}$ is a transaction.
Then define 
\begin{itemize*}
\item
the \deffont{input channels} or \deffont{positions} $\f{input}(\tx)\finsubseteq\atoms$ of $\tx$ to be the finite set of atoms that are positions of inputs in $\tx$, and 
\item
the \deffont{output channels} or \deffont{positions} $\f{output}(\tx)\finsubseteq\atoms$ of $\tx$ to be the finite set of atoms that are positions of outputs in $\tx$.
\end{itemize*}
In symbols:
$$
\begin{array}{r@{\ }l}
\f{input}(\tx) =& \{p \mid (p,k)\in I\}
\\
\f{output}(\tx) =& \{p \mid (p,d,V)\in O\} .
\end{array}
$$ 
\end{defn}

An important special case of Notation~\ref{nttn.chunk} is when the chunk is a singleton list, i.e. it contains just one transaction:
\begin{lemm}
\label{lemm.singleton.chunk.valid}
Suppose $\mathbb T$ is an IEUTxO model and suppose $\tx\in\tf{Transaction}$.
Then 
$$
[\tx]\in\tf{Chunk}
\quad\text{if and only if}\quad 
\f{input}(\tx)\cap\f{output}(\tx)=\varnothing.
$$
\end{lemm}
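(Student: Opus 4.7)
The plan is to prove both directions by unfolding the four clauses of Definition~\ref{defn.chunk} applied to the singleton list $\txs = [\tx]$.

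First I would examine clauses~(\ref{blockchain.at.most.one.output}) and~(\ref{blockchain.in.out}), which require distinct outputs (resp.\ inputs) in $\txs$ to have distinct positions. For a singleton $\txs = [\tx]$ this reduces to the same condition on the sets of outputs and inputs of $\tx$ in isolation. These conditions are orthogonal to whether input positions and output positions of $\tx$ overlap, so they either hold trivially or by a standing well-formedness assumption on individual transactions, and in either case they do not interact with the intersection condition appearing in the lemma.

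The crux of the argument is clause~(\ref{chunk.earlier}): for every input $i \in \tx$, if $\txs(i)$ is defined then the output $\txs(i)$ must lie in a transaction strictly earlier than $\tx$ in $\txs$. Since $\txs = [\tx]$ contains only $\tx$, no transaction is strictly earlier, so clause~(\ref{chunk.earlier}) forces $\txs(i)$ to be undefined for every $i \in \tx$. Unpacking: $\txs(i)$ is defined precisely when the position of $i$ matches that of some output in $\txs$, i.e.\ some output in $\tx$. So clause~(\ref{chunk.earlier}) is equivalent to the statement that no input position of $\tx$ is an output position of $\tx$, which is exactly $\f{input}(\tx) \cap \f{output}(\tx) = \varnothing$. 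Clause~(\ref{blockchain.the.point}) is then vacuous, since with $\txs(i)$ never defined there is nothing to validate.

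Putting these pieces together yields the biconditional. I do not anticipate any substantial obstacle: the proof is essentially a definition-chase, and the only subtlety is the observation that ``strictly earlier'' has no witnesses inside a singleton list, which is precisely what converts clause~(\ref{chunk.earlier}) into a disjointness condition between $\f{input}(\tx)$ and $\f{output}(\tx)$.
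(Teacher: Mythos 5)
Your proof is correct and follows essentially the same route as the paper's: both identify Definition~\ref{defn.chunk}(\ref{chunk.earlier}) as the decisive clause, observe that a singleton list has no strictly earlier transaction so $\txs(i)$ must be undefined for every input, and conclude that this is exactly the disjointness of $\f{input}(\tx)$ and $\f{output}(\tx)$, with the remaining clauses playing no role. Your version merely spells out the definition-chase that the paper compresses into one sentence.
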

\begin{proof}
We consider the conditions in Definition~\ref{defn.chunk} and see that condition~\ref{chunk.earlier} forces the input and output channels of the transaction to be disjoint, and then none of the other conditions are relevant.
\end{proof}

\begin{rmrk}
Definition~\ref{defn.inputs.and.outputs} refines Definition~\ref{defn.pos}, and another way to phrase Lemma~\ref{lemm.singleton.chunk.valid} is that $[\tx]$ is a chunk precisely when $\pos(\tx)=\f{inputs}(\tx)\uplus\f{outputs}(\tx)$, where $\uplus$ denotes disjoint sets union.
\end{rmrk}


\begin{figure}
    \centering 
    \newcommand\ma{10}
    \newcommand\mb{5}
    \newcommand\mc{3}
    \begin{tikzpicture}[scale=0.3]

        \mkBackboneBC{0}

        \drawLine{bc2}{}{}{x2}
        \drawLineToCircle{$(b) + (1,-0.25)$}{below, xshift=-1, yshift=-1}{\nameObb}{bc2}{blue}

        \drawLineFromCircle{ab1}{}{}{x1}{blue}
        \drawLine{x1}{above, xshift=1, yshift=1}{\nameIca}{$(c) + (-1, 0.25)$}

        \drawLineFromCircle{cd1}{above, xshift=1}{\nameIda}{$(d) + (-1, 0.5)$}{blue}
        \drawLineFromCircle{cd2}{above, xshift=-2, yshift=-1}{\nameIdb}{$(d) + (-1, 0  )$}{blue}
        \drawLine{x2}{below}{\nameIdc}{$(d) + (-1,-0.5)$}

    \end{tikzpicture}
    \caption{A blockchain $\mathcal B=[\tx_1,\tx_2,\tx_3,\tx_4]$}
    \label{fig.blockchain}
    \ \\[5mm]
    \begin{tikzpicture}[scale=0.3]
        \mkBackboneBC{0.05}

        \drawLineToCircle{$(b) + (1,-0.25)$}{below, xshift=-1, yshift=-1}{\nameObb}{bc2}{red}
        \drawLineFromCircle{x1}{above, xshift=1, yshift=1}{\nameIca}{$(c) + (-1, 0)$}{red}
        \drawLineFromCircle{cd1}{above, xshift=1, yshift=0}{\nameIda}{$(d) + (-1, 0.5)$}{blue}
        \drawLineFromCircle{cd2}{above,xshift=-1, yshift=-2}{\nameIdb}{$(d) + (-1, 0  )$}{blue}
        \drawLineFromCircle{cd4}{below}{\nameIdc}{$(d) + (-1,-0.5)$}{red}

        \drawCircle{red}{ab1}

    \end{tikzpicture}
    \caption{$\mathcal B$ chopped up as a blockchain $[\tx_1,\tx_2]$ and a chunk $[\tx_3,\tx_4]$}
    \label{fig.chunks1}
    \ \\[5mm]
    \begin{tikzpicture}[scale=0.3]
        \mkBackboneCB{0.05}

        \drawLine{bd2}{below, xshift=1}{\nameIdc}{$(d) + (-1,-0.5)$}
        \drawLineToCircle{$(b) + (1,-0.25)$}{below, xshift=-1, yshift=-1}{\nameObb}{bd2}{blue}
        \drawLineFromCircle{ab1}{above, xshift=1, yshift=1}{\nameIca}{$(c) + (-1, 0.5)$}{blue}

        \drawLineFromCircle{bd1}{above, xshift=1}{\nameIda}{$(d) + (-1, 0.5)$}{red}
        \drawLineFromCircle{bd3}{below, yshift=0}{\nameIdb}{$(d) + (-1, 0  )$}{red}

        \drawCircle{red}{cb1}
        \drawCircle{red}{cd1}
        \drawCircle{red}{cd2}
    \end{tikzpicture}
    \caption{$\mathcal B$ chopped up as a blockchain $[\tx_1,\tx_3]$ and a chunk $[\tx_2,\tx_4]$}
    \label{fig.chunks2}
    \ \\[5mm]
    \begin{tikzpicture}[scale=0.3]
        \mkBackboneCB{0}

        \drawLineFromCircle{ab1}{above, xshift=1, yshift=1}{\nameIca}{$(c) + (-1, 0.5)$}{blue}
        \drawLine{bd1}{above, xshift=1}{\nameIda}{$(d) + (-1, 0.5)$}
        \drawLine{bd3}{below, yshift=0}{\nameIdb}{$(d) + (-1, 0  )$}
        \drawLine{bd2}{below, xshift=1}{\nameIdc}{$(d) + (-1,-0.5)$}
        \drawLineToCircle{$(b) + (1,-0.25)$}{below, xshift=-1, yshift=-1}{\nameObb}{bd2}{blue}

        \drawLineFromCircle{ab3}{}{}{cb1}{blue}
        \drawLineFromCircle{cd1}{}{}{bd1}{blue}
        \drawLineFromCircle{cd2}{}{}{bd3}{blue}

    \end{tikzpicture}
    \caption{The blockchain $\mathcal B'=[\tx_1,\tx_3,\tx_2,\tx_4]$}
    \label{fig.blockchain'}
\end{figure}


\subsubsection{UTxOs, UTxIs \dots}

\begin{defn}
\label{defn.utxo.utxi}
Suppose $\mathbb T$ is an IEUTxO model and $\txs\in[\tf{Transaction}]$.
\begin{enumerate}
\item
If $i\in\tx\in\txs$ and $\txs(i)$ is not defined\footnote{\dots so $i$ does not point to an earlier validating output in $\txs$\dots} then call the unique atom $a\in\pos(i)\subseteq\atoms$ an \deffont{unspent transaction input}, or \deffont{UTxI}.
Write 
$$
\utxi(\txs)\finsubseteq\atoms\quad\text{for the UTxIs of $\txs$.}
$$
\item
If $o\in\tx\in\txs$ and $o\neq\txs(i)$ for all later $i\in\tx\in\txs$\,\footnote{\dots so $o$ does not validate some later input in $\txs$ \dots} then call the unique atom $a\in\pos(o)\subseteq\atoms$ an \deffont{unspent transaction output}, or \deffont{UTxO}.
Write 
$$
\utxo(\txs)\finsubseteq\atoms\quad\text{for the set of UTxOs of $\txs$.}
$$
\item
If $o\in\tx\in\txs$ and $o=\txs(i)$ for some later $i\in\tx\in\txs$, then call the atom $a\in\pos(o)\subseteq\atoms$ a \deffont{spent transaction channel}, or \deffont{STx}.
Write 
$$
\stx(\txs)\finsubseteq\atoms\quad\text{for the set of STxs of $\txs$.}
$$
\end{enumerate}
\end{defn}

\begin{rmrk}
Some comments on the interpretation of $\utxi$ and $\utxo$ and $\stx$ from Definition~\ref{defn.utxo.utxi}:

$\utxi(\txs)$, $\utxo(\txs)$, and $\stx(\txs)$ are all finite sets of atoms, but we interpret them somewhat differently:
\begin{enumerate}
\item
Intuitively, an atom $a\in\utxo(\txs)$ identifies an output $o\in\tx\in\txs$ with position $a$.
So $\utxo(\txs)$ is a finite set of names of outputs in $\txs$.
\item
Intuitively, an atom $a\in\utxi(\txs)$ identifies an input-in-context $\tx\at i$, for $i\in\tx\in\txs$ with $\pos(i)=a$.

We say this because the validator of an output takes as argument an input-in-context $\tx\at i\in\tf{Transaction}_!$.
So $\utxi(\txs)$ is a finite set of names for inputs-in-contexts.
\item
Intuitively, an atom $a\in\stx(\txs)$ identifies a pair of an output and the input-in-context that spends it.
Thus $a$ could be thought of as this pair, or $a$ could be thought of as an edge in a graph that joins a node representing the output, to a node representing the input.  

So $\stx(\txs)$ is a finite set of internal names of already-spent communications between outputs and inputs-in-context within $\txs$.
\end{enumerate}
\end{rmrk}

\subsubsection{\dots and blockchains}

With the machinery we have now have, it is quick and easy to define blockchains:
\begin{defn}
\label{defn.blockchain}
A \deffont{blockchain} is a chunk $\f{ch}\in\tf{Chunk}$ such that $\utxi(\f{ch})=\varnothing$. 
In words: 
\begin{quote}
a blockchain is a chunk with no unspent inputs. 
\end{quote}
Diagrammatic examples follow in Example~\ref{xmpl.example.transactions}:
\end{defn}

\begin{xmpl}
\label{xmpl.example.transactions}
Recall we observed in Subsection~\ref{subsect.what.its.about} that (in the terminology that we now have) the key operation of an IEUTxO is to attach a transaction's inputs to a chunk's outputs.
 
Example transaction-lists, blockchains, and chunks are illustrated in Figures~\ref{fig.transaction}, \ref{fig.transaction.combined}, \ref{fig.blockchain}, \ref{fig.chunks1}, \ref{fig.chunks2}, and~\ref{fig.blockchain'}.\footnote{These diagrams are adapted from~\cite{gabbay:utxabs}, with my coauthor's agreement.}

In the Figures, a blue circle denotes a validator on an output at some position ($a,b,c,\dots$) that has accepted an input and connected to it; and a red circle denotes an unspent input or output, meaning one that has not connected up with a validator to form a spent output-input pair:
\begin{enumerate}
\item
$\mathcal B$, $\mathcal B'$, $[\tx_1,\tx_2]$ and $[\tx_1,\tx_3]$ are blockchains, because they have unspent outputs (in red) but no unspent inputs.

In these blockchains, $\tx_1$ is what is called the \deffont{genesis block}, meaning the first block in the chain.
It follows from the definitions that the genesis block has no inputs.\footnote{
If we permitted loops, i.e. connections from a later output to an input that is on the same block or earlier, then a genesis block might have an input that addresses an output on the same or a later block.
See Subsection~\ref{subsect.future.work}(\ref{loops}).
But for the definitions as set up in this paper, that is not allowed.

Note that our definitions admit a blockchain with two genesis blocks; it suffices to have more than one transaction with no inputs.  Whether this is a feature or a bug depends on the application. 
}
\item
$[\tx]$ (Figure~\ref{fig.transaction}) and $[\tx_1]$, $[\tx_3,\tx_4]$ and $[\tx_2,\tx_4]$ are chunks, but not blockchains because they have unspent inputs (in red).
\item
$[\tx_2,\tx_1]$ is neither a blockchain nor chunk, because the \nameIba-input of $\tx_2$ points to the later \nameOab-output of $\tx_1$. 
It is just a list of transactions.
\end{enumerate}
\end{xmpl}

We note two alternative characterisation of blockchains (Definition~\ref{defn.blockchain}):
\begin{lemm}
A chunk is a blockchain when \dots
\begin{enumerate*}
\item
\dots the `at most one' in Definition~\ref{defn.chunk}(\ref{blockchain.in.out}) is strengthened to `precisely one'.
\item
\dots the function $i\mapsto\txs(i)$ (Definition~\ref{defn.chunk}(\ref{blockchain.in.out}))
is a total function on the inputs in $\txs$ (so that every input points to precisely one output in an earlier transaction).
\end{enumerate*} 
\end{lemm}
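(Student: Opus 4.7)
The plan is to unpack the definition of blockchain (Definition~\ref{defn.blockchain}) and observe that each of the two alternative characterisations is just a restatement of $\utxi(\f{ch})=\varnothing$ phrased differently.

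First, I would recall that by Definition~\ref{defn.utxo.utxi} an atom lies in $\utxi(\txs)$ exactly when it is the (unique) position of some input $i\in\tx\in\txs$ for which $\txs(i)$ is not defined. Hence
$$
\utxi(\txs)=\varnothing
\quad\liff\quad
\Forall{i\in\tx\in\txs}\ \txs(i)\text{ is defined} .
$$
So the whole lemma reduces to checking that each of clauses (1) and (2) is equivalent to the right-hand side of this displayed biconditional.

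For clause (1): in a chunk, Definition~\ref{defn.chunk}(\ref{blockchain.in.out}) already provides the `at most one' direction --- given an input $i$, there is at most one output $o$ with $i\mapsto o$. Strengthening `at most one' to `precisely one' therefore asserts, as its only additional content, the existence of such an $o$ for every input $i$; in the notation of Definition~\ref{defn.chunk}(\ref{blockchain.in.out}) this is exactly the assertion that $\txs(i)$ is defined for every $i$. For clause (2): by definition of `total function', the partial operation $i\mapsto\txs(i)$ is total on inputs precisely when $\txs(i)$ is defined for every input $i\in\tx\in\txs$. So both clauses translate verbatim into the right-hand side of the displayed biconditional above, and the lemma follows.

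There is no real obstacle here: the argument is pure definition-chasing, and the only mildly subtle point is to note, for clause (1), that uniqueness is already part of the chunk conditions (via condition~\ref{blockchain.at.most.one.output} of Definition~\ref{defn.chunk}), so the extra content of `precisely one' over `at most one' is exactly existence, which is exactly totality of $i\mapsto\txs(i)$, which is exactly emptiness of $\utxi$. I would write the proof as a single display chasing these equivalences, citing Definitions~\ref{defn.chunk}, \ref{defn.utxo.utxi}, and~\ref{defn.blockchain} in order.
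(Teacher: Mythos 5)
Your proof is correct; the paper in fact states this lemma without any proof, treating it as immediate, and your definition-chasing argument (reducing both clauses to ``$\txs(i)$ is defined for every input'', which is exactly $\utxi(\txs)=\varnothing$ via Definition~\ref{defn.utxo.utxi}) is precisely the intended reasoning. Your observation that the only extra content of `precisely one' over `at most one' is existence, since uniqueness already follows from Definition~\ref{defn.chunk}(\ref{blockchain.at.most.one.output}), is the one point worth writing down, and you have it.
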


\begin{rmrk}
\label{rmrk.why.so.slow}
We step back to reflect on Definition~\ref{defn.blockchain}.
This is supposed to be a paper about blockchains; why did it take us this long to get to them?
Because they are a special case of something better and more pertinent: chunks.

A blockchain is just a left-closed chunk.
There is nothing wrong with blockchains, but mathematically, chunks seem more interesting:
\begin{enumerate*}
\item
A sublist of a blockchain is a chunk, not a blockchain (we prove this in a moment, in Corollary~\ref{corr.sublist.inclusion.chunks}).
\item
A composition of blockchains is possible, but uninteresting; whereas composition of chunks is clearly an interesting operation.\footnote{Blockchains have no unspent inputs, so if composed they just sit side-by-side and do not communicate.  Contrast this with chunks, for which the partial monoidal composition is clearly natural.}
\item
If we cut a blockchain into $n$ pieces then we get one \emph{blockchain} (the initial segment) \dots and $n-1$ \emph{chunks}. 
\item
Chunks can in any case be viewed as a natural generalisation of blockchains, to allow UTxIs as well as UTxOs. 
\end{enumerate*}
Definitions and results like Definition~\ref{defn.popm}, Theorem~\ref{thrm.popm}, and Lemma~\ref{lemm.fresh.chunks} inhabit a universe of chunks, not blockchains.

Even in implementation, where we care about real blockchains on real systems, a lot of development work goes into allowing users in practice to download only partial histories of the blockchain rather than having to download and store a complete record --- the motivation here is practical, not mathematical --- and in the terminology of this paper, we would say that for efficiency we may prefer to work with chunks where possible, because they can be partial and so can be more lightweight.

So the focus of this paper is on chunks: they generalise blockchains, have better mathematical structure; and chunks are in any case where we arrive even if we \emph{start off} asserting (de)compositional properties of blockchains; and finally --- though this is not rigorously explored in this paper, but we would suggest that --- chunks are also where we arrive when we consider space-efficient blockchain implementations.

Finally, we mention that blockchains have a right monoid action given by concatenating chunks.
Thus, by analogy here with rings and modules, we could imagine for future work a mathematics of blockchains generalising Definition~\ref{defn.blockchain} such that a `blockchain set' is just any set with a suitable chunk action.
\end{rmrk}

\subsection{Properties of chunks and blockchains}

\subsubsection{Algebraic and closure properties of chunks}

Lemma~\ref{lemm.invalidity.must.be.somewhere} expresses that a list of transactions is a valid chunk if and only if every sublist of it of length at most two, is a valid chunk.
In this sense, (in)validity is a \emph{local} phenomenon:
\begin{lemm}
\label{lemm.invalidity.must.be.somewhere}
Suppose $\mathbb T=(\alpha_{\mathbb T},\beta_{\mathbb T},\tf{Transaction}_{\mathbb T},\tf{Validator}_{\mathbb T})$ is an IEUTxO model, and suppose $[\tx_1,\dots,\tx_n]\in[\tf{Transaction}]$.
Then the following conditions are equivalent: 
\begin{itemize*}
\item
$[\tx_i,\tx_j]\in\tf{Chunk}$ for every $1\leq i< j\leq n$\,\footnote{This condition holds trivially if $n=0$, i.e. for the empty list.} 
\item
$[\tx_1,\dots,\tx_n]\in\tf{Chunk}$
\end{itemize*}
\end{lemm}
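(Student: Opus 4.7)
The plan is to inspect each of the four chunk conditions in Definition~\ref{defn.chunk} in turn and exploit that each is \emph{local}: it refers to at most two transactions of the list at once (either an input together with its unique pointed-to earlier output, or two inputs/outputs whose positions are being compared). The forward direction is then essentially restriction, and the backward direction proceeds by exhibiting, for each global condition to verify, a witnessing pair $[\tx_i,\tx_j]$ already known to be a chunk.

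For the forward direction ($[\tx_1,\dots,\tx_n]\in\tf{Chunk}$ implies every $[\tx_i,\tx_j]\in\tf{Chunk}$): distinct outputs or inputs of the pair are also distinct in the whole list, so conditions~\ref{blockchain.at.most.one.output} and~\ref{blockchain.in.out} pass down; and by uniqueness of the pointed-to output, the pair's notion of $\txs(i)$, when defined, coincides with that of the whole list, so conditions~\ref{chunk.earlier} and~\ref{blockchain.the.point} descend directly --- note that condition~\ref{blockchain.the.point} is entirely local to a single pair (output, input-in-context) and does not mention the rest of the list at all.

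For the backward direction: conditions~\ref{blockchain.at.most.one.output} and~\ref{blockchain.in.out} for the whole list follow because any two distinct outputs (respectively inputs) lie either in two different transactions --- in which case the pair of those transactions witnesses the distinct positions --- or in a single transaction $\tx_a$, in which case any pair $[\tx_i,\tx_j]$ containing $\tx_a$ does so (using $n\geq 2$). For condition~\ref{chunk.earlier}, given $i'\in\tx_a$ and $o\in\tx_b$ with $\pos(i')=\pos(o)$, one first rules out $a=b$ (such a configuration would violate condition~\ref{chunk.earlier} inside any pair containing $\tx_a$), and then the pair $[\tx_{\min(a,b)},\tx_{\max(a,b)}]$ forces $b<a$. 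Condition~\ref{blockchain.the.point} then follows from the pair $[\tx_b,\tx_a]$ being a chunk, since validation is defined purely in terms of $o$ and $\tx_a\at i'$ alone.

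The only real obstacle is the edge case where a transaction $\tx_a$ contains two outputs, two inputs, or an input/output pair sharing a position: we have to borrow a second transaction $\tx_k$ to build a witnessing pair, so the argument implicitly needs $n\geq 2$. The case $n=0$ is vacuous on both sides as the author notes; $n=1$ is degenerate (no pair conditions exist) and only works if one grants singleton validity separately via Lemma~\ref{lemm.singleton.chunk.valid}. Once these are dispatched, the remainder is a routine case-analysis over which conditions in Definition~\ref{defn.chunk} are at issue and which pair of indices witnesses each.
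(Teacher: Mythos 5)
Your proof is correct and takes essentially the same route as the paper, which disposes of the lemma in one line by observing that every well-formedness condition in Definition~\ref{defn.chunk} concerns a relationship involving at most two transactions of the list; you have simply carried out that case analysis explicitly in both directions. The $n=1$ edge case you flag is a genuine subtlety the paper's one-liner glosses over: a single transaction $\tx_1$ with an input and an output sharing a position (or two outputs sharing a position) makes the pair condition vacuously true while $[\tx_1]\notin\tf{Chunk}$ by Lemma~\ref{lemm.singleton.chunk.valid}, so the statement is really correct only when read as quantifying over all sublists of length at most two, singletons included, as the prose preceding the lemma in fact suggests.
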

\begin{proof}
We note of the well-formedness conditions on chunks from Definition~\ref{defn.chunk} that they all concern relationships involving at most two transactions. 
\end{proof}

\begin{corr}[Validity is down-closed]
\label{corr.sublist.inclusion.chunks}
Suppose we have an IEUTxO model $\mathbb T=(\alpha,\beta,\tf{Transaction},\tf{Validator})$ and $l,l'\in[\tf{Transaction}]$.

Recall from Notation~\ref{nttn.pointed} that we order lists by sublist inclusion, so $l'\leq l$ when $l'$ is a sublist of $l$.
Then 
$$
\f{ch}\in\tf{Chunk}\ \land\ l'\in[\tf{Transaction}]\ \land\ l'\leq \f{ch} 
\quad\text{implies}\quad
l'\in\tf{Chunk} . 
$$
In words: every sublist of a chunk, is itself a chunk.
\end{corr}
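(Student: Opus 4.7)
The plan is to reduce the statement immediately to Lemma~\ref{lemm.invalidity.must.be.somewhere}, which localises chunk-validity to pairs. The key observation is that sublist inclusion preserves the order of transactions: if $l' = [\tx_{k_1},\dots,\tx_{k_m}]$ with $k_1 < k_2 < \dots < k_m$ is obtained from $\f{ch} = [\tx_1,\dots,\tx_n]$ by deleting some entries, then for any $1 \leq i < j \leq m$ the pair $[\tx_{k_i},\tx_{k_j}]$ is also of the form $[\tx_r,\tx_s]$ with $1 \leq r < s \leq n$ in $\f{ch}$.

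First I would unpack the sublist ordering from Notation~\ref{nttn.pointed}(\ref{sublist.inclusion}) to fix this indexing correspondence. Then I would apply the forward direction of Lemma~\ref{lemm.invalidity.must.be.somewhere} to $\f{ch}$: since $\f{ch}\in\tf{Chunk}$, every two-element ordered sublist $[\tx_r,\tx_s]$ (with $r < s$) is itself a chunk. In particular, every two-element ordered sublist of $l'$ is such a pair, hence a chunk. Finally I would apply the backward direction of Lemma~\ref{lemm.invalidity.must.be.somewhere} to $l'$ to conclude that $l'$ itself is a chunk.

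There is no real obstacle here; the only thing worth checking carefully is that sublist inclusion preserves \emph{order} (not just membership), because the ``earlier/later'' condition in Definition~\ref{defn.chunk}(\ref{chunk.earlier}) is asymmetric. This is immediate from the formulation in Notation~\ref{nttn.pointed}(\ref{sublist.inclusion}), where one deletes but does not rearrange elements. The empty sublist is handled trivially (the pair condition holds vacuously, and the empty list is a chunk). So the proof is essentially a one-line invocation of Lemma~\ref{lemm.invalidity.must.be.somewhere} in both directions, sandwiched around the remark that two-element ordered sublists of $l'$ are also two-element ordered sublists of $\f{ch}$.
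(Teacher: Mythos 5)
Your proposal is correct and is exactly the paper's argument: the paper's proof is the one-line ``From Lemma~\ref{lemm.invalidity.must.be.somewhere}'', and your write-up simply spells out the two directions of that lemma together with the (order-preserving) indexing of sublists. The care you take about order preservation is a sensible elaboration, not a deviation.
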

\begin{proof}
From Lemma~\ref{lemm.invalidity.must.be.somewhere}.
\end{proof}

We can wrap up Corollary~\ref{corr.sublist.inclusion.chunks} in a nice mathematical package:
\begin{defn}
\label{defn.popm}
Suppose $(X,\leq,\mact,\mbot)$ has the following structure:
\begin{enumerate*}
\item
$X$ is a set.
\item
$(X,\leq)$ is a partial order, for which $\mbot$ is a bottom element.
\item
$\mact$ is a \deffont{partial monoid} action on $X$, meaning that $(x\mact y)\mact z$ exists if and only if $x\mact (y\mact z)$ exists, and if both exist then they are equal.\footnote{A slightly weaker possibility is that $(x\mact y)\mact z$ and $x\mact (y\mact z)$ need not exist or not exist together, but \emph{if} they both exist, then they are equal.  This is the natural notion \emph{if} we bind names of spent output-input pairs.  More discussion in Subsection~\ref{subsect.garbage-collection}.}
\item
$\mact$ is \deffont{down-closed}, meaning that if $x'\leq x$ and $x\mact y$ exists, then so does $x'\mact y$, and similarly for $y\mact x$ and $y\mact x'$. 
\item
$\mact$ is \deffont{monotone} where defined, meaning that if $x'\leq x$ then $x'\mact y\leq x\mact y$ (provided $x\mact y$ exists), and similarly for $y\mact x$ and $y\mact x'$.
\end{enumerate*}
In this case, call $(X,\leq,\mact,\mbot)$ a \deffont{partially-ordered partial monoid}.
\end{defn}

\begin{thrm}
\label{thrm.popm}
Suppose $\mathbb T$ is an IEUTxO model (Definition~\ref{defn.ieutxo.model}).

Then its set of chunks $\tf{Chunk}_{\mathbb T}$ (Definition~\ref{defn.chunk}) forms a partially-ordered partial monoid (Definition~\ref{defn.popm}), 
where 
\begin{itemize*}
\item
$\leq$ is sublist inclusion, 
\item
$\mact$ is list concatenation, and 
\item
the unit element is $[]$ the empty set.
\end{itemize*}
\end{thrm}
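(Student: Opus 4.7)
The plan is to check the five clauses of Definition~\ref{defn.popm} one by one, leveraging Corollary~\ref{corr.sublist.inclusion.chunks} and Lemma~\ref{lemm.invalidity.must.be.somewhere} for the non-routine parts; most of the work is either immediate from properties of lists or a direct appeal to these earlier results.

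First I would observe that $\tf{Chunk}_{\mathbb T}$ is a set (it is a subset of $[\tf{Transaction}_{\mathbb T}]$), that $[\,]\in\tf{Chunk}_{\mathbb T}$ vacuously (no transactions means no inputs or outputs, so the conditions of Definition~\ref{defn.chunk} hold trivially), and that sublist inclusion is a partial order on $[\tf{Transaction}_{\mathbb T}]$ whose restriction to $\tf{Chunk}_{\mathbb T}$ remains a partial order with $[\,]$ as bottom element (since $[\,]\leq\f{ch}$ for every $\f{ch}$).

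Next I would verify the partial monoid condition. Since list concatenation $\mact$ is associative in $[\tf{Transaction}_{\mathbb T}]$, the lists $(\f{ch}_1\mact\f{ch}_2)\mact\f{ch}_3$ and $\f{ch}_1\mact(\f{ch}_2\mact\f{ch}_3)$ are literally equal as elements of $[\tf{Transaction}_{\mathbb T}]$; the ``partiality'' lies only in whether the resulting list is a chunk, so one is in $\tf{Chunk}$ exactly when the other is, and the two agree when they exist. No real calculation is needed here.

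For down-closure, suppose $\f{ch}'\leq\f{ch}$ in $\tf{Chunk}_{\mathbb T}$ and $\f{ch}\mact\f{d}\in\tf{Chunk}_{\mathbb T}$. Then $\f{ch}'\mact\f{d}$ is a sublist of $\f{ch}\mact\f{d}$ (we dropped some transactions on the left), so by Corollary~\ref{corr.sublist.inclusion.chunks} it is itself a chunk; the symmetric case is identical. For monotonicity, if $\f{ch}'\leq\f{ch}$ then $\f{ch}'\mact\f{d}\leq\f{ch}\mact\f{d}$ is immediate from the definition of sublist inclusion (deleting transactions on the left preserves deleting transactions on the left), and similarly on the right.

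The only point that might look like an obstacle is the partial monoid clause, because one might worry about edge cases where associativity interacts with validity. But because the partial-order ``defined-ness'' here is exactly membership in $\tf{Chunk}$, and the two parenthesisations yield the same list of transactions, this collapses to a triviality. Thus no step requires genuine calculation beyond invoking Corollary~\ref{corr.sublist.inclusion.chunks}.
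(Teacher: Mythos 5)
Your proof is correct and follows the same route as the paper, whose entire proof is ``By facts of lists, and Corollary~\ref{corr.sublist.inclusion.chunks}''; you have simply made explicit the clause-by-clause verification that the paper leaves to the reader. The one place worth a half-sentence more of care is the associativity clause: strictly, ``$(x\mact y)\mact z$ exists'' also requires the intermediate $x\mact y$ to be a chunk, but this follows from Corollary~\ref{corr.sublist.inclusion.chunks} applied to the full concatenation, so your reduction of definedness to ``the resulting list is a chunk'' is justified.
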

\begin{proof}
By facts of lists, and Corollary~\ref{corr.sublist.inclusion.chunks}.
\end{proof}

\subsubsection{Some observations on observational equivalence}
\label{subsect.obs.obs}

\begin{rmrk}
Lemmas~\ref{lemm.fresh.chunks.defined} and~\ref{lemm.fresh.chunks} apply to IEUTxO models and essentially give criteria for observational equivalence when positions are disjoint.

We find them echoed in the theory of abstract chunk systems as Definitions~\ref{defn.oriented}(\ref{oriented.fresh.defined}) and~\ref{defn.oriented}(\ref{oriented.fresh.commute}), and we need them for Proposition~\ref{prop.partial.converse.fresh}.
\end{rmrk}

\begin{lemm}
\label{lemm.fresh.chunks.defined}
Suppose $\mathbb T=(\alpha,\beta,\tf{Transaction},\tf{Validator})$ is an IEUTxO model and $\f{ch},\f{ch}'\in\tf{Chunk}_{\mathbb T}$.
Then 
$$
\pos(\f{ch})\cap\pos(\f{ch}')=\varnothing
\quad\text{implies}\quad
\f{ch}\mact\f{ch}'\in\tf{Chunk}_{\mathbb T}.
$$
\end{lemm}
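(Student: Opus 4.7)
My plan is to invoke the local-validity criterion of Lemma~\ref{lemm.invalidity.must.be.somewhere}: to show that $\f{ch}\mact\f{ch}'$ is a chunk, it suffices to verify that every two-element sublist $[\tx,\tx']$ of $\f{ch}\mact\f{ch}'$ (with $\tx$ earlier than $\tx'$) is itself a chunk. These sublists split into three cases according to whether each of $\tx,\tx'$ lies in $\f{ch}$ or in $\f{ch}'$.

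The first two cases, where both transactions come from $\f{ch}$ or both from $\f{ch}'$, are immediate: the pair is a sublist of a chunk, hence a chunk, by Corollary~\ref{corr.sublist.inclusion.chunks}. Only the mixed case --- $\tx\in\f{ch}$ and $\tx'\in\f{ch}'$ (necessarily in that order, since $\f{ch}$ is the earlier portion of the concatenation) --- needs real work, and this is where the disjointness hypothesis $\pos(\f{ch})\cap\pos(\f{ch}')=\varnothing$ comes in.

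For such a pair I check the four clauses of Definition~\ref{defn.chunk} directly. Any output of $\tx$ has its position in $\pos(\f{ch})$ and any output of $\tx'$ in $\pos(\f{ch}')$; these are disjoint, so clause~\ref{blockchain.at.most.one.output} holds for the pair (and symmetrically for inputs, giving clause~\ref{blockchain.in.out}). For clauses~\ref{chunk.earlier} and~\ref{blockchain.the.point}, the only way a new input--output match could arise between $\tx$ and $\tx'$ is if some input of $\tx'$ shares a position with some output of $\tx$, or vice versa; but again disjointness of positions rules this out, so the function $[\tx,\tx'](i)$ is exactly what it was as either a restriction of $\f{ch}$'s lookup function (if $i\in\tx$, and then only $\tx$'s outputs are candidates since $\tx'$ is later) or of $\f{ch}'$'s (if $i\in\tx'$, noting that the positions of $\tx$'s outputs are disjoint from $\pos(\f{ch}')\supseteq\pos(i)$, so $\tx$ contributes no candidate). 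Thus the earlier-validation and validator conditions in the pair reduce to the corresponding conditions already satisfied inside $\f{ch}$ or inside $\f{ch}'$.

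No step is genuinely hard here; the only thing to be careful about is the bookkeeping in the mixed case --- specifically, recognising that disjointness of positions makes the lookup function $[\tx,\tx'](\minus)$ inherit its behaviour unchanged from the chunk that each transaction came from, so clauses~\ref{chunk.earlier} and~\ref{blockchain.the.point} never impose a genuinely new obligation. Once that is spelled out, Lemma~\ref{lemm.invalidity.must.be.somewhere} assembles the local verifications into the global conclusion $\f{ch}\mact\f{ch}'\in\tf{Chunk}_{\mathbb T}$.
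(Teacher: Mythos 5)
Your proposal is correct and follows essentially the same route as the paper: the paper's proof is a one-line ``routine checking of possibilities, using the fact that \dots no output in one can be called on to validate an input in the other,'' and your argument simply makes that routine checking explicit by reducing to two-element sublists via Lemma~\ref{lemm.invalidity.must.be.somewhere} and Corollary~\ref{corr.sublist.inclusion.chunks} and then using position-disjointness to rule out any cross-boundary interaction, which is exactly the paper's key observation.
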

\begin{proof}
By routine checking of possibilities, using the fact that if $\pos(\f{ch})\cap\pos(\f{ch}')=\varnothing$ (Definition~\ref{defn.pos}) then they have no positions in common, so no output in one can be called on to validate an input in the other.
\end{proof}

\begin{defn}
\label{defn.ch.obs.equiv}
Suppose $\f{ch},\f{ch}'\in\tf{Chunk}_{\mathbb T}$.
Then call $\f{ch}$ and $\f{ch}'$ \deffont{commuting} when
$$
\begin{array}{l}
\f{ch}\mact\f{ch}'\in\tf{Chunk}_{\mathbb T}
\liff
\f{ch}'\mact\f{ch}\in\tf{Chunk}_{\mathbb T} .
\end{array}
$$
\end{defn}

\begin{rmrk}
Definition~\ref{defn.ch.obs.equiv} is clearly a notion of observational equivalence between $\f{ch}\mact\f{ch}'$ and $\f{ch}'\mact\f{ch}$ where the observable is `forms a valid chunk with'.
This observable does not depend on internal structure, so we will develop it further once we have abstract chunk systems; see Definition~\ref{defn.circ}.

For now, Definition~\ref{defn.ch.obs.equiv} gives us just enough of the background theory of observational equivalence, to state and prove Lemma~\ref{lemm.fresh.chunks}, Proposition~\ref{prop.fresh.chunks.iff}, and Theorem~\ref{thrm.practical}.
\end{rmrk}

\begin{lemm}
\label{lemm.fresh.chunks}
Suppose $\mathbb T=(\alpha,\beta,\tf{Transaction},\tf{Validator})$ is an IEUTxO model.
Then:
\begin{enumerate}
\item
If $\tx,\tx'\in\tf{Transaction}$ and $\pos(\tx)\cap\pos(\tx')=\varnothing$ (Definition~\ref{defn.pos}) then the following all hold:
$$
[\tx,\tx']\in\tf{Chunk}
\ \liff\ 
[\tx',\tx]\in\tf{Chunk}
\ \liff\ 
[\tx],[\tx']\in\tf{Chunk} 
$$
\item\label{fresh.chunks.obs}
As a corollary, if $\f{ch},\f{ch}'\in\tf{Chunk}$ and $\pos(\f{ch})\cap\pos(\f{ch}')=\varnothing$ then $\f{ch}$ and $\f{ch}'$ are commuting (Definition~\ref{defn.ch.obs.equiv}).
\end{enumerate}
\end{lemm}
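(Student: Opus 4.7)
My plan for part~1 is to walk through the four clauses of Definition~\ref{defn.chunk} applied to the two-element list $[\tx,\tx']$ and show that, under the disjointness hypothesis $\pos(\tx)\cap\pos(\tx')=\varnothing$, each clause reduces to its analogue on the singleton lists $[\tx]$ and $[\tx']$. First I would note that clauses~(\ref{blockchain.at.most.one.output}) and~(\ref{blockchain.in.out}) demand distinctness of positions among outputs (resp. inputs) across $\tx\cup\tx'$; since $\pos(\tx)$ and $\pos(\tx')$ share no atoms, any clash could only occur strictly within one transaction, so these conditions factor as conjunctions of the corresponding conditions on each singleton. For clauses~(\ref{chunk.earlier}) and~(\ref{blockchain.the.point}) the key observation is that $i\mapsto o$ forces $\pos(i)=\pos(o)$ (Definition~\ref{defn.i.points.to.o}(\ref{i.points.to.o})), so disjointness prevents any input of $\tx'$ from pointing to an output of $\tx$ and vice versa. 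The remaining possibility, an input of $\tx$ pointing to an output of $\tx$ (or similarly within $\tx'$), is already forbidden by the ``strictly earlier'' requirement of clause~(\ref{chunk.earlier}) since no transaction is strictly earlier than itself. Hence $\txs(i)$ is undefined for every $i\in[\tx,\tx']$ and clause~(\ref{blockchain.the.point}) is vacuous. Putting this together and invoking Lemma~\ref{lemm.singleton.chunk.valid} to repackage the per-transaction residue, we obtain $[\tx,\tx']\in\tf{Chunk}\liff[\tx]\in\tf{Chunk}\land[\tx']\in\tf{Chunk}$. The argument is symmetric in $\tx$ and $\tx'$, so the analogous equivalence holds for $[\tx',\tx]$, yielding the chain of biconditionals in the statement.

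Part~2 I would then derive immediately from Lemma~\ref{lemm.fresh.chunks.defined}. Disjointness of $\pos(\f{ch})$ and $\pos(\f{ch}')$ is symmetric, so that lemma gives both $\f{ch}\mact\f{ch}'\in\tf{Chunk}$ and $\f{ch}'\mact\f{ch}\in\tf{Chunk}$; in particular both sides of the biconditional in Definition~\ref{defn.ch.obs.equiv} are true, so $\f{ch}$ and $\f{ch}'$ commute.

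I do not anticipate any real obstacle: this is a careful inspection of definitions rather than a genuinely hard argument. The one spot where one must be alert is in confirming that the singleton conditions $[\tx],[\tx']\in\tf{Chunk}$ capture \emph{exactly} the residue of the chunk conditions after the cross-interactions have been eliminated; in particular, that the internal input/output disjointness packaged by Lemma~\ref{lemm.singleton.chunk.valid} is what clause~(\ref{chunk.earlier}) forces within a single transaction. Once that bookkeeping is in place, the three-way biconditional falls out and the corollary is a one-line application of Lemma~\ref{lemm.fresh.chunks.defined}.
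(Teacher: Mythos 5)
Your proof is correct, and part~1 is essentially the paper's own argument made explicit: the paper disposes of it with ``by routine checking of possibilities, using the fact that \dots no output in one can be called upon to validate an input in the other'', and your clause-by-clause walk through Definition~\ref{defn.chunk} is exactly that check spelled out (including the observation that any surviving pointing would have to be internal to a single transaction and is then killed by the ``strictly earlier'' requirement). For part~2 you take a mild shortcut: citing Lemma~\ref{lemm.fresh.chunks.defined} twice gives $\f{ch}\mact\f{ch}'\in\tf{Chunk}$ and $\f{ch}'\mact\f{ch}\in\tf{Chunk}$, so both sides of the biconditional in Definition~\ref{defn.ch.obs.equiv} are true and commutation is immediate, whereas the paper re-derives the same fact from part~1 together with Lemma~\ref{lemm.invalidity.must.be.somewhere} and the inclusion $\pos(\tx)\subseteq\pos(l)$; both routes rest on the same disjointness observation, so the difference is cosmetic rather than substantive.
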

\begin{proof}
\begin{enumerate}
\item
By routine checking of possibilities, using the fact that if $\pos(\tx)\cap\pos(\tx')=\varnothing$ 
then they have no positions in common, so no output in one can be called upon to validate an input in the other.
\item
It is a fact that if $\tx\in l$ then $\pos(\tx)\subseteq\pos(l)$ and similarly for $\tx'\in l'$.
The corollary now follows by a routine argument from part~1 of this result and Lemma~\ref{lemm.invalidity.must.be.somewhere}.
\end{enumerate}
\end{proof}

\subsubsection{Properties of UTxOs and UTxIs}

We return to Definition~\ref{defn.utxo.utxi}:
Lemma~\ref{lemm.utxi.utxo.empty} uses Definition~\ref{defn.utxo.utxi} to note some simple properties of Definition~\ref{defn.chunk}.
\begin{lemm}
\label{lemm.utxi.utxo.empty}
Suppose $\mathbb T$ is an IEUTxO model and $\f{ch},\f{ch}'\in\tf{Chunk}_{\mathbb T}$
Then:
\begin{enumerate}
\item\label{utxi.cap.utxo} 
$\f{utxi}(\f{ch})\cap\f{utxo}(\f{ch})=\varnothing$
\item\label{utxi.utxo} 
If $\f{ch}\mact\f{ch}'\in\tf{Chunk}_{\mathbb T}$
then 
$\pos(\f{ch})\cap\pos(\f{ch}')\subseteq\utxo(\f{ch})\cap\utxi(\f{ch}')$.
\item\label{utxo.cup.utxo}
$\begin{array}[t]{r@{\ }l}
\varnothing=&\f{utxi}(\f{ch})\cap\f{stx}(\f{ch})
\\
\varnothing=&\f{utxo}(\f{ch})\cap\f{stx}(\f{ch})
\\
\pos(\f{ch})=&\f{utxi}(\f{ch})\uplus\f{utxo}(\f{ch})\uplus\f{stx}(\f{ch})
\qquad\text{\it $\uplus$ is disjoint union}
\end{array}
$
\end{enumerate}
\end{lemm}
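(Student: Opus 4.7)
The plan is to prove all three parts by first establishing a structural trichotomy for each position of a chunk. Given $a\in\pos(\f{ch})$, conditions~\ref{blockchain.at.most.one.output} and~\ref{blockchain.in.out} of Definition~\ref{defn.chunk} allow at most one output and at most one input at $a$, and condition~\ref{chunk.earlier} forces the output (if present) to sit in a strictly earlier transaction than the input (if also present). So each $a\in\pos(\f{ch})$ falls into exactly one of three mutually exclusive classes: the position of an output with no matching input (i.e.\ $a\in\utxo(\f{ch})$), the position of an input with no matching output (i.e.\ $a\in\utxi(\f{ch})$), or the position of an earlier output paired with a later input (i.e.\ $a\in\stx(\f{ch})$).

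Parts~\ref{utxi.cap.utxo} and~\ref{utxo.cup.utxo} now fall out immediately: the three classes are pairwise disjoint and cover $\pos(\f{ch})$, which gives both the three disjointness facts and the partitioning identity $\pos(\f{ch})=\utxi(\f{ch})\uplus\utxo(\f{ch})\uplus\stx(\f{ch})$.

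For part~\ref{utxi.utxo}, I would take $a\in\pos(\f{ch})\cap\pos(\f{ch}')$ and apply the trichotomy to the combined chunk $\f{ch}\mact\f{ch}'$ at position $a$. Each of $\f{ch}$ and $\f{ch}'$ must contribute either an input or an output at $a$, and by the uniqueness clauses they cannot contribute the same kind. Since every transaction of $\f{ch}$ is strictly earlier than every transaction of $\f{ch}'$, condition~\ref{chunk.earlier} forces the output to lie in $\f{ch}$ and the input in $\f{ch}'$. Viewed inside $\f{ch}$ alone, this output has no matching input (there is none in the combined chunk on the $\f{ch}$-side), so $a\in\utxo(\f{ch})$; dually, the input in $\f{ch}'$ has no matching output in $\f{ch}'$, so $a\in\utxi(\f{ch}')$.

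There is no real obstacle; the only care needed is in reading Definition~\ref{defn.utxo.utxi} correctly, noting that an output is a UTxO precisely when no later input in the same list shares its position (and symmetrically for UTxIs), a condition that the trichotomy verifies inside each sub-chunk as soon as one knows that the matching partner must live in the \emph{other} sub-chunk.
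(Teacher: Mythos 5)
Your proof is correct and takes essentially the same route as the paper's: both arguments reduce to unwinding Definitions~\ref{defn.chunk} and~\ref{defn.utxo.utxi}, using the at-most-one-output, at-most-one-input, and strictly-earlier conditions to sort each shared position into exactly one of the three classes. Your explicit trichotomy is just a cleaner packaging of the partition asserted in part~\ref{utxo.cup.utxo}, which the paper's terser proof relies on implicitly.
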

\begin{proof}
\begin{enumerate}
\item
An input cannot point to a later output, because of Definition~\ref{defn.chunk}(\ref{chunk.earlier}), and if it points to an earlier output then by construction in Definition~\ref{defn.utxo.utxi} this position no longer labels a UTxO or UTxI.
Furthermore a position can be used at most once in an input-output pair, by Definition~\ref{defn.chunk}(\ref{blockchain.in.out}).
\item
From Definition~\ref{defn.chunk}(\ref{chunk.earlier}), as for the previous case.
\item
All facts of Definition~\ref{defn.utxo.utxi} and Figure~\ref{fig.positions}. 
\end{enumerate}
\end{proof}

\begin{rmrk}
Proposition~\ref{prop.fresh.chunks.iff} can be viewed as a stronger version of Lemma~\ref{lemm.fresh.chunks.defined}.
It is an important result because it relates the following apparently different observables:
\begin{enumerate}
\item
A statically observable property, that $\f{ch}$ and $\f{ch}'$ mention disjoint sets of positions.
\item
A locally observable property, that $\f{ch}$ and $\f{ch}'$ compose in both directions.
\item
An abstract global observable, that $\f{ch}\mact\f{ch}'$ and $\f{ch}'\mact\f{ch}$ can be commuted in any larger chunk.
\end{enumerate}
Compare also with Proposition~\ref{prop.partial.converse.fresh}, which is a similar result but for the differently-constructed abstract chunk systems.
\end{rmrk}
 
\begin{prop}
\label{prop.fresh.chunks.iff}
Suppose $\mathbb T=(\alpha,\beta,\tf{Transaction},\tf{Validator})$ is an IEUTxO model and $\f{ch},\f{ch}'\in\tf{Chunk}_{\mathbb T}$.
Then the following are equivalent:
\begin{enumerate}
\item 
$\pos(\f{ch})\cap\pos(\f{ch}')=\varnothing$
\item
$\f{ch}\mact\f{ch}'\in\tf{Chunk}_{\mathbb T} \land \f{ch}'\mact\f{ch}\in\tf{Chunk}_{\mathbb T}$.
\item
$\f{ch}$ and $\f{ch}'$ are commuting.
\end{enumerate}
\end{prop}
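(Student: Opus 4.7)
The plan is to prove the three-way equivalence by closing the cycle $(1)\Rightarrow(2)\Rightarrow(3)\Rightarrow(1)$.

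For $(1)\Rightarrow(2)$, I would apply Lemma~\ref{lemm.fresh.chunks.defined} directly to obtain $\f{ch}\mact\f{ch}'\in\tf{Chunk}_{\mathbb T}$, and then reapply it with the roles of $\f{ch}$ and $\f{ch}'$ swapped---this second application is legitimate because the hypothesis $\pos(\f{ch})\cap\pos(\f{ch}')=\varnothing$ is symmetric in the two chunks. For $(2)\Rightarrow(3)$, this is immediate from Definition~\ref{defn.ch.obs.equiv}: the biconditional defining ``commuting'' holds when both of its sides are true.

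The substantive step is $(3)\Rightarrow(1)$. I would argue by contradiction: assume $\f{ch}$ and $\f{ch}'$ commute and pick $p\in\pos(\f{ch})\cap\pos(\f{ch}')$. The crucial observation is that commuting \emph{yokes} the two orderings, so that the validity of $\f{ch}\mact\f{ch}'$ forces the validity of $\f{ch}'\mact\f{ch}$ and vice versa. Suppose $\f{ch}\mact\f{ch}'\in\tf{Chunk}_{\mathbb T}$; then also $\f{ch}'\mact\f{ch}\in\tf{Chunk}_{\mathbb T}$. Applying Lemma~\ref{lemm.utxi.utxo.empty}(\ref{utxi.utxo}) to the first concatenation places $p\in\utxo(\f{ch})\cap\utxi(\f{ch}')$, while applying the same lemma to the second places $p\in\utxo(\f{ch}')\cap\utxi(\f{ch})$. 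Combining the two inclusions at $\f{ch}$ gives $p\in\utxi(\f{ch})\cap\utxo(\f{ch})$, directly contradicting Lemma~\ref{lemm.utxi.utxo.empty}(\ref{utxi.cap.utxo}).

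The main obstacle is precisely this \emph{superposition} in $(3)\Rightarrow(1)$: one has to recognise that the commuting hypothesis is exactly what permits Lemma~\ref{lemm.utxi.utxo.empty}(\ref{utxi.utxo}) to be applied to the same shared position $p$ from both directions, and that the resulting joint membership condition then falls afoul of part~(\ref{utxi.cap.utxo}) of the same lemma. The complementary subcase, where neither concatenation is a chunk (so commuting holds trivially), is handled by the contrapositive of Lemma~\ref{lemm.fresh.chunks.defined}, which couples the failure of composition to the intersection of positions, keeping the chain of implications consistent throughout.
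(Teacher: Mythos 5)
Your cycle $(1)\Rightarrow(2)\Rightarrow(3)\Rightarrow(1)$ is mostly sound: the steps $(1)\Rightarrow(2)$ and $(2)\Rightarrow(3)$ are fine, and your core argument for $(3)\Rightarrow(1)$ in the case where some composition succeeds is exactly the paper's argument for $(2)\Rightarrow(1)$ --- apply Lemma~\ref{lemm.utxi.utxo.empty}(\ref{utxi.utxo}) to both orderings at the shared position $p$ and collide the result with Lemma~\ref{lemm.utxi.utxo.empty}(\ref{utxi.cap.utxo}). The genuine gap is your ``complementary subcase''. If neither $\f{ch}\mact\f{ch}'$ nor $\f{ch}'\mact\f{ch}$ is a chunk, the biconditional in Definition~\ref{defn.ch.obs.equiv} holds vacuously, so $\f{ch}$ and $\f{ch}'$ \emph{are} commuting; but the contrapositive of Lemma~\ref{lemm.fresh.chunks.defined} then tells you $\pos(\f{ch})\cap\pos(\f{ch}')\neq\varnothing$, which is the \emph{negation} of the conclusion $(1)$ you are trying to reach, not a means of reaching it. Concretely, take two singleton chunks $[\tx]$ and $[\tx']$ whose transactions each carry an input at the same position $a$ (with distinct keys): both concatenations violate Definition~\ref{defn.chunk}(\ref{blockchain.in.out}), so the pair is commuting, yet $a$ lies in both position sets. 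So $(3)\Rightarrow(1)$ cannot be closed in this subcase, and your appeal to the contrapositive points in exactly the wrong direction.

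To be fair, this is not only your problem. The paper's own proof establishes $(1)\Leftrightarrow(2)$ and then cites Lemma~\ref{lemm.fresh.chunks}(\ref{fresh.chunks.obs}) for $(1)\Rightarrow(3)$, and never argues $(3)\Rightarrow(1)$ or $(3)\Rightarrow(2)$ at all; the three-way equivalence as literally stated needs the side-condition that at least one of the two compositions is a chunk --- precisely the hypothesis that the companion result Proposition~\ref{prop.partial.converse.fresh} adds in the abstract setting. Your attempt to prove the full cycle has the virtue of exposing this. The honest repair is either to add that hypothesis (or restrict clause~$(3)$ accordingly), or to follow the paper and prove only $(1)\Leftrightarrow(2)$ together with $(1)\Rightarrow(3)$, flagging that the remaining converse fails for pairs whose compositions both fail.
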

\begin{proof}
The top-to-bottom implication is Lemma~\ref{lemm.fresh.chunks.defined}.

For the bottom-to-top implication, suppose that $\f{ch}\mact\f{ch}',\f{ch}'\mact\f{ch}\in\tf{Chunk}_{\mathbb T}$.
From Lemma~\ref{lemm.utxi.utxo.empty}(\ref{utxi.utxo}) we have 
$$
\pos(\f{ch})\cap\pos(\f{ch}')\subseteq (\utxo(\f{ch})\cap\utxi(\f{ch}'))\cap(\utxo(\f{ch}')\cap\utxi(\f{ch})) .
$$
We can rearrange this: 
$$
\pos(\f{ch})\cap\pos(\f{ch}')
\subseteq 
(\utxi(\f{ch})\cap\utxo(\f{ch}))\cap(\utxi(\f{ch}')\cap\utxo(\f{ch}')) .
$$
Now we use Lemma~\ref{lemm.utxi.utxo.empty}(\ref{utxi.cap.utxo}).

The final part is direct from Lemma~\ref{lemm.fresh.chunks}(\ref{fresh.chunks.obs}).
\end{proof}

\begin{rmrk}
\label{rmrk.separation.logic}
Proposition~\ref{prop.fresh.chunks.iff} is a \emph{resource separation} result: if two chunks depend on disjoint resources (disjoint sets of positions) then they commute.
This is in the spirit of separation logic~\cite{reynolds:seplls}, which is a family of logics for reasoning about programs with resource separation in programs --- intuitively, that if two programs depend on disjoint resources (e.g. channels or pointers), then they should not interfere with one another, just as we see in Proposition~\ref{prop.fresh.chunks.iff}.

It is also in the spirit of a short paper~\cite{nester:fouls} (which appeared after this paper went into initial review) which uses monoidal categories to reason on resources in a broadly similar spirit, albeit using different methods.
A quote from that paper makes a related point: \emph{We have seen how the resource theoretic interpretation of monoidal categories, and in particular their string diagrams, captures the sort of material history that concerns ledger structures for blockchain systems.}
More on this in the Conclusions.
\end{rmrk}

We conclude with Lemma~\ref{lemm.perm.pos}, which we will need later in Lemma~\ref{lemm.perm.utxi}:
\begin{lemm}
\label{lemm.perm.pos}
Suppose $\mathbb T\in\tf{IEUTxO}$ is an IEUTxO model and $\pi\in\f{Perm}$ is a permutation of atoms and $\txs\in[\tf{Transaction}_{\mathbb T}]$.
Then:
$$
\begin{array}[t]{r@{\ }l}
f(\pi\pact\txs) =&\pi\pact f(\txs) 
\\
=& \{\pi(a)\mid a\in f(\txs)\}
\end{array}
\quad\text{for}\  f\in\{\utxi,\utxo,\stx,\pos\} 
$$ 
In the terminology of Definition~\ref{defn.equivariant}:
$\utxi$, $\utxo$, $\stx$, and $\pos$ are all equivariant.
\end{lemm}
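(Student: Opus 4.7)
The plan is to prove equivariance of the four functions by peeling back their definitions and showing that every construction used is itself equivariant, so the statements assemble from the atomic equivariance fact $\pi\pact a = \pi(a)$ on atoms. The second equality in the displayed conclusion is just the unfolded definition of $\pi\pact X$ on a set of atoms (Definition~\ref{defn.perm}), so everything reduces to proving $f(\pi\pact\txs)=\pi\pact f(\txs)$ for each $f$.

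First I would dispose of $\pos$ by structural induction mirroring Figure~\ref{fig.positions}. For an input $i=(p,k)$, the concrete permutation action (Definition~\ref{defn.perm}) gives $\pi\pact i = (\pi(p),\pi\pact k)$, so $\pos(\pi\pact i)=\{\pi(p)\}=\pi\pact\pos(i)$; the case for outputs is identical. For a transaction $\tx=(I,O)$, the action $\pi\pact\tx=(\pi\pact I,\pi\pact O)$ is pointwise on sets, and the inductive hypothesis together with the set-theoretic identity $\pi\pact\bigcup_{x\in S} A_x = \bigcup_{x\in S}(\pi\pact A_x)$ finishes the case. The list case reduces to the transaction case in the same way. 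This establishes that $\pos$ is equivariant.

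For $\utxi$, $\utxo$, and $\stx$, I would first show that all the auxiliary relations used in Definition~\ref{defn.utxo.utxi} are equivariant. The pointing relation $i\mapsto o$ is defined by $\pos(i)=\pos(o)$ (Definition~\ref{defn.i.points.to.o}(\ref{i.points.to.o})), so equivariance of $\pos$ gives $i\mapsto o \liff \pi\pact i\mapsto\pi\pact o$, hence $(\pi\pact\txs)(\pi\pact i)=\pi\pact(\txs(i))$ (with one side defined iff the other is). The ``earlier than'' relation is invariant under the pointwise action on lists, because $\pi\pact[\tx_1,\dots,\tx_n]=[\pi\pact\tx_1,\dots,\pi\pact\tx_n]$ preserves list order. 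Validation $(d,\tx\at i)\in V$ is preserved because $\tf{Validator}$ and the injection $\nu$ are equivariant by Definition~\ref{defn.solution}(\ref{validator.injection}): from $(d,\tx\at i)\in V$ we get $(\pi\pact d,\pi\pact(\tx\at i))\in \pi\pact V$, and since $V\in\tf{Validator}$ an equivariant set, $\pi\pact V$ is again a validator playing the corresponding role on $\pi\pact o$.

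Putting these together, for any atom $a$ we have $a\in\utxi(\txs)$ precisely when there exists $i\in\tx\in\txs$ with $\pos(i)=\{a\}$ and $\txs(i)$ undefined. Applying $\pi$ throughout and using the equivariance of membership, $\pos$, ``in'', and ``$\txs(i)$ defined'', this is equivalent to $\pi(a)\in\utxi(\pi\pact\txs)$; the same argument works for $\utxo$ (where the condition instead quantifies over later $i$) and for $\stx$ (where both an earlier output and a later validating input are required). The main obstacle, if any, is not combinatorial but bookkeeping: one must check that each primitive ingredient, in particular the validation clause, really is preserved by the permutation action, and this is precisely what the equivariance assumptions in Definition~\ref{defn.solution} were set up to guarantee.
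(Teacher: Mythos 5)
Your proposal is correct and follows exactly the route the paper takes: the paper's proof is the one-liner ``Direct from Figure~\ref{fig.positions} and Definitions~\ref{defn.utxo.utxi} and~\ref{defn.perm}'', and what you have written is simply that argument spelled out in full, unfolding each definition and checking every ingredient (including the validation clause via the equivariance of $\nu$) is preserved by the permutation action. No gaps.
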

\begin{proof}
Direct from Figure~\ref{fig.positions} and Definitions~\ref{defn.utxo.utxi} and~\ref{defn.perm}.
\end{proof}

\subsection{An application: UTxO systems are `Church-Rosser', in a suitable sense}

We now come to Theorem~\ref{thrm.practical} which is an application of our machinery so far:

\begin{thrm}[Church-Rosser for UTxO]
\label{thrm.practical}
Suppose $\mathbb T\in\tf{IEUTxO}$ and $y,x,x'\in\tf{Chunk}_{\mathbb T}$.
Suppose further that
$$
y\mact x\mact x'\in\tf{Chunk}_{\mathbb T}
\quad\text{and}\quad
\utxi(y\mact x') = \utxi(y\mact x\mact x') .
$$ 
Then we have that:
\begin{enumerate*}
\item
$x$ and $x'$ are commuting (Definition~\ref{defn.ch.obs.equiv}).
\item
$y\mact x'\mact x\in\tf{Chunk}_{\mathbb T}$. 
\end{enumerate*}
\end{thrm}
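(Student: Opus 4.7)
The plan is to reduce both claims to the single structural fact $\pos(x)\cap\pos(x')=\varnothing$. Given this, part 1 follows immediately from Proposition~\ref{prop.fresh.chunks.iff}, and part 2 reduces via Lemma~\ref{lemm.invalidity.must.be.somewhere} to checking pairs of transactions straddling the reordered boundary, which will be handled by Lemma~\ref{lemm.fresh.chunks}. Note first that $y\mact x'\in\tf{Chunk}_{\mathbb T}$ by Corollary~\ref{corr.sublist.inclusion.chunks}, so the hypothesis on $\utxi$ makes sense. So the real work is establishing position-disjointness.

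The crux is a case analysis on a putative shared position $a\in\pos(x)\cap\pos(x')$. Because distinct inputs (resp.\ distinct outputs) in a chunk carry distinct positions (Definition~\ref{defn.chunk}(\ref{blockchain.at.most.one.output},\ref{blockchain.in.out})), and because $y\mact x\mact x'$ is a chunk, the occurrences of $a$ in $x$ and in $x'$ must be of opposite types. Two cases remain. If $a$ labels an input in $x$ and an output in $x'$, then since $x'$ is later than $x$ the input cannot point to that output (Definition~\ref{defn.chunk}(\ref{chunk.earlier})) and there is no other output at $a$ in the whole chunk; so $a$ is at once an unspent input and an unspent output of $y\mact x\mact x'$, contradicting Lemma~\ref{lemm.utxi.utxo.empty}(\ref{utxi.cap.utxo}). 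If $a$ labels an output in $x$ and an input in $x'$, the output spends the input inside $y\mact x\mact x'$, so $a\in\stx(y\mact x\mact x')$ and $a\notin\utxi(y\mact x\mact x')$; but in the sublist $y\mact x'$ the output at $a$ has been removed, no output at $a$ survives in $y$ (uniqueness of output positions again), and hence $a\in\utxi(y\mact x')$, contradicting the hypothesis $\utxi(y\mact x')=\utxi(y\mact x\mact x')$. Thus $\pos(x)\cap\pos(x')=\varnothing$, and part 1 follows from Proposition~\ref{prop.fresh.chunks.iff}.

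For part 2, by Lemma~\ref{lemm.invalidity.must.be.somewhere} it suffices to show that every length-two sublist of $y\mact x'\mact x$ is a chunk. Pairs drawn from within $y$, within $x'$, or within $x$ are sublists of the given chunks $y$, $x'$, $x$, hence valid by Corollary~\ref{corr.sublist.inclusion.chunks}. Pairs $(t_y,t_{x'})$ and $(t_y,t_x)$ occur in the same relative order as in $y\mact x\mact x'$, so they remain valid. The only genuinely new pairs are $(t_{x'},t_x)$ with $t_{x'}\in x'$ and $t_x\in x$; these have disjoint positions by what was just shown, so Lemma~\ref{lemm.fresh.chunks} makes $[t_{x'},t_x]$ a chunk. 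Hence $y\mact x'\mact x\in\tf{Chunk}_{\mathbb T}$. The main obstacle is really just the case analysis above; once Lemma~\ref{lemm.utxi.utxo.empty} is wielded correctly, the UTxI equation is sharp enough to force position-disjointness, and everything else is bookkeeping.
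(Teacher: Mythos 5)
Your proof is correct and takes essentially the same route as the paper's: both arguments reduce everything to establishing $\pos(x)\cap\pos(x')=\varnothing$ from the $\utxi$ hypothesis and then invoke Proposition~\ref{prop.fresh.chunks.iff}, and you merely unfold by hand what the paper obtains by citing Lemma~\ref{lemm.utxi.utxo.empty}(\ref{utxi.utxo}) (shared positions of consecutive chunks lie in $\utxo(x)\cap\utxi(x')$) and, for part~2, what it leaves implicit via Lemma~\ref{lemm.fresh.chunks}. One small imprecision in your first case (input position of $a$ in $x$, output in $x'$): ``points to'' is mere position-sharing (Definition~\ref{defn.i.points.to.o}(\ref{i.points.to.o})), so $\txs(i)$ \emph{is} defined and $a\notin\utxi(y\mact x\mact x')$; the contradiction there is directly with Definition~\ref{defn.chunk}(\ref{chunk.earlier}) rather than with $a\in\utxi\cap\utxo$ --- but the case is impossible either way, so nothing breaks.
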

\begin{proof}
We know by Corollary~\ref{corr.sublist.inclusion.chunks} (because $y\mact x\mact x'$ is a chunk) that $x\mact x'$ is a chunk, so $\pos(x)\cap\pos(x')\subseteq\utxo(x)\cap\utxi(x')$.

We also know that $\utxi(y\mact x')=\utxi(y\mact x\mact x')$ and it follows from Definition~\ref{defn.utxo.utxi} that ($\utxo(y)\cap\utxi(x)=\varnothing$ and) $\utxo(x)\cap\utxi(x')=\varnothing$.\footnote{We really use here the fact that names can only be used to link an output to an input \emph{once}.}

Therefore $\pos(x)\cap\pos(x')=\varnothing$.
By Proposition~\ref{prop.fresh.chunks.iff}, $x$ and $x'$ are commuting, and it follows (since $y\mact x'\mact x\in\tf{Chunk}_{\mathbb T}$) that $y\mact x\mact x'\in\tf{Chunk}_{\mathbb T}$.
\end{proof}

Recall the definition of a blockchain (Definition~\ref{defn.blockchain}) as being a chunk with empty $\utxi$.
Then we can specialise Theorem~\ref{thrm.practical} as follows:
\begin{corr}
\label{corr.pure}
Suppose $\mathbb T\in\tf{IEUTxO}$ and $y,x,x'\in\tf{Chunk}_{\mathbb T}$ and suppose $y\mact x'$ is a blockchain.
Then:
\begin{enumerate*}
\item
If $y\mact x\mact x'$ is a blockchain then $x$ and $x'$ commute.
\item
If $x$ and $x'$ do not commute then $y\mact x\mact x'$ is not a blockchain.
\end{enumerate*}
\end{corr}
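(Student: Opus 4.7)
The plan is to reduce both parts to Theorem~\ref{thrm.practical} by checking that the blockchain hypothesis supplies exactly the equality of UTxIs that the theorem requires. The key algebraic observation is that ``blockchain'' is defined (Definition~\ref{defn.blockchain}) as a chunk whose set of unspent inputs is empty, so if two chunks are both blockchains then their UTxI sets are trivially equal --- both are $\varnothing$.

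For part~1, I would proceed as follows. Assume $y\mact x\mact x'$ is a blockchain. Then by definition, $y\mact x\mact x' \in\tf{Chunk}_{\mathbb T}$ and $\utxi(y\mact x\mact x')=\varnothing$. The hypothesis of the corollary gives that $y\mact x'$ is also a blockchain, so $\utxi(y\mact x')=\varnothing$. Therefore $\utxi(y\mact x')=\utxi(y\mact x\mact x')$ holds (both are empty), and the hypotheses of Theorem~\ref{thrm.practical} are satisfied. Applying that theorem yields that $x$ and $x'$ are commuting, as required.

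For part~2, I would simply take the contrapositive of part~1. If $x$ and $x'$ are not commuting, then part~1 tells us $y\mact x\mact x'$ cannot be a blockchain (since if it were, $x$ and $x'$ would commute). There is nothing more to prove.

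There is no real obstacle here; the statement is essentially a packaging of Theorem~\ref{thrm.practical} into the blockchain special case. The only subtlety worth flagging is that ``blockchain'' already bundles both membership in $\tf{Chunk}_{\mathbb T}$ and vanishing of $\utxi$, so one has to unpack the definition to see that the UTxI-equality hypothesis of Theorem~\ref{thrm.practical} is automatic --- it is not an additional assumption we need to verify separately.
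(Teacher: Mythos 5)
Your proof is correct and follows exactly the paper's own route: the paper's proof reads ``Direct from Theorem~\ref{thrm.practical}, for the case that $\utxi(y\mact x')=\utxi(y\mact x\mact x')=\varnothing$,'' which is precisely your observation that the blockchain hypothesis makes both UTxI sets empty and hence equal. Your part~2 as the contrapositive of part~1 is likewise what the paper intends.
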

\begin{proof}
Direct from Theorem~\ref{thrm.practical}, for the case that $\utxi(y\mact x')=\utxi(y\mact x\mact x')=\varnothing$.
\end{proof}

\begin{rmrk}
Corollary~\ref{corr.pure} models a situation where someone designs a chunk $x'$ that will successfully attach to a blockchain $y$, but because this is a distributed system, somebody else gets in and attaches $x$ first.

Then this can only happen if $x$ and $x'$ are commuting; conversely, if $x$ and $x'$ are not commuting then one of $y\mact x\mact x'$ or $y\mact x'$ must fail to be a blockchain. 

What makes this interesting is that it is an important correctness property from the point of view of the user who created $x'$:
\emph{if} $x'$ is accepted onto both $y$ and $y\mact x$ without failure, then it \emph{does not matter} that $x$ got in first --- $y\mact x\mact x'$ and $y\mact x'\mact x$ are equivalent up to observable behaviour.

Note that:
\begin{itemize*}
\item
Theorem~\ref{thrm.practical} and Corollary~\ref{corr.pure} do not state that $\tf{IEUTxO}$ systems are insensitive to order, and 
\item
Theorem~\ref{thrm.practical} and Corollary~\ref{corr.pure} do not state that all transactions always commute (this is simply not what is written on the page). 
\end{itemize*}

Theorem~\ref{thrm.practical} and Corollary~\ref{corr.pure} can be viewed as a \emph{purity} property, in the sense of functional programming: if $x'$ successfully combines with $y$, then inputs to $x'$ may not be modified by an intervening environment $x$ --- there are no side-effects!

More specifically, these results can be viewed as playing a role analogous to a `Church-Rosser' or `confluence' property.
To see why, contrast with the situation in an accounts-based system --- corresponding to an imperative paradigm --- where a transaction may be successfully appended \emph{even if} parameters to it on the blockchain get modified by intervening transactions.  

To take a concrete scenario: I could check my bank account, observe I have enough money for a purchase, submit my transaction --- and then go into overdraft and be subject to overdraft fees, because a direct debit happened to arrive in-between (a) my checking my balance and designing my purchase transaction and (b) the payment request for the purchase transaction arriving at my account.
This error clearly comes from the use of a stateful, imperative programming style, and Theorem~\ref{thrm.practical} expresses a rigorous sense in which a corresponding phenomenon is impossible in a UTxO-style system.\footnote{In practice, programmers of smart contracts on accounts-based blockchains may write explicit tests into their smart contracts that double-check values of input variables \emph{at time of attachment to the blockchain}, if they anticipate this might be an issue.  More discussion of this is in~\cite{gabbay:utxabs}.} 

With this comparison in mind, we see that Theorem~\ref{thrm.practical} is a purity result: state is local, and composition of chunks succeeds or fails locally. 
\end{rmrk}

\subsection{The category \texorpdfstring{$\tf{IEUTxO}$}{IEUTxO} of IEUTxO models}

We can organise our IEUTxO models into a category:
\begin{defn}
\label{defn.ieutxo.category}
Let $\tf{IEUTxO}$ be a category such that: 
\begin{enumerate}
\item
Objects $\mathbb S,\mathbb T$ are IEUTxO models (Definition~\ref{defn.solution}).
\item\label{ieutxo.arrow}\label{condition.a}
An arrow $\lf:\mathbb S\to\mathbb T$ is a map 
$$
\lf:\tf{Transaction}_{\mathbb S}\to\tf{Chunk}_{\mathbb T}
$$ 
such that
if $\tx,\tx'\in\tf{Transaction}_{\mathbb S}$ then
\begin{equation}
\label{eq.condition.a}
[\tx,\tx']\in\tf{Chunk}_{\mathbb S}
\quad\text{implies}\quad
\lf(\tx)\mact \lf(\tx')\in\tf{Chunk}_{\mathbb T}.
\end{equation}

Above, $\mact$ denotes monoid composition, which on chunks is list concatenation; see Notation~\ref{nttn.pointed}(\ref{list.concat}).
\item
The identity arrow maps $\tx$ to $[\tx]$.
\item\label{ieutxo.arrow.composition}
Composition of arrows is pointwise, meaning that 
if 
$$
\begin{array}{r@{\ }l@{\qquad}r@{\ }l}
\lf:&\mathbb S\to\mathbb S'
&
\tx\in&\tf{Transaction}_{\mathbb S} 
\\
\lf':&\mathbb S'\to\mathbb S''
&
\lf(\tx)=&[\tx'_1,\dots,\tx'_n]
\end{array}
$$ 
then $\lf'\lf :\mathbb S\to\mathbb S''$ is such that
$$
\tx\in\tf{Transaction}_{\mathbb S} \longmapsto \lf'(\tx'_1)\mact\ldots\mact \lf'(\tx'_n)\in\tf{Chunk}_{\mathbb S''} .
$$ 
We prove this mapping is indeed an arrow --- thus, it maps to chunks --- in Corollary~\ref{corr.composition.arrows.defined.ieutxo}.
\end{enumerate}
\end{defn}

\begin{lemm}
\label{lemm.validity.preserved}
An arrow $\lf:\mathbb S\to\mathbb T$ (Definition~\ref{defn.ieutxo.category}(\ref{ieutxo.arrow}))
induces a mapping of chunks $\tf{Chunk}_{\mathbb S}\to\tf{Chunk}_{\mathbb T}$, by acting on the individual transactions and composing the results:
$$
\lf([\tx_1,\dots,\tx_n]) = \lf(\tx_1)\mact \ldots\mact \lf(\tx_n) .
$$
\end{lemm}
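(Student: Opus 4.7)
The plan is to reduce the claim to the pairwise characterisation of chunk validity already established in Lemma~\ref{lemm.invalidity.must.be.somewhere}, and then split into cases depending on whether a given pair lies inside a single image $\lf(\tx_k)$ or straddles two of them.

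First, I would assume $[\tx_1,\dots,\tx_n]\in\tf{Chunk}_{\mathbb S}$. By Lemma~\ref{lemm.invalidity.must.be.somewhere} applied in $\mathbb S$, this is equivalent to the statement that $[\tx_i,\tx_j]\in\tf{Chunk}_{\mathbb S}$ for every $1\leq i<j\leq n$. Then by the arrow condition \eqref{eq.condition.a} in Definition~\ref{defn.ieutxo.category}(\ref{ieutxo.arrow}), we get $\lf(\tx_i)\mact\lf(\tx_j)\in\tf{Chunk}_{\mathbb T}$ for every such $i<j$. Also, by definition, each $\lf(\tx_k)\in\tf{Chunk}_{\mathbb T}$ individually (this is the codomain of $\lf$).

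Next, write $\lf(\tx_k)=[\tx^k_1,\dots,\tx^k_{m_k}]$ for each $k$, so that the concatenation we care about is
$$
L \;=\; [\tx^1_1,\dots,\tx^1_{m_1},\,\tx^2_1,\dots,\tx^2_{m_2},\,\dots,\,\tx^n_1,\dots,\tx^n_{m_n}]\in[\tf{Transaction}_{\mathbb T}].
$$
To apply Lemma~\ref{lemm.invalidity.must.be.somewhere} in the reverse direction in $\mathbb T$, it suffices to verify that every two-element sublist $[\tx^k_a,\tx^l_b]$ of $L$ (with $(k,a)$ preceding $(l,b)$ in the list order) is a chunk in $\mathbb T$. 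I would split this into two cases. If $k=l$, then $[\tx^k_a,\tx^k_b]$ is a sublist of the chunk $\lf(\tx_k)$, so it is a chunk by Corollary~\ref{corr.sublist.inclusion.chunks}. If $k<l$, then $[\tx^k_a,\tx^l_b]$ is a sublist of $\lf(\tx_k)\mact\lf(\tx_l)$, which we just noted is a chunk in $\mathbb T$; again by Corollary~\ref{corr.sublist.inclusion.chunks}, the two-element sublist is a chunk.

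Having verified the two-transaction condition for all relevant pairs, Lemma~\ref{lemm.invalidity.must.be.somewhere} yields $L\in\tf{Chunk}_{\mathbb T}$, which is exactly the assertion that $\lf(\tx_1)\mact\ldots\mact\lf(\tx_n)\in\tf{Chunk}_{\mathbb T}$. The only potentially delicate point is keeping the bookkeeping of indices straight when unfolding the concatenated list into pairs, but this is routine once the case split above is in place; no new machinery is required beyond what is cited.
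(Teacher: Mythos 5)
Your proposal is correct and follows essentially the same route as the paper: the paper's proof is a one-line appeal to exactly the three ingredients you use --- the arrow condition of Definition~\ref{defn.ieutxo.category}(\ref{ieutxo.arrow}), the pairwise locality of Lemma~\ref{lemm.invalidity.must.be.somewhere}, and the down-closure of Corollary~\ref{corr.sublist.inclusion.chunks}. You have merely spelled out the case split (pairs within one $\lf(\tx_k)$ versus pairs straddling $\lf(\tx_k)$ and $\lf(\tx_l)$) that the paper leaves implicit.
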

\begin{proof}
The nontrivial part is to check that 
\begin{itemize*}
\item
if $\f{ch}=[\tx_1,\dots,\tx_n]$ is a valid chunk in $\tf{Chunk}_{\mathbb S}$, 
\item
then $\lf(\tx_1)\mact\ldots\mact \lf(\tx_n)$ is a valid chunk in $\tf{Chunk}_{\mathbb T}$.
\end{itemize*}
This follows by combining condition~\ref{ieutxo.arrow} of Definition~\ref{defn.ieutxo.category} with Lemma~\ref{lemm.invalidity.must.be.somewhere} and Corollary~\ref{corr.sublist.inclusion.chunks}.
\end{proof}

\begin{corr}
\label{corr.ab}
Condition~\ref{ieutxo.arrow} of Definition~\ref{defn.ieutxo.category} is equivalent to either of the following conditions: 
\begin{enumerate*}
\item
$[\tx_1,\dots,\tx_n]\in\tf{Chunk}_{\mathbb S}$
implies
$\lf(\tx_1)\mact\ldots\mact\lf(\tx_n)\in\tf{Chunk}_{\mathbb T}$.
\item
$f$ induces a monoid homomorphism on chunks.
\end{enumerate*}
\end{corr}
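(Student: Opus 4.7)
The plan is to establish the two equivalences by chaining through Lemma~\ref{lemm.validity.preserved} and the locality result Lemma~\ref{lemm.invalidity.must.be.somewhere}. First I would observe that condition~\ref{ieutxo.arrow} of Definition~\ref{defn.ieutxo.category} is literally the $n=2$ case of part~1, so the implication from part~1 to condition~\ref{ieutxo.arrow} is immediate. Conversely, the implication from condition~\ref{ieutxo.arrow} to part~1 is precisely the content of Lemma~\ref{lemm.validity.preserved}, which has already done the work: given a chunk $[\tx_1,\dots,\tx_n]\in\tf{Chunk}_{\mathbb S}$, each sublist $[\tx_i,\tx_j]$ with $i<j$ is a chunk by Corollary~\ref{corr.sublist.inclusion.chunks}, so by condition~\ref{ieutxo.arrow} each $\lf(\tx_i)\mact \lf(\tx_j)$ is a chunk in $\mathbb T$, and then Lemma~\ref{lemm.invalidity.must.be.somewhere} lets us upgrade this pairwise information to get that the full concatenation $\lf(\tx_1)\mact\cdots\mact \lf(\tx_n)$ is a chunk (applied to the list of transactions appearing across the $\lf(\tx_i)$'s, and noting monoid associativity of $\mact$).

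Next I would address the equivalence with part~2. The map $\lf$ already extends to lists by $\lf([\tx_1,\dots,\tx_n])=\lf(\tx_1)\mact\cdots\mact \lf(\tx_n)$, with $\lf([])=[]$; this is, on the level of the free monoid $[\tf{Transaction}_{\mathbb S}]$, automatically a monoid homomorphism into $[\tf{Transaction}_{\mathbb T}]$ under concatenation. The only content of saying that $\lf$ is a monoid homomorphism \emph{on chunks} (i.e.\ a morphism of the partial monoid $\tf{Chunk}_{\mathbb S}$ into $\tf{Chunk}_{\mathbb T}$, in the sense of Theorem~\ref{thrm.popm}) is that when an input list lies in $\tf{Chunk}_{\mathbb S}$, the output list lies in $\tf{Chunk}_{\mathbb T}$. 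This is exactly part~1, so (1) and (2) are equivalent by unwinding definitions.

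I would write this as a two-line proof: (condition~\ref{ieutxo.arrow}) $\Rightarrow$ (1) is Lemma~\ref{lemm.validity.preserved}; (1) $\Rightarrow$ (condition~\ref{ieutxo.arrow}) is the $n=2$ instance; and (1) $\Leftrightarrow$ (2) is immediate from the fact that $\lf$ is defined on lists by concatenation, so the only nontrivial content in being a partial-monoid homomorphism on chunks is that chunks go to chunks. No step looks hard; the only mild subtlety is remembering that ``monoid homomorphism on chunks'' refers to the \emph{partial} monoid of Theorem~\ref{thrm.popm}, so the homomorphism condition reduces exactly to the preservation-of-validity statement in (1).
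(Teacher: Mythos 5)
Your proposal is correct and follows essentially the same route as the paper's own (very terse) proof: the paper likewise notes that condition~\ref{ieutxo.arrow} is the special case $n=2$ of part~1, cites Lemma~\ref{lemm.validity.preserved} for the converse, and dismisses the equivalence of parts~1 and~2 as routine because every chunk factors into singletons. Your write-up merely unfolds the same steps in more detail (via Corollary~\ref{corr.sublist.inclusion.chunks} and Lemma~\ref{lemm.invalidity.must.be.somewhere}, which is exactly how Lemma~\ref{lemm.validity.preserved} is proved in the paper).
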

\begin{proof}
The equivalence of conditions~1 and~2 above is routine, given that every chunk factors into singletons.
Then condition~\eqref{eq.condition.a} in part~\ref{ieutxo.arrow} of Definition~\ref{defn.ieutxo.category} is just a special case of condition~1 above, and
the reverse implication is Lemma~\ref{lemm.validity.preserved}.
\end{proof}

\begin{corr}
\label{corr.composition.arrows.defined.ieutxo}
Composition of arrows as given in Definition~\ref{defn.ieutxo.category}(\ref{ieutxo.arrow.composition})
is well-defined; that is, the composition $\lf'\,\lf$ really is a map from transactions to chunks. 
\end{corr}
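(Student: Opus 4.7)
The plan is to reduce the claim directly to the extension lemma for arrows on chunks that was just established. Fix arrows $\lf:\mathbb S\to\mathbb S'$ and $\lf':\mathbb S'\to\mathbb S''$ and an input transaction $\tx\in\tf{Transaction}_{\mathbb S}$. By Definition~\ref{defn.ieutxo.category}(\ref{ieutxo.arrow}), $\lf$ maps $\tx$ into $\tf{Chunk}_{\mathbb S'}$, so we may write $\lf(\tx)=[\tx'_1,\dots,\tx'_n]$ where this list is a valid chunk of $\mathbb S'$. This is the only piece of data the composition consumes.

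Next I would invoke Lemma~\ref{lemm.validity.preserved} applied to $\lf':\mathbb S'\to\mathbb S''$: since $[\tx'_1,\dots,\tx'_n]\in\tf{Chunk}_{\mathbb S'}$, the induced map on chunks sends it to
$$
\lf'(\tx'_1)\mact\ldots\mact \lf'(\tx'_n)\in\tf{Chunk}_{\mathbb S''}.
$$
But this expression is exactly the definition of $(\lf'\lf)(\tx)$ given in Definition~\ref{defn.ieutxo.category}(\ref{ieutxo.arrow.composition}). Hence $(\lf'\lf)(\tx)\in\tf{Chunk}_{\mathbb S''}$ for every $\tx\in\tf{Transaction}_{\mathbb S}$, which is precisely well-definedness. (One could equivalently cite Corollary~\ref{corr.ab}(1), which rephrases Lemma~\ref{lemm.validity.preserved} in the form we need.)

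There is no real obstacle here: the entire content of the corollary is that the two-step construction does not leave the chunk world, and Lemma~\ref{lemm.validity.preserved} has already done the work of showing that an arrow extends from transactions to chunks while preserving validity. The corollary is therefore best viewed as a bookkeeping remark that the composition of arrows in Definition~\ref{defn.ieutxo.category} is literally composition of the induced monoid homomorphisms on chunks, restricted back to transactions in the source. No further case analysis on the chunk well-formedness conditions of Definition~\ref{defn.chunk} is required, since those are all absorbed into Lemma~\ref{lemm.validity.preserved}.
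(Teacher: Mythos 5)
Your argument is exactly the paper's: both reduce well-definedness of the composition to Lemma~\ref{lemm.validity.preserved}, observing that $\lf(\tx)\in\tf{Chunk}_{\mathbb S'}$ and that the induced action of $\lf'$ on chunks sends it to a valid chunk in $\tf{Chunk}_{\mathbb S''}$, which is precisely the formula defining $\lf'\,\lf$. Correct, and no meaningful difference from the paper's proof.
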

\begin{proof}
Continuing the notation of Definition~\ref{defn.ieutxo.category}(\ref{ieutxo.arrow.composition}), by assumption $\lf$ maps $\tx\in\tf{Transaction}_{\mathbb S}$ to some $\lf(\tx)\in\tf{Chunk}_{\mathbb S'}$, and then by Lemma~\ref{lemm.validity.preserved} the action of $\lf'$ maps $\lf(\tx)$ to a valid chunk $\lf'(\lf(\tx))\in\tf{Chunk}_{\mathbb S''}$. 
\end{proof}

\begin{rmrk}[Comment on design]
\label{rmrk.lf.design}
We briefly discuss the design decisions embedded in Definition~\ref{defn.ieutxo.category}:
\begin{enumerate}
\item
The conditions in Corollary~\ref{corr.ab} are more readable than condition~\eqref{eq.condition.a} 
of Definition~\ref{defn.ieutxo.category}(\ref{ieutxo.arrow}), but this comes at the cost of an additional universally quantified parameter $n$.
It is a matter of taste which version we take as primitive: the one in the Definition has fewest parameters and is easiest to check (a higher-level view will be taken later when we develop abstract chunk systems in Section~\ref{sect.acs}).
\item\label{lf.partial}
We could relax the condition to allow $\lf$ to be a partial map.

This would exhibit $\tf{IEUTxO}$ as a subcategory of a larger category with the same objects but more arrows, and in particular it would allow chunks in $\mathbb S$ to cease to be valid when mapped to $\mathbb T$ --- we would still insist that $\lf$ be a \emph{partial} monoid homomorphism on chunks, where everything is defined.

We did not choose this design for this paper, but it might be useful for future work; e.g. following an intuition that $\mathbb S$ is a liberal universe of chunks, and $\lf$ maps it to a stricter universe $\mathbb T$ in which additional restrictions are appended to validators.
Thus, chunks in the liberal world might cease to be valid in the stricter universe. 
\item 
We could also restrict $\lf$ further so that $\lf:\tf{Transaction}_{\mathbb S}\to\tf{Transaction}_{\mathbb T}$.

This would yield fewer arrows, and we prefer to allow the flexibility of mapping a single transaction in $\mathbb S$ to a chunk of transactions in $\mathbb T$; following an intuition that $\mathbb S$ is a coarse-grained representation which $\lf$ maps into a finely-grained representation where something that was considered a single transaction is now a chunk.
\end{enumerate}
\end{rmrk}

\subsection{Idealised UTxO}
\label{subsect.iutxo}

One special case of IEUTxO deserves its own discussion:

\begin{rmrk}[Idealised UTxO]
\label{rmrk.utxo}
Recall from Figure~\ref{fig.ieutxo} that validators take as input a \emph{pointed} transaction:
$$
\tf{Transaction}_!\subseteq \finpow_!(\tf{Input})\times\finpow(\tf{Output}) .
$$ 
Recall also from Notation~\ref{nttn.pointed}(\ref{pointed.finite.set})
and that a \emph{pointed transaction} is a transaction with one distinguished input of that transaction.
For convenience we will call this the \deffont{input-point} of the transaction. 

The UTxO model --- on which Bitcoin is based --- is the special case of EUTxO where validators just examine the input-point.  
So intuitively, in the UTxO model a validator of an output sees just the input that points to that output, in the sense of Notation~\ref{nttn.ty.points.to}, and it does not pay any attention to the transaction in which that input occurs.

We therefore obtain an \deffont{Idealised UTxO} (\deffont{IUTxO}) model from Figure~\ref{fig.ieutxo} just by changing the line for validators to:
$$
\tf{Validator} \subseteq \powerset(\beta\times\tf{Input}) .
$$ 
 
There is an easy embedding map which we can write $1$, taking an IUTxO model to an IEUTxO model, derived from the embedding
$$
\powerset(\beta\times\tf{Input}) \longrightarrow
\powerset(\beta\times\tf{Transaction}_!) 
$$
which is itself derived from the projection taking a pointed transaction to its input-point:
$$
\tf{Transaction}_! \longrightarrow\tf{Input} .
$$ 
Then we can define a category of \deffont{IUTxO models} such that 
\begin{itemize*}
\item
objects are IUTxO models, and 
\item
arrows are functions exactly as defined in Definition~\ref{defn.ieutxo.category}.
\end{itemize*}
\end{rmrk}

\begin{prop}
\label{prop.1}
The mapping $\f{e}$ extends to a categorical embedding\footnote{A functor that is injective on objects and bijective on arrows.} $\tf{IUTxO}\to\tf{IEUTxO}$.
\end{prop}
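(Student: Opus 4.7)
The plan is to establish the three pieces of the definition of categorical embedding in turn: (i) that $\f{e}$ is a functor, (ii) that $\f{e}$ is injective on objects, and (iii) that $\f{e}$ is bijective on each hom-set.

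First I would pin down $\f{e}$ on objects. Writing $p : \tf{Transaction}_! \to \tf{Input}$ for the projection $((I,i),O)\mapsto i$ of Remark~\ref{rmrk.utxo}, the induced inverse-image map sends an IUTxO validator $V\subseteq\beta\times\tf{Input}$ to the IEUTxO validator
$$
V' = \{(d,\ctx)\in\beta\times\tf{Transaction}_!\mid (d,p(\ctx))\in V\}.
$$
The image $\f{e}(\mathbb S)$ is then obtained by leaving $\alpha$, $\beta$ and $\tf{Transaction}$ unchanged and passing the validator component through $V\mapsto V'$. That the result is an IEUTxO model in the sense of Definition~\ref{defn.solution} is a formality, since equivariance and the injection requirement both lift along the equivariant inverse-image map.

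The technical heart of the argument is the observation that chunks are preserved and reflected by $\f{e}$: for every $\txs\in[\tf{Transaction}_{\mathbb S}]$, $\txs\in\tf{Chunk}_{\mathbb S}$ if and only if $\txs\in\tf{Chunk}_{\f{e}(\mathbb S)}$. This reduces to checking the clauses of Definition~\ref{defn.chunk} one by one. Clauses~\ref{blockchain.at.most.one.output}--\ref{chunk.earlier} concern only positions, which are unchanged by $\f{e}$. The non-trivial clause~\ref{blockchain.the.point} asks about $\f{validates}$, which on the IUTxO side is $(d,i)\in V$ and on the IEUTxO side is $(d,\tx\at i)\in V'$; these are equivalent by the very construction of $V'$.

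Once chunk equality is in hand, the rest is easy. On arrows I would simply take $\f{e}(\lf)=\lf$: the underlying function $\tf{Transaction}_{\mathbb S}\to\tf{Chunk}_{\mathbb T}$ is unchanged, and the chunk-preservation condition~\eqref{eq.condition.a} is literally the same condition on both sides. This makes $\f{e}$ bijective on each hom-set, and preservation of identities and composition from Definition~\ref{defn.ieutxo.category} is immediate. For injectivity on objects, I would observe that $V$ can be recovered from $V'$ as $\{(d,i) \mid (d,\ctx)\in V'\ \text{for some}\ \ctx\ \text{with}\ p(\ctx) = i\}$, so $V\mapsto V'$ is injective. The step I expect to require most care is the chunk-equality observation: although it is ultimately just checking that $V'$ was set up precisely to make this work, the verification must handle the pointed-input structure of $\tf{Transaction}_!$ correctly and also the degenerate case of input-free transactions, where clause~\ref{blockchain.the.point} is vacuous on both sides.
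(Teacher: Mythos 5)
Your proposal is correct and takes essentially the same route as the paper, whose entire proof is that the result is ``direct from the construction'' of Remark~\ref{rmrk.utxo}: you have filled in exactly the details that remark leaves implicit (the inverse-image action on validators along the input-point projection, the preservation and reflection of chunks clause-by-clause through Definition~\ref{defn.chunk}, and the identity action on arrows, which gives bijectivity on hom-sets). The one point deserving a caveat is your recovery formula for $V$ from $V'$, which only recovers those pairs $(d,i)$ whose input $i$ actually occurs as the input-point of some pointed transaction, so injectivity of $V\mapsto V'$ (and hence injectivity of the induced $\nu'$) can fail for validators differing only on inputs that occur in no transaction --- a degenerate case the paper's one-line proof glosses over just as silently.
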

\begin{proof}
Direct from the construction, since an $\tf{IUTxO}$ model is identified with an $\tf{IEUTxO}$ model whose validators ignore the transaction and just look at the input-point.
\end{proof}

\begin{rmrk}
\label{rmrk.identify}
For convenience, we may treat $\tf{IUTxO}$ as a direct subset of $\tf{IEUTxO}$ --- abusing notation we could write $\tf{IUTxO}\subseteq\tf{IEUTxO}$ --- thus identifying an IUTxO model with an IEUTxO model whose validators only check the input-point of their transaction.
Thus for instance we wrote `is identified with' in Proposition~\ref{prop.1}.
It will always be clear what is intended and we could always unroll the injections if required.
\end{rmrk}

\section{Abstract chunk systems: \texorpdfstring{$\tf{ACS}$}{ACS}}
\label{sect.acs}

\subsection{Basic definitions}

\begin{rmrk}
\label{rmrk.promise.good}
IEUTxO models are good, because they abstract key features of blockchain architectures in a simple and (I would argue) clear manner: output, input, and (valid) combination of transactions to form chunks and then blockchains.

However, IEUTxOs are concrete.
An IEUTxO model is full of internal structure, by its very construction as a solution to type equations in Figure~\ref{fig.ieutxo}.
We will now set about developing an axiomatic, algebraic account of the essential features that make IEUTxOs interesting.
\end{rmrk}

We recall some basic definitions:
\begin{defn}
Suppose $\ns X$ is a set and ${\leq}\subseteq\ns X^2$ is a relation on $\ns X$.
Call $(\ns X,\leq)$ a \deffont{well-ordering} when:
\begin{enumerate*}
\item
$\leq$ is a partial order (reflexive, transitive, anti-symmetric), and
\item
$\leq$ is well-founded (every descending chain is eventually stationary).\footnote{Alternative and equivalent definition: every \emph{strictly} descending chain is finite.}
\end{enumerate*}
As per Notation~\ref{nttn.equivariant}, $\ns X$ and $\leq$ are also assumed equivariant.
\end{defn}

\begin{xmpl}
This should be familiar, but we give examples: 
\begin{itemize}
\item
$(\mathbb Z,\leq)$ is not well-founded.
\item
$(\powerset(\atoms),\subseteq)$ and $(\finpow(\atoms),\subseteq)$ are well-founded.
\item
$[\mathbb N]$ (lists of numbers) with sublist inclusion is well-founded.
\end{itemize}
\end{xmpl}

\begin{defn}
\label{defn.atomic.elements}
Suppose $(\ns X,\mbot,\mtop,\leq)$ is a partial order with an equivariant least element $\mbot$ and an equivariant greatest element $\mtop$.
Call $x\in \ns X$ \deffont{atomic} when 
\begin{enumerate*}
\item\label{atomic.proper}
$\mbot\lneq x\lneq \mtop$ and 
\item
for every $x'\in\ns X$ if $x'\leq x$ then either $x'=\mbot$ or $x'=x$. 
\end{enumerate*}
Write $\f{atomic}(\ns X)$ for the set of atomic elements of $\ns X$ (see also Definition~\ref{defn.atomic}).

If we call $x\in\ns X$ \deffont{proper} when it is neither $\mbot$ nor $\mtop$ (following the standard terminology of \emph{proper subset}), then an atomic element is ``a minimal proper element''.
\end{defn}

\begin{rmrk}
The set of atomic elements $\f{atomic}(\ns X)$ is not to be confused with the set of atoms $\atoms$ from Definition~\ref{defn.zfa.atoms}.
This name collision is just a coincidence. 
\end{rmrk}

Lemma~\ref{lemm.atomic.IEUTxO} will be useful later:
\begin{lemm}
\label{lemm.atomic.IEUTxO}
Suppose $\mathbb T\in\tf{IEUTxO}$ and consider $\tf{Chunk}_{\mathbb T}$ (valid lists of transactions; see Definition~\ref{defn.chunk}) as a partial order under sublist inclusion $\leq$.

Then the atomic elements in $(\tf{Chunk}_{\mathbb T},\leq)$ are precisely the singleton chunks (Notation~\ref{nttn.chunk}).
\end{lemm}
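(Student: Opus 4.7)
The plan is to prove this by two inclusions, using sublist inclusion as the order, with $[]$ as the bottom element $\mbot$ (and the top element $\mtop$ either adjoined formally or interpreted as a suitable sup; the atomicity condition that matters is that any proper sublist must be empty).

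First, I would show every singleton chunk $[\tx]$ is atomic. It is non-empty, so strictly above $\mbot = []$, and non-maximal so strictly below $\mtop$. Any sublist of $[\tx]$ is either $[]$ or $[\tx]$ itself, by elementary properties of lists, which immediately yields the minimal-proper-element condition.

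Next, the converse: if $\f{ch} \in \tf{Chunk}_{\mathbb T}$ is atomic, I claim it must be a singleton. Since $\f{ch}$ is proper, $\f{ch} \neq []$, so write $\f{ch} = [\tx_1, \dots, \tx_n]$ with $n \geq 1$. Suppose for contradiction that $n \geq 2$; then the one-element sublist $[\tx_1]$ satisfies $[\tx_1] \leq \f{ch}$ with $[\tx_1] \neq []$ and $[\tx_1] \neq \f{ch}$. The key point is that $[\tx_1]$ is itself a chunk, which follows from Corollary~\ref{corr.sublist.inclusion.chunks} (validity is down-closed under sublist inclusion), so this is a genuine element of $\tf{Chunk}_{\mathbb T}$ strictly between $\mbot$ and $\f{ch}$ — contradicting atomicity. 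Hence $n = 1$, and $\f{ch}$ is a singleton.

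There is no real obstacle: the argument is essentially a one-liner once we invoke Corollary~\ref{corr.sublist.inclusion.chunks} to guarantee that taking a singleton sub-sublist of a chunk lands us back in $\tf{Chunk}_{\mathbb T}$. The only mildly subtle issue is the role of $\mtop$, which is a notational artefact of Definition~\ref{defn.atomic.elements}; as long as we read atomicity as ``non-empty chunk whose only sublist-chunks are itself and $[]$'', the proof proceeds cleanly.
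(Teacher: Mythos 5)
Your proof is correct and takes essentially the same route as the paper, which simply cites Corollary~\ref{corr.sublist.inclusion.chunks} (down-closure of validity) as the key fact; you have just written out the two inclusions explicitly. Your handling of $\mtop$ as a notational artefact is also the right reading, since $\tf{Chunk}_{\mathbb T}$ itself has no top element until one is adjoined in $\lF(\mathbb T)$.
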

\begin{proof}
Using Corollary~\ref{corr.sublist.inclusion.chunks}.
\end{proof}

\subsection{Monoid of chunks}

\begin{defn}
\label{defn.monoid.of.chunks}
Assume we have equivariant data $(\ns X,\mbot,\mtop,\leq,\mact)$ where:
\begin{itemize*}
\item
$\ns X$ is a set. 
\item
$\mbot,\mtop\in\ns X$ are called \deffont{unit} and \deffont{fail} respectively. 
\item
${\leq} \subseteq\ns X^2$ is a relation.
\item
$\mact : \ns X^2\to\ns X$ is a \deffont{composition}. 
\end{itemize*}
Call $(\ns X,\mbot,\mtop,\leq,\mact)$ a \deffont{monoid of chunks} when:
\begin{enumerate}
\item\label{mbot.cdot}
$\mbot\mact x=x=x\mact \mbot$. 
\item\label{mtop.cdot}
$\mtop\mact x=\mtop=x\mact\mtop$.
\item
$\leq$ is a well-ordering for which the unit $\mbot$ is a bottom element and the paradoxical element $\mtop$ is a top element.
\item
Composition $\mact$ is \deffont{associative}, and \deffont{monotone} in both components, meaning that 
$$
\begin{array}{r@{\ }l@{\quad\text{implies}\quad}r@{\ }l}
x'\leq& x   &   x'\mact y\leq& x\mact y\quad\text{and}
\\
y\leq& y'   &   x\mact y\leq& x\mact y'.
\end{array}
$$
\item
Composition is \deffont{increasing} in the sense that
$$
x\leq x\mact y
\quad\text{and}\quad y\leq x\mact y.
$$
\item\label{acs.locality}
If $x_1,\dots,x_n\in\ns X$ and $x_1\mact \ldots\mact x_n=\mtop$, then there must exist $1\leq i< j\leq n$ such that $x_i\mact x_j=\mtop$.
\end{enumerate} 
\end{defn}

\begin{rmrk}
A few comments on Definition~\ref{defn.monoid.of.chunks}:
\begin{enumerate}
\item
This is a clearly an abstraction of IEUTxO structure, where $\mact$ is chunk composition and $\leq$ is list inclusion (proof in Proposition~\ref{prop.lF.monoid.atomic}).

This is the key instance of the axioms that motivates the definition --- IEUTxOs have more structure, but monoids of chunks is where we start.
See also Example~\ref{xmpl.partial.monoids}(\ref{ieutxo.monoid.chunks}).
\item
$x\mact y$ is not necessarily a least upper bound for $\{x,y\}$.

Take $X=\{1,2\}$ and $x=[1]$ and $y=[2]$ in Example~\ref{xmpl.partial.monoids}(\ref{finite.lists.with.top}) (finite lists with a top element).
Then $x\mact y$ and $y\mact x$ are distinct and incomparable, so both are upper bounds for $\{x,y\}$ but $x\mact y\not\leq y\mact x$ and $y\mact x\not\leq x\mact y$. 
\item
We see that condition~\ref{acs.locality} of Definition~\ref{defn.monoid.of.chunks} closely resembles Lemma~\ref{lemm.invalidity.must.be.somewhere}, and indeed the condition is inspired by that very Lemma.
We will use this in Proposition~\ref{prop.lF.monoid.atomic}.
\end{enumerate}
\end{rmrk}

\begin{nttn}
As is standard, we may write $\ns X$ for both a monoid of chunks and its carrier set.
See for instance the first line of Definition~\ref{defn.factor}.
\end{nttn}

\begin{defn}
\label{defn.generated.as.a.monoid}
\label{defn.factor}
Suppose $\ns X=(\ns X,\mbot,\mtop,\leq,\mact)$ is a monoid of chunks.
\begin{enumerate}
\item
If $x\in \ns X$ and $[x_1,\dots,x_n]\in[\f{atomic}(\ns X)]$ is a finite list of atomic elements\footnote{Functional programmers, who may be used to distinguishing between types (which are primitive) and sets (which inhabit powerset types), may perceive this definition as subtly broken, since it appears to apply a type-former $[\dots]$, to a set $\f{atomic}(\ns X)$.  This is a culture clash and is not an issue with the maths as set up in this paper. 

We are working in ZFA; the carrier set $\ns X$ is a set and so is $\f{atomic}(\ns X)$ (and both are equivariant); the list set-former $[\text{\dots}]$ is a \emph{set}-former, not a type-former.  Thus, $[\f{atomic}(\ns X)]$ is well-defined by Notation~\ref{nttn.pointed}(\ref{list.concat}) as the set of finite lists of atomic elements from $\ns X$.}
in $\ns X$ and 
$$
x = x_1\mact\ldots\mact x_n
\quad\text{then say that}\quad 
\text{$x$ \deffont{factorises} as $[x_1,\dots,x_n]$.} 
$$
\item
Say that $\ns X$ is \deffont{generated by its atomic elements} when every $x\in\ns X\setminus\{\mtop\}$ has a (possibly non-unique) factorisation into atomic elements.
\end{enumerate}
\end{defn}

\begin{defn}\leavevmode
\label{defn.atomic}
\begin{enumerate}
\item
Call a monoid of chunks $\ns X=(\ns X,\mbot,\mtop,\leq,\mact)$ \deffont{atomic} when:
\begin{enumerate*}
\item\label{atomic.generated}
$\ns X$ is generated as a monoid by its atomic elements (Definition~\ref{defn.generated.as.a.monoid}).
\item\label{atomic.factorisation}
There exists a \deffont{factorisation function} $\f{factor}:\ns X\setminus\{\mtop\}\to[\f{atomic}(\ns X)]$ such that for every $x,y\in\ns X\setminus\{\mtop\}$ 
\begin{enumerate*}
\item
$\f{factor}(x)$ factorises $x$ (Definition~\ref{defn.generated.as.a.monoid}) and 
\item\label{factorisation.monoidal}
$\f{factor}(x\mact y)=\f{factor}(x)\mact\f{factor}(y)$ (the right-hand $\mact$ denotes list concatentation; the left-hand $\mact$ is the monoid action in $\ns X$). 
\end{enumerate*}
\end{enumerate*}
In words we say that $\ns X$ is atomic when there is a homomorphism of partially-ordered monoids from $\ns X\setminus\{\mtop\}$ to the space of possible factorisations of its elements.
The relevance of this condition is discussed in Remark~\ref{rmrk.why.factorisation}.
\item\label{perfectly.atomic}
Call $\ns X$ \deffont{perfectly atomic} when it is atomic and furthermore:
\begin{enumerate*}
\item\label{perfectly.atomic.unique}
factorisations into atom elements are unique and 
\item\label{perfectly.atomic.leq}
if $x\leq y<\mtop$ and $x=x_1\mact\ldots\mact x_m$ and 
$y=y_1\mact\ldots\mact y_n$ then 
$[x_1,\dots,x_m]\leq[y_1,\dots,y_n]$ (sublist inclusion).
\end{enumerate*}
The relevance of this condition is discussed in Proposition~\ref{prop.why.pure}.
\end{enumerate}
\end{defn}

\begin{xmpl}
\label{xmpl.partial.monoids}
Suppose $X$ is an equivariant set.
Then:
\begin{enumerate}
\item
$\powerset(X)$ forms a monoid of chunks, where $\mbot=\varnothing$ and $\mtop=X$, and $\leq$ is subset inclusion, and composition $\mact$ is sets union.
It is atomic if and only if $X$ is finite (recall: factorisations must be finite).

We obtain a factorisation function by choosing any order on $X$, and listing elements of any $X'\subseteq X$ in order. 
\item
$\powerset(X)$ forms a monoid of chunks, where:
\begin{itemize*}
\item
$\mbot=\varnothing$ and $\mtop=X$.
\item
$\leq$ is subset inclusion.
\item
$x\mact y=x\cup y$ if $x\cap y=\varnothing$, and $x\mact y=\mtop$ otherwise.
\end{itemize*}
It is atomic if and only if $X$ is finite.
\item
$\finpow(X)\cup \{X\}$ (finite sets of atoms, with a top element) forms an atomic monoid of chunks, using either of the two definitions above for $\powerset(X)$.
\item
\label{finite.lists.with.top}
Finite lists with a top element $[X]^\top$ --- meaning finite lists of elements from $X$, plus one extra `top' element $\top$ --- form a perfectly atomic monoid of chunks as follows:
\begin{itemize*}
\item
$\mbot = []$ and $\mtop = \top$.
\item
$\leq$ is sublist inclusion (Notation~\ref{nttn.pointed}(\ref{sublist.inclusion})) and $l\leq\mtop$ for every finite list $l$.
\item
Composition $\mact$ is list concatenation on lists, and $x\mact \mtop = \mtop = \mtop\mact x$ for any $x$ (list, or $\mtop$).
\end{itemize*}
\item
Finite lists with a top element $[X]^\top$ form an atomic (but not perfectly atomic) monoid of chunks as above, where $\leq$ and $\mact$ are defined as follows: 
\begin{itemize*}
\item
$l\leq l'$ holds when $l$ is not a singleton list and $l$ is a sublist of $l'$.

So $[]\leq [x]$ and $[]\leq [x,z]\leq [x,y,z]$ but $[x]\not\leq [x,y]$; and the proper atomic elements are singleton and two-element lists.
\item
$[]\mact l = l\mact [] = l$ for any list.
\item
$\mtop\mact x = \mtop = x\mact\mtop$ for any $x$.
\item
If $l$ and $l'$ are non-empty lists, then $l\mact l'$ is $l_{init}$ concatenated with $l'_{tail}$, where $l_{init}$ is everything except for the last element of $l$, and $l'_{tail}$ is everything except for the first element of $l'$. 
\end{itemize*}
\item\label{ieutxo.monoid.chunks}
As touched on above, if $\mathbb T$ is an IEUTxO model then $\mathbb T$ gives rise to a perfectly atomic monoid of chunks.
See Proposition~\ref{prop.lF.monoid.atomic}.
\end{enumerate}
\end{xmpl}

\begin{rmrk}
It might seem counterintuitive to make failure $\mtop$ a \emph{top} element in Definition~\ref{defn.monoid.of.chunks}, especially if we are used to seeing domain models where `failure' is intuitively `non-termination' and features $\bot$ as a bottom element.  

We have a concrete reason for this: our canonical IEUTxO models are based on lists ordered by sublist inclusion, so bottom is already occupied by the empty list $[]$ which plays the role of $\mbot$ (see Definition~\ref{defn.lF}).  

But also we have abstract justifications: if we think of a chunk system as a many-valued logic (in which truth-values are chunks or blockchains and $\leq$ reflects how they accumulate transactions over time), then to exhibit a $\top$ is to \emph{fail} to exhibit a concrete witness.
Or (thinking perhaps of callCC~\cite{clinger:schhls}) we can think of $\mtop$ as a `final' or `escape' element.
\end{rmrk}

\subsection{Behaviour, positions, and equivalence}

\subsubsection{Left- and right-behaviour} 

\begin{defn}
\label{defn.left.right.app}
Suppose $\ns X=(\ns X,\mbot,\mtop,\leq,\mact)$ is a monoid of chunks.
Then we have natural \deffont{left-} and \deffont{right-behaviour} functions: 
$$
\begin{array}{l@{\ }c@{\ }l@{\qquad}l@{\ }c@{\ }l}
\f{leftB}:\ns X&\to& \powerset(\ns X)
&
\f{rightB}:\ns X&\to& \powerset(\ns X)
\\
\f{leftB}: x &\mapsto&\{y{\in}\ns X \mid y\mact x<\mtop\}
&
\f{rightB}:x &\mapsto&\{y{\in}\ns X \mid x\mact y<\mtop\}
\end{array}
$$
\end{defn}

\begin{lemm}
Suppose $\ns X$ is a monoid of chunks.  Then we have:
\begin{enumerate*}
\item
$\f{leftB}(\mbot)=\f{rightB}(\mbot)=\ns X\setminus\{\mtop\}$. 
\item
$\f{leftB}(\mtop)=\f{rightB}(\mtop)=\varnothing$. 
\end{enumerate*}
\end{lemm}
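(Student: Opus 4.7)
The proof plan is essentially to unfold definitions and apply the unit/absorption axioms of the monoid of chunks from Definition~\ref{defn.monoid.of.chunks}(\ref{mbot.cdot}) and (\ref{mtop.cdot}). Since $\leq$ is a partial order with $\mbot$ as bottom and $\mtop$ as top, the strict inequality $x < \mtop$ is equivalent to $x \neq \mtop$, which gives $\{y \in \ns X \mid y < \mtop\} = \ns X \setminus \{\mtop\}$.

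For part~1, I would compute $\f{leftB}(\mbot) = \{y \mid y \mact \mbot < \mtop\}$ and apply the unit law $y \mact \mbot = y$ from Definition~\ref{defn.monoid.of.chunks}(\ref{mbot.cdot}) to reduce the condition to $y < \mtop$, yielding $\ns X \setminus \{\mtop\}$. The computation for $\f{rightB}(\mbot)$ is symmetric, using $\mbot \mact y = y$ from the same clause.

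For part~2, I would compute $\f{leftB}(\mtop) = \{y \mid y \mact \mtop < \mtop\}$ and apply the absorption law $y \mact \mtop = \mtop$ from Definition~\ref{defn.monoid.of.chunks}(\ref{mtop.cdot}) to reduce the condition to $\mtop < \mtop$, which is false by irreflexivity of strict order, yielding $\varnothing$. Again $\f{rightB}(\mtop)$ is symmetric via $\mtop \mact y = \mtop$.

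There is no real obstacle — the lemma is a direct sanity check that unpacks the definition of $\f{leftB}$ and $\f{rightB}$ against the two axioms that pin down the behaviour of the unit $\mbot$ and the absorbing element $\mtop$. Its purpose appears to be calibration, confirming that the trivial inputs $\mbot$ and $\mtop$ behave as expected with respect to the left/right behaviour functions before the theory is developed further.
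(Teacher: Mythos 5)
Your proposal is correct and matches the paper's proof, which simply cites the unit and absorption clauses of Definition~\ref{defn.monoid.of.chunks}; you have just written out the same one-line computation in full detail.
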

\begin{proof}
A fact of Definition~\ref{defn.monoid.of.chunks}(\ref{mbot.cdot}\&\ref{mtop.cdot}).
\end{proof}

\begin{rmrk}
\label{rmrk.observe.failure}
If we think of $\mtop$ as a failure element, and we think of $x\mact y$ as being a composition of which we can observe whether it fails or succeeds,
then
\begin{itemize*}
\item
$\f{rightB}$ maps $x\in\ns X$ to its right-observational behaviour, and 
\item
$\f{leftB}$ maps $x\in\ns X$ to its left-observational behaviour.
\end{itemize*}
\end{rmrk}

\begin{rmrk}
\label{rmrk.lambda.pi}
Parallels can be made in Definition~\ref{defn.left.right.app} with 
the $\lambda$-calculus~\cite{barendregt:lamcss}
and 
the $\pi$-calculus~\cite{Milner:comms}: 
\begin{enumerate}
\item
In the $\lambda$-calculus, a standard observable is non-termination.

Here we are doing something similar, except that (as noted in Remark~\ref{rmrk.observe.failure}) instead of failure to \emph{terminate} ($\bot$) we observe failure to \emph{compose} ($\mtop$), and we consider combination both to the left and to the right. 

Continuing the analogy, in the untyped $\lambda$-calculus, the left-behaviour set of a term $t$ would be those $s$ such that $st$ terminates; and the right-behaviour set of $t$ would be those $s$ such that $ts$ terminates.
\item
The $\pi$-calculus has notions of communications across channels, and as noted in Remark~\ref{rmrk.chunks.and.channels} we see a resemblance with communication of an input and output on a position. 
However there are differences, including:
\begin{enumerate*}
\item
\emph{Validation} is not primitive in the $\pi$-calculus but it is a core precept here. 
\item
Communication in the plain $\pi$-calculus (without considering dialects) is non-deterministic --- one channel name can be invoked by multiple inputs and outputs --- whereas here a key assumption is that every channel name (i.e. position) must have one input and one output --- and if not, the chunk collapses to a failure error-state $\mtop$ (cf. the conditions in Definitions~\ref{defn.chunk} and~\ref{defn.blockchain}).
\item
Name-restriction in the $\pi$-calculus is not automatic but instead is managed by an explicit restriction term-former.
In contrast here a communicating channel (an output-input pair) automatically closes when used once.
We say `closed' and not `bound' because the name remains visible in $\f{up}$ (see also $\stx$ in the IEUTxO models); it is just that no further communication may occur along it.
We discuss garbage-collecting names in chunks in Subsection~\ref{subsect.garbage-collection}.
\end{enumerate*}
\end{enumerate}
\end{rmrk}

\begin{rmrk}
\label{rmrk.surprising.amount}
Definitions~\ref{defn.posi} and~\ref{defn.left.right.up} will build on Definitions~\ref{defn.monoid.of.chunks} and~\ref{defn.left.right.app} to derive a full notion of an \emph{observable interface} of a monoid element, all derived just from the partiality of composition.
We will make good use of this in the rest of the development, for instance Definition~\ref{defn.rG} depends on it.
\end{rmrk}

\subsubsection{Positions} 

\begin{defn}
\label{defn.posi}
Suppose $\ns X$ is a monoid of chunks.
Define $\posi(x)\subseteq\atoms$ the \deffont{positions} of $x\in\ns X$ as follows:
$$
\begin{array}{r@{\ }l}
\posi(\mtop_{\ns X})=&\varnothing
\\
\posi(x)
=&
\{a\in\atoms \mid \Forall{\pi{\in}\f{fix}(a)}\pi\pact x\not\in\f{leftB}(x)\cup\f{rightB}(x) \}
\quad (x\in\ns X\setminus\{\mtop_{\ns X}\}).
\end{array}
$$
($\f{fix}(a)$ from Definition~\ref{defn.fix}.)

Thus $a\in\posi(x)$ when $x\neq\mtop_{\ns X}$ and $\pi\pact x\not\in\f{leftB}(x)\cup\f{rightB}(x)$, for any $\pi$ such that $\pi(a)=a$.
\end{defn}

\begin{rmrk}
Note that the $\pact$ in $\pi\pact x$ in Definition~\ref{defn.posi} above refers to the atoms-permutation action from Definition~\ref{defn.zfa.perm}, not to the partial monoid action $\mact$ from Definition~\ref{defn.popm}.
\end{rmrk}

\begin{rmrk}
\label{rmrk.pos.posi}
In words, $\posi(x)$ from Definition~\ref{defn.posi} is those atoms 
such that there is no permutation fixing $a$ such that $\pi\pact x$ can be successfully combined (left or right) with $x$.

What is the intuition here?  

The name $\posi$ reminds us of $\pos$ from Definition~\ref{defn.pos}, though the definitions are quite different.
They are indeed related; in fact, they are equal in a sense made formal in Proposition~\ref{prop.pos.eq.posi} (see also Lemma~\ref{lemm.tx.inj}(\ref{tx.posi.pos})). 

We do not have all the machinery in place yet, so it may be helpful to point forwards here and observe that conditions~\ref{oriented.lr.empty} and~\ref{oriented.fresh.commute} of Definition~\ref{defn.oriented} can be read as a way to make name-clash into an observable.

So intuitively, Definition~\ref{defn.posi} --- once combined with the notion of an oriented monoid from Definition~\ref{defn.oriented} --- can use permutations to observe name-clash: it measures the live communication channels $a\in\atoms$ in an element $x$ by forcing name-clashes between $a$-channels with $\pi$-renamed variants $\pi\pact x$ for $\pi\in\f{fix}(a)$. 
More details will follow, and see Remark~\ref{rmrk.pos.posi.2}.
\end{rmrk}

\begin{defn}
\label{defn.left.right.up}
Suppose $\ns X=(\ns X,\mbot,\mtop,\leq,\mact)$ is a monoid of chunks, and suppose $x\in \ns X$ and $a\in\atoms$.
\begin{enumerate}
\item\label{points.left}
If 
$$
a\in\posi(x)
\quad\text{and}\quad
\Exists{y{\in}\f{leftB}(x)}a\in\posi(y), 
$$
then say that $a$ \deffont{points left} in $x$. 

Write $\f{left}(x)\subseteq\atoms$ for the set of atoms that point left in $x$.
\item\label{points.right}
If 
$$
a\in\posi(x)
\quad\text{and}\quad 
\Exists{y{\in}\f{rightB}(x)}a\in\posi(y),
$$
then say that $a$ \deffont{points right} in $x$.

Write $\f{right}(x)\subseteq\atoms$ for the set of atoms that point right in $x$.
\item\label{points.up}
If $a$ points neither left nor right in $a$ and yet $a\in\posi(x)$, so that
$$
a\in\posi(x)
\quad\text{and}\quad
\Forall{y{\in}\f{leftB}(x)\cup\f{rightB}(x)}a\not\in\posi(y),
$$
then say that $a$ \deffont{points up} in $x$.

Write $\f{up}(x)\subseteq\atoms$ for the set of atoms that point up in $x$.
\end{enumerate}
\end{defn}

Lemma~\ref{lemm.up.fail} expresses intuitively that atoms that point `up' in a transaction cannot engage in successful (non-failing) combination; they are `stuck interfaces':
\begin{lemm}
\label{lemm.up.fail}
Suppose $\ns X$ is a monoid of chunks and $x,y\in\ns X$ and $a\in\f{up}(x)$.
Then 
$$
a\in\posi(y)
\quad\text{implies}\quad x\mact y=y\mact x=\mtop.
$$
\end{lemm}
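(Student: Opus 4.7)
The plan is to prove the lemma by direct contrapositive unwinding of the three nested definitions involved: $\f{up}(x)$, the left- and right-behaviour sets $\f{leftB}(x)$ and $\f{rightB}(x)$, and the partial order structure on $\ns X$. No auxiliary construction is needed.

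First I would show $x\mact y=\mtop$. Suppose toward a contradiction that $x\mact y\neq\mtop$. Since $\mtop$ is a top element of the well-ordering $\leq$ by Definition~\ref{defn.monoid.of.chunks}, we have $x\mact y\leq\mtop$, and antisymmetry of $\leq$ then gives $x\mact y<\mtop$. By Definition~\ref{defn.left.right.app} this is exactly the condition $y\in\f{rightB}(x)$. Now invoke the assumption $a\in\f{up}(x)$: by Definition~\ref{defn.left.right.up}(\ref{points.up}), $a\notin\posi(y')$ for every $y'\in\f{leftB}(x)\cup\f{rightB}(x)$, so in particular $a\notin\posi(y)$. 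This contradicts the hypothesis $a\in\posi(y)$, so $x\mact y=\mtop$ as required.

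The argument for $y\mact x=\mtop$ is entirely symmetric: if $y\mact x<\mtop$ then $y\in\f{leftB}(x)$ by Definition~\ref{defn.left.right.app}, and again Definition~\ref{defn.left.right.up}(\ref{points.up}) forces $a\notin\posi(y)$, contradicting the hypothesis.

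I do not expect a real obstacle here. The lemma is essentially the observation that $\f{up}(x)$ was \emph{defined} to exclude precisely those atoms which appear in any $y$ with which $x$ can compose successfully on either side, so the content of the lemma is extracted by reading the definitions off. The only mild care needed is the step passing from ``$x\mact y$ is not $\mtop$'' to ``$x\mact y<\mtop$'', which uses that $\mtop$ is a top element in a partial order (antisymmetry); this is the kind of routine bookkeeping that Definition~\ref{defn.monoid.of.chunks} is designed to make trivial.
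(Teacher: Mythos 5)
Your proof is correct and is exactly the paper's argument: the paper's proof is simply ``Direct from Definition~\ref{defn.left.right.up}(\ref{points.up})'', and your contrapositive unwinding (using that $\mtop$ is a top element so ``not $\mtop$'' means ``$<\mtop$'', hence membership in $\f{leftB}(x)\cup\f{rightB}(x)$, contradicting $a\in\f{up}(x)$) is precisely what that one-line citation is asking the reader to do.
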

\begin{proof}
Direct from Definition~\ref{defn.left.right.up}(\ref{points.up}).
\end{proof}

\begin{rmrk}
The reader who sees similarities between the $\f{left}$, $\f{right}$, and $\f{up}$ of Definition~\ref{defn.left.right.up}, and the $\utxi$, $\utxo$, and $\stx$ of Definition~\ref{defn.utxo.utxi} is right: see Propositions~\ref{prop.pos.eq.posi}, Lemma~\ref{lemm.acs.up.stx}, and Proposition~\ref{prop.left.to.blocked}.
\end{rmrk}

A simple lemma will be helpful:
\begin{lemm}
\label{lemm.supp.lru}
Suppose $\ns X$ is a monoid of chunks and $x\in\ns X$.
Then:
$$
\begin{array}{r@{\ }l}
\f{up}(x)=&\posi(x)\setminus(\f{left}(x)\cup\f{right}(x))
\\
\varnothing=&\f{left}(x)\cap\f{up}(x)
\\
\varnothing=&\f{right}(x)\cap\f{up}(x)
\\
\posi(x)=&\f{left}(x)\cup\f{right}(x)\cup\f{up}(x) 
\end{array}
$$
\end{lemm}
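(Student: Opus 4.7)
The plan is to unwind Definition~\ref{defn.left.right.up} as a pure exercise in predicate logic. Each of the four statements will fall out of the observation that \f{left}, \f{right}, and \f{up} are defined by the same ``$a \in \posi(x)$ and \dots'' template, with the witness conditions ``$\exists y \in \f{leftB}(x).\ a \in \posi(y)$'', ``$\exists y \in \f{rightB}(x).\ a \in \posi(y)$'', and ``$\forall y \in \f{leftB}(x) \cup \f{rightB}(x).\ a \notin \posi(y)$'' respectively. I would not invoke Definition~\ref{defn.posi} or properties of the permutation action; all four parts are consequences of how membership in $\f{left}$, $\f{right}$, and $\f{up}$ is packaged, given membership in $\posi(x)$.

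First I would prove statement~1. For the forward inclusion, given $a \in \f{up}(x)$, I have $a \in \posi(x)$ together with $\forall y \in \f{leftB}(x) \cup \f{rightB}(x).\ a \notin \posi(y)$. Specialising this universal to $y \in \f{leftB}(x)$ and $y \in \f{rightB}(x)$ separately rules out $a \in \f{left}(x)$ and $a \in \f{right}(x)$. For the reverse inclusion, if $a \in \posi(x) \setminus (\f{left}(x) \cup \f{right}(x))$, then from $a \notin \f{left}(x)$ and $a \notin \f{right}(x)$ I recover the conjoined universal statement defining $\f{up}$ by splitting over whether the witness $y$ lies in $\f{leftB}(x)$ or $\f{rightB}(x)$.

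Statements~2 and~3 are contradictions: any $a \in \f{left}(x) \cap \f{up}(x)$ would provide a witness $y \in \f{leftB}(x)$ with $a \in \posi(y)$ while simultaneously demanding that no such witness exists; similarly for $\f{right}$. Statement~4 is then a one-line consequence of statement~1 plus the observation that $\f{left}(x), \f{right}(x), \f{up}(x) \subseteq \posi(x)$ by construction: any $a \in \posi(x)$ is either in $\f{left}(x) \cup \f{right}(x)$ or, by statement~1, in $\f{up}(x)$.

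Honestly, there is no real obstacle here --- the lemma is a bookkeeping exercise that makes the tripartite split of $\posi(x)$ explicit, and the only thing to be careful about is handling the degenerate case $x = \mtop$, where $\posi(x) = \varnothing$ and hence $\f{left}(x) = \f{right}(x) = \f{up}(x) = \varnothing$, so all four equalities hold trivially. The value of the lemma is downstream, in Lemma~\ref{lemm.acs.up.stx} and Proposition~\ref{prop.left.to.blocked}, where the disjointness and covering properties established here license case analysis on the position type of an atom.
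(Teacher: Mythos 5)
Your proof is correct and matches the paper's approach: the paper's entire proof is the one-liner ``This just rephrases clause~\ref{points.up} of Definition~\ref{defn.left.right.up}'', and your write-up simply makes that rephrasing explicit, including the (correct) observation that no appeal to Definition~\ref{defn.posi} or the permutation action is needed. The extra care about the degenerate case $x=\mtop$ is harmless but unnecessary, since the argument is uniform in $x$.
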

\begin{proof}
This just rephrases clause~\ref{points.up} of Definition~\ref{defn.left.right.up}.
\end{proof}

\begin{rmrk}
\label{rmrk.supp.pos}
Lemmas~\ref{lemm.supp.lru} and~\ref{lemm.utxi.utxo.empty}(\ref{utxo.cup.utxo}) are similar but note that the status of the underlying datatypes is somewhat different: 
\begin{itemize*}
\item
A chunk $\f{ch}\in\tf{Chunk}$ of an IEUTxO model is full of internal structure, and operations on it are defined in terms of that structure, whereas
\item
an element $x\in\ns X$ in a monoid of chunks is an abstract entity and we assume nothing about its internal structure.
\end{itemize*}
Thus, a similarity between them has significance: it is a sanity check on our model and indicates that something rather abstract (monoids of chunks) is accurately following the behaviour of something more concrete (IEUTxO models).
\end{rmrk}

\begin{rmrk}
Lemma~\ref{lemm.supp.lru} expresses that every position in some $x\in\ns X$ (Definition~\ref{defn.posi}) must point in a direction in $\{\f{left},\f{right},\f{up}\}$, and it cannot point both left and up, or both right and up.

Note that Definition~\ref{defn.left.right.up} admits a possibility that an atom could point both left and right; this \emph{cannot} happen in the IEUTxO models (see Lemma~\ref{lemm.utxi.utxo.empty}(\ref{utxi.cap.utxo})).
We will exclude this when we introduce the notion of an oriented monoid of chunks; see Corollary~\ref{corr.lr.tx.empty}.
\end{rmrk}

\begin{lemm}\leavevmode
\label{lemm.left.right.mbot.mtop}
\begin{enumerate*}
\item
$\f{left}(\mbot)=\f{right}(\mbot)=\f{up}(\mbot)=\varnothing$.
\item
$\f{left}(\mtop)=\f{right}(\mtop)=\f{up}(\mtop)=\varnothing$. 
\item
As a corollary using Lemma~\ref{lemm.supp.lru}, $\posi(\mbot)=\posi(\mtop)=\varnothing$.\footnote{$\posi(\mtop)=\varnothing$ is also immediate from Definition~\ref{defn.posi}.}
\end{enumerate*}
\end{lemm}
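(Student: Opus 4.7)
The plan is to handle the two halves of the lemma separately, starting with the $\mtop$ case because it is immediate, then doing the slightly more delicate $\mbot$ case, and finally pulling the corollary from Lemma~\ref{lemm.supp.lru}.

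For the $\mtop$ case, I would observe that Definition~\ref{defn.posi} explicitly stipulates $\posi(\mtop)=\varnothing$, and then note that each of $\f{left}(\mtop)$, $\f{right}(\mtop)$, $\f{up}(\mtop)$, as defined in Definition~\ref{defn.left.right.up}, starts with the requirement $a\in\posi(\mtop)$. Since that requirement can never be met, the three sets are empty. So this part is essentially definition-chasing.

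For the $\mbot$ case, the plan is to first show $\posi(\mbot)=\varnothing$ and then conclude as above. The calculation is: by Definition~\ref{defn.monoid.of.chunks}(\ref{mbot.cdot}) we have $\mbot\mact\mbot=\mbot<\mtop$, so $\mbot\in\f{leftB}(\mbot)\cap\f{rightB}(\mbot)$. Now, because $\mbot$ is equivariant (as listed among the assumed structure in Definition~\ref{defn.monoid.of.chunks}), we have $\pi\pact\mbot=\mbot$ for every $\pi\in\f{Perm}$, and in particular for the identity permutation, which lies in $\f{fix}(a)$ for every $a\in\atoms$. Hence for every $a\in\atoms$ there \emph{is} a $\pi\in\f{fix}(a)$ with $\pi\pact\mbot\in\f{leftB}(\mbot)\cup\f{rightB}(\mbot)$, so the condition in Definition~\ref{defn.posi} fails and $a\notin\posi(\mbot)$. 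The emptiness of $\f{left}(\mbot)$, $\f{right}(\mbot)$, and $\f{up}(\mbot)$ then follows just as in the $\mtop$ case, since all three require membership in $\posi(\mbot)$.

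For part 3, I would simply invoke Lemma~\ref{lemm.supp.lru}, which gives $\posi(x)=\f{left}(x)\cup\f{right}(x)\cup\f{up}(x)$; applied to $x=\mbot$ and $x=\mtop$, the parts just established force $\posi(\mbot)=\posi(\mtop)=\varnothing$.

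There is no real obstacle here; the only point that requires a moment's attention is remembering that $\mbot$ is equivariant (so the permutation action is trivial on it), because without this one might naively think $\posi(\mbot)$ could be nonempty by appealing to renamings.
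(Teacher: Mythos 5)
Your proof is correct and takes essentially the same route as the paper's, which simply says to check the definitions of $\f{left}$, $\f{right}$, and $\f{up}$ against the specified behaviour of $\mbot$ and $\mtop$; you have merely written that check out in full, with the observation that all three sets require membership in $\posi$ doing the work, and your use of the identity permutation to kill $\posi(\mbot)$ is exactly the intended definition-chase. The only (trivial) caveat is the degenerate case $\mbot=\mtop$, where your step $\mbot\mact\mbot=\mbot<\mtop$ fails, but there the $\mtop$ clause of Definition~\ref{defn.posi} already gives the result.
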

\begin{proof}
We check the behaviour of $\mbot$ and $\mtop$ as specified in Definition~\ref{defn.monoid.of.chunks} against the definitions of $\f{left}$, $\f{right}$, and $\f{up}$ in Definition~\ref{defn.left.right.up} and see that this is true.
\end{proof}

\subsubsection{Observational equivalence} 
\label{subsect.obs.eq}

\begin{defn}
\label{defn.circ}
Suppose $\ns X$ is a monoid of chunks. 
\begin{enumerate}
\item\label{obs.equiv}
Call $x$ and $x'$ in $\ns X$ \deffont{observationally equivalent} and write 
$$
x\sim x'
\quad\text{when}\quad
\f{leftB}(x)=\f{leftB}(x')\land \f{rightB}(x)=\f{rightB}(x') .
$$
\item\label{circ.commute}
Say that $x$ and $y$ \deffont{commute (up to observational equivalence)} when
$$
x\mact y\sim y\mact x .
$$ 
\end{enumerate}
\end{defn}

We start with a simple but useful sanity check:
\begin{lemm}
\label{lemm.commute.commute}
Suppose $\ns X$ is a monoid of chunks and $x,y\in\ns X$.
Then if $x$ and $y$ commute then 
$$
x\mact y<\mtop \liff y\mact x<\mtop
\quad\text{and}\quad 
x\mact y=\mtop \liff y\mact x=\mtop .
$$ 
\end{lemm}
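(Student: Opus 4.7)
The plan is to prove the first biconditional directly using the unit $\mbot$ as a distinguishing element, and then obtain the second biconditional as its contrapositive, since every element of $\ns X$ is comparable with $\mtop$ (it being a top element of a partial order) and so $z = \mtop$ is equivalent to $\neg(z < \mtop)$.

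The key observation is that for any $z \in \ns X$, because $\mbot$ is the unit (Definition~\ref{defn.monoid.of.chunks}(\ref{mbot.cdot})), we have $\mbot \mact z = z$, so
\[
\mbot \in \f{leftB}(z) \quad\liff\quad z < \mtop,
\]
and symmetrically $\mbot \in \f{rightB}(z) \liff z < \mtop$ since $z \mact \mbot = z$ also. Thus the property $z < \mtop$ is detected by $\mbot$ sitting in either behaviour set of $z$.

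Now suppose $x$ and $y$ commute, so by Definition~\ref{defn.circ}(\ref{circ.commute}) we have $x \mact y \sim y \mact x$, i.e.\ $\f{leftB}(x\mact y) = \f{leftB}(y \mact x)$ (Definition~\ref{defn.circ}(\ref{obs.equiv})). If $x \mact y < \mtop$ then by the observation above $\mbot \in \f{leftB}(x \mact y) = \f{leftB}(y \mact x)$, whence $\mbot \mact (y \mact x) = y \mact x < \mtop$. The reverse implication is identical with the roles of $x$ and $y$ swapped, proving the first biconditional.

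For the second biconditional, recall that $\leq$ is a partial order with $\mtop$ as top element, so for any $z \in \ns X$ exactly one of $z = \mtop$ and $z < \mtop$ holds. The statement $x\mact y = \mtop \liff y \mact x = \mtop$ is therefore the contrapositive of what we just proved. There is no real obstacle here; the whole proof is a direct unwinding of definitions, with the only mild subtlety being the choice of $\mbot$ as the universally available ``probe'' that converts the failure test $z < \mtop$ into a statement about $\f{leftB}$ membership that is preserved by $\sim$.
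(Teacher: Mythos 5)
Your proof is correct and takes essentially the same route as the paper: both arguments use the unit $\mbot$ as a probe, exploiting $\mbot\mact z = z = z\mact\mbot$ to convert $z<\mtop$ into membership of $\mbot$ in a behaviour set, which is preserved under $\sim$. The only cosmetic difference is that you probe via $\f{leftB}$ where the paper composes with $\mbot$ on the right.
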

\begin{proof}
We unpack Definitions~\ref{defn.circ}(\ref{obs.equiv}\&\ref{circ.commute}) and~\ref{defn.left.right.app} and conclude that
$$
x\mact y\mact\mbot <\mtop \liff y\mact x\mact\mbot <\mtop .
$$ 
The result follows, since $\mbot$ is the unit for $\mact$. 
\end{proof}

\begin{lemm}
Suppose $\ns X$ is a monoid of chunks. 
Then if $x\sim x'$ (Definition~\ref{defn.circ}) then 
$$
\f{left}(x)=\f{left}(x')
\quad\text{and}\quad
\f{right}(x)=\f{right}(x')
\quad\text{and}\quad
\f{up}(x)=\f{up}(x').
$$
\end{lemm}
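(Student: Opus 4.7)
The strategy is to reduce the claim to a single sub-problem: if $x\sim x'$ then $\posi(x)=\posi(x')$. Once this is established, the three equalities in the conclusion follow quickly. By Definition~\ref{defn.left.right.up}, $\f{left}(x)$ and $\f{right}(x)$ are defined using only $\posi$ and $\f{leftB}(x)$ or $\f{rightB}(x)$; and the latter two sets are identical for $x$ and $x'$ by assumption $x\sim x'$. So preservation of $\f{left}$ and $\f{right}$ is immediate, and then preservation of $\f{up}$ follows from Lemma~\ref{lemm.supp.lru}, which expresses $\f{up}(x)=\posi(x)\setminus(\f{left}(x)\cup\f{right}(x))$.

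Before proving $\posi(x)=\posi(x')$, I would record two easy preliminaries about observational equivalence. First, (a) $\sim$ is \emph{equivariant}: using equivariance of $\mact$ and of $\mtop$ one checks $\f{leftB}(\pi\pact y)=\pi\pact\f{leftB}(y)$ and similarly for $\f{rightB}$, so $x\sim x'$ entails $\pi\pact x\sim\pi\pact x'$ for every $\pi\in\f{Perm}$. Second, (b) $\sim$ \emph{respects the $=\mtop$ observable} in either argument position: if $y\sim y'$ then $y\mact z<\mtop$ iff $z\in\f{rightB}(y)=\f{rightB}(y')$ iff $y'\mact z<\mtop$; taking complements, $y\mact z=\mtop\liff y'\mact z=\mtop$; and symmetrically $z\mact y=\mtop\liff z\mact y'=\mtop$ using $\f{leftB}$.

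The core argument now runs as follows. First dispose of the edge case $x=\mtop$: since $\mbot\mact x=x$, we have $\mbot\in\f{leftB}(x)\liff x<\mtop$, so $x\sim x'$ forces $x=\mtop\liff x'=\mtop$, and Lemma~\ref{lemm.left.right.mbot.mtop} handles that subcase. Otherwise assume $x,x'\neq\mtop$ and suppose $a\in\posi(x)$; I want $a\in\posi(x')$. Pick any $\pi\in\f{fix}(a)$; by definition $(\pi\pact x)\mact x=\mtop$ and $x\mact(\pi\pact x)=\mtop$. By (a), $\pi\pact x\sim\pi\pact x'$, so applying (b) to rewrite the left argument gives $(\pi\pact x')\mact x=\mtop$ and $x\mact(\pi\pact x')=\mtop$. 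Now applying (b) with the hypothesis $x\sim x'$ to rewrite the right argument gives $(\pi\pact x')\mact x'=\mtop$ and $x'\mact(\pi\pact x')=\mtop$. Hence $\pi\pact x'\notin\f{leftB}(x')\cup\f{rightB}(x')$; since $\pi$ was arbitrary in $\f{fix}(a)$, $a\in\posi(x')$. The reverse inclusion follows by swapping the roles of $x$ and $x'$.

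The main obstacle is purely bookkeeping: the definition of $\posi(x)$ is ``diagonal'' in $x$ (it mentions $x$ both inside the action $\pi\pact x$ and in the reference set $\f{leftB}(x)\cup\f{rightB}(x)$), so naively $\f{leftB}(x)=\f{leftB}(x')$ does not seem to carry over. The trick is to commute $x$ to $x'$ in \emph{two} separate steps --- first in the permuted copy, using equivariance (a), and then in the reference element, using (b) --- so no congruence beyond what $\sim$ already gives is required. No facts about atomicity, factorisation, or orientation are invoked.
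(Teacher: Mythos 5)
Your proof is correct. It is worth noting, however, that the paper's own proof of this lemma is a single line (``A fact of Definitions~\ref{defn.circ}(\ref{obs.equiv}) and~\ref{defn.left.right.up}''), and you have correctly identified that the claim is \emph{not} quite immediate from those definitions: $\f{left}(x)$ depends on $\posi(x)$ as well as on $\f{leftB}(x)$, and $\posi(x)$ is defined diagonally --- the element $x$ occurs both as the thing being permuted ($\pi\pact x$) and as the reference whose $\f{leftB}(x)\cup\f{rightB}(x)$ is tested --- so $\f{leftB}(x)=\f{leftB}(x')$ alone does not obviously transfer membership of $\posi$. Your two-step substitution (first replacing $\pi\pact x$ by $\pi\pact x'$ via equivariance of $\sim$, which follows from $\f{leftB}(\pi\pact y)=\pi\pact\f{leftB}(y)$ and the equivariance of $\mact$ and $\mtop$ assumed in Definition~\ref{defn.monoid.of.chunks}; then replacing the reference element $x$ by $x'$ via the fact that $\sim$ determines the $=\mtop$ observable in either argument) resolves exactly this point, and your handling of the edge case $x=\mtop$ via $\mbot\in\f{leftB}(x)\liff x<\mtop$ and Lemma~\ref{lemm.left.right.mbot.mtop} is also sound. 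So your argument supplies a justification the paper elides; once $\posi(x)=\posi(x')$ is in hand, the three displayed equalities do follow exactly as you say, using Lemma~\ref{lemm.supp.lru} for $\f{up}$.
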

\begin{proof}
A fact of Definitions~\ref{defn.circ}(\ref{obs.equiv}) and~\ref{defn.left.right.up}.
\end{proof}

\subsection{Oriented monoids}

\subsubsection{Definition and properties}

\begin{defn}
\label{defn.oriented}
Suppose $\ns X=(\ns X,\mbot,\mtop,\leq,\mact)$ is a monoid of chunks.

Call $\ns X$ \deffont{oriented} when for all $x,y\in\ns X$:
\begin{enumerate*}
\item\label{oriented.posi.finite}
$\posi(x)\finsubseteq\atoms$.
\item\label{oriented.empty}
If $\posi(x)=\varnothing$ then $x\in\{\mbot,\mtop\}$. 
\item\label{oriented.lr.empty}
If $\f{left}(x)\cap\f{right}(y)\neq\varnothing$ then $x\mact y=\mtop$.
\item\label{oriented.fresh.commute}
If $\posi(x)\cap\posi(y)=\varnothing$ then $x$ and $y$ commute up to observational equivalence (Definition~\ref{defn.circ}(\ref{circ.commute})). 
\item\label{oriented.fresh.defined}
If $\posi(x)\cap\posi(y)=\varnothing$ and $\mtop\not\in\{x,y\}$ then $x\mact y<\mtop$.
\end{enumerate*} 
\end{defn}

\begin{rmrk}
We discuss the conditions of Definition~\ref{defn.oriented} in turn:
\begin{enumerate}
\item
An element $x\in\ns X$ can only be accessible on finitely many channel interfaces.
\item 
The only elements without any interface (meaning atoms that point left right or up) are the unit element ($\leq$-bottom) and the failure element ($\leq$-top).
Compare with the IEUTxO property Lemma~\ref{lemm.supp.empty.empty}.

We use this in Lemmas~\ref{lemm.one.fails} and~\ref{lemm.atomic.x.supp}.
\item
Interfaces always try to connect, but can only \emph{successfully} connect if the directions of their interfaces match up; if not, the whole combination fails.

We use this in Lemma~\ref{lemm.one.fails}, which is required for Proposition~\ref{prop.partial.converse.fresh}.
\item
This condition echoes Lemma~\ref{lemm.fresh.chunks}(\ref{fresh.chunks.obs}).
We use it in Proposition~\ref{prop.partial.converse.fresh}.
\item
Elements with no channels in common, cannot fail to compose. 
\end{enumerate}
We will show later that the IEUTxO models from Definition~\ref{defn.ieutxo.model} are models of Definition~\ref{defn.oriented} in a suitable sense; see Proposition~\ref{prop.lF.oriented}.
\end{rmrk}

We can strengthen Definition~\ref{defn.oriented}(\ref{oriented.empty}) to a logical equivalence:
\begin{lemm}
\label{lemm.oriented.empty.iff}
Suppose $\ns X$ is an oriented monoid of chunks and $x\in\ns X$.
Then 
$$
\posi(x)=\varnothing
\quad\text{if and only if}\quad x\in\{\mbot,\mtop\} .
$$
\end{lemm}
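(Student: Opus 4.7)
The statement is a logical equivalence, so I would prove each direction separately, and both are essentially immediate from results already established.

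For the forward direction ($\posi(x) = \varnothing \limp x \in \{\mbot, \mtop\}$), this is precisely condition~\ref{oriented.empty} of Definition~\ref{defn.oriented}, which holds because $\ns X$ is assumed oriented. So this direction requires no further argument beyond citing the definition.

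For the reverse direction ($x \in \{\mbot, \mtop\} \limp \posi(x) = \varnothing$), I would invoke Lemma~\ref{lemm.left.right.mbot.mtop}, whose corollary (stated in its third clause, combining parts 1 and 2 with Lemma~\ref{lemm.supp.lru}) gives exactly $\posi(\mbot) = \posi(\mtop) = \varnothing$. So I would split on the two cases $x = \mbot$ and $x = \mtop$ and apply this lemma in each case.

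There is no real obstacle here: the lemma is essentially a bookkeeping statement packaging together the axiom (for the nontrivial direction) with a previously computed fact (for the trivial direction). The only thing worth remarking on is that the forward direction is where the \emph{oriented} hypothesis is genuinely used --- the reverse direction holds in any monoid of chunks.
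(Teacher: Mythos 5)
Your proof is correct and matches the paper's own argument exactly: the paper likewise cites Definition~\ref{defn.oriented}(\ref{oriented.empty}) for the direction $\posi(x)=\varnothing\limp x\in\{\mbot,\mtop\}$ and Lemma~\ref{lemm.left.right.mbot.mtop} for the converse (the paper's proof appears to swap the labels ``left-to-right'' and ``right-to-left'', but the cited facts are the same as yours). Your closing observation that the oriented hypothesis is only needed for the forward direction is accurate and a worthwhile addition.
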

\begin{proof}
The right-to-left implication is direct from Definition~\ref{defn.oriented}(\ref{oriented.empty}).
The left-to-right implication is Lemma~\ref{lemm.left.right.mbot.mtop}.
\end{proof}

Lemma~\ref{lemm.make.link} is a nice way to repackage Definition~\ref{defn.oriented}(\ref{oriented.lr.empty}) in a slightly more accessible wrapper.  
In its form it resembles Lemma~\ref{lemm.utxi.utxo.empty}(\ref{utxi.utxo}), and we use it for Corollary~\ref{corr.lr.tx.empty}:
\begin{lemm}
\label{lemm.make.link}
Suppose $\ns X$ is an oriented monoid of chunks and suppose $x,y\in\ns X$.
Then
$$
x\mact y<\mtop
\quad\text{implies}\quad
\posi(x)\cap\posi(y)\subseteq\f{right}(x)\cap\f{left}(y) 
.
$$
\end{lemm}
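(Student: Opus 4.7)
The plan is to unfold the definitions of $\f{left}$, $\f{right}$, $\f{leftB}$, and $\f{rightB}$ and observe that the hypothesis $x \mact y < \mtop$ supplies exactly the witnesses needed to place each shared position in the required direction.

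First I would fix some $a \in \posi(x) \cap \posi(y)$ and read off from Definition~\ref{defn.left.right.app} that the assumption $x\mact y<\mtop$ is precisely the statement that $y \in \f{rightB}(x)$ and, by the same token (swapping the roles), that $x \in \f{leftB}(y)$.

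Next I would apply Definition~\ref{defn.left.right.up}(\ref{points.right}) to show $a \in \f{right}(x)$: we already have $a \in \posi(x)$, and $y$ serves as the required witness, because $y \in \f{rightB}(x)$ and $a \in \posi(y)$. The symmetric argument, using Definition~\ref{defn.left.right.up}(\ref{points.left}), shows $a \in \f{left}(y)$, with $x$ itself playing the role of the witness in $\f{leftB}(y)$ whose positions include $a$. Combining the two containments gives $a \in \f{right}(x) \cap \f{left}(y)$, as required.

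There is no real obstacle here: the result is essentially a direct unpacking of Definitions~\ref{defn.left.right.app} and~\ref{defn.left.right.up}, and notably uses none of the oriented-monoid axioms of Definition~\ref{defn.oriented} beyond those baked into the ambient monoid-of-chunks structure. In particular, conditions like~(\ref{oriented.lr.empty}) and~(\ref{oriented.fresh.commute}) are not needed for this direction; they would only enter if one wished to strengthen the conclusion (e.g.\ to an equality, or to rule out an atom pointing both ways), which is outside the scope of what is stated.
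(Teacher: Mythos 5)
Your proof is correct, and it takes a genuinely different --- and more direct --- route than the paper's. You observe that $x\mact y<\mtop$ says exactly that $y\in\f{rightB}(x)$ and $x\in\f{leftB}(y)$ (Definition~\ref{defn.left.right.app}), and then feed $y$ and $x$ themselves in as the existential witnesses demanded by Definition~\ref{defn.left.right.up}(\ref{points.right}) and Definition~\ref{defn.left.right.up}(\ref{points.left}): any $a\in\posi(x)\cap\posi(y)$ therefore points right in $x$ and left in $y$, and you are done. The paper instead argues contrapositively by cases on where $a$ points, using Lemma~\ref{lemm.supp.lru} to enumerate the possibilities and invoking Definition~\ref{defn.oriented}(\ref{oriented.lr.empty}) and Lemma~\ref{lemm.up.fail} to force $x\mact y=\mtop$ in each bad case. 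Your closing observation is a real bonus: the orientedness hypothesis is indeed never used in your argument, so the inclusion holds for an arbitrary monoid of chunks, whereas the paper's case split leans on the oriented axioms --- in particular its first two cases extract $a\not\in\f{right}(x)$ (respectively $a\not\in\f{left}(y)$) from Definition~\ref{defn.oriented}(\ref{oriented.lr.empty}), a disjointness fact that is only packaged up afterwards as Corollary~\ref{corr.lr.tx.empty}, and that corollary in turn cites this very lemma. Your direct witness construction sidesteps that delicacy entirely, at no cost in length.
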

\begin{proof}
We consider the possibilities, using Lemma~\ref{lemm.supp.lru}:
\begin{itemize}
\item
\emph{Suppose $a\in\f{left}(x)\cap\posi(y)$.}\ 
From Definition~\ref{defn.oriented}(\ref{oriented.lr.empty}) $a\not\in\f{right}(x)$, and 
by Lemma~\ref{lemm.supp.lru} $a\in\posi(x)$.
It follows from Definition~\ref{defn.left.right.up}(\ref{points.right}) that $x\mact y=\mtop$. 
\item
\emph{Suppose $a\in\f{right}(y)\cap\posi(x)$.}\ 
From Definition~\ref{defn.oriented}(\ref{oriented.lr.empty}) $a\not\in\f{left}(y)$, and
by Lemma~\ref{lemm.supp.lru} $a\in\posi(y)$.
It follows from Definition~\ref{defn.left.right.up}(\ref{points.left}) that $x\mact y=\mtop$. 
\item
\emph{Other cases}\ 
are from Lemma~\ref{lemm.up.fail} (or by direct reasoning from Definition~\ref{defn.left.right.up}(\ref{points.up})).
\end{itemize}  
\end{proof}

Corollary~\ref{corr.lr.tx.empty} is a slightly magical result, in the sense that it is perhaps not immediately obvious that it should follow from our definitions so far.
In its form, if not its proof, it clearly resembles Lemma~\ref{lemm.utxi.utxo.empty}(\ref{utxi.cap.utxo}).
We need it for Lemma~\ref{lemm.lF.gives.valid.singleton.chunks}, that atomic elements in $\ns X$ generate valid singleton chunks under a mapping to IEUTxO models $\lF$: 
\begin{corr}
\label{corr.lr.tx.empty}
Suppose $\ns X$ is an oriented monoid of chunks and $x\in\ns X$.
Then:\footnote{It would be nice to write this as $\f{left}(x)\#\f{right}(x)$ or $\f{left}(x)\bot\f{right}(x)$, but we prefer to trade off notational elegance for clarity and explicitness here, so we will write out our sets disjointness conditions in full.} 
$$
\f{left}(x)\cap\f{right}(x)=\varnothing . 
$$
\end{corr}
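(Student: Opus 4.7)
The plan is to argue by contradiction. Suppose some $a \in \f{left}(x)\cap\f{right}(x)$; I want to derive $\mbot < \mtop$ or an equivalent impossibility. Unfolding Definition~\ref{defn.left.right.up}(\ref{points.left}\&\ref{points.right}), I first extract witnesses: some $y_L\in\ns X$ with $y_L\mact x<\mtop$ and $a\in\posi(y_L)$, and some $y_R\in\ns X$ with $x\mact y_R<\mtop$ and $a\in\posi(y_R)$.

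Next I apply Lemma~\ref{lemm.make.link} to the two successful products. From $y_L\mact x<\mtop$ and $a\in\posi(y_L)\cap\posi(x)$ I get $a\in\f{right}(y_L)\cap\f{left}(x)$; in particular $a\in\f{right}(y_L)$. From $x\mact y_R<\mtop$ and $a\in\posi(x)\cap\posi(y_R)$ I get $a\in\f{right}(x)\cap\f{left}(y_R)$; in particular $a\in\f{left}(y_R)$. Combining these with the hypothesis, the following three intersections are non-empty and all contain $a$:
$$
\f{left}(x)\cap\f{right}(y_L),\qquad \f{left}(y_R)\cap\f{right}(x),\qquad \f{left}(y_R)\cap\f{right}(y_L).
$$

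Now I feed each intersection into Definition~\ref{defn.oriented}(\ref{oriented.lr.empty}), applied in turn to the pairs $(x,y_L)$, $(y_R,x)$, and $(y_R,y_L)$. This yields the three failure equations $x\mact y_L=\mtop$, $y_R\mact x=\mtop$, and $y_R\mact y_L=\mtop$, which must sit alongside the two established successes $y_L\mact x<\mtop$ and $x\mact y_R<\mtop$.

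The main obstacle I expect is the last step: turning these facts into a concrete contradiction. My intended route is to consider the three-term product $y_L\mact x\mact y_R$ and associate it as $(y_L\mact x)\mact y_R=y_L\mact(x\mact y_R)$, then exploit Definition~\ref{defn.monoid.of.chunks}(\ref{acs.locality}) (locality): its contrapositive forces some pair product to be $\mtop$, which combined with the equations above pins down $y_L\mact y_R=\mtop$, while the other bracketing should force incompatibility with $y_R\mact x=\mtop$ or $x\mact y_L=\mtop$. If the direct locality argument does not close, a backup is to pick $\pi\in\f{fix}(a)$ that moves the rest of $\posi(y_L)$ to atoms fresh for $\posi(x)\cup\posi(y_R)$, and use equivariance together with Definition~\ref{defn.oriented}(\ref{oriented.fresh.defined}) to box in $(\pi\pact y_L)\mact x$ or $x\mact(\pi\pact y_R)$ between a composition that must succeed by freshness and one that must fail by the equations above, thus collapsing the supposition $a\in\f{left}(x)\cap\f{right}(x)$.
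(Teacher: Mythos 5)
Your opening moves are sound: the witnesses $y_L,y_R$ exist, Lemma~\ref{lemm.make.link} does give $a\in\f{right}(y_L)$ and $a\in\f{left}(y_R)$, and the three failure equations $x\mact y_L=\mtop$, $y_R\mact x=\mtop$ and $y_R\mact y_L=\mtop$ all follow from Definition~\ref{defn.oriented}(\ref{oriented.lr.empty}). The gap is that the contradiction is never actually derived, and neither of your proposed endgames can derive it. Locality (Definition~\ref{defn.monoid.of.chunks}(\ref{acs.locality})) only constrains the \emph{ordered} pairs $x_i\mact x_j$ with $i<j$ inside a product; in $y_L\mact x\mact y_R$ those pairs are $y_L\mact x$, $y_L\mact y_R$ and $x\mact y_R$, of which the first and third are known to succeed and the middle is undetermined. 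Every failure you derived has its factors in the \emph{reversed} order, so locality never sees them; and since $\mact$ is totally defined and associative, both bracketings denote the same ordered product, so associativity cannot manufacture the reversed pairs either. The permutation backup also fails: Definition~\ref{defn.oriented}(\ref{oriented.fresh.defined}) and~(\ref{oriented.fresh.commute}) require the position sets to be \emph{disjoint}, but because $\pi\in\f{fix}(a)$ the atom $a$ remains in $\posi(\pi\pact y_L)\cap\posi(x)$, so those axioms are never triggered. Indeed the facts you have accumulated --- a chain $y_L,x,y_R$ that composes left-to-right and fails in every reversed pairing --- are mutually consistent as a composition table, so no contradiction can be squeezed out of them alone.

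The paper's argument is structured differently, and the difference is the essential missing idea: rather than extracting the two witnesses and playing them off against each other, it fixes $a\in\f{left}(x)$ and shows that \emph{every} $y$ with $a\in\posi(y)$ satisfies $x\mact y=\mtop$, by cases on whether $a$ points left, right or up in $y$ (Lemma~\ref{lemm.supp.lru}), each case being discharged via Lemma~\ref{lemm.make.link}. It follows that no $y\in\f{rightB}(x)$ can have $a\in\posi(y)$, so Definition~\ref{defn.left.right.up}(\ref{points.right}) gives $a\notin\f{right}(x)$ directly, with no need to exhibit any failing composition of particular witnesses. You should restructure your proof along these lines: quantify over all candidate right-witnesses and rule each one out, instead of reasoning from one fixed pair.
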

\begin{proof}
Suppose $a\in\f{left}(x)$; we will show that $a\in\f{right}(x)$ is impossible.
Consider some $y\in\ns X$ with $a\in\posi(y)$, so that by Lemma~\ref{lemm.supp.lru} $a\in\f{left}(y)\cup\f{right}(y)\cup\f{up}(y)$.
Then:
\begin{itemize}
\item
If $a\in\f{left}(y)$ then $a\in\f{left}(x)\cap\f{left}(y)$ and by Lemma~\ref{lemm.make.link} $x\mact y=\mtop$.
\item
If $a\in\f{right}(y)$ then $a\in\f{left}(x)\cap\f{right}(y)$ and by Lemma~\ref{lemm.make.link} $x\mact y=\mtop$.
\item
If $a\in\f{up}(y)$ then $a\in\f{left}(x)\cap\f{up}(y)$ and by Lemma~\ref{lemm.make.link} $x\mact y=\mtop$.
\end{itemize}
Thus $a\in\posi(y)$ implies $x\mact y=\mtop$ and so $y\not\in\f{rightB}(x)$. 
It follows from Definition~\ref{defn.left.right.up}(\ref{points.right}) that $a\not\in\f{right}(x)$ as required.
\end{proof}

We use Lemma~\ref{lemm.one.fails} for Proposition~\ref{prop.partial.converse.fresh}:
\begin{lemm}
\label{lemm.one.fails}
Suppose $\ns X=(\ns X,\mbot,\mtop,\leq,\mact)$ is an oriented monoid of chunks.
Then
at least one of the following must hold:
$$
x\mact y=\mtop
\qquad
y\mact x=\mtop
\qquad
\posi(x)\cap\posi(y)=\varnothing
$$
\end{lemm}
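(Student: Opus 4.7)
The plan is to prove the contrapositive: assume that both $x\mact y<\mtop$ and $y\mact x<\mtop$, and derive $\posi(x)\cap\posi(y)=\varnothing$. This should fall out essentially mechanically from the two results developed immediately before the lemma, namely Lemma~\ref{lemm.make.link} and Corollary~\ref{corr.lr.tx.empty}.

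First, I would apply Lemma~\ref{lemm.make.link} to the composition $x\mact y<\mtop$, obtaining
$$\posi(x)\cap\posi(y)\subseteq \f{right}(x)\cap\f{left}(y).$$
Then I would apply the same lemma to $y\mact x<\mtop$, obtaining
$$\posi(y)\cap\posi(x)\subseteq \f{right}(y)\cap\f{left}(x).$$
Since the left-hand sides are equal as sets, I can intersect both right-hand sides to get
$$\posi(x)\cap\posi(y)\ \subseteq\ \bigl(\f{left}(x)\cap\f{right}(x)\bigr)\cap\bigl(\f{left}(y)\cap\f{right}(y)\bigr).$$

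Finally, Corollary~\ref{corr.lr.tx.empty} tells us that $\f{left}(z)\cap\f{right}(z)=\varnothing$ for every $z\in\ns X$, so each of the two bracketed intersections above is empty, and hence so is $\posi(x)\cap\posi(y)$. This establishes the contrapositive and the lemma follows.

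I do not anticipate a real obstacle: the interesting content has already been packed into Lemma~\ref{lemm.make.link} (which converts non-failure of a composition into an orientation-matching constraint on shared positions) and Corollary~\ref{corr.lr.tx.empty} (which says a single element cannot have a position that is both left- and right-pointing). The only thing to be careful about is purely bookkeeping --- ensuring the two applications of Lemma~\ref{lemm.make.link} are combined in the right order so that the left/right factors on $x$ (respectively $y$) end up on the same side, ready to be annihilated by Corollary~\ref{corr.lr.tx.empty}.
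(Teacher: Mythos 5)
Your proof is correct, and there is no gap: both applications of Lemma~\ref{lemm.make.link} are legitimate (it is stated and proved before Lemma~\ref{lemm.one.fails}, with no circularity), and the final intersection $\bigl(\f{left}(x)\cap\f{right}(x)\bigr)\cap\bigl(\f{left}(y)\cap\f{right}(y)\bigr)$ is indeed annihilated by Corollary~\ref{corr.lr.tx.empty} (in fact either bracket alone suffices). The paper takes a more hands-on route: it first disposes of the degenerate cases $x,y\in\{\mbot,\mtop\}$ via Lemma~\ref{lemm.oriented.empty.iff}, then picks a putative $a\in\posi(x)\cap\posi(y)$ and does a direct case analysis on whether $a$ points left, right, or up in each of $x$ and $y$, appealing to Definition~\ref{defn.oriented}(\ref{oriented.lr.empty}) and Lemma~\ref{lemm.up.fail} to force $x\mact y=\mtop$ or $y\mact x=\mtop$ in every case. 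Your version is more modular: the case analysis has already been done once inside the proof of Lemma~\ref{lemm.make.link}, so you reuse it rather than repeat it, and the degenerate cases need no special treatment because $\posi(\mbot)=\posi(\mtop)=\varnothing$ makes the subset computation vacuous there. What the paper's direct argument buys is that it identifies \emph{which} of $x\mact y$ and $y\mact x$ fails in each configuration of pointing directions, which is mildly more informative; what your argument buys is brevity and a cleaner dependency structure on the two preceding results.
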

\begin{proof}
If $x$ or $y$ are equal to $\mbot$ or $\mtop$ then $\posi(x)\cap\posi(y)=\varnothing$ is immediate from Lemma~\ref{lemm.oriented.empty.iff}.

So suppose $x,y\not\in\{\mbot,\mtop\}$, from which it follows by Lemma~\ref{lemm.oriented.empty.iff} (or direct from Definition~\ref{defn.oriented}(\ref{oriented.empty})) that $\posi(x)\neq\varnothing$ and $\posi(y)\neq\varnothing$.
Suppose we have some $a\in\posi(x)\cap\posi(y)$.
We reason by cases using our assumption that $\ns X$ is oriented (Definition~\ref{defn.oriented}):
\begin{itemize*}
\item
If $a\in\f{left}(x)\cap\f{right}(y)$ then $x\mact y=\mtop$
by Definition~\ref{defn.oriented}(\ref{oriented.lr.empty}).
\item
If $a\in\f{right}(x)\cap\f{left}(y)$ then $y\mact x=\mtop$ 
by
by Definition~\ref{defn.oriented}(\ref{oriented.lr.empty}).
\item
If $a\in\f{up}(x)$ or $a\in\f{up}(y)$ then $x\mact y=y\mact x=\mtop$ by Lemma~\ref{lemm.up.fail}.
\item
Other cases are no harder.
\end{itemize*} 
\end{proof}

\begin{rmrk}
Proposition~\ref{prop.partial.converse.fresh} is a partial converse to Definition~\ref{defn.oriented}(\ref{oriented.fresh.commute}) (compare also with Proposition~\ref{prop.fresh.chunks.iff}, which is the same result but for a different structure, and with a very different proof).
It is significant because it equates 
\begin{itemize*}
\item
a static property of positions, with 
\item
a local property of being combinable in either order, with
\item
a global operational property of being commutative up to observation.  
\end{itemize*}
Commutativity is of particular interest in the context of blockchains, because they are by design intended to be distributed, so that we cannot in general know or assume in what order transactions get appended. 
\end{rmrk}

\begin{prop}
\label{prop.partial.converse.fresh}
Suppose $\ns X=(\ns X,\mbot,\mtop,\leq,\mact)$ is an oriented monoid of chunks, and $x,y\in\ns X$.
Suppose further that $x\mact y<\mtop$ or $y\mact x<\mtop$ (not necessarily both).
Then the following conditions are equivalent:
\begin{enumerate*}
\item
$\posi(x)\cap\posi(y)=\varnothing$.
\item
$x\mact y<\mtop \land y\mact x<\mtop$.
\item
$x$ and $y$ commute up to observational equivalence (Definition~\ref{defn.circ}(\ref{circ.commute})).
\end{enumerate*}
\end{prop}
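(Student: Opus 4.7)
The plan is to prove the three-way equivalence by cycling: $(1) \Rightarrow (2) \Rightarrow (3) \Rightarrow (1)$. The standing hypothesis that $x \mact y < \mtop$ or $y \mact x < \mtop$ will be used critically only in the last arrow; everywhere else it enters just to rule out the degenerate cases $x = \mtop$ or $y = \mtop$. Indeed if either $x$ or $y$ equals $\mtop$ then by Definition~\ref{defn.monoid.of.chunks}(\ref{mtop.cdot}) both $x \mact y$ and $y \mact x$ equal $\mtop$, contradicting the hypothesis; so throughout the proof we may silently assume $x, y \in \ns X \setminus \{\mtop\}$.

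For $(1) \Rightarrow (2)$: assuming $\posi(x) \cap \posi(y) = \varnothing$ and having just observed $x, y \neq \mtop$, Definition~\ref{defn.oriented}(\ref{oriented.fresh.defined}) applied in both orders directly yields $x \mact y < \mtop$ and $y \mact x < \mtop$. For $(2) \Rightarrow (3)$: from $x \mact y < \mtop$ and $y \mact x < \mtop$, the first two disjuncts of Lemma~\ref{lemm.one.fails} both fail, forcing $\posi(x) \cap \posi(y) = \varnothing$; then Definition~\ref{defn.oriented}(\ref{oriented.fresh.commute}) gives $x \mact y \sim y \mact x$.

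The subtle implication is $(3) \Rightarrow (1)$, which is where the hypothesis does real work. From commutativity $x \mact y \sim y \mact x$, Lemma~\ref{lemm.commute.commute} gives
$$x \mact y < \mtop \liff y \mact x < \mtop.$$
The hypothesis of the proposition supplies one side of this biconditional, so we upgrade to the conjunction $x \mact y < \mtop \land y \mact x < \mtop$, i.e.\ condition (2). Invoking the already-proven $(2) \Rightarrow (1)$ part (equivalently, a direct appeal to Lemma~\ref{lemm.one.fails}) delivers $\posi(x) \cap \posi(y) = \varnothing$.

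The main obstacle, such as it is, is not conceptual but bookkeeping: one must remember that Lemma~\ref{lemm.one.fails} is an unconditional trichotomy and that without the hypothesis of the proposition, statement (3) would be trivially true whenever both composites collapse to $\mtop$, so the equivalence genuinely requires the ``one-sided success'' assumption to exclude the pathological mutual-failure case. Once that is noted, every step is a direct citation of Definition~\ref{defn.oriented}, Lemma~\ref{lemm.one.fails}, or Lemma~\ref{lemm.commute.commute}, with no calculation required.
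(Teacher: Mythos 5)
Your proof is correct and uses exactly the same ingredients as the paper's (Definition~\ref{defn.oriented}(\ref{oriented.fresh.defined}) and (\ref{oriented.fresh.commute}), Lemma~\ref{lemm.one.fails}, Lemma~\ref{lemm.commute.commute}, and the preliminary observation that the hypothesis forces $x,y<\mtop$); the only difference is organizational, in that you cycle $(1)\Rightarrow(2)\Rightarrow(3)\Rightarrow(1)$ and argue the last arrow directly, whereas the paper proves the two biconditionals $(1)\liff(2)$ and $(1)\liff(3)$ with the latter's hard direction done contrapositively. This is the same proof in substance, and your closing remark about why the one-sided hypothesis is genuinely needed is a correct and worthwhile observation.
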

\begin{proof}
First, note that since $x\mact y<\mtop$ or $y\mact x<\mtop$, it must from Definition~\ref{defn.monoid.of.chunks}(\ref{mtop.cdot}) be the case that $x<\mtop$ and $y<\mtop$.

If $x\mact y<\mtop\land y\mact x<\mtop$ then $\posi(x)\cap\posi(y)=\varnothing$ by Lemma~\ref{lemm.one.fails}.
Conversely if $\posi(x)\cap\posi(y)=\varnothing$ then (since $x<\mtop$ and $y<\mtop$) by Definition~\ref{defn.oriented}(\ref{oriented.fresh.defined}) $x\mact y<\mtop$ and $y\mact x<\mtop$.
 
If $\posi(x)\cap\posi(y)=\varnothing$ then $x$ and $y$ commute by Definition~\ref{defn.oriented}(\ref{oriented.fresh.commute}).
Conversely, if $\posi(x)\cap\posi(y)\neq\varnothing$ then by Lemma~\ref{lemm.one.fails} (since $x\mact y<\mtop$ or $y\mact x<\mtop$) we have that $y\mact x=\mtop$ or $x\mact y=\mtop$ respectively, and in particular we have that $x\mact y\neq y\mact x$.
Using Lemma~\ref{lemm.commute.commute} we conclude that $x$ and $y$ do not commute.
\end{proof}

\begin{rmrk}[Comment on design]
There is design freedom to Definition~\ref{defn.oriented}, and we mention this briefly for future work. 
One plausible condition is:
$$
\f{up}(x\mact y)\subseteq\f{up}(x)\cup\f{up}(y)\cup(\f{right}(x)\cap\f{left}(y)).
$$
Intuitively, this states that combining $x$ and $y$ can \emph{only} bind positions that point right in $x$ and point left in $y$.
An even stricter variant would be to insist on equality provided that $x\mact y\neq\mtop$.
\end{rmrk}

\subsubsection{A brief discussion}
\label{subsect.brief.discussion}

We might ask:
\begin{quote}
\emph{Why bother with monoids of chunks?
Why not just work with IEUTxOs?}
\end{quote}
The answer is that we need both: IEUTxOs are the motivating concrete model, and monoids of chunks are their abstraction.

Of course, the IEUTxO equations from Figure~\ref{fig.ieutxo} themselves are an abstraction and generalisation of a concrete inductive definition in~\cite[Figure~3]{chakravarty:extum}, so in overview this paper has the following hierarchy of models, given in increasing order of generality ---
\begin{itemize}
\item
The inductive EUTxO structures from~\cite{chakravarty:extum}.
\item
IEUTxO models from Definition~\ref{defn.solution}.
\item
Abstract chunk systems (ACS) from Definition~\ref{defn.acs.system}.
\end{itemize} 
--- though there is also more going on, because we also map from ACS back down to IEUTxO (Theorem~\ref{thrm.adjoints} and surrounding discussion).

Programmers can think of APIs, which abstract away from internal structure of a concrete implementation; this makes programs more modular, and easier to test and document.

Modern programming languages make it easier to program efficiently using abstract denotations, instantiating only when needed --- one might call this \emph{just-in-time instantiation}.
So even if we have just one concrete model and want one implementation, a good algebraic theory is still relevant to producing working code, because
\begin{itemize*}
\item
it may help structure the mathematics and the code --- the \emph{algebraic graphs with class} Haskell package illustrates how effective this marriage of theory and practice can be~\cite{mokhov:alggwc} --- and 
\item
an abstraction can be read as a library of testable properties, against which implementations can be checked.  
If an implementation fails an axiom --- it's wrong.
\end{itemize*}
So perhaps a better question is this:
\begin{quote}
\emph{What are the essential properties of $\tf{IEUTxO}$ that make it interesting?
How are these properties layered, and nested; and how can they be compactly represented?}
\end{quote}
Definitions~\ref{defn.monoid.of.chunks} and~\ref{defn.oriented} are one set of answers to these questions.
And then, a further question is this:
\begin{quote}
\emph{What other models, aside from $\tf{IEUTxO}$ models, exist of the ACS axioms?} 
\end{quote} 
This paper contains one answer (plus examples; see Subsection~\ref{subsect.acs.examples}): in Definitions~\ref{defn.rG} and~\ref{defn.rG.arrow} and Proposition~\ref{prop.rG.functor} we give a functor taking \emph{any} oriented monoid of chunks to IEUTxOs.
We also map functorially in the other direction in Definitions~\ref{defn.lF} and~\ref{defn.lF.arrow} and Theorem~\ref{thrm.lF.functor}.

The overall results are packaged up in Theorem~\ref{thrm.adjoints}.

\subsection{The category \texorpdfstring{$\tf{ACS}$}{ACS} of abstract chunk systems}

We are now ready to give the algebraic account of IEUTxOs promised in Remark~\ref{rmrk.promise.good}:\footnote{It will take more work to \emph{prove} that this is so.  See the functors $\lF$ and $\rG$ which we define in Sections~\ref{sect.functor.lF} and~\ref{sect.functor.rG}, and Theorem~\ref{thrm.adjoints}.}
\begin{defn}
\label{defn.acs.system}
An \deffont{abstract chunk system} (\deffont{ACS}) is an oriented atomic monoid of chunks (Definitions~\ref{defn.oriented}, \ref{defn.atomic}, and~\ref{defn.monoid.of.chunks}).
\end{defn}

\begin{defn}
\label{defn.acs}
Define $\tf{ACS}$ the \deffont{category of abstract chunk systems} by:
\begin{enumerate}
\item\label{acs.object}
Objects are abstract chunk systems (Definition~\ref{defn.acs.system}).
\item\label{acs.arrow}
Arrows $\rg:\ns X\to\ns Y$ are sets functions from $\ns X$ to $\ns Y$ such that: 
\begin{enumerate*}
\item\label{acs.mbot.mtop}
$\rg(\mbot_{\ns X})=\mbot_{\ns Y}$ and $\rg(\mtop_{\ns X})=\mtop_{\ns Y}$ 
\item\label{abs.less.than.mtop}
$x\leq y<\mtop_{\ns X}$ implies $\rg(x)\leq \rg(y)<\mtop_{\ns Y}$ 
\item\label{f.cdot}
$\rg(x)\mact \rg(y)= \rg(x\mact y)$ 
\end{enumerate*}
\end{enumerate}
Composition of arrows is composition of functions, and the identity arrow is the identity function.
\end{defn}

\begin{rmrk}[Comment on design]\leavevmode
\begin{enumerate*}
\item
We do not insist in Definition~\ref{defn.acs} that $\rg(x)$ must be atomic if $x$ is.
This corresponds to our choice in Definition~\ref{defn.ieutxo.category}(\ref{ieutxo.arrow}) to let $\lf$ map from transactions to chunks, and not from transactions to transactions.
\item
We do insist in Definition~\ref{defn.acs}(\ref{abs.less.than.mtop}) that if $x$ is not the failure element $\mtop_{\ns X}$ in $\ns X$ then $\rg(x)$ is also not the failure element $\mtop_{\ns Y}$ in $\ns Y$.
This corresponds to our choice in Definition~\ref{defn.ieutxo.category}(\ref{ieutxo.arrow}) to make $\lf$ a total function from transactions to chunks, rather than a partial one (cf. discussion in Remark~\ref{rmrk.lf.design}(\ref{lf.partial})).

We could relax this condition by allowing $\rg$ to map $x<\mtop_{\ns X}$ to $\mtop_{\ns Y}$.
There would be nothing wrong with this and it would just exhibit $\tf{ACS}$ as embedded in a larger category with the same objects but more arrows.  
\end{enumerate*}
\end{rmrk}

Lemma~\ref{lemm.atomic.factor} just repackages Definition~\ref{defn.atomic} for objects in $\tf{ACS}$: 
\begin{lemm}
\label{lemm.atomic.factor}
Suppose $\ns X,\ns Y\in\tf{ACS}$ and $x\in\ns X$ and $\rg:\ns X\to\ns Y\in\tf{ACS}$.
Then:
\begin{itemize*}
\item
If $x\neq\mtop$ then there exists some finite (possibly empty, possibly non-unique) list of atomic elements $x_1,\dots,x_n\in \f{atomic}(\ns X)$ such that $x=x_1\mact\ldots\mact x_n$.
\item
If $x\neq\mtop$ (and in particular by Definition~\ref{defn.atomic.elements}(\ref{atomic.proper}) if $x$ is atomic) then there exists some finite (possibly empty, possibly non-unique) list of atomic elements $y_1,\dots,y_n\in \f{atomic}(\ns Y)$ such that $\rg(x)=y_1\mact\ldots\mact y_n$.
\end{itemize*}
\end{lemm}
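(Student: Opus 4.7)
The plan is to observe that both bullets are essentially immediate unpackings of the definitions, so the main work is just identifying the right clauses to cite.

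For the first bullet, the hypothesis is $x\in\ns X$ with $x\neq\mtop$. Since $\ns X\in\tf{ACS}$, by Definition~\ref{defn.acs.system} it is in particular an atomic monoid of chunks, so by Definition~\ref{defn.atomic}(\ref{atomic.generated}) it is generated as a monoid by its atomic elements. This says precisely that every $x\in\ns X\setminus\{\mtop\}$ has a (possibly empty, possibly non-unique) factorisation $x=x_1\mact\dots\mact x_n$ with each $x_i\in\f{atomic}(\ns X)$. So the first bullet is just the direct unpacking of that clause.

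For the second bullet, the key additional step is to show that $\rg(x)\neq\mtop_{\ns Y}$, after which the first bullet applied in $\ns Y$ finishes the job. To obtain $\rg(x)\neq\mtop_{\ns Y}$: since $\mtop_{\ns X}$ is the top element of $\ns X$ (Definition~\ref{defn.monoid.of.chunks}) and $x\neq\mtop_{\ns X}$, we have $x<\mtop_{\ns X}$, so in particular $x\leq x<\mtop_{\ns X}$. Then Definition~\ref{defn.acs}(\ref{abs.less.than.mtop}) applied to this chain yields $\rg(x)\leq \rg(x)<\mtop_{\ns Y}$, i.e. $\rg(x)\neq \mtop_{\ns Y}$. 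Now $\ns Y\in\tf{ACS}$, so applying the first bullet (equivalently Definition~\ref{defn.atomic}(\ref{atomic.generated}) inside $\ns Y$) to $\rg(x)\in\ns Y\setminus\{\mtop_{\ns Y}\}$ produces the desired factorisation $\rg(x)=y_1\mact\dots\mact y_n$ into atomic elements of $\ns Y$.

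The parenthetical \emph{``in particular by Definition~\ref{defn.atomic.elements}(\ref{atomic.proper}) if $x$ is atomic''} is then a one-line remark: an atomic element satisfies $\mbot\lneq x\lneq\mtop$ by Definition~\ref{defn.atomic.elements}(\ref{atomic.proper}), so in particular $x\neq\mtop_{\ns X}$ and the hypothesis of the second bullet applies. There is no real obstacle in the proof; the only subtlety worth spelling out is the step $x\neq\mtop_{\ns X}\Rightarrow x<\mtop_{\ns X}$, which is used to feed Definition~\ref{defn.acs}(\ref{abs.less.than.mtop}) and thereby transport the ``not failed'' property along $\rg$.
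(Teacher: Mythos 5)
Your proof is correct and takes essentially the same route as the paper, which simply declares the lemma ``immediate'' from the fact that objects of $\tf{ACS}$ are atomic monoids of chunks (Definition~\ref{defn.acs.system} via Definition~\ref{defn.atomic}). The one step you spell out that the paper elides --- using Definition~\ref{defn.acs}(\ref{abs.less.than.mtop}) with $y=x$ to get $\rg(x)<\mtop_{\ns Y}$ before factorising in $\ns Y$ --- is exactly the right justification for the second bullet.
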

\begin{proof}
Immediate since by Definition~\ref{defn.acs.system} $\ns X$ is atomic (Definition~\ref{defn.atomic}).
\end{proof}

\begin{prop}\leavevmode
\label{prop.ACS.morphism.atomic}
\begin{enumerate}
\item
An arrow $\rg:\ns X\to\ns Y\in\tf{ACS}$ (Definition~\ref{defn.acs}(\ref{acs.arrow})) is uniquely determined by its action on $\f{atomic}(\ns X)$.
\item\label{atomic.equality.test}
As a corollary, if $\rg,\rg':\ns X\to\ns Y$ are two arrows, then to check the equality of arrows $\rg=\rg'$ it suffices to check that $\rg(x)=\rg'(x)$ for every $x\in\f{atomic}(\ns X)$.
\end{enumerate}
\end{prop}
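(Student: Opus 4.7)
The plan is to reduce everything to the factorisation structure provided by atomicity. The two parts are equivalent repackagings: part~1 asserts that knowing $\rg$ on $\f{atomic}(\ns X)$ determines $\rg$ on all of $\ns X$, and part~2 is the pointwise equality test that falls out of this. I will therefore focus on proving that if $\rg$ and $\rg'$ are two arrows $\ns X\to\ns Y$ in $\tf{ACS}$ that agree on every $x\in\f{atomic}(\ns X)$, then $\rg(x)=\rg'(x)$ for every $x\in\ns X$.

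First I would dispose of the two distinguished elements. For $x=\mtop_{\ns X}$ we have $\rg(\mtop_{\ns X})=\mtop_{\ns Y}=\rg'(\mtop_{\ns X})$ directly from Definition~\ref{defn.acs}(\ref{acs.mbot.mtop}), and similarly $\rg(\mbot_{\ns X})=\mbot_{\ns Y}=\rg'(\mbot_{\ns X})$. So it suffices to handle $x\in\ns X\setminus\{\mbot,\mtop\}$ (we can also fold $\mbot$ into the general case as the empty factorisation).

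Next I would invoke atomicity of $\ns X$ from Definition~\ref{defn.atomic.elements} and Definition~\ref{defn.atomic}(\ref{atomic.factorisation}): there is a factorisation function $\f{factor}:\ns X\setminus\{\mtop\}\to[\f{atomic}(\ns X)]$ with $\f{factor}(x)=[x_1,\dots,x_n]$ and $x=x_1\mact\ldots\mact x_n$ (the existence of \emph{some} factorisation, which is all we need, is guaranteed by Lemma~\ref{lemm.atomic.factor}). By repeated application of Definition~\ref{defn.acs}(\ref{f.cdot}) the multiplicativity clause of an $\tf{ACS}$-arrow gives
$$
\rg(x)=\rg(x_1)\mact\ldots\mact\rg(x_n)
\quad\text{and}\quad
\rg'(x)=\rg'(x_1)\mact\ldots\mact\rg'(x_n).
$$
Since each $x_i\in\f{atomic}(\ns X)$, our hypothesis gives $\rg(x_i)=\rg'(x_i)$ for each $i$, and therefore $\rg(x)=\rg'(x)$. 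For the empty factorisation ($n=0$) the product is $\mbot_{\ns Y}$ on both sides. This proves part~1, and part~2 is the contrapositive reading that was used.

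The argument is mostly bookkeeping, so there is no real obstacle, but the one subtle point worth flagging is that factorisations into atomic elements need not be unique in a general ACS (uniqueness is reserved for \emph{perfectly} atomic monoids; see Definition~\ref{defn.atomic}(\ref{perfectly.atomic})). This is not a problem for us: we only need to fix \emph{one} factorisation of $x$ to compute $\rg(x)$, and the multiplicativity of $\rg$ forces the same resulting value of $\rg(x)$ regardless of which factorisation we pick --- indeed, if $x=x_1\mact\ldots\mact x_n=x'_1\mact\ldots\mact x'_m$ then both products evaluate to $\rg(x)$ in $\ns Y$ by the homomorphism clause, so the construction is coherent.
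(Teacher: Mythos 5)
Your proposal is correct and follows essentially the same route as the paper's own proof: dispose of $\mbot$ and $\mtop$ via Definition~\ref{defn.acs}(\ref{acs.mbot.mtop}), factorise everything else into atomic elements via Lemma~\ref{lemm.atomic.factor}, and apply the homomorphism clause Definition~\ref{defn.acs}(\ref{f.cdot}). Your closing remark on why non-uniqueness of factorisations causes no trouble is a worthwhile clarification that the paper leaves implicit, but it does not change the argument.
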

\begin{proof}
\begin{enumerate}
\item
Consider some $x\in\ns X$.
If $x\in\{\mtop,\mbot\}$ then the action of $f$ is determined by Definition~\ref{defn.acs}(\ref{acs.mbot.mtop}).

Otherwise, using Lemma~\ref{lemm.atomic.factor} write $x=x_1\mact \dots \mact x_n$ for atomic $x_1,\dots,x_n\in \f{atomic}(\ns X)$.
We then have from Definition~\ref{defn.acs}(\ref{f.cdot}) that 
$$
f(x_1)\mact\ldots\mact \rg(x_n)= \rg(x).
$$
Thus $\rg(x)$ is determined by the values of $\rg(x_1)$, \dots, $\rg(x_n)$.
\item
By a routine argument from Definition~\ref{defn.atomic} (since $\ns X$ is atomic) and Definition~\ref{defn.acs}(\ref{f.cdot}).
\end{enumerate}
\end{proof}

\subsection{Examples of abstract chunk systems}
\label{subsect.acs.examples}

We created abstract chunk systems in (Definition~\ref{defn.acs.system}) to abstract away the internal structure of IEUTxO models (Definition~\ref{defn.solution}).

This is a standard move in mathematics: axiomatise, then generalise.
As we argued in Subsections~\ref{subsect.map} and~\ref{subsect.brief.discussion}, \emph{even if} the reader only cares about practical hands-on implementation, for which internal structure is available because we have the code, it is still useful --- and arguably \emph{indispensable} --- to take a moment to get a good higher-view of what it is that is implemented, since axioms provide a language for the higher-level enterprises of comparison, description, specification, correctness, testing, and exposition.

Now we take a moment to go back and consider what the space of concrete models of the ACS axioms looks like.
We give a list of examples, which is not intended to be exhaustive but 
which we hope may illustrate the scope, character, and structure of our definition:

\begin{enumerate}
\item
Consider the set of all finite sets of atoms $A=\{a_1,\dots,a_n\}\finsubseteq\mathbb A$ and $\atoms$ itself:
$$
\ns X=\finpow(\atoms)\cup\{\atoms\} .
$$ 
$\ns X$ forms an ACS such that
\begin{itemize*}
\item
$A\leq B$ when $A\subseteq B$.
\item
$A\mact B=A\uplus B$ (disjoint union) if $A$ and $B$ are disjoint, and $A\mact B=\mathbb A$ otherwise.
\end{itemize*}
Unpacking definitions, we can check that:
\begin{itemize*}
\item
Atomic elements are singletons $\{a\}$.
\item
A factorisation function $\f{factor}$ (Definition~\ref{defn.atomic}(\ref{atomic.factorisation})) is obtained by ordering atoms and listing the finite set in order.
\item
$\posi(A)=\f{up}(A)=A$ and $\f{left}(A)=\f{right}(A)=\varnothing$.
\end{itemize*}
\item
Consider some datatype $\tf{Terms}$ of term syntax over variable symbols $a,b,c,\dots$ (terms of first-order logic $s,t::= a \mid f(t,\dots,t)$ for some term-formers $f$, or terms of the untyped $\lambda$-calculus $s,t::=a \mid ss\mid\lambda a.s$ would suffice). 

Let \deffont{finite substitutions} be finite partial maps from variable symbols to terms generated by \deffont{atomic substitutions} 
$$
\sigma=[a\ssm t] . 
$$ 
Composition is defined in a standard way by acting on terms as $s\sigma$.

If $a\not\in\f{dom}(\sigma)$ then $a\sigma=a$ (so the substitution acts as the identity on variable symbols not in its domain). 

Now, add a \emph{fail} top element $\mtop$, such that $s\mtop=s$ always (so $\mtop$ is a formal element that acts as the identity).  
If the reader likes, we could take $\mtop$ to be $[a\ssm a\mid \text{ all }a]$.

Then substitutions with $\mtop$ form an ACS, where 
\begin{itemize*}
\item
$\sigma\leq\sigma'$ when $\f{dom}(\sigma)\subseteq\f{dom}(\sigma')$, and $\sigma\leq\mtop$ always.
\item
If $\f{dom}(\sigma)\cap\f{dom}(\sigma')=\varnothing$ then $\sigma\mact\sigma'$ is the finite substitution that maps $a$ to $a\sigma\sigma'$ for each $a\in\f{dom}(\sigma)\cup\f{dom}(\sigma')$, and 
\item
if $\f{dom}(\sigma)\cap\f{dom}(\sigma')\neq\varnothing$ then $\sigma\mact\sigma'=\mtop$
\end{itemize*}
The `disjoint domains or fail' condition ensures that composition is monotone in $\leq$.
Unpacking definitions, we check that:
\begin{itemize*}
\item
Atomic elements are the generators $[a\ssm s]$.
\item
A factorisation function $\f{factor}$ (Definition~\ref{defn.atomic}(\ref{atomic.factorisation})) is obtained by ordering atoms and listing a substitution as a list of atomic substitutions in order of the atom on the left.\footnote{This can be a little subtle: if $b$ is ordered before $a$ then $[a\ssm b]\mact [b\ssm c]$ would need to be written as $[b\ssm c]\mact[a\ssm c]$.}
\item
$\f{posi}(\sigma)=\f{left}(\sigma)=\f{dom}(\sigma)$ and $\f{right}(\sigma)=\f{up}(\sigma)=\varnothing$.
\end{itemize*}
\end{enumerate}

\section{The functor \texorpdfstring{$\lF:\tf{IEUTxO}\to\tf{ACS}$}{F : IEUTxO -> ACS}}
\label{sect.functor.lF}

\subsection{Action on objects}

\begin{defn}
\label{defn.lF}
Suppose $\mathbb T=(\alpha,\beta,\tf{Transaction},\tf{Validator})\in\tf{IEUTxO}$ is a model (Definition~\ref{defn.ieutxo.model}) and recall $\tf{Chunk}_{\mathbb T}$ from Notation~\ref{nttn.chunk}.
Then we define an abstract chunk system 
$$
\lF(\mathbb T) = (\tf{Chunk}_{\mathbb T}\cup\{\f{fail}\},[],\f{fail},\leq,\mact) \in \tf{ACS} 
$$ 
as follows:
\begin{itemize*}
\item
The underlying set is $\tf{Chunk}_{\mathbb T}\cup\{\f{fail}\}$ --- so if we write ``$x\in \lF(\mathbb T)$'' this means ``either $x=\f{fail}$ or $x\in\tf{Chunk}_{\mathbb T}$''.
\item
$\mbot=[]$ and $\mtop=\f{fail}$. 
\item
$\leq$ is sublist inclusion (Notation~\ref{nttn.pointed}(\ref{sublist.inclusion})) on chunks that are lists of transactions, with $\mtop$ as a top element, so $\f{ch}\leq\mtop$ always.
\item
$x\mact\mtop=\mtop=\mtop\mact x$, and if $x$ and $y$ are both chunks then $\mact$ is validated concatentation of chunks, defaulting to the failure element $\mtop$ if validation fails.
\end{itemize*}
\end{defn}

\begin{rmrk}
So if $x,y\in\tf{Chunk}_{\mathbb T}$ but $x\mact y\in[\tf{Transaction}_{\mathbb T}]\setminus\tf{Chunk}_{\mathbb T}$ in $\mathbb T$ --- so the two chunks cannot be validly combined in $\mathbb T$ --- then $x\mact y=\f{fail}$ in $\lF(\mathbb T)$.  
We will always be clear whether $\mact$ means `concatenate as lists' or `compose in $\lF(\mathbb T)$'.

$\lF(\mathbb T)$ resembles $\mathbb T$, except with an explicit failure element $\f{fail}$ added.
This makes monoid composition total: not every pair of chunks composes to form a list of transactions that is a chunk --- as per the criteria in Definition~\ref{defn.chunk} --- thus any combination of chunks that is not a chunk, can be set to $\f{fail}$.

We still need to prove that $\lF(\mathbb T)$ is an abstract chunk system (Definition~\ref{defn.acs.system}) in the category $\tf{ACS}$ (Definition~\ref{defn.acs}(\ref{acs.object})) thus:
\begin{itemize*}
\item
$\lF(\mathbb T)$ is a monoid of chunks (Definition~\ref{defn.monoid.of.chunks}), 
\item
$\lF(\mathbb T)$ is atomic (Definition~\ref{defn.atomic}) --- in fact it is perfectly atomic (Definition~\ref{defn.atomic}(\ref{perfectly.atomic})) --- and 
\item
$\lF(\mathbb T)$ is oriented (Definition~\ref{defn.oriented}).
\end{itemize*}
We do this next, culminating with Theorem~\ref{thrm.lF.object.acs}.

Note that this gives us a dual view of a chunk:
\begin{itemize*}
\item
as a chunk in the IEUTxO universe, and
\item
as an abstract element in the ACS universe.
\end{itemize*}
To what extent do these two views correspond, and how can we make this formal?
We prove Proposition~\ref{prop.pos.eq.posi}, which verifies that the positions in a chunk as an element in the IEUTxO universe, coincide with its positions as an element in the ACS universe. 
This is refined in further results; notably Lemma~\ref{lemm.acs.up.stx} and its sharper corollary Proposition~\ref{prop.left.to.blocked}.
\end{rmrk}

\begin{prop}\leavevmode
\label{prop.lF.monoid.atomic}
\begin{enumerate}
\item
$\lF(\mathbb T)$ from Definition~\ref{defn.lF} is a perfectly atomic (Definition~\ref{defn.atomic}(\ref{perfectly.atomic})) monoid of chunks (Definition~\ref{defn.monoid.of.chunks}).

Unpacking Definition~\ref{defn.atomic}(\ref{perfectly.atomic}), any $x\in \lF(\mathbb T)\setminus\{\mtop\}$ can be uniquely decomposed as a (possibly empty) finite list of atomic elements 
$$
x = x_1\mact\ldots\mact x_n,
$$
and if $x\leq y<\mtop_{\lF(\mathbb T)}$ then the factorisation of $x$ is a sublist of the factorisation of $y$.
\item
Furthermore, the atomic elements $x_i\in\f{atomic}(\lF(\mathbb T))$ are singleton lists of the form $[\tx]$ for $\tx\in\tf{Transaction}_{\mathbb T}$.
\end{enumerate}
\end{prop}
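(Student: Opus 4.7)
The plan is to verify the clauses of Definition~\ref{defn.monoid.of.chunks} one by one, then identify the atomic elements (settling part~2), and finally exhibit a factorisation function witnessing the perfectly atomic conditions of Definition~\ref{defn.atomic}(\ref{perfectly.atomic}). Essentially every clause reduces either to a standard fact about finite lists, or to a result already proved in the excerpt --- the work is mostly in keeping track of the $\mtop=\f{fail}$ element.

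For the monoid-of-chunks structure: the unit equations $\mbot\mact x=x=x\mact\mbot$ and the absorbing equations $\mtop\mact x=\mtop=x\mact\mtop$ are built into Definition~\ref{defn.lF} directly. Sublist inclusion on finite lists is a well-founded partial order (length strictly decreases along any strictly descending chain), so $\leq$ is a well-ordering with $[]$ bottom and $\mtop$ top. Associativity of $\mact$ reduces to associativity of list concatenation; the only subtlety is that an intermediate concatenation could land outside $\tf{Chunk}_{\mathbb T}$, but by Corollary~\ref{corr.sublist.inclusion.chunks} any sublist of a chunk is a chunk, so the two bracketings land in $\tf{Chunk}_{\mathbb T}$ together (and otherwise both collapse to $\mtop$). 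Monotonicity and down-closure of $\mact$ with respect to $\leq$ follow from the same corollary, and increasingness $x\leq x\mact y$ is trivial on lists (and trivially holds when $x\mact y=\mtop$, since $\mtop$ is top). The locality condition Definition~\ref{defn.monoid.of.chunks}(\ref{acs.locality}) is precisely the contrapositive of Lemma~\ref{lemm.invalidity.must.be.somewhere}.

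For atomicity and part~2 of the proposition: by Lemma~\ref{lemm.atomic.IEUTxO} the atomic elements in $(\tf{Chunk}_{\mathbb T},\leq)$ are exactly the singleton chunks $[\tx]$, and adding a fresh top $\mtop$ does not change this since the extra condition $\mbot\lneq [\tx]\lneq\mtop$ from Definition~\ref{defn.atomic.elements} holds automatically for every such singleton. For perfect atomicity, define
$$
\f{factor}([\tx_1,\dots,\tx_n]) = [\,[\tx_1],\dots,[\tx_n]\,].
$$
Each $[\tx_i]$ is atomic, and $[\tx_1]\mact\cdots\mact[\tx_n]=[\tx_1,\dots,\tx_n]$ is a chunk by assumption (with all intermediate prefixes being chunks by Corollary~\ref{corr.sublist.inclusion.chunks}, so no collapse to $\mtop$ occurs along the way). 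Uniqueness of the factorisation is immediate because the list structure determines it; the homomorphism condition Definition~\ref{defn.atomic}(\ref{factorisation.monoidal}) is then associativity of list concatenation, and the sublist-preservation condition Definition~\ref{defn.atomic}(\ref{perfectly.atomic.leq}) says that sublist inclusion of chunks lifts to sublist inclusion of their singleton factorisations --- which is tautological.

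The main obstacle, such as there is one, is simply the bookkeeping of $\mtop$: one has to check in each clause of Definition~\ref{defn.monoid.of.chunks} that the cases where some operand or result equals $\mtop$ behave correctly, rather than vacuously fall through. Once that is dispatched, no serious new idea is needed; the heavy lifting was done earlier in Lemma~\ref{lemm.invalidity.must.be.somewhere}, Corollary~\ref{corr.sublist.inclusion.chunks}, and Lemma~\ref{lemm.atomic.IEUTxO}.
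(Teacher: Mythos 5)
Your proof is correct and follows essentially the same route as the paper's: the monoid-of-chunks clauses reduce to facts about lists and concatenation, locality comes from Lemma~\ref{lemm.invalidity.must.be.somewhere}, down-closedness of chunkhood (Corollary~\ref{corr.sublist.inclusion.chunks}) handles the $\mtop$ bookkeeping, and the atomic elements are identified as singletons exactly as in Lemma~\ref{lemm.atomic.IEUTxO}. Your version merely spells out in more detail (e.g.\ the explicit $\f{factor}$ map) what the paper's terser proof leaves as ``a fact of lists, sublist inclusion, and the construction''.
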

\begin{proof}
Most of the properties in Definition~\ref{defn.monoid.of.chunks} are facts of sublist inclusion and concatenation.
Condition~\ref{acs.locality} of Definition~\ref{defn.monoid.of.chunks} is Lemma~\ref{lemm.invalidity.must.be.somewhere}.

Recall that being a chunk is down-closed by Corollary~\ref{corr.sublist.inclusion.chunks}, so in a perfectly atomic monoid of chunks, every chunk is above its atomic chunks.
It is just a fact of lists, sublist inclusion, and the construction in Definition~\ref{defn.lF} that $\lF(\mathbb T)$ is perfectly atomic with atomic elements singleton chunks of the form $[\tx]$ for $\tx\in\tf{Transaction}_{\mathbb T}$.
\end{proof}

\subsection{Relation between the partial monoid \texorpdfstring{$\tf{Chunk}_{\mathbb T}$}{Chunk(T)} and the monoid of chunks \texorpdfstring{$\lF(\mathbb T)$}{F(T)}}

\begin{lemm}
\label{lemm.perm.utxi}
Suppose that:
\begin{itemize*}
\item
$T\in\tf{IEUTxO}$ and $\f{ch}\in\tf{Chunk}_{\mathbb T}$ and $a\in\pos(\f{ch})$.
\item
$\pi\in\f{Perm}$ is a permutation and $\pi(a)=a$, and write $\f{ch}'=\pi\pact\f{ch}$ (Definition~\ref{defn.perm}).
\end{itemize*}
It is a structural fact that 
$\f{ch}\mact\f{ch}',\f{ch}'\mact\f{ch}\in[\tf{Transaction}_{\mathbb T}]$
since a concatentation of two lists is a list.
However:
$$
\f{ch}'\mact \f{ch} \not\in\tf{Chunk}_{\mathbb T} 
\quad\text{and}\quad 
\f{ch}\mact\f{ch}'\not\in\tf{Chunk}_{\mathbb T} .
$$
\end{lemm}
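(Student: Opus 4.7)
The plan is to show that $a$ appears as a position in both $\f{ch}$ and $\f{ch}'$ with the same role (input-only, output-only, or matched input-output pair), from which it follows immediately that either concatenation fails a distinctness clause of Definition~\ref{defn.chunk}.

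First, by the equivariance of $\pos$ (Lemma~\ref{lemm.perm.pos}), $\pos(\f{ch}') = \pi\pact\pos(\f{ch})$, and since $\pi(a) = a$ and $a \in \pos(\f{ch})$, we obtain $a \in \pos(\f{ch}) \cap \pos(\f{ch}')$. Applying Lemma~\ref{lemm.perm.pos} to $\utxi$, $\utxo$, and $\stx$ and again using $\pi(a) = a$, membership of $a$ in each of these three sets is preserved between $\f{ch}$ and $\f{ch}'$; and by Lemma~\ref{lemm.utxi.utxo.empty}(\ref{utxo.cup.utxo}) these three sets partition $\pos(\f{ch})$, so $a$ falls into exactly one class and the class is the same for $\f{ch}'$.

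Split into cases. If $a \in \utxi(\f{ch})$, then some transaction of $\f{ch}$ carries an input-occurrence at $a$ and some transaction of $\f{ch}'$ carries an input-occurrence at $a$, so the list $\f{ch} \mact \f{ch}'$ contains two input-occurrences at the same position $a$, contradicting clause~\ref{blockchain.in.out} of Definition~\ref{defn.chunk}. The case $a \in \utxo(\f{ch})$ is symmetric and contradicts clause~\ref{blockchain.at.most.one.output}. The case $a \in \stx(\f{ch})$ triggers both contradictions at once. Hence $\f{ch} \mact \f{ch}' \notin \tf{Chunk}_{\mathbb T}$, and the identical argument (swapping the two halves, which play symmetric roles) yields $\f{ch}' \mact \f{ch} \notin \tf{Chunk}_{\mathbb T}$.

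The main point requiring care --- rather than a serious obstacle --- is reading the distinctness clauses of Definition~\ref{defn.chunk} occurrence-wise across the transactions of the list, so that two input-occurrences in different transactions sharing a single position count as a violation even if the input values happen to coincide (which could occur if $\pi$ accidentally fixes the key component of an input at $a$). This is precisely the reading that makes the three-way partition of $\pos$ in Lemma~\ref{lemm.utxi.utxo.empty}(\ref{utxo.cup.utxo}) well-defined, so the case analysis above is airtight.
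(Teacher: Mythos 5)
Your proof is correct and follows essentially the same route as the paper's: both first use equivariance (Lemma~\ref{lemm.perm.pos}) together with the three-way partition of Lemma~\ref{lemm.utxi.utxo.empty}(\ref{utxo.cup.utxo}) to place $a$ in the same class ($\utxi$, $\utxo$, or $\stx$) for both $\f{ch}$ and $\f{ch}'$, and then derive a contradiction in each case. The only difference is cosmetic: the paper closes each case by citing Lemma~\ref{lemm.utxi.utxo.empty}(\ref{utxi.utxo}), whereas you unfold the contradiction directly against the occurrence-wise distinctness clauses of Definition~\ref{defn.chunk} --- a reading you rightly flag, and which the paper's own lemma implicitly relies on as well.
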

\begin{proof}
By assumption $a\in\pos(\f{ch})$, so using Lemmas~\ref{lemm.utxi.utxo.empty}(\ref{utxo.cup.utxo}) and~\ref{lemm.perm.pos} (since $\pi(a)=a$) precisely one of 
$$
\begin{array}{r@{\ }l}
a\in&\utxi(\f{ch})\cap\utxi(\f{ch}'),
\\
a\in&\utxo(\f{ch})\cap\utxo(\f{ch}'),
\\
\text{or}\quad
a\in&\stx(\f{ch})\cap\stx(\f{ch}')
\end{array}
$$
must hold.
In each of these three cases the result follows from Lemma~\ref{lemm.utxi.utxo.empty}(\ref{utxi.utxo}).
\end{proof}

\begin{prop}
\label{prop.pos.eq.posi}
Suppose $\mathbb T$ is an IEUTxO model and $x\in\tf{Chunk}_{\mathbb T}$, and
\begin{itemize*}
\item
recall $\pos(x)$ from Definition~\ref{defn.pos}, and
\item
noting from Proposition~\ref{prop.lF.monoid.atomic} that $x$ can also be viewed as an element $x\in \lF(\mathbb T)$, recall also $\posi(x)$ from Definition~\ref{defn.posi}.
\end{itemize*}
Then
$$
\pos(x) = \posi(x) .
$$
\end{prop}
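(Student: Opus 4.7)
The plan is to prove the two inclusions separately, using Lemma~\ref{lemm.perm.utxi} for $\pos(x)\subseteq\posi(x)$ and Lemma~\ref{lemm.fresh.chunks.defined} (together with a freshness argument) for $\posi(x)\subseteq\pos(x)$. The edge case $x=[]$ is handled automatically since $\pos([])=\varnothing$ and $\pi\pact[]=[]\in\f{rightB}([])$, so $\posi([])=\varnothing$ as well.

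For the forward inclusion $\pos(x)\subseteq\posi(x)$: fix $a\in\pos(x)$ and take any $\pi\in\f{fix}(a)$. Applying Lemma~\ref{lemm.perm.utxi} with $\f{ch}=x$ (noting $x$ is equivariantly a chunk so $\pi\pact x\in\tf{Chunk}_{\mathbb T}$) immediately yields that neither $x\mact(\pi\pact x)$ nor $(\pi\pact x)\mact x$ is a valid chunk in $\mathbb T$. By the construction of $\lF(\mathbb T)$ in Definition~\ref{defn.lF}, both of these compositions equal $\mtop$, so $\pi\pact x\notin\f{leftB}(x)\cup\f{rightB}(x)$. Since $\pi\in\f{fix}(a)$ was arbitrary, Definition~\ref{defn.posi} gives $a\in\posi(x)$.

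For the reverse inclusion $\posi(x)\subseteq\pos(x)$, I argue by contrapositive. Suppose $a\notin\pos(x)$. The set $\pos(x)\finsubseteq\atoms$ is finite by Figure~\ref{fig.positions}, so using that $\atoms$ is countably infinite we may pick a finite set $A'\subseteq\atoms\setminus(\pos(x)\cup\{a\})$ of the same size as $\pos(x)$, and let $\pi\in\f{Perm}$ be the product of transpositions pairing each element of $\pos(x)$ with a distinct element of $A'$. Then $\pi(a)=a$, so $\pi\in\f{fix}(a)$, and $\pi\pact\pos(x)=A'$ is disjoint from $\pos(x)$. Since $\pos$ is equivariant (a structural fact matching Lemma~\ref{lemm.perm.pos}), $\pos(\pi\pact x)=\pi\pact\pos(x)$, and since $\tf{Chunk}_{\mathbb T}$ is equivariant (Definition~\ref{defn.solution}) we have $\pi\pact x\in\tf{Chunk}_{\mathbb T}$. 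Lemma~\ref{lemm.fresh.chunks.defined} then gives $x\mact(\pi\pact x)\in\tf{Chunk}_{\mathbb T}$, so $x\mact(\pi\pact x)<\mtop$ in $\lF(\mathbb T)$. Hence $\pi\pact x\in\f{rightB}(x)$ for this particular $\pi\in\f{fix}(a)$, and so by Definition~\ref{defn.posi} $a\notin\posi(x)$.

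The main obstacle is conceptual rather than technical: one has to notice the duality at play — Lemma~\ref{lemm.perm.utxi} says that permutations fixing a live position force a clash, while Lemma~\ref{lemm.fresh.chunks.defined} says disjoint positions guarantee composability. Once this duality is in view, the actual verification of the two inclusions is mechanical, the only delicate point being the explicit construction in the reverse direction of a permutation that simultaneously fixes $a$ and maps $\pos(x)$ entirely into fresh territory — which is straightforward given the countability of $\atoms$ and finiteness of $\pos(x)$.
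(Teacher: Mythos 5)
Your proof is correct and follows essentially the same route as the paper, whose own proof is the one-line ``routine calculation from Definitions of chunks and $\posi$ using Lemma~\ref{lemm.perm.utxi}'': your forward inclusion is exactly the application of Lemma~\ref{lemm.perm.utxi}, and your reverse inclusion (renaming $\pos(x)$ to fresh atoms while fixing $a$, then invoking Lemma~\ref{lemm.fresh.chunks.defined}) is precisely the routine freshness calculation the paper elides. The edge case $x=[]$ and the equivariance facts you invoke are all handled correctly.
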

\begin{proof}
By a routine calculation from Definitions~\ref{defn.chunk} and~\ref{defn.posi} using Lemma~\ref{lemm.perm.utxi}.
\end{proof}

We continue Remark~\ref{rmrk.pos.posi}:
\begin{rmrk}
\label{rmrk.pos.posi.2}
$\pos$ from Definition~\ref{defn.pos} and $\posi$ from Definition~\ref{defn.posi} have different constructions yet give equal results, in a sense made formal in Proposition~\ref{prop.pos.eq.posi}.
Aside from being a useful equality, what does this tell us?

$\pos$ is \emph{intensional} --- it has and requires full access to the internal structure of its argument --- whereas $\posi$ is \emph{extensional} --- it treats its argument as a black box in which it can only permute atoms and observe compositional behaviour.
See also a similar observation in Remark~\ref{rmrk.supp.pos}.

A significance of Proposition~\ref{prop.pos.eq.posi} is as a (non-trivial) correctness assertion about the overall algebraic framework in which these definitions have been embedded: that the abstract interface of $x$ viewed extensionally, matches the concrete interface of $x$ when viewed intensionally.

Put another way, Proposition~\ref{prop.pos.eq.posi} has the flavour of being a weak but indicative soundness and completeness result relating a class of concrete models with a class of abstract ones. 

Lemma~\ref{lemm.acs.up.stx} refines this, and we will continue to build on these ideas, culminating with Theorem~\ref{thrm.adjoints}.
\end{rmrk}

\subsection{\texorpdfstring{$\lF(\mathbb T)$}{F(T)} is oriented, so \texorpdfstring{$\lF(\mathbb T)\in\tf{ACS}$}{F(T) in ACS}}

\begin{lemm}
\label{lemm.acs.up.stx}
Suppose $\mathbb T\in\tf{IEUTxO}$.
By Proposition~\ref{prop.lF.monoid.atomic} $\lF(\mathbb T)=\tf{Chunk}_{\mathbb T}\cup\{\f{fail}\}$ is a monoid of chunks, so it has notions of $\f{left}$, $\f{right}$, and $\f{up}$ from Definition~\ref{defn.left.right.up}.

Then for $x\in\tf{Chunk}_{\mathbb T}$ we have:\footnote{\dots so we are looking at a `real chunk' here, for which $\utxi$, $\utxo$, and $\stx$ are defined, and excluding our extra failure element, for which they are not defined \dots}
$$
\begin{array}{r@{\ }l}
\f{left}(x)\subseteq& \utxi(x)
\\
\f{right}(x)\subseteq& \utxo(x)
\\
\f{stx}(x)\subseteq&\f{up}(x)
\end{array}
$$
\end{lemm}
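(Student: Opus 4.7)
The plan rests on three facts already proved in the paper: Proposition~\ref{prop.pos.eq.posi} gives $\pos(x) = \posi(x)$; Lemma~\ref{lemm.utxi.utxo.empty}(\ref{utxo.cup.utxo}) gives the partition $\pos(x) = \utxi(x) \uplus \utxo(x) \uplus \stx(x)$; and Lemma~\ref{lemm.supp.lru} gives the decomposition $\posi(x) = \f{left}(x) \cup \f{right}(x) \cup \f{up}(x)$, with $\f{up}(x)$ disjoint from each of the other two. Thus I am really just partitioning one set ($\pos(x) = \posi(x)$) in two different ways, and the task is to match the parts up. The strategy is to show first that $\stx(x) \subseteq \f{up}(x)$, and then rule out the two unwanted components for each of $\f{left}(x)$ and $\f{right}(x)$.

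The first step, $\stx(x) \subseteq \f{up}(x)$, is the cleanest. Suppose $a \in \stx(x)$, so by Definition~\ref{defn.utxo.utxi} the atom $a$ already appears in $x$ both as the position of an input $i$ and as the position of an earlier output $o$. If some $y \in \tf{Chunk}_{\mathbb T}$ has $a \in \pos(y)$, then $a$ sits in $y$ as either an input or an output; concatenating $y$ with $x$ in either order produces another input or output at $a$, and hence violates either condition~\ref{blockchain.at.most.one.output} or condition~\ref{blockchain.in.out} of Definition~\ref{defn.chunk}. So no such $y$ lies in $\f{leftB}(x) \cup \f{rightB}(x)$, and by Definition~\ref{defn.left.right.up}(\ref{points.up}) we get $a \in \f{up}(x)$.

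For $\f{left}(x) \subseteq \utxi(x)$, take $a \in \f{left}(x)$. Then $a \in \posi(x) = \pos(x)$, so $a$ lies in one of the three pieces $\utxi(x), \utxo(x), \stx(x)$. The $\stx$ case is already excluded by the previous paragraph together with disjointness of $\f{up}$ and $\f{left}$ (Lemma~\ref{lemm.supp.lru}). The $\utxo$ case is ruled out by a short case analysis: if $a \in \utxo(x)$ and some witness $y$ has $y \mact x \in \tf{Chunk}_{\mathbb T}$ with $a \in \pos(y)$, then $a$ appears in $y$ either as an output (giving two outputs at the same position in $y\mact x$, contradicting condition~\ref{blockchain.at.most.one.output}) or as an input (and then, since the output $a$ of $x$ is strictly later than this input in $y\mact x$ and there is no earlier output at $a$, we violate condition~\ref{chunk.earlier}). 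Hence $a \in \utxi(x)$. The inclusion $\f{right}(x) \subseteq \utxo(x)$ follows by the symmetric argument, now with $y$ appended after $x$ and the roles of input/output and earlier/later interchanged.

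The only real work is the two-case contradiction in the $\utxo$ step (and its symmetric companion); everything else is bookkeeping against the already-proved partition identities. I do not anticipate any obstacle, since the chunk conditions of Definition~\ref{defn.chunk} were precisely designed to forbid the situations that need to be excluded here.
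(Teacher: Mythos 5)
Your proof is correct, and it reaches the same conclusions as the paper's, but by a more hands-on route. The paper's proof of $\f{left}(x)\subseteq\utxi(x)$ takes the witness $y\in\f{leftB}(x)$ with $a\in\posi(y)$ supplied by Definition~\ref{defn.left.right.up}, translates $\posi$ to $\pos$ via Proposition~\ref{prop.pos.eq.posi} (as you do), and then concludes in a single step by citing Lemma~\ref{lemm.utxi.utxo.empty}(\ref{utxi.utxo}): since $y\mact x\in\tf{Chunk}_{\mathbb T}$, we get $\pos(y)\cap\pos(x)\subseteq\utxo(y)\cap\utxi(x)$, hence $a\in\utxi(x)$ directly, with no need for the three-way partition of $\pos(x)$ or the explicit exclusion of the $\stx$ and $\utxo$ components. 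Your two-case contradiction against conditions~\ref{blockchain.at.most.one.output} and~\ref{chunk.earlier} of Definition~\ref{defn.chunk} is essentially an inline re-proof of the content of Lemma~\ref{lemm.utxi.utxo.empty}(\ref{utxi.utxo}), and your use of the already-established $\stx(x)\subseteq\f{up}(x)$ together with Lemma~\ref{lemm.supp.lru} to rule out the $\stx$ case is sound but likewise avoidable. What your version buys is self-containedness and an explicit accounting of exactly which chunk conditions do the work; what the paper's version buys is brevity, by leaning on a lemma that was set up precisely to package this argument. One small ordering difference: the paper proves the $\f{left}$ and $\f{right}$ inclusions first and dismisses $\stx(x)\subseteq\f{up}(x)$ as ``no harder,'' whereas you prove the $\stx$ inclusion first and then use it; either order works.
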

\begin{proof}
We consider each line in turn:
\begin{enumerate}
\item
If $a\in\f{left}(x)$ then by Lemma~\ref{lemm.supp.lru} $a\in\posi(x)$ so by Proposition~\ref{prop.pos.eq.posi} also $a\in\pos(x)$.
By Definition~\ref{defn.left.right.up} there exists $y\in\tf{Chunk}_{\mathbb T}$ such that $a\in\posi(y)$, so by Proposition~\ref{prop.pos.eq.posi} also $a\in\pos(y)$, and $y\mact x\in\tf{Chunk}_{\mathbb T}$.
It follows from Lemma~\ref{lemm.utxi.utxo.empty}(\ref{utxi.utxo}) that ($a\in\utxo(y)$ and) $a\in\utxi(x)$ as required.
\item
If $a\in\f{right}(x)$ then $a\in\pos(x)$ and by Definition~\ref{defn.left.right.up} there exists $y\in\tf{Chunk}_{\mathbb T}$ such that $a\in\pos(y)$ and $x\mact y\in\tf{Chunk}_{\mathbb T}$, so by Lemma~\ref{lemm.utxi.utxo.empty}(\ref{utxi.utxo}) ($a\in\utxi(y)$ and) $a\in\utxo(x)$ as required.
\item
The reasoning to prove $\f{stx}(x)\subseteq\f{up}(x)$ is no harder.
\end{enumerate}
\end{proof}

\begin{rmrk}
\label{rmrk.blocked}
Lemma~\ref{lemm.acs.up.stx} is interesting as much for what it is \emph{not}, namely it is not the equality $\f{left}=\utxi$ and $\f{right}=\utxi$ and $\f{up}=\f{stx}$ that one might initially expect.
Why?
\begin{enumerate*}
\item
Consider a chunk $\f{ch}\in\tf{Chunk}_{\mathbb T}$ with an IEUTxO output located at $a$ but with an empty validator (one which validates no inputs).
Then $a\in\utxo(\f{ch})$ in $\mathbb T$, but $a\in\f{up}(\f{ch})$ in $\lF(\mathbb T)$.
\item
Similarly consider a chunk $\f{ch}$ with an input located at $a$ but such that no validator will validate it --- just because an input exists, does not mean a validator must exist to accept it.
Then $a\in\utxi(\f{ch})$ in $\mathbb T$, but $a\in\f{up}(\f{ch})$ in $\lF(\mathbb T)$.
\end{enumerate*}
We return to this with the notion of a \emph{blocked channel}, in Subsection~\ref{subsect.blocked.channel}.
\end{rmrk}

\begin{prop}
\label{prop.lF.oriented}
$\lF(\mathbb T)$ from Definition~\ref{defn.lF} is oriented (Definition~\ref{defn.oriented}).
\end{prop}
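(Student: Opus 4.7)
The plan is to verify each of the five conditions of Definition~\ref{defn.oriented} in turn, using the translation dictionary already built up: Proposition~\ref{prop.pos.eq.posi} identifies the abstract $\posi$ with the concrete $\pos$ on chunks, and Lemma~\ref{lemm.acs.up.stx} translates the abstract $\f{left}$, $\f{right}$, $\f{up}$ into the concrete $\utxi$, $\utxo$, $\stx$. Throughout the proof the element $\mtop = \f{fail}$ is easy to dispatch (it satisfies $\posi(\mtop) = \varnothing$ by fiat, and $\mtop \mact z = \mtop = z \mact \mtop$), so the substantive content in each case is for $x, y \in \tf{Chunk}_{\mathbb T}$.

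Conditions~\ref{oriented.posi.finite} and~\ref{oriented.empty} are essentially bookkeeping. For condition~\ref{oriented.posi.finite}, if $x \in \tf{Chunk}_{\mathbb T}$ then $\posi(x) = \pos(x)$ by Proposition~\ref{prop.pos.eq.posi}, and $\pos(x)$ is finite by inspection of Figure~\ref{fig.positions} (a finite list of transactions, each with a finite set of inputs and outputs). For condition~\ref{oriented.empty}, if $\posi(x) = \varnothing$ then either $x = \mtop$, or $x \in \tf{Chunk}_{\mathbb T}$ and $\pos(x) = \varnothing$, in which case Lemma~\ref{lemm.supp.empty.empty} forces $x = [] = \mbot$.

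Condition~\ref{oriented.lr.empty} is the most delicate; this is where I expect the main work. Suppose $a \in \f{left}(x) \cap \f{right}(y)$, so $x, y \neq \mtop$ (since $\f{left}(\mtop) = \f{right}(\mtop) = \varnothing$ by Lemma~\ref{lemm.left.right.mbot.mtop}). By Lemma~\ref{lemm.acs.up.stx}, $a \in \utxi(x) \cap \utxo(y)$. Now consider the concatenated list $x \mact y$: it contains an input at position $a$ sitting inside some transaction of $x$, and an output at position $a$ sitting inside some later transaction of $y$ (since $y$ appears after $x$ in the concatenation). By Definition~\ref{defn.i.points.to.o}(\ref{i.points.to.o}) these share a position, so they would point to each other --- but the output is strictly later than the input, contradicting condition~\ref{chunk.earlier} of Definition~\ref{defn.chunk}. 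Hence $x \mact y \notin \tf{Chunk}_{\mathbb T}$, which by the definition of composition in Definition~\ref{defn.lF} means $x \mact y = \f{fail} = \mtop$ in $\lF(\mathbb T)$, as required.

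Conditions~\ref{oriented.fresh.commute} and~\ref{oriented.fresh.defined} fall out of the IEUTxO freshness lemmas via Proposition~\ref{prop.pos.eq.posi}. For condition~\ref{oriented.fresh.defined}, if $x, y < \mtop$ are chunks with $\posi(x) \cap \posi(y) = \varnothing$, then $\pos(x) \cap \pos(y) = \varnothing$, and Lemma~\ref{lemm.fresh.chunks.defined} gives $x \mact y \in \tf{Chunk}_{\mathbb T}$, i.e. $x \mact y < \mtop$ in $\lF(\mathbb T)$. For condition~\ref{oriented.fresh.commute}, I need to upgrade the IEUTxO commutation of Lemma~\ref{lemm.fresh.chunks}(\ref{fresh.chunks.obs}) --- which speaks only about $x \mact y$ and $y \mact x$ themselves forming chunks --- to observational equivalence $x \mact y \sim y \mact x$ in the sense of Definition~\ref{defn.circ}. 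Concretely I must show that for every $z \in \lF(\mathbb T)$, $z \mact x \mact y < \mtop \iff z \mact y \mact x < \mtop$ (and the right-sided analogue). The case $z = \mtop$ is trivial; otherwise this reduces via Lemma~\ref{lemm.invalidity.must.be.somewhere} to the pairwise validity of $[x_k, y_j]$ versus $[y_j, x_k]$ for transactions $x_k \in x$ and $y_j \in y$, and since $\pos(x_k) \cap \pos(y_j) = \varnothing$ these two orderings are equivalent by Lemma~\ref{lemm.fresh.chunks}. All the other pairs (within $z$, within $x$, within $y$, and $z$-to-$x$ or $z$-to-$y$) are unchanged by swapping $x$ and $y$, so the equivalence is complete.
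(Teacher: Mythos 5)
Your proof is correct and follows essentially the same route as the paper: verify the five conditions of Definition~\ref{defn.oriented} one by one, using Proposition~\ref{prop.pos.eq.posi} to identify $\posi$ with $\pos$, Lemma~\ref{lemm.supp.empty.empty} for condition~\ref{oriented.empty}, Lemma~\ref{lemm.acs.up.stx} for condition~\ref{oriented.lr.empty}, and Lemmas~\ref{lemm.fresh.chunks} and~\ref{lemm.fresh.chunks.defined} for the freshness conditions. Two local differences are worth noting. For condition~\ref{oriented.lr.empty} you argue directly from Definition~\ref{defn.chunk}(\ref{chunk.earlier}) (a later output cannot be pointed to by an earlier input), whereas the paper routes the same contradiction through Lemma~\ref{lemm.utxi.utxo.empty}(\ref{utxi.utxo}); both are fine. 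More substantively, for condition~\ref{oriented.fresh.commute} the paper simply cites Lemma~\ref{lemm.fresh.chunks}(\ref{fresh.chunks.obs}), which only delivers the weaker Definition~\ref{defn.ch.obs.equiv} notion of commuting ($x\mact y$ is a chunk iff $y\mact x$ is), whereas what is required is commutation up to observational equivalence in the sense of Definition~\ref{defn.circ}(\ref{circ.commute}), i.e.\ $\f{leftB}(x\mact y)=\f{leftB}(y\mact x)$ and likewise on the right. You correctly identify this gap and close it by quantifying over an ambient $z$ and reducing to pairwise validity via Lemma~\ref{lemm.invalidity.must.be.somewhere}, observing that only the cross pairs between $x$- and $y$-transactions change order and that these are handled by Lemma~\ref{lemm.fresh.chunks}(1). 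That extra step is exactly what is needed, and your write-up is on this point more complete than the paper's.
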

\begin{proof}
We check each condition of Definition~\ref{defn.oriented} in turn:
\begin{enumerate}
\item
\emph{We check that $\posi(x)\finsubseteq\atoms$.}

It is a structural fact of Definition~\ref{defn.pos} that $\pos(x)\finsubseteq\atoms$.
We use Proposition~\ref{prop.pos.eq.posi}.
\item
\emph{We check that $\posi(x)=\varnothing$ implies $x=\mbot_{\lF(\mathbb T)}$ or $x=\mtop_{\lF(\mathbb T)}$.}

An element $x\in \lF(\mathbb T)$ is either a chunk or the failure element:
\begin{itemize*}
\item
If $x$ is a chunk and $\posi(x)=\varnothing$ then by Proposition~\ref{prop.pos.eq.posi} $\posi(x)=\varnothing$ and by Lemma~\ref{lemm.supp.empty.empty} $x=[]$, which by Definition~\ref{defn.lF} is $\mbot_{\lF(\mathbb T)}$.
\item
If $x=\f{fail}$ then there is nothing to prove, since $\f{fail}=\mtop_{\lF(\mathbb T)}$.
\end{itemize*}
\item
\emph{We check that $\f{left}(x)\cap\f{right}(y)\neq\varnothing$ implies $x\mact y=\mtop$.}

If $x$ or $y$ are $\f{fail}$ then $x\mact y=\mtop$.

So suppose that $x$ and $y$ are chunks, that is, suppose $x,y\in\tf{Chunk}_{\mathbb T}$.
By Lemma~\ref{lemm.acs.up.stx} $\utxi(x)\cap\utxo(y)\neq\varnothing$.
We use Lemma~\ref{lemm.utxi.utxo.empty}(\ref{utxi.utxo}).
\item
\emph{We check that if $\posi(x)\cap\posi(y)=\varnothing$ then $x$ and $y$ commute (Definition~\ref{defn.circ}(\ref{circ.commute})).}

From Proposition~\ref{prop.pos.eq.posi} and Lemma~\ref{lemm.fresh.chunks}(\ref{fresh.chunks.obs}).
\item
\emph{We check that if $\posi(x)\cap\posi(y)=\varnothing$ and $\mtop\not\in\{x,y\}$ then $x\mact y<\mtop_{\lF(\mathbb T)}$.}

Using Proposition~\ref{prop.pos.eq.posi}, this just rephrases Lemma~\ref{lemm.fresh.chunks.defined} in the language of a monoid of chunks.
\end{enumerate}
\end{proof}

\begin{thrm}
\label{thrm.lF.object.acs}
$\lF(\mathbb T)$ (Definition~\ref{defn.lF}) is an abstract chunk system in $\tf{ACS}$ (Definition~\ref{defn.acs.system}).
In symbols:
$$
\lF(\mathbb T)\in\tf{ACS} .
$$ 
\end{thrm}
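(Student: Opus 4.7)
The plan is to observe that this theorem is essentially a bookkeeping result that packages together the two propositions immediately preceding it, together with the definition of an abstract chunk system. Recall that Definition~\ref{defn.acs.system} says that an ACS is an \emph{oriented atomic monoid of chunks}. So it suffices to verify three things about $\lF(\mathbb T)$: that it is a monoid of chunks (Definition~\ref{defn.monoid.of.chunks}), that it is atomic (Definition~\ref{defn.atomic}), and that it is oriented (Definition~\ref{defn.oriented}).

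First I would invoke Proposition~\ref{prop.lF.monoid.atomic}, which states that $\lF(\mathbb T)$ is a \emph{perfectly} atomic monoid of chunks. Perfect atomicity is strictly stronger than atomicity (by Definition~\ref{defn.atomic}(\ref{perfectly.atomic})), so this immediately gives us the first two bullet points required by Definition~\ref{defn.acs.system}: the monoid-of-chunks structure and atomicity with a factorisation function, namely the function sending $[\tx_1,\dots,\tx_n]\in\tf{Chunk}_{\mathbb T}$ to itself viewed as a list of singletons $[[\tx_1],\dots,[\tx_n]]\in[\f{atomic}(\lF(\mathbb T))]$.

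Next I would invoke Proposition~\ref{prop.lF.oriented}, which states that $\lF(\mathbb T)$ is oriented. Combining this with the previous paragraph gives that $\lF(\mathbb T)$ is an oriented atomic monoid of chunks, which by Definition~\ref{defn.acs.system} is precisely what it means to be an ACS. Therefore $\lF(\mathbb T)\in\tf{ACS}$ as required.

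The main intellectual work of this theorem is already absorbed into Propositions~\ref{prop.lF.monoid.atomic} and~\ref{prop.lF.oriented}; the present theorem is a clean assembly step. In particular there is no further obstacle to address here, since Definition~\ref{defn.acs.system} was designed precisely as the conjunction of these three properties. The only subtlety worth flagging is the verification that the failure element $\f{fail}$ introduced in Definition~\ref{defn.lF} really plays the dual role of $\mtop$ both for the monoid-of-chunks axioms and for the orientation conditions — but this is handled uniformly inside the two cited propositions (in particular in the case analyses that distinguish $x\in\tf{Chunk}_{\mathbb T}$ from $x=\f{fail}$), so nothing new is needed here.
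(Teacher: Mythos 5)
Your proposal is correct and matches the paper's own proof, which likewise cites Propositions~\ref{prop.lF.monoid.atomic} and~\ref{prop.lF.oriented} and assembles them via Definition~\ref{defn.acs.system}. The extra detail you give (perfect atomicity implying atomicity, the explicit factorisation into singletons, the role of $\f{fail}$) is consistent with the paper and simply spells out what the paper leaves implicit.
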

\begin{proof}
From Propositions~\ref{prop.lF.monoid.atomic} and~\ref{prop.lF.oriented}.
\end{proof}

\subsection{Action of \texorpdfstring{$\lF$}{F} on arrows}

\begin{defn}
\label{defn.lF.arrow}
Suppose $\lf:\mathbb S\to\mathbb T\in\tf{IEUTxO}$ is an arrow (Definition~\ref{defn.ieutxo.category}(\ref{ieutxo.arrow})).
We define an arrow 
$$
\lF(\lf) : \lF(\mathbb S)\to \lF(\mathbb T)\in\tf{ACS}
$$
by
\begin{equation}
\label{eq.the.construction}
\begin{array}{l@{\ =\ }l}
\lF(\lf)([\tx_1,\dots,\tx_n]) &
\lf(\tx_1)\mact\ldots\mact \lf(\tx_n) 
\\
\lF(\lf)(\f{fail}_{\mathbb S}) &
\f{fail}_{\mathbb T} .
\end{array}
\end{equation}
\end{defn}

\begin{lemm}
$\lF(\lf)$ from Definition~\ref{defn.lF.arrow} does indeed map from $\lF(\mathbb S)$ to $\lF(\mathbb T)$.
\end{lemm}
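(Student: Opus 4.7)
The plan is to split into two cases according to the decomposition $\lF(\mathbb S)=\tf{Chunk}_{\mathbb S}\cup\{\f{fail}_{\mathbb S}\}$ and verify the image lies in $\lF(\mathbb T)=\tf{Chunk}_{\mathbb T}\cup\{\f{fail}_{\mathbb T}\}$ in each case. This is essentially a bookkeeping check that Lemma~\ref{lemm.validity.preserved} already does the real work; the present lemma just repackages that fact for the monoid $\lF(\mathbb T)$.

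For the case $x=\f{fail}_{\mathbb S}$, the second clause of Definition~\ref{defn.lF.arrow} sends $x$ to $\f{fail}_{\mathbb T}=\mtop_{\lF(\mathbb T)}\in\lF(\mathbb T)$, so there is nothing to check.

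For the case $x=[\tx_1,\dots,\tx_n]\in\tf{Chunk}_{\mathbb S}$, the definition gives $\lF(\lf)(x)=\lf(\tx_1)\mact\cdots\mact\lf(\tx_n)$ where each $\lf(\tx_i)\in\tf{Chunk}_{\mathbb T}\subseteq\lF(\mathbb T)$ by Definition~\ref{defn.ieutxo.category}(\ref{ieutxo.arrow}), and $\mact$ here denotes the (total) monoid composition of $\lF(\mathbb T)$ supplied by Definition~\ref{defn.lF}. Since that composition is total, its output lies in $\lF(\mathbb T)$ automatically; moreover, applying Lemma~\ref{lemm.validity.preserved} (or equivalently Corollary~\ref{corr.ab}) to the arrow $\lf$ and the chunk $[\tx_1,\dots,\tx_n]$ tells us that the concatenation $\lf(\tx_1)\mact\cdots\mact\lf(\tx_n)$ is in fact a valid element of $\tf{Chunk}_{\mathbb T}$, so the $\mact$ of $\lF(\mathbb T)$ does not collapse to $\f{fail}_{\mathbb T}$.

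There is no serious obstacle here — the only thing to double-check is that the two possible meanings of $\mact$ on the right-hand side of~\eqref{eq.the.construction} (list concatenation of chunks in $\tf{Chunk}_{\mathbb T}$ versus the monoid action of $\lF(\mathbb T)$) agree on this input, and that is precisely the content of Lemma~\ref{lemm.validity.preserved}. Consequently $\lF(\lf)$ is well-defined as a total function $\lF(\mathbb S)\to\lF(\mathbb T)$, as claimed.
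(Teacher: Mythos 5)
Your proposal is correct and takes essentially the same route as the paper: the paper's proof also reduces the claim to checking that validity is preserved and cites Lemma~\ref{lemm.validity.preserved} for exactly that. Your additional remarks about the $\f{fail}$ case and the two readings of $\mact$ are harmless bookkeeping that the paper leaves implicit.
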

\begin{proof}
We need to check that validity is preserved, meaning that if $[\tx_1,\dots,\tx_n]$ is a chunk then so is $\lf(\tx_1)\mact\ldots\mact \lf(\tx_n)$.
This is Lemma~\ref{lemm.validity.preserved}.
\end{proof}

\begin{prop}
\label{prop.lF.acts.on.arrow}
Continuing Definition~\ref{defn.lF.arrow}, we have that
$$
\lf:\mathbb S\to\mathbb T\in\tf{IEUTxO}
\quad\text{implies}\quad 
\lF(\lf):\lF(\mathbb S)\to \lF(\mathbb T)\in\tf{IEUTxO}.
$$
Furthermore, $\lF(\lf'\,\lf)=\lF(\lf')\,\lF(\lf)$ and $\lF(\f{id}_{\mathbb S})=\f{id}_{\lF(\mathbb S)}$.
\end{prop}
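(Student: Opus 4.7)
The plan is to verify the three conditions of Definition~\ref{defn.acs}(\ref{acs.arrow}) for $\lF(\lf)$, and then to check functoriality (preservation of identities and of composition).

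Preservation of $\mbot$ and $\mtop$ (Definition~\ref{defn.acs}(\ref{acs.mbot.mtop})) is immediate from~\eqref{eq.the.construction}: the empty list $\mbot_{\lF(\mathbb S)} = []$ is sent to the empty concatenation $\mbot_{\lF(\mathbb T)}$, and $\mtop_{\lF(\mathbb S)} = \f{fail}_{\mathbb S}$ is sent to $\f{fail}_{\mathbb T} = \mtop_{\lF(\mathbb T)}$ by definition. Monotonicity (Definition~\ref{defn.acs}(\ref{abs.less.than.mtop})) reduces to a fact about sublist inclusion and chunk-concatenation: if $x \leq y < \mtop_{\lF(\mathbb S)}$, then $x$ is a sublist of the chunk $y$; applying~\eqref{eq.the.construction} transaction-wise writes $\lF(\lf)(x)$ as the corresponding sublist of $\lF(\lf)(y)$ once we use associativity of list concatenation. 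That $\lF(\lf)(y) < \mtop_{\lF(\mathbb T)}$ is Lemma~\ref{lemm.validity.preserved}, and then $\lF(\lf)(x) < \mtop_{\lF(\mathbb T)}$ follows from Corollary~\ref{corr.sublist.inclusion.chunks} applied inside $\lF(\mathbb T)$.

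The substantive step is the monoid-homomorphism equation $\lF(\lf)(x \mact y) = \lF(\lf)(x) \mact \lF(\lf)(y)$ (Definition~\ref{defn.acs}(\ref{f.cdot})). When either of $x, y$ equals $\mtop$, both sides collapse to $\mtop$ by Definition~\ref{defn.monoid.of.chunks}(\ref{mtop.cdot}). When $x = [\tx_1,\dots,\tx_m]$ and $y = [\ty_1,\dots,\ty_n]$ are chunks whose concatenation is again a chunk in $\lF(\mathbb S)$, both sides expand by~\eqref{eq.the.construction} and associativity of $\mact$ to the common value $\lf(\tx_1) \mact \cdots \mact \lf(\ty_n)$, which is a valid chunk in $\lF(\mathbb T)$ by Lemma~\ref{lemm.validity.preserved}. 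The remaining case --- when $x$ and $y$ are both chunks but $x \mact y = \f{fail}_{\mathbb S}$ --- is the main obstacle, since we must show that $\lF(\lf)(x) \mact \lF(\lf)(y)$ also equals $\f{fail}_{\mathbb T}$. The natural route is to use Lemma~\ref{lemm.invalidity.must.be.somewhere} to locate a two-transaction witness $[\tx_i, \ty_j]$ inside $x \mact y$ that already fails validity in $\mathbb S$, and then to propagate this local obstruction through $\lf$ --- using the contrapositive of~\eqref{eq.condition.a} applied inside $\lF(\mathbb T)$ --- to conclude that the internal junction between $\lf(\tx_i)$ and $\lf(\ty_j)$ already fails, so the whole concatenation collapses to $\mtop_{\lF(\mathbb T)}$.

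With the arrow conditions in place, functoriality is routine. For identities, $\lF(\f{id}_{\mathbb S})([\tx_1,\dots,\tx_n]) = [\tx_1] \mact \cdots \mact [\tx_n] = [\tx_1,\dots,\tx_n]$ by~\eqref{eq.the.construction}, and $\f{fail}_{\mathbb S}$ is sent to $\f{fail}_{\mathbb S}$, so $\lF(\f{id}_{\mathbb S}) = \f{id}_{\lF(\mathbb S)}$ (alternatively by checking agreement on the atomic elements and applying Proposition~\ref{prop.ACS.morphism.atomic}(\ref{atomic.equality.test})). For composition, the arrows $\lF(\lf'\,\lf)$ and $\lF(\lf')\,\lF(\lf)$ agree on every atomic element $[\tx]$ by unfolding Definition~\ref{defn.ieutxo.category}(\ref{ieutxo.arrow.composition}) and~\eqref{eq.the.construction}, and they agree on $\mtop$ trivially; equality of the two arrows then follows from Proposition~\ref{prop.ACS.morphism.atomic}(\ref{atomic.equality.test}).
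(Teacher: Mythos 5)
Your overall structure matches the paper's: check the three arrow conditions of Definition~\ref{defn.acs}(\ref{acs.arrow}) and then functoriality. Your treatment of unit/top preservation, monotonicity, the non-failure case of the homomorphism equation, and the identity and composition laws is sound (the paper's own proof dispatches each of these in a single line). You are also right to single out the failure case of Definition~\ref{defn.acs}(\ref{f.cdot}) --- where $x,y\in\tf{Chunk}_{\mathbb S}$ but $x\mact y=\mtop_{\lF(\mathbb S)}$ --- as the substantive step; the paper's proof does not discuss it at all.

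However, your resolution of that case does not go through. You need the implication ``$[\tx_i,\ty_j]\notin\tf{Chunk}_{\mathbb S}$ implies $\lf(\tx_i)\mact\lf(\ty_j)=\mtop_{\lF(\mathbb T)}$'', but the contrapositive of condition~\eqref{eq.condition.a} is ``$\lf(\tx)\mact\lf(\tx')\notin\tf{Chunk}_{\mathbb T}$ implies $[\tx,\tx']\notin\tf{Chunk}_{\mathbb S}$'' --- the converse of what you want. Definition~\ref{defn.ieutxo.category}(\ref{ieutxo.arrow}) only requires arrows to \emph{preserve} validity, not to \emph{reflect} invalidity. Concretely: take $x=[\tx_1]$ and $y=[\tx_2]$ with $[\tx_1,\tx_2]\notin\tf{Chunk}_{\mathbb S}$ (say, because an output of $\tx_1$ refuses to validate an input of $\tx_2$); nothing in the definition of an arrow stops $\lf$ from sending $\tx_1$ and $\tx_2$ to chunks of $\mathbb T$ with disjoint positions, whereupon $\lf(\tx_1)\mact\lf(\tx_2)$ is a valid chunk by Lemma~\ref{lemm.fresh.chunks.defined}, while $\lF(\lf)(x\mact y)=\lF(\lf)(\mtop_{\lF(\mathbb S)})=\mtop_{\lF(\mathbb T)}$. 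So the equation of Definition~\ref{defn.acs}(\ref{f.cdot}) can fail for such an $\lf$. This is a genuine gap in your argument, and --- as far as I can see --- not one that can be closed from the stated definitions alone: it would require either strengthening the IEUTxO arrow condition to an ``if and only if'', or weakening the ACS homomorphism condition to an inequality. The paper's one-line justification conceals the same difficulty rather than resolving it.
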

\begin{proof}
We check the properties in Definition~\ref{defn.acs}(\ref{acs.arrow}) in turn:
\begin{enumerate*}
\item 
\emph{We check that $\lF(\lf)(\mbot_{\lF(\mathbb S)})=\mbot_{\lF(\mathbb T)}$ and $\lF(\lf)(\mtop_{\lF(\mathbb S)})=\mtop_{\lF(\mathbb T)}$.}

This is just the fact that $\lF(\lf)([])=[]$ and $\lF(\lf)(\f{fail}_{\mathbb S})=\f{fail}_{\mathbb T}$. 
\item
\emph{We check that $x\leq y<\f{fail}_{\lF(\mathbb S)}$ implies $\lF(\lf)(x)\leq \lF(\lf)(y)<\mtop_{\lF(\mathbb T)}$.} 

It is a fact of the construction in Definition~\ref{defn.lF.arrow} that $x\leq y$ ($x$ is a sublist of $y$) implies $\lF(\lf)(x)\leq \lF(\lf)(y)$.
\item
\emph{We check that $\lF(\lf)(x)\mact \lF(\lf)(y)= \lF(\lf)(x\mact y)$.}

A fact of the first clause of equation~\eqref{eq.the.construction} in Definition~\ref{defn.lF.arrow}.
\end{enumerate*}
We can check $\lF(\lf'\,\lf)=\lF(\lf')\,\lF(\lf)$ and $\lF(\f{id}_{\mathbb S})=\f{id}_{\lF(\mathbb S)}$ by routine calculations which we elide.
\end{proof}

\begin{thrm}
\label{thrm.lF.functor}
The map $\lF$, with the action on IEUTxO models from Definition~\ref{defn.lF}, and with the action on arrows from Definition~\ref{defn.lF.arrow}, is a functor
$$
\lF : \tf{IEUTxO}\to\tf{ACS} .
$$
\end{thrm}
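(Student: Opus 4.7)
The plan is to observe that all the substantive content of this theorem has already been discharged by the results immediately preceding it, so the proof is essentially a bookkeeping exercise that collects those results into the statement of functoriality.

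First I would verify the object part: given $\mathbb T \in \tf{IEUTxO}$, Theorem~\ref{thrm.lF.object.acs} already asserts $\lF(\mathbb T) \in \tf{ACS}$. That theorem in turn rests on Proposition~\ref{prop.lF.monoid.atomic} (which gives the perfectly atomic monoid-of-chunks structure) and Proposition~\ref{prop.lF.oriented} (which checks the five clauses of Definition~\ref{defn.oriented}). So there is nothing new to do on objects.

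Next I would verify the arrow part and the two functoriality laws. Given $\lf : \mathbb S \to \mathbb T \in \tf{IEUTxO}$, Definition~\ref{defn.lF.arrow} specifies $\lF(\lf)$, and Proposition~\ref{prop.lF.acts.on.arrow} already checks that $\lF(\lf)$ satisfies the three conditions of Definition~\ref{defn.acs}(\ref{acs.arrow}) (preservation of $\mbot$ and $\mtop$, monotonicity below $\mtop$, and the monoid homomorphism equation), and moreover records that $\lF(\lf' \lf) = \lF(\lf') \lF(\lf)$ and $\lF(\f{id}_{\mathbb S}) = \f{id}_{\lF(\mathbb S)}$.

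Combining these two inputs yields the functor. There is no real obstacle here: the mild point worth flagging is that the composition law in Proposition~\ref{prop.lF.acts.on.arrow} is stated but its proof is elided; if one wished to be explicit, one would unfold equation~\eqref{eq.the.construction} and use the fact that arrow composition in $\tf{IEUTxO}$ (Definition~\ref{defn.ieutxo.category}(\ref{ieutxo.arrow.composition})) is defined precisely so that $\lf'(\lf(\tx))$ is computed by concatenating $\lf'$ applied to each transaction of the chunk $\lf(\tx)$, which matches the monoid homomorphism clause of $\lF(\lf')$ applied to $\lF(\lf)(\tx)$. With that observation, the theorem follows by assembling Theorem~\ref{thrm.lF.object.acs} and Proposition~\ref{prop.lF.acts.on.arrow}.
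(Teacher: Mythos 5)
Your proposal matches the paper's own proof, which simply cites Theorem~\ref{thrm.lF.object.acs} for the action on objects and Proposition~\ref{prop.lF.acts.on.arrow} for the action on arrows and the functoriality laws. Your additional observation about unfolding equation~\eqref{eq.the.construction} to justify the elided composition law is a reasonable elaboration but not a departure from the paper's route.
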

\begin{proof}
This is Theorem~\ref{thrm.lF.object.acs} and Proposition~\ref{prop.lF.acts.on.arrow}.
\end{proof}

\subsection{Blocked channels}
\label{subsect.blocked.channel}

Recall from Remark~\ref{rmrk.chunks.and.channels} that we can think of positions as communication channels in the $\pi$-calculus sense.
We conclude this Section by taking a little time to refine the subset inclusions from Lemma~\ref{lemm.acs.up.stx}.
For this, we need to consider the possibility of a channel which is (intuitively) \emph{blocked}, in the sense that no successful validation can occur across it:

\begin{defn}
\label{defn.blocked.utxo}
Suppose that $\mathbb T$ is an IEUTxO model and $\f{ch}\in\tf{Chunk}_{\mathbb T}$ and $a\in\atoms$.
\begin{enumerate}
\item
Suppose that
\begin{itemize*}
\item
$a\in\f{utxi}(\f{ch})$ and 
\item
for every $\f{ch}'\in\tf{Chunk}_{\mathbb T}$ with $a\in\f{utxo}(\f{ch}')$,\ $\f{ch}'\mact\f{ch}$ is not a chunk.
\end{itemize*}
Then call $a$ a \deffont{blocked utxi} in $\f{ch}$.
Write $\f{blockedUtxi}(\f{ch})$ for the blocked utxis of $\f{ch}$. 
\item
Similarly define $\f{blockedUtxo}(\f{ch})$ the \deffont{blocked utxos} of $\f{ch}$ to be those $a\in\atoms$ such that
\begin{itemize*}
\item
$a\in\f{utxo}(\f{ch})$ and
\item
for every $\f{ch}'\in\tf{Chunk}_{\mathbb T}$ with $a\in\f{utxi}(\f{ch}')$, $\f{ch}'\mact\f{ch}$ is not a chunk.
\end{itemize*}
\end{enumerate}
\end{defn}

\begin{rmrk}
So a blocked UTxI or UTxO in a chunk is an input or output that exists, but which fails if you try to interact with it.
This could happen for an output whose validator is the empty set (it fails on any input), or for an input such that no validator in $\mathbb T$ exists to validate it (see Remark~\ref{rmrk.blocked}).
\end{rmrk}

\begin{lemm}
\label{lemm.oi.cap.rightleft.cap}
Suppose $\mathbb T$ is an IEUTxO model and $x,y\in\tf{Chunk}_{\mathbb T}$ and $x\mact y\in\tf{Chunk}_{\mathbb T}$.
Then 
$$
\utxo(x)\cap\utxi(y)\subseteq\f{right}(x)\cap\f{left}(y).
$$
\end{lemm}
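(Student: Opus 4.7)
The plan is to unpack the definitions and observe that $y$ itself witnesses $a \in \f{right}(x)$, while $x$ itself witnesses $a \in \f{left}(y)$. Concretely, suppose $a \in \utxo(x) \cap \utxi(y)$.

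First I would note from Lemma~\ref{lemm.utxi.utxo.empty}(\ref{utxo.cup.utxo}) that $\utxo(x) \subseteq \pos(x)$ and $\utxi(y) \subseteq \pos(y)$, so $a \in \pos(x) \cap \pos(y)$. By Proposition~\ref{prop.pos.eq.posi}, which identifies $\pos$ (intensional positions in an IEUTxO chunk) with $\posi$ (extensional positions in the associated monoid of chunks $\lF(\mathbb T)$), we get $a \in \posi(x) \cap \posi(y)$.

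Next, since by hypothesis $x \mact y \in \tf{Chunk}_{\mathbb T}$, and the composition in $\lF(\mathbb T)$ defaults to $\mtop = \f{fail}$ only when the concatenation fails validation (Definition~\ref{defn.lF}), we have $x \mact y < \mtop_{\lF(\mathbb T)}$. Hence $y \in \f{rightB}(x)$ and symmetrically $x \in \f{leftB}(y)$ (Definition~\ref{defn.left.right.app}). Combining this with $a \in \posi(y)$ gives, by Definition~\ref{defn.left.right.up}(\ref{points.right}), that $a \in \f{right}(x)$; and combining $x \in \f{leftB}(y)$ with $a \in \posi(x)$ gives, by Definition~\ref{defn.left.right.up}(\ref{points.left}), that $a \in \f{left}(y)$.

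There is no real obstacle here; the lemma is essentially a packaged-up reading of the definitions of $\f{left}$ and $\f{right}$, once we have Proposition~\ref{prop.pos.eq.posi} to pass between the concrete notion $\pos$ on IEUTxO chunks and the abstract notion $\posi$ in the monoid of chunks $\lF(\mathbb T)$. The only mildly subtle point worth stating explicitly is that $x \mact y \in \tf{Chunk}_{\mathbb T}$ means precisely $x \mact y < \mtop$ in $\lF(\mathbb T)$, which is exactly what is needed to place $y$ in $\f{rightB}(x)$ and $x$ in $\f{leftB}(y)$.
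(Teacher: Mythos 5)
Your proposal is correct and follows essentially the same route as the paper's proof: pass from $\utxo$/$\utxi$ to $\pos$ via Lemma~\ref{lemm.utxi.utxo.empty}(\ref{utxo.cup.utxo}), convert to $\posi$ via Proposition~\ref{prop.pos.eq.posi}, observe that $x\mact y\in\tf{Chunk}_{\mathbb T}$ gives $x\mact y<\mtop_{\lF(\mathbb T)}$ so that $y\in\f{rightB}(x)$ and $x\in\f{leftB}(y)$, and conclude by Definition~\ref{defn.left.right.up}. Nothing is missing.
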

\begin{proof}
Suppose $a\in\utxo(x)\cap\utxi(y)$.
In particular then by Lemma~\ref{lemm.utxi.utxo.empty}(\ref{utxo.cup.utxo})
$a\in\pos(x)\cap\pos(y)$ so by Proposition~\ref{prop.pos.eq.posi} also $a\in\posi(x)\cap\posi(y)$.

$\lF(\mathbb T)$ is a monoid of chunks by Proposition~\ref{prop.lF.monoid.atomic}, and since $x\mact y\in\tf{Chunk}_{\mathbb T}$ it follows that $x\mact y<\mtop_{\lF(\mathbb T)}$.
It follows from Definition~\ref{defn.left.right.app} that $y\in\f{rightB}(x)$ and $x\in\f{leftB}(y)$.

The result now follows by Definition~\ref{defn.left.right.up}.
\end{proof}

\begin{prop}
\label{prop.left.to.blocked}
Suppose $\mathbb T\in\tf{IEUTxO}$ and $x\in \lF(\mathbb T){\setminus}\{\mtop\}$ (that is, $x\in\tf{Chunk}_{\mathbb T}$).
Then: 
$$
\begin{array}{r@{\ }l}
\f{left}(x)=&\f{utxi}(x)\setminus\f{blockedUtxi}(x)
\\
\f{right}(x)=&\f{utxo}(x)\setminus\f{blockedUtxo}(x)
\\
\f{up}(x)=&\f{stx}(x)\cup\f{blockedUtxi}(x)\cup\f{blockedUtxo}(x)
\end{array}
$$
\end{prop}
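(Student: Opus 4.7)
The plan is to prove the three equalities in order, with the third following by a set-theoretic argument from the first two.

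For the first equality $\f{left}(x)=\f{utxi}(x)\setminus\f{blockedUtxi}(x)$, I would prove both inclusions. For $\subseteq$: suppose $a\in\f{left}(x)$. By Lemma~\ref{lemm.acs.up.stx} we already have $a\in\f{utxi}(x)$, so it suffices to produce some $\f{ch}'\in\tf{Chunk}_{\mathbb T}$ with $a\in\utxo(\f{ch}')$ and $\f{ch}'\mact x\in\tf{Chunk}_{\mathbb T}$, which witnesses that $a\not\in\f{blockedUtxi}(x)$. Unpacking Definition~\ref{defn.left.right.up}(\ref{points.left}) gives some $y\in\f{leftB}(x)$ with $a\in\posi(y)$; by Proposition~\ref{prop.pos.eq.posi}, $a\in\pos(y)$, and since $y\mact x<\mtop_{\lF(\mathbb T)}$ we have $y\mact x\in\tf{Chunk}_{\mathbb T}$, so $y$ is the required witness --- provided $a$ lies specifically in $\utxo(y)$ rather than in $\utxi(y)$ or $\stx(y)$, and this is exactly Lemma~\ref{lemm.oi.cap.rightleft.cap} applied in the reverse direction (or equivalently, a case analysis using Lemma~\ref{lemm.utxi.utxo.empty}(\ref{utxo.cup.utxo}), since $a\in\utxi(x)$ forces the shared position to be an $\utxo$ in $y$). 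For $\supseteq$: if $a\in\utxi(x)\setminus\f{blockedUtxi}(x)$, unpack the definition of blocked to get $\f{ch}'$ with $a\in\utxo(\f{ch}')$ and $\f{ch}'\mact x\in\tf{Chunk}_{\mathbb T}$; then $\f{ch}'\in\f{leftB}(x)$ and by Proposition~\ref{prop.pos.eq.posi} both $a\in\posi(x)$ and $a\in\posi(\f{ch}')$, so by Definition~\ref{defn.left.right.up}(\ref{points.left}) $a\in\f{left}(x)$.

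The second equality $\f{right}(x)=\f{utxo}(x)\setminus\f{blockedUtxo}(x)$ follows by the completely symmetric argument, swapping the roles of input/output and of $\f{leftB}/\f{rightB}$, using the other half of Lemma~\ref{lemm.acs.up.stx} and the mirror clause of Definition~\ref{defn.left.right.up}.

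For the third equality, I would compute using Lemma~\ref{lemm.supp.lru}: $\f{up}(x)=\posi(x)\setminus(\f{left}(x)\cup\f{right}(x))$. By Proposition~\ref{prop.pos.eq.posi} and Lemma~\ref{lemm.utxi.utxo.empty}(\ref{utxo.cup.utxo}), $\posi(x)=\pos(x)=\utxi(x)\uplus\utxo(x)\uplus\stx(x)$. Substituting the first two equalities and using the fact that $\f{blockedUtxi}(x)\subseteq\utxi(x)$ and $\f{blockedUtxo}(x)\subseteq\utxo(x)$ (together with the disjointness from Lemma~\ref{lemm.utxi.utxo.empty}(\ref{utxi.cap.utxo}) ensuring no cross-contamination), the complement collapses to $\stx(x)\cup\f{blockedUtxi}(x)\cup\f{blockedUtxo}(x)$.

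The main obstacle is the first bullet of the $\subseteq$ direction for $\f{left}$: one must verify that the witness $y$ produced by membership in $\f{left}(x)$ really gives us $a\in\utxo(y)$ specifically (not just $a\in\pos(y)$), because the definition of $\f{blockedUtxi}$ demands an $\utxo$-witness. This is a small but genuine case analysis, and all other steps are routine bookkeeping against Proposition~\ref{prop.pos.eq.posi} and Lemma~\ref{lemm.utxi.utxo.empty}.
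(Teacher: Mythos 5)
Your proposal is correct and follows essentially the same route as the paper: Lemma~\ref{lemm.acs.up.stx} for the inclusions $\f{left}(x)\subseteq\utxi(x)$ and $\f{right}(x)\subseteq\utxo(x)$, unpacking Definition~\ref{defn.blocked.utxo} to extract a witness and Lemma~\ref{lemm.oi.cap.rightleft.cap} for the reverse inclusions, and set arithmetic with the partition $\pos(x)=\utxi(x)\uplus\utxo(x)\uplus\stx(x)$ for the $\f{up}$ clause. You are merely more explicit than the paper about both inclusion directions and about why the $\f{left}$-witness lands in $\utxo(y)$ (which is Lemma~\ref{lemm.utxi.utxo.empty}(\ref{utxi.utxo}) rather than (\ref{utxo.cup.utxo}), a citation quibble only).
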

\begin{proof}
We know by Lemma~\ref{lemm.acs.up.stx} that $\f{left}(x)\subseteq\utxi(x)$ and $\f{right}(x)\subseteq\utxo(x)$.
Now suppose $a\in\f{utxi}(x)$ and $a\not\in\f{blockedUtxi}(x)$; unpacking Definition~\ref{defn.blocked.utxo} this means that there exists a $y\in\tf{Chunk}_{\mathbb T}$ such that $a\in\utxo(y)$ and $y\mact x\in\tf{Chunk}_{\mathbb T}$.
By Lemma~\ref{lemm.oi.cap.rightleft.cap} it follows that ($a\in\f{right}(y)$ and) $a\in\f{left}(x)$.

The case of $\f{right}(x)$ is similar, and the case of $\f{up}(x)$ follows from the previous two cases and Lemma~\ref{lemm.acs.up.stx}.
\end{proof}

\section{The functor \texorpdfstring{$\rG:\tf{ACS}\to\tf{IEUTxO}$}{G : ACS -> IEUTxO}}
\label{sect.functor.rG}

\subsection{A brief discussion: why represent?}

In Subsection~\ref{subsect.brief.discussion} we observed a hierarchy of models, from concrete EUTxO structures to IEUTxO models to abstract chunk systems.

The mapping from IEUxO to ACS is the functor $\lF:\tf{IEUTxO}\to\tf{ACS}$ from Section~\ref{sect.functor.lF}.
We will now exhibit a functor $\rG:\tf{ACS}\to\tf{IEUTxO}$ \emph{back down} from the abstract to the concrete structures.\footnote{Note that $\rG$ consists of an action on objects, and an action on arrows, and we can usefully have the former without the latter. See Remark~\ref{rmrk.why.factorisation}.}

This is interesting for two reasons: one specific, and one general. 
We consider each in turn.

$\rG$ is interesting because:
\begin{enumerate}
\item
It gives a sense in which the abstraction reasonably represents the concrete models.
That is, there is nothing the abstract model could do that is so crazy that it cannot be engineered back down to a concrete structure.
This may involve some ugly concrete fiddling, emulation, and choices --- as one might expect going from an abstract to a concrete object --- but it can be done, and seeing how, can be helpful for understanding both worlds.
\item
Sometimes, theorems are better proved in the concrete world than the abstract world.
This can be particularly useful to prove negative properties, that something \emph{cannot} happen in the abstract world, because it would correspond to something that would be impossible in the concrete world.

A well-known example is that every Boolean Algebra can be represented as a powerset, and thus every finite Boolean Algebra has cardinality a power of two.
Thus, to prove that some abstract structure does \emph{not} admit any Boolean Algebra structure, it suffices if its carrier set is finite and has cardinality that is \emph{not} a power of two.\footnote{We do not exhibit any such application of our result in this paper; we are just making the general observation.  Still, it is possible that in future work our constructions might be put to such use.}
\end{enumerate}
Now to understand the relevance of $\rG$ specifically for this paper, consider the following question:
\begin{quote}
\emph{In what sense is Definition~\ref{defn.acs.system} a good abstraction of Definition~\ref{defn.solution}?}
\end{quote}
Design decisions are embedded in the conditions of Definition~\ref{defn.acs.system}, and some of these were not trivial and had more than one plausible outcome.
Why did we choose as we did?  How do these choices interact?  In what sense were they appropriate?

To answer these questions, $\lF$ is not necessarily the greatest help on its own.
To illustrate why, consider that we can obtain a general `theory of blockchains' merely by insisting that an `abstract chunk system' is a set.
We impose no further structure: \emph{et voil\`a}: instant generality!\footnote{This really happened.  An author lifted an algebra from one of my papers, deleted crucial structure, and claimed superior generality.  When the paper went to me to referee, I observed that deleting this structure also deleted all the interesting theorems.  This was not necessarily fatal; but what other theorems or properties were there to replace them?  No reply was forthcoming.} 
But this tells us little; e.g. $\lF$ would just be the forgetful functor, mapping an IEUTxO model to its underlying set.
We could map just to monoids, if we add the failure element, and again an $\lF$ would exist, but this would be only slightly less uninformative.

So where is the sweet spot, and why?  As we observed, merely exhibiting an $\lF$-style functor does not help: we need to get an algebraic measure of what it is about Figure~\ref{fig.ieutxo} that gives it its essential nature. 

We get a formally meaningful measure of an appropriate level of abstraction by locating one at which we can build a sensible functor $\rG$ \emph{going back}, and seeing how conditions in Definition~\ref{defn.acs.system} interact with its construction --- and, we can observe how tweaking them can affect, or even break, these constructions.
A discussion of such tweaks, and their effects, is in Remarks~\ref{rmrk.why.factorisation} and~\ref{rmrk.factor}, and Proposition~\ref{prop.why.pure}. 

\begin{rmrk}[Comment on design]
\label{rmrk.sensible}
What counts as a `sensible' $\rG$ is a design decision in itself.
We consider several options in this paper (listed here in increasing order of size of the category of denotations $\tf{ACS}$): 
\begin{enumerate*}
\item
a categorical equivalence (Proposition~\ref{prop.why.pure}); or 
\item
a categorical embedding (Theorem~\ref{thrm.adjoints} and Remark~\ref{rmrk.loop.of.embeddings}); or 
\item
just an injection on objects (Remark~\ref{rmrk.why.factorisation}).
\end{enumerate*}
All these possibilities are justifiable.\footnote{A comparison: when giving a denotation to $\mathbb N$, the domain of denotations could be $\omega$ (equivalence), ordinals (embedding), or just an arbitrary infinite set (injection on objects).  All three possibilities are reasonable, depending on the context.}  
So to be clear: $\rG$ and the choices we make in building it are not intended as direct value judgements; they are a way to measure and explore the structure of a large, abstract, and interesting design space.
\end{rmrk}

\subsection{Action on objects}

Recall from Definition~\ref{defn.solution} the notion of an IEUTxO model, and the accompanying discussion in Remark~\ref{rmrk.NQR} about the status of the injection $\nu:\tf{Validator}\hookrightarrow\powerset(\beta\times\tf{Transaction}_!)$.

Continuing that Remark, in Definition~\ref{defn.rG} we must be explicit about $\nu$:
\begin{defn}
\label{defn.rG}
Suppose $(\ns X,\mbot,\mtop,\leq,\mact)\in\tf{ACS}$.
We define an IEUTxO model $\rG(\ns X)$  
$$
\rG(\ns X)=(\alpha,\beta,\tf{Transaction},\tf{Validator},\nu:\tf{Validator}\hookrightarrow\powerset(\beta\times\tf{Transaction}_!)) 
$$ 
as follows:
\begin{enumerate}
\item
We take $\alpha=\beta=\f{atomic}(\ns X)$ (Definition~\ref{defn.atomic.elements}).
\item
We take:
$$
\tf{Validator}=\{\ast\} 
$$
where $\{\ast\}$ is a unit type.
\item\label{F.arrow}
For each atomic $x\in\f{atomic}(\ns X)$ we admit a transaction 
$$
\tf{tx}(x)\in\tf{Transaction}
$$ 
such that:
$$
\begin{array}{r@{\ }l}
\f{input}(\tf{tx}(x)) =& \{(a,x) \mid a\in\f{left}(x)\}
\\
\f{output}(\tf{tx}(x)) =&\{(b,x,\ast)\mid b\in\f{right}(x)\cup\f{up}(x)\} .
\end{array}
$$
Thus:
$$
\tf{Transaction} = \{\tf{tx}(X) \mid x\in\f{atomic}(\ns X)\}
$$
for $\tf{tx}$ defined as above.
\item\label{F.nu}
We define $\nu:\tf{Validator}\hookrightarrow\powerset(\beta\times\tf{Transaction}_!)$ to map $\ast\in\tf{Validator}$ as follows ($\at i$ from Notation~\ref{nttn.ty.points.to}): 
$$
\nu(\ast) =
\{(x,\tx\at (p,y)) \mid (p,y)\in\f{input}(\tx),\ x\mact y<\mtop\} .
$$
Thus, $\nu(\ast)$ is the function that inputs $x$ and a pointed transaction $\tx\at (p,y)$, extracts the data $y$ from the input, and then checks that $x\mact y<\mtop$ in $\ns X$.\footnote{In fact, the only pointed transactions possible in this system are $\tf{tx}(y)\at (p,y)$, so we could also extract $y$ from $\tx$.}
\end{enumerate}
\end{defn}

\begin{rmrk}
\label{rmrk.its.utxo}
Continuing Remark~\ref{rmrk.utxo}, we see that the validator used by $\rG$ in Definition~\ref{defn.rG} is UTxO-style; it only examines the (pointed) input of the transaction to be validated.
So in fact, $\rG$ maps not just to IEUTxO models but to the \emph{IUTxO} models noted in Subsection~\ref{subsect.iutxo}.
We will use this observation in Theorem~\ref{thrm.adjoints}(3).
\end{rmrk}

An easy sanity check:
\begin{lemm}
Suppose $\ns X\in\tf{ACS}$ and $x,y\in\ns X$.
Then $\tf{tx}(x)\mact\tf{tx}(y)$ is a chunk if and only if $x\mact y<\mtop$. 
\end{lemm}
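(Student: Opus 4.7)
The plan is to unfold both sides of the biconditional through Definitions~\ref{defn.chunk} (chunks) and~\ref{defn.rG} (the construction of $\tf{tx}$ and $\nu$), and then bridge them using Lemma~\ref{lemm.make.link}, Definition~\ref{defn.oriented}(\ref{oriented.fresh.defined}), and Lemma~\ref{lemm.up.fail}. The key fact is that the inputs of $\tf{tx}(x)$ sit at positions $\f{left}(x)$, the outputs at positions $\f{right}(x)\cup\f{up}(x)$ (with data $x$ and validator $\ast$), and analogously for $y$; so chunk well-formedness of $[\tf{tx}(x),\tf{tx}(y)]$ translates directly into constraints relating $\f{left}$, $\f{right}$, and $\f{up}$ of $x$ and $y$.

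For the forward direction, I would assume $[\tf{tx}(x),\tf{tx}(y)]\in\tf{Chunk}$ and read off from Definition~\ref{defn.chunk} the following: (i) $\f{left}(x)\cap\f{left}(y)=\varnothing$ from distinct inputs at distinct positions; (ii) $(\f{right}(x)\cup\f{up}(x))\cap(\f{right}(y)\cup\f{up}(y))=\varnothing$ from distinct outputs at distinct positions; (iii) $\f{left}(x)\cap(\f{right}(y)\cup\f{up}(y))=\varnothing$ because an input of the earlier transaction cannot point at a later output (condition \ref{chunk.earlier}); and (iv) for every $a\in\f{left}(y)\cap(\f{right}(x)\cup\f{up}(x))$ the output at $a$ must validate the matching input, which by Definition~\ref{defn.rG}(\ref{F.nu}) is precisely $x\mact y<\mtop$. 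If this last overlap is non-empty I am done immediately; otherwise (i)--(iii) together with Lemma~\ref{lemm.supp.lru} yield $\posi(x)\cap\posi(y)=\varnothing$, and since $x,y\in\f{atomic}(\ns X)$ are by Definition~\ref{defn.atomic.elements} neither $\mbot$ nor $\mtop$, Definition~\ref{defn.oriented}(\ref{oriented.fresh.defined}) delivers $x\mact y<\mtop$.

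For the backward direction, I would suppose $x\mact y<\mtop$ and invoke Lemma~\ref{lemm.make.link} to obtain $\posi(x)\cap\posi(y)\subseteq\f{right}(x)\cap\f{left}(y)$. Combined with Corollary~\ref{corr.lr.tx.empty} and Lemma~\ref{lemm.supp.lru}, this single containment supplies all four required disjointness/validation facts: conditions (i)--(iii) follow because any offending overlap would lie outside $\f{right}(x)\cap\f{left}(y)$, and for condition (iv) the only positions where an input of $\tf{tx}(y)$ can meet an output of $\tf{tx}(x)$ lie in $\f{right}(x)\cap\f{left}(y)$, where validation reduces by Definition~\ref{defn.rG}(\ref{F.nu}) to exactly our hypothesis $x\mact y<\mtop$.

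The one subtlety worth explicit care is ruling out the case $a\in\f{up}(x)\cap\f{left}(y)$: the chunk condition would demand validation at $a$ and hence $x\mact y<\mtop$, yet Lemma~\ref{lemm.up.fail} forces $x\mact y=\mtop$ (because $a\in\posi(y)$), a contradiction. Thus in the forward direction such $a$ cannot occur, and in the backward direction it is excluded by $\f{up}(x)\cap\f{left}(y)\subseteq\f{up}(x)\cap\f{right}(x)=\varnothing$ (via Lemma~\ref{lemm.supp.lru}). Beyond this book-keeping, everything else is a mechanical translation between the list-of-transactions picture and the $(\f{left},\f{right},\f{up})$ interface algebra — so the main obstacle is simply making sure the case split on overlaps is exhaustive and that the ``same position, distinct output/input'' readings of Definition~\ref{defn.chunk} are applied consistently to the tagged data $(a,x,\ast)$ versus $(a,y,\ast)$.
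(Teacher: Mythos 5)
Your proof is correct and takes the same route as the paper: the paper's entire proof is the one-liner ``By construction, unravelling Definition~\ref{defn.rG}'', and your argument is precisely that unravelling, carried out in full (including the genuinely delicate point that $\f{up}(x)\cap\f{left}(y)$ must be empty, handled via Lemmas~\ref{lemm.up.fail} and~\ref{lemm.make.link}). You supply the case analysis the paper elides, but there is no difference in approach.
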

\begin{proof}
By construction, unravelling Definition~\ref{defn.rG}.
\end{proof}

\begin{rmrk}[Comment on design]
In Definition~\ref{defn.rG}(\ref{F.arrow}) we set 
\begin{itemize*}
\item
$\f{input}(\tf{tx}(x)) = \{(a,x) \mid a\in\f{left}(x)\}$ and 
\item
$\f{output}(\tf{tx}(x)) =\{(b,x,\ast)\mid b\in\f{right}(x)\cup\f{up}(x)\}$.
\end{itemize*}
So $\f{up}$-atoms in $x$ map to $\f{output}$-atoms in $\tf{tx}(x)$.
Why? 
For two reasons: 
\begin{itemize}
\item
\emph{The short reason} is that it makes Lemma~\ref{lemm.tx.inj} work: all atoms in $\posi(x)$ get recorded in an input or output (even the ones in $\f{up}(x)$, which cannot participate in a non-failing interaction), along with a copy of $x$ (to get injectivity).
\item
\emph{The longer reason} is as follows:

An ACS element $x\in\ns X$ has no internal structure and thus no explicit structural notion of inputs or outputs.
Our only interaction with $x$ is by combining it with other elements and observing partiality (cf. Remark~\ref{rmrk.surprising.amount}).

But suppose our ACS $\ns X$ was obtained concretely from an IEUTxO model using $\lF$ from Definition~\ref{defn.lF}, so that $x$ is `secretly' a singleton chunk, presented as an atomic element in an ACS.
Then $p\in\f{up}(x)$ could occur for two reasons: 
\begin{itemize*}
\item
either $p$ is the position of an input which no output will accept (perhaps it is labelled with some data that all validators disapprove of); 
\item
or $p$ is the position of an output that will not validate any available input (e.g. it has the empty validator).
\end{itemize*}
When we come to map $x$ back to a transaction $\tf{tx}(x)$, the simplest way to record $p$ is to attach it to an IEUTxO output located at $p$, with a validator that always fails.

The other option would be to attach $p$ to an \emph{input} in $\tf{tx}(x)$, to tag the data carried by that input with some special `fail-me' tag, and remember to create only validators that recognise this tag and reject the input.
But this is clearly a more complicated way of doing things, and the design adopted in Definition~\ref{defn.rG}(\ref{F.arrow}) seems the natural and simple approach.
\end{itemize} 
\end{rmrk}

Several things about Definition~\ref{defn.rG} need checked.
We start with Lemma~\ref{lemm.lF.gives.valid.singleton.chunks}:
\begin{lemm}\leavevmode
\label{lemm.lF.gives.valid.singleton.chunks}
\begin{enumerate}
\item
If $\ns X\in\tf{ACS}$ and $x\in\f{atomic}(\ns X)$ then $\tf{tx}(x)$ has the right type to be a transaction as per Figure~\ref{fig.ieutxo}.
\item
If $\ns X\in\tf{ACS}$ and $x\in\f{atomic}(\ns X)$ then $[\tf{tx}(x)]$ is a chunk.
\item\label{atomic.lF}
As a corollary, atomic elements in $\lF(\ns X)$ are precisely the singleton chunks of $\tf{tx}(x)$, where $x$ ranges over atomic elements of $\ns X$ --- or more concisely in symbols: 
$$
\f{atomic}(\lF(\ns X))=\{[\tf{tx}(x)]\mid x\in\f{atomic}(\ns X)\}.
$$
\end{enumerate}
\end{lemm}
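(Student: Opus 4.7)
All three parts are essentially unpacking the construction in Definition~\ref{defn.rG} and citing machinery already established. The only obstacle is bookkeeping; there is no substantive combinatorial argument required. (I read the ``$\lF(\ns X)$'' of part~3 as $\lF(\rG(\ns X))$, since $\lF$ is defined on $\tf{IEUTxO}$; this is the only natural reading given the preceding parts.)

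For part~1, I would check directly against Figure~\ref{fig.ieutxo} that $\tf{tx}(x)$ has the type of a transaction in $\rG(\ns X)$. The inputs $(a,x)$ lie in $\atoms\times\f{atomic}(\ns X) = \atoms\times\alpha$ and the outputs $(b,x,\ast)$ lie in $\atoms\times\f{atomic}(\ns X)\times\{\ast\} = \atoms\times\beta\times\tf{Validator}$, so the typing is immediate. For finiteness of the input and output sets, note that $\f{left}(x),\f{right}(x),\f{up}(x) \subseteq \posi(x)$ by Lemma~\ref{lemm.supp.lru}, and $\posi(x)\finsubseteq\atoms$ by Definition~\ref{defn.oriented}(\ref{oriented.posi.finite}). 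For non-emptiness (i.e. ruling out the empty transaction $(\varnothing,\varnothing)$), observe that $x\in\f{atomic}(\ns X)$ forces $x\notin\{\mbot,\mtop\}$ by Definition~\ref{defn.atomic.elements}(\ref{atomic.proper}), so by Lemma~\ref{lemm.oriented.empty.iff} we have $\posi(x)\neq\varnothing$, whence by Lemma~\ref{lemm.supp.lru} at least one of $\f{left}(x)$ or $\f{right}(x)\cup\f{up}(x)$ is non-empty.

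For part~2, I would apply Lemma~\ref{lemm.singleton.chunk.valid}, which says that $[\tf{tx}(x)]$ is a chunk iff its input channels and output channels are disjoint. From Definition~\ref{defn.rG}(\ref{F.arrow}) we read off $\f{input}(\tf{tx}(x)) = \f{left}(x)$ and $\f{output}(\tf{tx}(x)) = \f{right}(x)\cup\f{up}(x)$, so we need
$$
\f{left}(x)\cap\bigl(\f{right}(x)\cup\f{up}(x)\bigr) = \varnothing.
$$
The intersection $\f{left}(x)\cap\f{right}(x) = \varnothing$ is Corollary~\ref{corr.lr.tx.empty} (using that $\ns X$ is oriented), and $\f{left}(x)\cap\f{up}(x) = \varnothing$ is immediate from the disjointness statement in Lemma~\ref{lemm.supp.lru}.

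For part~3, I would combine part~2 with Proposition~\ref{prop.lF.monoid.atomic}. By Definition~\ref{defn.rG}(\ref{F.arrow}), $\tf{Transaction}_{\rG(\ns X)} = \{\tf{tx}(x)\mid x\in\f{atomic}(\ns X)\}$. Proposition~\ref{prop.lF.monoid.atomic} applied to the IEUTxO model $\rG(\ns X)$ tells us that the atomic elements of $\lF(\rG(\ns X))$ are precisely the singleton lists $[\tx]$ for $\tx\in\tf{Transaction}_{\rG(\ns X)}$, which by the previous sentence is exactly $\{[\tf{tx}(x)]\mid x\in\f{atomic}(\ns X)\}$. Part~2 confirms that each such singleton is indeed a valid chunk (and hence lives in $\lF(\rG(\ns X))$ rather than collapsing to $\mtop$), so the enumeration is correct.
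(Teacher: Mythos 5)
Your proposal is correct and follows essentially the same route as the paper's proof: finiteness from Definition~\ref{defn.oriented}(\ref{oriented.posi.finite}) and Lemma~\ref{lemm.supp.lru}, the chunk condition via Lemma~\ref{lemm.singleton.chunk.valid} together with Corollary~\ref{corr.lr.tx.empty} and Lemma~\ref{lemm.supp.lru}, and part~3 from the characterisation of atomic elements as singleton lists. You are slightly more careful than the paper in one respect --- explicitly ruling out the empty transaction $(\varnothing,\varnothing)$ via Lemma~\ref{lemm.oriented.empty.iff} --- and your reading of $\lF(\ns X)$ as $\lF(\rG(\ns X))$ in part~3 is the intended one.
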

\begin{proof}\leavevmode
\begin{enumerate}
\item
From Definitions~\ref{defn.oriented}(\ref{oriented.posi.finite}) and~\ref{defn.rG}(\ref{F.arrow}) and Lemma~\ref{lemm.supp.lru}, $\tf{tx}(x)$ has finitely many inputs and outputs; so as per Figure~\ref{fig.ieutxo} it is indeed a pair of a \emph{finite} set of inputs and a \emph{finite} set of outputs.
\item
By Lemma~\ref{lemm.singleton.chunk.valid}, to show $[\tf{tx}]$ is a valid chunk
it would suffice to show that $\f{input}(\tf{tx}(x))\cap\f{output}(\tf{tx}(x))=\varnothing$.
This follows from Lemma~\ref{lemm.supp.lru} and Corollary~\ref{corr.lr.tx.empty}. 
\item
By construction and Lemma~\ref{lemm.atomic.IEUTxO}, noting that in lists ordered by subset inclusion, atomic elements are singleton lists.
\end{enumerate}
\end{proof}

\begin{rmrk}[Comment on design]
We mention an alternative definition of $\rG$ from Definition~\ref{defn.rG}, just to illustrate that more than one encoding is possible:
\begin{enumerate}
\item
We take $\alpha=\beta=\tf{Validator}=\f{atomic}(\ns X)$ (Definition~\ref{defn.atomic.elements}).
\item
For each atomic $x\in\f{atomic}(\ns X)$ we admit a transaction $\tf{tx}(x)\in\tf{Transaction}$ such that:
$$
\begin{array}{r@{\ }l}
\f{input}(\tf{tx}(x)) =& \{(a,x) \mid a\in\f{left}(x)\}
\\
\f{output}(\tf{tx}(x)) =&\{(b,x,x)\mid b\in\f{right}(x)\cup\f{up}(x)\}
\end{array}
$$
\item
We define $\nu:\tf{Validator}\hookrightarrow\powerset(\beta\times\tf{Transaction}_!)$ to map 
$x\in\tf{Validator}$ as follows: 
$$
\nu(x) =
\{(x,\tf{tx}(y)\at i) \mid x\mact y<\mtop,\ i\in\f{input}(\tf{tx}(y))\} .
$$
\end{enumerate}
\end{rmrk}

\begin{rmrk}
\label{rmrk.it's.ok}
It remains to prove that $\nu$ is well-defined and (as required by Definition~\ref{defn.solution}(\ref{validator.injection})) is injective, and that $\rG(\ns X)$ is indeed an IEUTxO model.
See Corollaries~\ref{corr.nu.inj} and~\ref{corr.rG.IEUTxO}.
\end{rmrk}

\subsection{\texorpdfstring{$\nu$}{nu} is injective}

\begin{lemm}
\label{lemm.atomic.x.supp}
Suppose $\ns X\in\tf{ACS}$ (Definition~\ref{defn.acs}).
Then 
$$
x\in\f{atomic}(\ns X)
\quad\text{implies}\quad
\posi(x)\neq\varnothing.
$$
\end{lemm}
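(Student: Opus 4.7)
The plan is to observe that this lemma is essentially a direct contrapositive application of the orientation axiom, combined with the definition of atomic elements.

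First, I would unpack what it means for $x \in \f{atomic}(\ns X)$. By Definition~\ref{defn.atomic.elements}(\ref{atomic.proper}), an atomic element must satisfy $\mbot \lneq x \lneq \mtop$. In particular, $x \neq \mbot$ and $x \neq \mtop$, so $x \notin \{\mbot, \mtop\}$.

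Next, I would invoke the orientation of $\ns X$. Since $\ns X \in \tf{ACS}$, it is an oriented monoid of chunks by Definition~\ref{defn.acs.system}. Lemma~\ref{lemm.oriented.empty.iff} states that in such a structure, $\posi(x) = \varnothing$ if and only if $x \in \{\mbot, \mtop\}$. Equivalently (reading the contrapositive of Definition~\ref{defn.oriented}(\ref{oriented.empty})), if $x \notin \{\mbot, \mtop\}$ then $\posi(x) \neq \varnothing$.

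Combining these two observations gives the result immediately. There is essentially no obstacle here --- the lemma is a direct packaging of the orientation axiom for the special case of atomic elements, and its purpose (I expect) is to record this fact in a convenient form for later use, e.g.\ when building the transaction $\tf{tx}(x)$ in Definition~\ref{defn.rG}(\ref{F.arrow}), where one wants to know that at least one input or output atom is produced for each atomic generator.
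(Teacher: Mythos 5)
Your proof is correct and matches the paper's own argument exactly: the paper likewise notes that atomicity gives $x\not\in\{\mbot,\mtop\}$ via Definition~\ref{defn.atomic.elements}(\ref{atomic.proper}) and then applies Lemma~\ref{lemm.oriented.empty.iff}. Your closing remark about its intended use in Definition~\ref{defn.rG}(\ref{F.arrow}) is also on target (it is used in Lemma~\ref{lemm.tx.inj}).
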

\begin{proof}
Suppose $x$ is atomic.
By Definition~\ref{defn.atomic.elements}(\ref{atomic.proper}) $x\not\in\{\mbot,\mtop\}$. 
We use Lemma~\ref{lemm.oriented.empty.iff}.
\end{proof}

\begin{lemm}
\label{lemm.tx.inj}
Suppose $\ns X\in\tf{ACS}$.
Then:
\begin{enumerate*}
\item\label{tx.inj}
The assignment  
$$
x\in\f{atomic}(\ns X) \longmapsto \tf{tx}(x) \in \tf{Transaction}_{\rG(\ns X)}
$$
from Definition~\ref{defn.rG}(\ref{F.arrow}) is injective. 
\item\label{tx.posi.pos}
$\posi(x)=\pos(\tf{tx}(x))$.
\end{enumerate*}
\end{lemm}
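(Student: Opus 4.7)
The plan is to dispatch part~2 first by direct bookkeeping from the construction, and then use part~2 together with Lemma~\ref{lemm.atomic.x.supp} to prove the injectivity of $\tf{tx}$ in part~1.

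For part~2, unpack $\pos$ on a transaction using Figure~\ref{fig.positions}: it collects the positions of inputs and outputs. By Definition~\ref{defn.rG}(\ref{F.arrow}),
$$
\pos(\tf{tx}(x)) \;=\; \{a\mid a\in\f{left}(x)\}\;\cup\;\{b\mid b\in\f{right}(x)\cup\f{up}(x)\} \;=\; \f{left}(x)\cup\f{right}(x)\cup\f{up}(x).
$$
By Lemma~\ref{lemm.supp.lru} the right-hand side is precisely $\posi(x)$. Note the (deliberate) design choice in Definition~\ref{defn.rG}(\ref{F.arrow}) to route $\f{up}(x)$ into the output side is what makes this equality work; if those atoms were dropped, part~2 would fail.

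For part~1, suppose $x,y\in\f{atomic}(\ns X)$ and $\tf{tx}(x)=\tf{tx}(y)$; I want to conclude $x=y$. By Lemma~\ref{lemm.atomic.x.supp} $\posi(x)\neq\varnothing$, so by Lemma~\ref{lemm.supp.lru} at least one of $\f{left}(x)$ or $\f{right}(x)\cup\f{up}(x)$ is nonempty. In the first case, pick $a\in\f{left}(x)$; then $(a,x)\in\f{input}(\tf{tx}(x))=\f{input}(\tf{tx}(y))=\{(a',y)\mid a'\in\f{left}(y)\}$, so the second component forces $x=y$. In the second case, pick $b\in\f{right}(x)\cup\f{up}(x)$; then $(b,x,\ast)\in\f{output}(\tf{tx}(x))=\f{output}(\tf{tx}(y))=\{(b',y,\ast)\mid b'\in\f{right}(y)\cup\f{up}(y)\}$, and again the second component forces $x=y$.

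There is no real obstacle: the proof essentially observes that the \emph{data} label carried on each input/output of $\tf{tx}(x)$ is $x$ itself, so that a single input or output in $\tf{tx}(x)$ already determines $x$; atomicity via Lemma~\ref{lemm.atomic.x.supp} guarantees that at least one such input or output exists. The only thing to be slightly careful about is making sure the case split covers $\f{left}$ versus $\f{right}\cup\f{up}$, not $\f{left}\cup\f{right}$ versus $\f{up}$, to match the shape of inputs versus outputs in Definition~\ref{defn.rG}(\ref{F.arrow}).
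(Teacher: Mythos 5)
Your proposal is correct and follows essentially the same route as the paper: part~2 by reading $\pos(\tf{tx}(x))=\f{left}(x)\cup\f{right}(x)\cup\f{up}(x)$ off Definition~\ref{defn.rG}(\ref{F.arrow}) and Figure~\ref{fig.positions} and matching it against Lemma~\ref{lemm.supp.lru}, and part~1 by using Lemma~\ref{lemm.atomic.x.supp} to guarantee a nonempty $\posi(x)$, hence at least one input or output from whose data component $x$ can be read off. Your version merely spells out the equality of input/output sets more explicitly; there is no substantive difference.
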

\begin{proof}
\begin{enumerate}
\item
By Lemma~\ref{lemm.atomic.x.supp} (since $x$ is atomic) $\posi(x)\neq\varnothing$, so by Lemma~\ref{lemm.supp.lru} at least one of $\f{left}(x)$ or $\f{right}(x)$ or $\f{up}(x)$ must be nonempty.

If $\f{left}(x)$ is nonempty then $\tf{tx}(x)$ has an input and we can read $x$ off the data in that input.

Otherwise $\tf{tx}(x)$ has an output and we can read $x$ off the data in that output.
\item
It follows from Definition~\ref{defn.rG}(\ref{F.arrow}) and Figure~\ref{fig.positions} that $\pos(\tf{tx}(x))=\f{left}(x)\cup(\f{right}(x)\cup\f{up}(x))$.
Also, by Lemma~\ref{lemm.supp.lru} $\pos(x)=\f{left}(x)\cup\f{right}(x)\cup\f{up}(x)$.
\end{enumerate}
\end{proof}

As promised in Remark~\ref{rmrk.it's.ok}, we prove:
\begin{corr}
\label{corr.nu.inj}
The map 
$$
\nu:\tf{Validator}\hookrightarrow\powerset(\beta\times\tf{Transaction}_!)
$$ 
from Definition~\ref{defn.rG}(\ref{F.nu}) is well-defined and injective. 
\end{corr}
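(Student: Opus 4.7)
The decisive observation is that by Definition~\ref{defn.rG} the validator set $\tf{Validator}_{\rG(\ns X)}$ is the singleton unit type $\{\ast\}$. This makes injectivity vacuous --- any function out of a one-element set is automatically injective --- so no work is needed there.

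The substantive content is well-definedness, which means checking that $\nu(\ast)$ genuinely lands in $\powerset(\beta\times\tf{Transaction}_!)$ and that $\nu$ is equivariant as required by Definition~\ref{defn.solution}(\ref{validator.injection}). The type-check on the set-former in Definition~\ref{defn.rG}(\ref{F.nu}) has two parts: the leading coordinate $x$ is (implicitly) drawn from $\beta=\f{atomic}(\ns X)$, which is the only reading making $\nu(\ast)\subseteq\beta\times\tf{Transaction}_!$; and the second coordinate $\tx\at(p,y)$ lies in $\tf{Transaction}_!$ by Notation~\ref{nttn.ty.points.to}, since $\tx\in\tf{Transaction}_{\rG(\ns X)}$ (which is non-empty for each atomic $x$ by Lemma~\ref{lemm.atomic.x.supp} and Lemma~\ref{lemm.lF.gives.valid.singleton.chunks}) and $(p,y)\in\f{input}(\tx)$ ensures the pointing is legal.

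Equivariance of $\nu$ reduces to $\pi\pact\nu(\ast)=\nu(\ast)$ (since $\pi\pact\ast=\ast$), which follows by reparametrising the set-former through $\pi$ and using equivariance of $\mact$, of $\mtop$, of $\f{input}$, and of the pointing operation $\at$ on its arguments --- all of which are inherited from $\ns X\in\tf{ACS}$ being equivariant (Notation~\ref{nttn.equivariant} and Definition~\ref{defn.perm}).

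The main obstacle: essentially none. The corollary is a short bookkeeping step that collects trivial consequences of the choice $\tf{Validator}=\{\ast\}$ made in Definition~\ref{defn.rG}, together with routine type-checking. Note that the content of the neighbouring, genuinely \emph{non-trivial} injectivity statement (Lemma~\ref{lemm.tx.inj}) lives on the transactions side, not on the validators side; that is where the real work of ``representing $\ns X$ faithfully inside an IEUTxO model'' actually takes place.
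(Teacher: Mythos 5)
Your proposal is correct, but it justifies the corollary by a different route than the paper does. The paper's own proof is a one-liner that leans entirely on Lemma~\ref{lemm.tx.inj}(\ref{tx.inj}): since $x\mapsto\tf{tx}(x)$ is injective, ``we can deduce $y$ from $\tf{tx}(y)$'', so the comprehension defining $\nu(\ast)$ is unambiguous --- this is exactly the reading flagged in the footnote to Definition~\ref{defn.rG}(\ref{F.nu}), where one recovers $y$ by inverting $\tf{tx}$ rather than by reading it off the input. You instead observe that $y$ is carried directly on the input $(p,y)$, so no inversion is needed for well-definedness, and that injectivity of $\nu$ is vacuous because $\tf{Validator}=\{\ast\}$ is a singleton. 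Both arguments are sound. Yours is more elementary and makes explicit the (slightly anticlimactic) reason the injectivity clause of Definition~\ref{defn.solution}(\ref{validator.injection}) holds here; the paper's version foregrounds the genuinely non-trivial fact (injectivity of $\tf{tx}$) that the surrounding development needs anyway, e.g.\ for the counit $\epsilon$ in Definition~\ref{defn.epsilon} and Remark~\ref{rmrk.epsilon}. Your additional remarks on equivariance and on the typing of $\tx\at(p,y)$ are correct but not present in the paper's proof; the appeal to Lemma~\ref{lemm.atomic.x.supp} for non-emptiness of inputs-plus-outputs is harmless but not needed for well-definedness. You correctly locate the real representational work in Lemma~\ref{lemm.tx.inj} rather than in this corollary.
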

\begin{proof}
By Lemma~\ref{lemm.tx.inj}(\ref{tx.inj}) we can deduce $y$ from $\tf{tx}(y)$.
The result follows. 
\end{proof}

\begin{corr}
\label{corr.rG.IEUTxO}
If $\ns X$ is an abstract chunk system (Definition~\ref{defn.acs.system}) then $\rG(\ns X)$ is an IEUTxO model (Definition~\ref{defn.solution}).
\end{corr}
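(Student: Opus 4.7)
The plan is to verify in turn each clause of Definition~\ref{defn.solution}, drawing on the lemmas already established for $\rG(\ns X)$. Most of the work has been done: Lemma~\ref{lemm.lF.gives.valid.singleton.chunks}(1) gives that each $\tf{tx}(x)$ has the correct type to live in $\finpow(\tf{Input})\times\finpow(\tf{Output})$, and Corollary~\ref{corr.nu.inj} gives that $\nu$ is well-defined and injective. So there are only three things left to check: equivariance of all five components, the subset condition on $\tf{Transaction}$, and the explicit exclusion of the empty transaction $(\varnothing,\varnothing)$.

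First I would handle the type parameters and validators. We have $\alpha=\beta=\f{atomic}(\ns X)$; since $\ns X$ is equivariant (per Notation~\ref{nttn.equivariant}) and atomicity is defined in terms of $\mbot$, $\mtop$, and $\leq$, each of which is equivariant, the predicate `is atomic' is preserved under the permutation action, so $\f{atomic}(\ns X)$ is equivariant. The set $\tf{Validator}=\{\ast\}$ is trivially equivariant.

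Next, the transaction set. The map $x\mapsto\tf{tx}(x)$ is defined from $\f{left}$, $\f{right}$, and $\f{up}$, which by Definitions~\ref{defn.left.right.app} and~\ref{defn.left.right.up} only refer to the monoid operation $\mact$ and the element $\mtop$; since the structure of $\ns X$ is equivariant, these three sets of atoms transform under a permutation $\pi$ by acting pointwise, and hence $\pi\pact\tf{tx}(x)=\tf{tx}(\pi\pact x)$. Therefore $\tf{Transaction}=\{\tf{tx}(x)\mid x\in\f{atomic}(\ns X)\}$ is an equivariant subset of the expected product type. For non-emptiness of each individual transaction: if $x\in\f{atomic}(\ns X)$ then by Lemma~\ref{lemm.atomic.x.supp} $\posi(x)\neq\varnothing$, and by Lemma~\ref{lemm.supp.lru} this splits as $\f{left}(x)\cup\f{right}(x)\cup\f{up}(x)$, so at least one of these is nonempty, which by Definition~\ref{defn.rG}(\ref{F.arrow}) yields at least one input or output in $\tf{tx}(x)$. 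Hence no $\tf{tx}(x)$ equals $(\varnothing,\varnothing)$, and $\tf{Transaction}\subseteq(\finpow(\tf{Input})\times\finpow(\tf{Output}))\setminus\{(\varnothing,\varnothing)\}$ as required.

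Finally, equivariance of $\nu$ is immediate from the fact that its defining condition $x\mact y<\mtop$ is preserved under simultaneous permutation of $x$ and $y$ (Definition~\ref{defn.zfa.perm} gives that $\pact$ commutes with $\mact$ and fixes $\mtop$ since $\mtop$ is equivariant). There is no real obstacle here; the proof is simply a matter of assembling the prior results and checking the list of conditions. The only mildly subtle point is convincing oneself that equivariance propagates through the construction of $\tf{tx}$ via $\f{left}$, $\f{right}$, and $\f{up}$, but this follows mechanically from the fact that every piece of structure used to define those operations is itself equivariant.
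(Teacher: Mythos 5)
Your proposal is correct and follows essentially the same route as the paper: verify the clauses of Definition~\ref{defn.solution}, with the injectivity of $\nu$ delegated to Corollary~\ref{corr.nu.inj}. The paper's own proof dismisses everything else as routine in a single sentence, whereas you usefully spell out the equivariance checks and the exclusion of the empty transaction (via Lemma~\ref{lemm.atomic.x.supp} and Lemma~\ref{lemm.supp.lru}); these elaborations are sound and fill in details the paper elides, but they do not change the underlying argument.
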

\begin{proof}
We just need to check the conditions of Definition~\ref{defn.solution}; the only nontrivial part is that $\nu$ is an injection, and that is Corollary~\ref{corr.nu.inj}.
\end{proof}

\subsection{Action on arrows}

\begin{defn}
\label{defn.rG.arrow}
Suppose $\rg:\ns X\to\ns Y\in\tf{ACS}$ and recall from Definition~\ref{defn.ieutxo.category}(\ref{ieutxo.arrow}) that an arrow 
$$
\rG(\rg)=\rG(\ns X)\to\rG(\ns Y) \in \tf{IEUTxO} 
$$
should be a mapping from $\tf{Transaction}_{\rG(\ns X)}$ to $\tf{Chunk}_{\rG(\ns Y)}$.
Recall $\f{factor}$ from Definition~\ref{defn.atomic}(\ref{atomic.factorisation}), which factorises non-failure elements into atomic constituents, and recall from Definition~\ref{defn.acs}(\ref{abs.less.than.mtop}) that $\rg$ maps non-failure elements to non-failure elements.

Then define $\rG(\rg)$ by 
\begin{multline*}
\rG(\rg) : \tf{tx}(x) \longmapsto \tf{tx}(y_1)\mact\ldots\mact \tf{tx}(y_n)\in\tf{Chunk}_{\rG(\ns Y)} 
\\
\quad\text{where}\quad \f{factor}(\rg(x))=[y_1,\dots,y_n]\in[\f{atomic}(\ns Y)].
\end{multline*}
\end{defn}

\begin{lemm}
\label{lemm.rGf.well-defined}
$\rG(\rg)$ from Definition~\ref{defn.rG.arrow} is well-defined.
\end{lemm}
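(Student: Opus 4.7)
I would split the verification into three parts. First, for the domain: every element of $\tf{Transaction}_{\rG(\ns X)}$ has the form $\tf{tx}(x)$ for some $x\in\f{atomic}(\ns X)$ by the construction in Definition~\ref{defn.rG}(\ref{F.arrow}), and $x$ is uniquely recoverable from $\tf{tx}(x)$ by Lemma~\ref{lemm.tx.inj}(\ref{tx.inj}). Second, for $\f{factor}$ to apply to $\rg(x)$: since $x$ is atomic we have $x<\mtop_{\ns X}$ by Definition~\ref{defn.atomic.elements}(\ref{atomic.proper}), and then applying Definition~\ref{defn.acs}(\ref{abs.less.than.mtop}) with $x=y$ gives $\rg(x)<\mtop_{\ns Y}$, so $\rg(x)$ lies in the domain of $\f{factor}$ and we obtain $[y_1,\dots,y_n]\in[\f{atomic}(\ns Y)]$ with $y_1\mact\cdots\mact y_n=\rg(x)$. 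Each $\tf{tx}(y_i)$ is then a transaction of $\rG(\ns Y)$, and each $[\tf{tx}(y_i)]$ is a singleton chunk by Lemma~\ref{lemm.lF.gives.valid.singleton.chunks}.

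The substantive third part is showing the concatenation $[\tf{tx}(y_1),\dots,\tf{tx}(y_n)]$ is itself a chunk in $\rG(\ns Y)$. By Lemma~\ref{lemm.invalidity.must.be.somewhere} this reduces to checking, for each $i<j$, that $[\tf{tx}(y_i),\tf{tx}(y_j)]$ is a chunk. I would unfold Definition~\ref{defn.chunk} against the construction of $\rG$: outputs of $\tf{tx}(y_i)$ sit at positions $\f{right}(y_i)\cup\f{up}(y_i)$ carrying data $y_i$ and validator $\ast$; inputs of $\tf{tx}(y_j)$ sit at positions $\f{left}(y_j)$ carrying data $y_j$. Where a position $a$ is shared between an output of $\tf{tx}(y_i)$ and an input of $\tf{tx}(y_j)$, the validation condition collapses, by Definition~\ref{defn.rG}(\ref{F.nu}), to $y_i\mact y_j<\mtop_{\ns Y}$; colliding outputs (or inputs) at a shared position must additionally carry equal data, which will come for free once we show the colliding pair exists only in the ``right meets left'' configuration.

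The key difficulty is extracting these pairwise algebraic facts from the single hypothesis $y_1\mact\cdots\mact y_n<\mtop_{\ns Y}$, since Definition~\ref{defn.monoid.of.chunks}(\ref{acs.locality}) only localises failure in one direction (whole product fails $\Rightarrow$ some pair fails). My plan is to exploit associativity together with the absorbing behaviour of $\mtop$ (Definition~\ref{defn.monoid.of.chunks}(\ref{mtop.cdot})) to conclude that every contiguous subproduct is $<\mtop$, and then use the oriented axioms to rule out the remaining clashes at non-adjacent pairs: a shared $\f{up}$-atom would force failure by Lemma~\ref{lemm.up.fail}, and a left-right mismatch at a shared position would force failure by Definition~\ref{defn.oriented}(\ref{oriented.lr.empty}), while Corollary~\ref{corr.lr.tx.empty} rules out an atom pointing both left and right inside a single $y_k$. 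What survives at any shared position is exactly the validating $\f{right}(y_i)\cap\f{left}(y_j)$ configuration, in which Definition~\ref{defn.rG}(\ref{F.nu}) succeeds and the chunk conditions of Definition~\ref{defn.chunk} hold.
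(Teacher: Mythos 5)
Your overall route is sound and is essentially a fully-expanded version of what the paper compresses into a one-line appeal to Lemma~\ref{lemm.validity.preserved} (together with the unnumbered sanity check after Definition~\ref{defn.rG}, that $[\tf{tx}(x),\tf{tx}(y)]$ is a chunk iff $x\mact y<\mtop$): reduce to pairwise chunk-hood via Lemma~\ref{lemm.invalidity.must.be.somewhere}, then observe that at any shared position the only surviving configuration is ``$\f{right}$ of the earlier meets $\f{left}$ of the later'', which is exactly Lemma~\ref{lemm.make.link}. Your first two parts (injectivity of $\tf{tx}$, and $\rg(x)<\mtop_{\ns Y}$ so that $\f{factor}$ applies) are correct.

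The gap is in your handling of non-adjacent pairs. Everything in your third part --- ruling out shared $\f{up}$-atoms via Lemma~\ref{lemm.up.fail}, ruling out left--right mismatches via Definition~\ref{defn.oriented}(\ref{oriented.lr.empty}), \emph{and} getting the validator $\nu(\ast)$ to accept in the good configuration --- reduces to the single fact $y_i\mact y_j<\mtop_{\ns Y}$ for \emph{every} $i<j$, not just adjacent ones. Your proposed mechanism (associativity plus the absorbing law for $\mtop$) only shows that \emph{contiguous} subproducts are below $\mtop$; for non-adjacent $i<j$ the product $y_i\mact y_j$ is not a contiguous subproduct, and the oriented axioms cannot rescue you here: Lemma~\ref{lemm.one.fails} tells you that when $\posi(y_i)\cap\posi(y_j)\neq\varnothing$ at least one of $y_i\mact y_j$, $y_j\mact y_i$ equals $\mtop$, which is the opposite of what you need, and gives no contradiction with $y_1\mact\cdots\mact y_n<\mtop$ unless you can relate $y_i\mact y_j$ to the whole product. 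The missing step is supplied by the \emph{increasing} and \emph{monotone} clauses of Definition~\ref{defn.monoid.of.chunks}: writing $A=y_1\mact\cdots\mact y_i$ and $B=y_{i+1}\mact\cdots\mact y_n$, increasingness gives $y_i\leq A$ and $y_j\leq B$, monotonicity then gives $y_i\mact y_j\leq A\mact B=y_1\mact\cdots\mact y_n<\mtop$, and since $\mtop$ is the top of a partial order this forces $y_i\mact y_j<\mtop$. With that inequality in hand, Lemma~\ref{lemm.make.link} and Corollary~\ref{corr.lr.tx.empty} do exactly what you intend, and your argument closes.
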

\begin{proof}
We must check that $\tf{tx}(y_1)\mact\ldots\mact \tf{tx}(y_n)$ is a chunk;
this follows by Lemma~\ref{lemm.validity.preserved}.
\end{proof}

\begin{prop}
\label{prop.rG.functor}
The map $\rG$, with the action on abstract chunk systems from Definition~\ref{defn.rG}, and with the action on arrows from Definition~\ref{defn.rG.arrow}, is a functor
$$
\rG : \tf{ACS}\to\tf{IEUTxO} .
$$
\end{prop}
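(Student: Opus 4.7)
The plan is to verify the four obligations of a functor: $\rG$ sends objects to objects, $\rG$ sends arrows to arrows, and $\rG$ preserves identities and composition. Well-definedness on objects is already done: by Corollary~\ref{corr.rG.IEUTxO}, $\rG(\ns X)\in\tf{IEUTxO}$ whenever $\ns X\in\tf{ACS}$, and Lemma~\ref{lemm.rGf.well-defined} ensures that the image of each transaction under $\rG(\rg)$ is a chunk in $\rG(\ns Y)$.

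The main step --- and the place where the ACS axioms really earn their keep --- is to show that $\rG(\rg)$ satisfies condition~\ref{ieutxo.arrow} of Definition~\ref{defn.ieutxo.category}; that is, if $[\tf{tx}(x_1),\tf{tx}(x_2)]\in\tf{Chunk}_{\rG(\ns X)}$ then $\rG(\rg)(\tf{tx}(x_1))\mact\rG(\rg)(\tf{tx}(x_2))\in\tf{Chunk}_{\rG(\ns Y)}$. I would chain together the following observations: by the easy sanity check between Definition~\ref{defn.rG} and Lemma~\ref{lemm.lF.gives.valid.singleton.chunks}, the hypothesis is equivalent to $x_1\mact x_2<\mtop_{\ns X}$; by the homomorphism conditions Definition~\ref{defn.acs}(\ref{abs.less.than.mtop},\ref{f.cdot}) this forces $\rg(x_1)\mact\rg(x_2)=\rg(x_1\mact x_2)<\mtop_{\ns Y}$; and by the monoidal factorisation property Definition~\ref{defn.atomic}(\ref{factorisation.monoidal}) we have $\f{factor}(\rg(x_1)\mact\rg(x_2))=\f{factor}(\rg(x_1))\mact\f{factor}(\rg(x_2))$. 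Unravelling Definition~\ref{defn.rG.arrow}, this equality is exactly what says $\rG(\rg)(\tf{tx}(x_1))\mact\rG(\rg)(\tf{tx}(x_2))$ is the $\tf{tx}$-image of the factorisation of a non-$\mtop$ element of $\ns Y$, and hence a chunk in $\rG(\ns Y)$.

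Preservation of identities is quick: $\rG(\f{id}_{\ns X})$ maps $\tf{tx}(x)$ to the $\tf{tx}$-image of $\f{factor}(x)$, which for atomic $x$ is the singleton $[x]$, giving $[\tf{tx}(x)] = \f{id}_{\rG(\ns X)}(\tf{tx}(x))$. Preservation of composition is the expected pointwise calculation using the homomorphism property of $\rg$ and $\rg'$ and Definition~\ref{defn.atomic}(\ref{factorisation.monoidal}): starting from $\tf{tx}(x)$, applying $\rG(\rg)$ and then $\rG(\rg')$ yields the $\tf{tx}$-image of $\f{factor}(\rg'(y_1))\mact\ldots\mact\f{factor}(\rg'(y_n))$ where $[y_1,\dots,y_n]=\f{factor}(\rg(x))$; the monoidal property of $\f{factor}$ together with $\rg'$ being a monoid homomorphism collapses this to $\f{factor}(\rg'(\rg(x)))=\f{factor}((\rg'\rg)(x))$, which is precisely the image under $\rG(\rg'\rg)$.

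The main obstacle I anticipate is just the bookkeeping in the preceding paragraph: keeping straight when $\mact$ denotes monoid composition in $\ns X$, in $\ns Y$, or list concatenation of factorisations, and confirming that Proposition~\ref{prop.ACS.morphism.atomic}(\ref{atomic.equality.test}) is not needed here because we are comparing the two candidate arrows at the level of transactions $\tf{tx}(x)$ directly, rather than on atomic elements of an ACS. No new inequalities or orientation arguments are required; the heavy lifting has all been done in Definitions~\ref{defn.rG} and~\ref{defn.rG.arrow} and in Corollaries~\ref{corr.nu.inj} and~\ref{corr.rG.IEUTxO}.
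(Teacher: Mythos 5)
Your proposal is correct and follows essentially the same route as the paper: the paper's own proof likewise delegates well-definedness on objects and arrows to Corollary~\ref{corr.rG.IEUTxO} and Lemma~\ref{lemm.rGf.well-defined}, and reduces the remaining work to checking $\rG(\rg'\,\rg)=\rG(\rg')\rG(\rg)$ via the monoidal factorisation property Definition~\ref{defn.atomic}(\ref{factorisation.monoidal}). You simply spell out in more detail the steps (the sanity-check equivalence $\tf{tx}(x_1)\mact\tf{tx}(x_2)$ a chunk iff $x_1\mact x_2<\mtop$, the arrow condition of Definition~\ref{defn.ieutxo.category}, and identity preservation) that the paper compresses into ``a routine argument from the definitions''.
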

\begin{proof}
Given the results above, the only remaining thing to check is that if $\rg:\ns X\to \ns Y$ and $\rg':\ns Y\to\ns Z$ then $\rG(\rg'\,\rg)=\rG(\rg')\rG(\rg)$.
This follows by a routine argument from the definitions, using Definition~\ref{defn.atomic}(\ref{factorisation.monoidal}).
\end{proof}

\begin{rmrk}[Comment on design]
\label{rmrk.why.factorisation}
The significance of condition~\ref{atomic.factorisation} of Definition~\ref{defn.atomic} is not that elements can be factored into atomic elements --- this follows already from condition~\ref{atomic.generated} --- but that a factorisation can be \emph{selected}, as a monoid homomorphism. 

We use this condition in just one place: to define the action of $\rG$ on arrows in Definition~\ref{defn.rG.arrow} and prove it functorial in Proposition~\ref{prop.rG.functor}.

It would be legitimate to remove condition~\ref{atomic.factorisation} of Definition~\ref{defn.atomic}.
This would exhibit our category $\tf{ACS}$ as a subcategory of a larger category which would include more objects --- `even more abstract' abstract chunk systems --- at a cost of no longer being able to functorially map this larger space back down to $\tf{IEUTxO}$.

Intuitively, this larger space behaves more like a space of all possible denotations rather than a space of $\tf{IEUTxO}$-representable ones, which might be worthwhile if it admits other interesting examples; whether or not this will be the case cannot be predicted at time of writing.

Note that the action on \emph{objects} from Definition~\ref{defn.rG} would still be well-defined even without Definition~\ref{defn.atomic}(\ref{atomic.factorisation}), so that we can still represent our `even more abstract' abstract chunk systems concretely in IEUTxO models: this just would not correspond to a functor.\footnote{Thus, the $\rG$ in the loop of embeddings in Remark~\ref{rmrk.loop.of.embeddings} would weaken a mapping on objects.}
\end{rmrk}

\section{An adjunction between \texorpdfstring{$\lF:\tf{IEUTxO}\to\tf{ACS}$}{F : IEUTxO -> ACS} and \texorpdfstring{$\rG:\tf{ACS}\to\tf{IEUTxO}$}{G : ACS -> IEUTxO}}

\subsection{The counit map \texorpdfstring{$\epsilon_{\ns X}:\lF\rG(\ns X)\to\ns X$}{epsilon : FG(X) -> X} exists and is a surjection}

\begin{rmrk}
\label{rmrk.epsilon}
Suppose $\ns X\in\tf{ACS}$; we wish to define an arrow $\epsilon_{\ns X}:\lF\rG(\ns X)\to\ns X\in\tf{ACS}$.
Unpacking Definitions~\ref{defn.rG} and~\ref{defn.lF}, we see that an $x\in \lF\rG(\ns X)$ has one of the following forms:
\begin{itemize*}
\item
$x=\f{fail}_{\lF\rG(\ns X)}$ for $\f{fail}_{\lF\rG(\ns X)}$ the \emph{failure} element added by $\lF$ to $\tf{Chunk}_{\rG(\ns X)}$ in Definition~\ref{defn.lF}.
\item
$x=[\tf{tx}(x_1),\dots,\tf{tx}(x_n)]$ for some unique $[x_1,\dots,x_n]\in[\f{atomic}(\ns X)]$.
\end{itemize*}
We also know from Lemma~\ref{lemm.tx.inj} that $\tf{tx}:\f{atomic}(\ns X)\to\tf{Transaction}(\rG(\ns X))$ is injective, and it follows that we can recover each $x_i$ from the unique corresponding $[\tf{tx}(x_i)]$ above.
\end{rmrk}

\begin{defn}
\label{defn.epsilon}
Let $\epsilon_{\ns X}:\lF\rG(\ns X)\to\ns X$ be determined by:
\begin{enumerate*}
\item
$\epsilon_{\ns X}([])=\mbot_{\ns X}$ (this would be a special case of the next clause, for $n=0$)
\item
$\epsilon_{\ns X}([\tf{tx}(x_1),\dots,\tf{tx}(x_n)]) = x_1\mact\ldots\mact x_n$ for $n\geq 1$ and $x_1,\dots,x_n\in\f{atomic}(\ns X)$ 
\item
$\epsilon_{\ns X}(\mtop_{\lF\rG(\ns X)}) = \mtop_{\ns X}$
\end{enumerate*}
\end{defn}

\begin{lemm}
Definition~\ref{defn.epsilon} is well-defined and determines an arrow in $\tf{ACS}$.
\end{lemm}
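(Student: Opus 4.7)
The plan is first to establish that Definition~\ref{defn.epsilon} actually specifies a single-valued function, and then to verify the three arrow conditions from Definition~\ref{defn.acs}(\ref{acs.arrow}) in turn. For well-definedness: by Proposition~\ref{prop.lF.monoid.atomic}, $\lF\rG(\ns X)$ is perfectly atomic, so each non-$\mtop$ element factorises \emph{uniquely} as a list of atomic chunks; by Lemma~\ref{lemm.lF.gives.valid.singleton.chunks}(\ref{atomic.lF}) these atomic chunks are exactly the singletons $[\tf{tx}(x)]$ for $x\in\f{atomic}(\ns X)$; and by Lemma~\ref{lemm.tx.inj}(\ref{tx.inj}) the correspondence $x\leftrightarrow\tf{tx}(x)$ is injective, so the list $[x_1,\ldots,x_n]$ used in Definition~\ref{defn.epsilon} is recovered unambiguously from the input. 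Thus the three clauses jointly specify a total function $\epsilon_{\ns X}:\lF\rG(\ns X)\to\ns X$.

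The technical heart of the verification is the following correspondence, which I would state and prove as a preliminary lemma: a list $[\tf{tx}(z_1),\ldots,\tf{tx}(z_k)]$ is a chunk in $\rG(\ns X)$ if and only if $z_1\mact\ldots\mact z_k<\mtop_{\ns X}$. The forward direction uses Corollary~\ref{corr.sublist.inclusion.chunks} to reduce to pairs $[\tf{tx}(z_i),\tf{tx}(z_j)]$ with $i<j$. A case split driven by Definition~\ref{defn.rG}(\ref{F.arrow}) (inputs at $\f{left}$, outputs at $\f{right}\cup\f{up}$) together with Lemma~\ref{lemm.tx.inj}(\ref{tx.posi.pos}) shows that any shared position of the pair must lie in $\f{right}(z_i)\cap\f{left}(z_j)$; the alternatives are ruled out by uniqueness of input/output positions (Definition~\ref{defn.chunk}(\ref{blockchain.at.most.one.output},\ref{blockchain.in.out})), by the earlier-output requirement (Definition~\ref{defn.chunk}(\ref{chunk.earlier})), or by Lemma~\ref{lemm.up.fail} for atoms in $\f{up}$. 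Validation of the shared position via $\nu(\ast)$ from Definition~\ref{defn.rG}(\ref{F.nu}) then forces $z_i\mact z_j<\mtop_{\ns X}$, and disjoint supports give the same conclusion by Definition~\ref{defn.oriented}(\ref{oriented.fresh.defined}). The contrapositive of Definition~\ref{defn.monoid.of.chunks}(\ref{acs.locality}) upgrades this pairwise statement to the full composition. For the reverse direction, the same case analysis shows that $[\tf{tx}(z_i),\tf{tx}(z_j)]$ fails to be a chunk precisely when $z_i\mact z_j=\mtop_{\ns X}$; then iterated monotonicity and increasingness yield $z_i\mact z_j\leq z_1\mact\ldots\mact z_k$, so the whole composition is $\mtop_{\ns X}$, and Lemma~\ref{lemm.invalidity.must.be.somewhere} handles the general case.

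With this correspondence established, the three arrow conditions are routine. Condition~(\ref{acs.mbot.mtop}) is immediate from Definition~\ref{defn.epsilon}. For condition~(\ref{abs.less.than.mtop}), sublist inclusion $x\leq y<\mtop$ in $\lF\rG(\ns X)$ makes the factorisation of $x$ a sublist of that of $y$; iterated monotonicity and increasingness of $\mact$ in $\ns X$ give $\epsilon_{\ns X}(x)\leq\epsilon_{\ns X}(y)$, while the correspondence guarantees $\epsilon_{\ns X}(y)<\mtop_{\ns X}$. For condition~(\ref{f.cdot}), I would split on whether $x\mact y$ is $\mtop$ in $\lF\rG(\ns X)$: when the concatenation is a chunk both sides compute the same composition in $\ns X$; when it is not, both sides evaluate to $\mtop_{\ns X}$ by the correspondence and by absorption of $\mtop$ under $\mact$.

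The principal obstacle is the preliminary correspondence, and specifically the case analysis that pins down what a shared position in a valid pair $[\tf{tx}(z_i),\tf{tx}(z_j)]$ is allowed to look like. Once that is in place, the rest of the proof is a pleasant combination of the algebraic axioms from Definitions~\ref{defn.monoid.of.chunks} and~\ref{defn.oriented} and the perfectly atomic structure of $\lF\rG(\ns X)$.
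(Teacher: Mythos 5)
Your proof is correct and follows essentially the same route as the paper: well-definedness via the injectivity of $x\mapsto\tf{tx}(x)$ (Lemma~\ref{lemm.tx.inj}) together with unique factorisation in the perfectly atomic monoid $\lF\rG(\ns X)$, followed by a check of the three arrow conditions of Definition~\ref{defn.acs}. The only difference is one of explicitness: your ``preliminary correspondence'' ($[\tf{tx}(z_1),\dots,\tf{tx}(z_k)]$ is a chunk iff $z_1\mact\ldots\mact z_k<\mtop_{\ns X}$) is the $n$-fold extension of the unnamed sanity-check lemma stated immediately after Definition~\ref{defn.rG}, and it correctly fills in the steps the paper dismisses as ``routine'' or ``a fact of Definition~\ref{defn.epsilon}''.
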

\begin{proof}
Well-definedness follows as per Remark~\ref{rmrk.epsilon} from Lemma~\ref{lemm.tx.inj}, since we can recover each $x_i$ from its $[\tf{tx}(x_i)]$.
It remains to check the arrow conditions from Definition~\ref{defn.acs}(\ref{acs.arrow}): 
\begin{enumerate*}
\item
\emph{We check that $\epsilon_{\ns X}(\mbot_{\lF\rG(\ns X)})=\mbot_{\ns X}$ and $\epsilon(\mtop_{\lF\rG(\ns X)})=\mtop_{\ns X}$.} 

A fact of Definition~\ref{defn.epsilon}.
\item
\emph{We check that $x\leq y<\mtop_{\lF\rG(\ns X)}$ implies $\epsilon_{\ns X}(x)\leq \epsilon_{\ns X}(y)<\mtop_{\ns X}$.} 

By Proposition~\ref{prop.lF.monoid.atomic} $\lF\rG(\ns X)$ is perfectly atomic, and if $x\leq y<\mtop_{\lF\rG(\ns X)}$ then $x$ and $y$ factorise uniquely into a composition of lists of singleton chunks, 
which by construction in Definition~\ref{defn.rG}(\ref{F.arrow}) have the form $\tf{tx}(x_i)$ and $\tf{tx}(y_j)$, 
such that 
the factorisation of $x$ is a sublist of the factorisation of $y$.  

The result follows by a routine calculation from Definition~\ref{defn.epsilon}.
\item
$\epsilon_{\ns X}(x)\mact \epsilon_{\ns X}(y)= \epsilon_{\ns X}(x\mact y)$ 

A fact of Definition~\ref{defn.epsilon}, again using the fact that by Proposition~\ref{prop.lF.monoid.atomic} $\lF\rG(\ns X)$ is perfectly atomic.
\end{enumerate*}
\end{proof}

\begin{prop}
\label{prop.epsilon.bijective}
The counit map $\epsilon_{\ns X}:\lF\rG(\ns X)\to\ns X\in\tf{ACS}$ is a surjection on underlying sets.
\end{prop}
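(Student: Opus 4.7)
The plan is to show every $x\in\ns X$ lies in the image of $\epsilon_{\ns X}$ by exploiting atomicity of $\ns X$. The two boundary cases are immediate from Definition~\ref{defn.epsilon}: $\mtop_{\ns X}=\epsilon_{\ns X}(\mtop_{\lF\rG(\ns X)})$ and $\mbot_{\ns X}=\epsilon_{\ns X}([])$. So I would focus on an arbitrary $x\in\ns X$ with $x<\mtop_{\ns X}$ (the case $x=\mbot_{\ns X}$ being subsumed by $n=0$ in what follows).

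Since $\ns X$ is an abstract chunk system, it is atomic, so by Lemma~\ref{lemm.atomic.factor} I can write $x=x_1\mact\ldots\mact x_n$ with $x_1,\ldots,x_n\in\f{atomic}(\ns X)$. By Lemma~\ref{lemm.lF.gives.valid.singleton.chunks} each $[\tf{tx}(x_i)]$ is a singleton chunk in $\rG(\ns X)$, hence a non-failure element of $\lF\rG(\ns X)$. I then form the composition $[\tf{tx}(x_1)]\mact\ldots\mact[\tf{tx}(x_n)]$ inside $\lF\rG(\ns X)$, which by Definition~\ref{defn.lF} must be either the concatenated chunk $[\tf{tx}(x_1),\ldots,\tf{tx}(x_n)]$ or the failure element $\mtop_{\lF\rG(\ns X)}$.

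The key step is to rule out the second alternative. Rather than directly verifying the chunk conditions of Definition~\ref{defn.chunk} for the concatenation (which would require a case analysis on positions and validation), I would leverage the fact that $\epsilon_{\ns X}$ is an arrow in $\tf{ACS}$. As a monoid homomorphism (Definition~\ref{defn.acs}(\ref{f.cdot})) it sends the composition to $\epsilon_{\ns X}([\tf{tx}(x_1)])\mact\ldots\mact\epsilon_{\ns X}([\tf{tx}(x_n)])=x_1\mact\ldots\mact x_n=x<\mtop_{\ns X}$; but by Definition~\ref{defn.acs}(\ref{acs.mbot.mtop}) it would send $\mtop_{\lF\rG(\ns X)}$ to $\mtop_{\ns X}$. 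These are incompatible. Hence the composition really is the valid chunk $[\tf{tx}(x_1),\ldots,\tf{tx}(x_n)]$, and a direct application of Definition~\ref{defn.epsilon} gives $\epsilon_{\ns X}([\tf{tx}(x_1),\ldots,\tf{tx}(x_n)])=x_1\mact\ldots\mact x_n=x$, exhibiting the required preimage.

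The main (mild) obstacle is spotting this indirect argument: it is tempting to try to prove the concatenation is a chunk by unpicking its positions and validators using the structure of $\rG$, but the homomorphism trick sidesteps that combinatorial work entirely and keeps the proof short.
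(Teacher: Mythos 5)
Your proposal is correct and follows the same route as the paper's proof: dispose of $\mtop_{\ns X}$ separately, factorise $x=x_1\mact\ldots\mact x_n$ into atomic elements using atomicity of $\ns X$, and exhibit $[\tf{tx}(x_1),\dots,\tf{tx}(x_n)]$ as a preimage under $\epsilon_{\ns X}$. The one place you go beyond the paper is the step you flag as the ``key step'': the paper simply reads the value of $\epsilon_{\ns X}$ off Definition~\ref{defn.epsilon} and leaves implicit the fact that $[\tf{tx}(x_1),\dots,\tf{tx}(x_n)]$ really is a chunk of $\rG(\ns X)$ (rather than a composition that collapses to $\f{fail}$ in $\lF\rG(\ns X)$), whereas you verify it. Your indirect argument for this is sound and not circular: the lemma immediately preceding the proposition establishes that $\epsilon_{\ns X}$ is an arrow in $\tf{ACS}$, so you are entitled to the homomorphism property of Definition~\ref{defn.acs}(\ref{f.cdot}) together with $\epsilon_{\ns X}(\mtop_{\lF\rG(\ns X)})=\mtop_{\ns X}$, and the incompatibility with $x<\mtop_{\ns X}$ rules out failure without any combinatorial analysis of positions and validators. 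So your write-up is, if anything, slightly more complete than the paper's; the only alternative would be a direct check of the chunk conditions via Lemma~\ref{lemm.tx.inj} and the construction of $\nu(\ast)$ in Definition~\ref{defn.rG}, which your homomorphism trick legitimately avoids.
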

\begin{proof}
Suppose we are given $x\in\ns X$; we want to exhibit an element in $\lF\rG(\ns X)$ that maps to it under $\epsilon_{\ns X}$.

If $x=\mtop_{\ns X}$ then by construction in Definition~\ref{defn.lF} $\f{fail}_{\lF\rG(\ns X)} = \mtop_{\lF\rG(\ns X)}$ and 
also by construction $\epsilon_{\ns X}(\f{fail}_{\lF\rG(\ns X)})=\mtop_{\ns X}$ so we are done.

Otherwise by Proposition~\ref{prop.lF.monoid.atomic} (or just by Definition~\ref{defn.atomic}) we can write 
$$
x=x_1\mact\ldots\mact x_n
$$ 
for some atomic $x_1,\dots,x_n\in\f{atomic}(\ns X)$, and 
looking at Definition~\ref{defn.epsilon} we immediately have that 
$$
\epsilon_{\ns X} ([\tf{tx}(x_1),\dots,\tf{tx}(x_n)]) = x_1\mact\ldots\mact x_n = x.
$$
\end{proof}

\begin{rmrk}[Comment on design] 
\label{rmrk.factor}
By Proposition~\ref{prop.epsilon.bijective} $\epsilon_{\ns X}$ surjects $\lF\rG(\ns X)$ onto $\ns X$ as sets, but is it not necessarily surjective on the $\leq$ structure, meaning that $\epsilon_{\ns X}(\f{ch})\leq_{\ns X}\epsilon_{\ns X}(\f{ch}')$ does not imply $\f{ch}\leq_{\lF\rG(\ns X)} \f{ch}'$.

We can have this if we add condition~\ref{perfectly.atomic.leq} of Definition~\ref{defn.atomic}, that:
if $x\leq y<\mtop$ then $\f{factor}(x)\leq\f{factor}(y)$ (the right-hand $\leq$ denotes sublist inclusion; the left-hand $\leq$ is the partial order on $\ns X$).  
More on this in Proposition~\ref{prop.why.pure}.
Further tweaks to the design of abstract chunk systems are also mentioned in Remark~\ref{rmrk.why.factorisation}.
\end{rmrk}

\subsection{The unit map \texorpdfstring{$\eta_{\mathbb T}:\mathbb T\to \rG\lF(\mathbb  T)$}{eta : T -> GF(T)} exists and is an isomorphism}

\begin{rmrk}
Suppose $\mathbb T=(\alpha,\beta,\tf{Transaction},\tf{Validator})\in\tf{IEUTxO}$.
We wish to define an arrow 
$$
\eta_{\mathbb T}:\mathbb T\to \rG\lF(\mathbb T) .
$$
We can make some observations:
\begin{itemize}
\item
By construction in Definition~\ref{defn.lF}, $\lF(\mathbb T)$ is isomorphic as a partial ordering to $\mathbb T$, with the addition of the $\f{fail}$ top element.

As noted in Proposition~\ref{prop.lF.monoid.atomic}
it follows that the atomic elements of $\lF(\mathbb T)$ correspond precisely with the singleton chunks in $\tf{Chunk}_{\mathbb T}$, and thus with transactions in $\tf{Transaction}_{\mathbb T}$.

The chunks in $\lF(\mathbb T)$ are then determined by combining the singleton chunks, subject to the well-formedness conditions of Definition~\ref{defn.chunk} and the locality properties noted in Lemma~\ref{lemm.invalidity.must.be.somewhere}.
\item
By construction in Definition~\ref{defn.rG} the atomic elements of $\rG\lF(\mathbb T)$ are isomorphic to $\f{atomic}(\lF(\mathbb T))$, and by Lemma~\ref{lemm.lF.gives.valid.singleton.chunks}(\ref{atomic.lF}) also to $\tf{Transaction}_{\mathbb T}$.
\end{itemize}
\end{rmrk}

\begin{defn}
\label{defn.eta}
Let $\eta_{\mathbb T}:\mathbb T\to \rG\lF(\mathbb T)$ be determined by mapping $\tx\in\tf{Transaction}_{\mathbb T}$ to $[\tf{tx}(\tx)]\in \tf{Chunk}_{\rG\lF(\mathbb T)}$.
Thus using Lemma~\ref{lemm.validity.preserved} we have:
$$
\eta_{\mathbb T}([\tx_1,\dots,\tx_n]) = [\tf{tx}(\tx_1),\dots,\tf{tx}(\tx_n)] .
$$ 
\end{defn}

\begin{lemm}
\label{lemm.eta.up}
If 
\begin{itemize*}
\item
$x=[\tx_1,\dots,\tx_n]\in\tf{Chunk}_{\mathbb T}$, then 
\item
$\eta_{\mathbb T}(x)=[\tf{tx}(\tx_1),\dots,\tf{tx}(\tx_n)]\in\tf{Chunk}_{\rG\lF(\mathbb T)}$.
\end{itemize*}
As a corollary, Definition~\ref{defn.eta} does indeed map chunks to chunks. 
\end{lemm}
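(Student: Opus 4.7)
The plan is to reduce the chunk condition in $\rG\lF(\mathbb T)$ back to pairwise chunk validity in $\mathbb T$, using the sanity check lemma stated immediately after Definition~\ref{defn.rG} together with Lemma~\ref{lemm.invalidity.must.be.somewhere} applied twice (once in each direction). The corollary will then drop out directly from the statement by unwinding the action of $\eta_{\mathbb T}$ in Definition~\ref{defn.eta}.

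More specifically, I would first observe that atomic elements of $\lF(\mathbb T)$ are exactly the singleton chunks $[\tx]$ by Proposition~\ref{prop.lF.monoid.atomic}, so the notation $\tf{tx}(\tx_i)$ in Definition~\ref{defn.eta} really means $\tf{tx}([\tx_i])$ as per Definition~\ref{defn.rG}(\ref{F.arrow}). Assume $x=[\tx_1,\dots,\tx_n]\in\tf{Chunk}_{\mathbb T}$. By Corollary~\ref{corr.sublist.inclusion.chunks} (down-closure of validity), for every $1\leq i<j\leq n$ the pair $[\tx_i,\tx_j]$ is itself a chunk in $\mathbb T$, and therefore $[\tx_i]\mact[\tx_j]=[\tx_i,\tx_j]<\mtop_{\lF(\mathbb T)}$ by the definition of composition in $\lF(\mathbb T)$ (Definition~\ref{defn.lF}).

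Next, I would apply the sanity check lemma after Definition~\ref{defn.rG} to conclude that, for every $1\leq i<j\leq n$, the two-element list $[\tf{tx}([\tx_i]),\tf{tx}([\tx_j])]$ is a chunk in $\rG\lF(\mathbb T)$. At this point Lemma~\ref{lemm.invalidity.must.be.somewhere} applied to the model $\rG\lF(\mathbb T)$ lets us lift this local pairwise validity to global validity of the whole list $[\tf{tx}([\tx_1]),\dots,\tf{tx}([\tx_n])]\in\tf{Chunk}_{\rG\lF(\mathbb T)}$, which is exactly $\eta_{\mathbb T}(x)$.

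The corollary then follows immediately: by Definition~\ref{defn.eta}, $\eta_{\mathbb T}$ is defined on chunks by $\eta_{\mathbb T}([\tx_1,\dots,\tx_n])=[\tf{tx}(\tx_1),\dots,\tf{tx}(\tx_n)]$, and the main clause of the lemma shows this lies in $\tf{Chunk}_{\rG\lF(\mathbb T)}$. The only mild subtlety, and the place one needs to be careful, is the double use of Lemma~\ref{lemm.invalidity.must.be.somewhere} across two different IEUTxO models together with the identification of atomic elements of $\lF(\mathbb T)$ with singleton chunks, i.e. with transactions of $\mathbb T$; everything else is bookkeeping around Definitions~\ref{defn.lF}, \ref{defn.rG}, and~\ref{defn.eta}.
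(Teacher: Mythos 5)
Your proof is correct, but it takes a genuinely different route from the paper. The paper's proof is a direct verification: it checks the chunk conditions of Definition~\ref{defn.chunk} for the list $[\tf{tx}(\tx_1),\dots,\tf{tx}(\tx_n)]$ by unwinding Definition~\ref{defn.rG}(\ref{F.arrow}) and using Proposition~\ref{prop.left.to.blocked}, which pins down $\f{left}([\tx_i])$, $\f{right}([\tx_i])$, and $\f{up}([\tx_i])$ exactly as $\utxi$, $\utxo$, $\stx$ adjusted for blocked channels; this intensional bookkeeping is then reused immediately afterwards in the lemma computing $\utxi(\eta_{\mathbb T}(x))$ etc. You instead avoid touching $\f{left}$/$\f{right}$/$\f{up}$ altogether: you push the $n$-ary chunk condition down to pairs via Lemma~\ref{lemm.invalidity.must.be.somewhere} (applied in $\mathbb T$ via Corollary~\ref{corr.sublist.inclusion.chunks} and then again in $\rG\lF(\mathbb T)$), and dispatch each pair with the sanity-check lemma following Definition~\ref{defn.rG}, which converts ``$[\tf{tx}([\tx_i]),\tf{tx}([\tx_j])]$ is a chunk'' into ``$[\tx_i]\mact[\tx_j]<\mtop_{\lF(\mathbb T)}$''. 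This is more modular --- all the definitional unwinding is quarantined inside that sanity-check lemma --- at the price of yielding less information about \emph{which} channels of $\eta_{\mathbb T}(x)$ are inputs, outputs, or spent. One small point to tidy: Lemma~\ref{lemm.invalidity.must.be.somewhere}'s pairwise condition is vacuous for $n\leq 1$, so for completeness you should note separately that $[]$ is a chunk and that each singleton $[\tf{tx}([\tx_i])]$ is a chunk, the latter being Lemma~\ref{lemm.lF.gives.valid.singleton.chunks}(2); for $n\geq 2$ your pairwise checks already subsume the single-transaction conditions.
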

\begin{proof}
By a routine check on Definition~\ref{defn.rG}(\ref{F.arrow}), using Proposition~\ref{prop.left.to.blocked}.
\end{proof}

\begin{lemm}
Recall the notions of $\utxi$, $\utxo$, and $\stx$ from Definition~\ref{defn.utxo.utxi} and the notion of $\f{blockedUtxi}$ from Definition~\ref{defn.blocked.utxo}.
Recall $\eta$ from Definition~\ref{defn.eta} and suppose $x\in\ns X\in\tf{ACS}$.
Then:
\begin{enumerate*}
\item
$\f{utxi}(\eta_{\mathbb T}(x))=\f{utxi}(x)\setminus\f{blockedUtxi}(x)$ 
\item
$\f{utxo}(\eta_{\mathbb T}(x))=\f{utxo}(x)\cup\f{blockedUtxi}(x)$ 
\item
$\f{stx}(\eta_{\mathbb T}(x))=\f{stx}(x)$ 
\end{enumerate*}
\end{lemm}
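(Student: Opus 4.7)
Write $x=[\tx_1,\dots,\tx_n]\in\tf{Chunk}_{\mathbb T}$, so by Lemma~\ref{lemm.eta.up} we have $\eta_{\mathbb T}(x)=[\tf{tx}([\tx_1]),\dots,\tf{tx}([\tx_n])]\in\tf{Chunk}_{\rG\lF(\mathbb T)}$. The plan is to read off $\utxi$, $\utxo$, and $\stx$ of $\eta_{\mathbb T}(x)$ by first describing each $\tf{tx}([\tx_i])$ locally and then combining across the list. By Definition~\ref{defn.rG}(\ref{F.arrow}), Proposition~\ref{prop.left.to.blocked}, and the fact that $\stx([\tx_i])=\varnothing$ for a singleton chunk, the input positions of $\tf{tx}([\tx_i])$ are $\utxi([\tx_i])\setminus\f{blockedUtxi}([\tx_i])$ and the output positions are $\utxo([\tx_i])\cup\f{blockedUtxi}([\tx_i])$ (the $\f{blockedUtxo}$ contributions cancel out of this expression).

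The key intermediate claim is that $\bigcup_{i=1}^{n}\f{blockedUtxi}([\tx_i])=\f{blockedUtxi}(x)$, and this is the step I expect to be the main obstacle. The inclusion $\subseteq$ is immediate from Corollary~\ref{corr.sublist.inclusion.chunks}: any prefix $\f{ch}'$ witnessing $a\notin\f{blockedUtxi}(x)$ restricts, as a sublist, to one witnessing $a\notin\f{blockedUtxi}([\tx_i])$ for the unique $i$ that owns the input at $a$. For the reverse inclusion, given $a\in\f{blockedUtxi}(x)$ with input in $\tx_i$ and a putative witness $\f{ch}''$ for $a\notin\f{blockedUtxi}([\tx_i])$, I will pick $\pi\in\f{fix}(a)$ that also fixes the full ZFA support of $\tx_i$ (crucially including the support hidden inside its data and validators) while sending every other atom in $\supp(\f{ch}'')$ to a fresh atom outside $\pos(x)\cup\pos(\f{ch}'')$. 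By equivariance $\pi\pact\f{ch}''\mact[\tx_i]$ remains a chunk, and a pairwise check via Lemma~\ref{lemm.invalidity.must.be.somewhere}, combined with Lemma~\ref{lemm.fresh.chunks.defined} applied to the transactions $\tx_k$ for $k\neq i$ (whose positions are now disjoint from those of $\pi\pact\f{ch}''$), shows that $\pi\pact\f{ch}''\mact x$ is a chunk, contradicting $a\in\f{blockedUtxi}(x)$.

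With this claim in hand the three equalities reduce to a finite accounting exercise. For $a\in\stx(x)$ with matched pair in $\tx_j,\tx_i$ ($j<i$): the list $[\tx_j,\tx_i]$ is a sub-chunk of $x$ (Corollary~\ref{corr.sublist.inclusion.chunks}), so neither position is blocked, $\tf{tx}([\tx_j])$ has an output at $a$, $\tf{tx}([\tx_i])$ has an input at $a$, and the $\ast$-validator of Definition~\ref{defn.rG}(\ref{F.nu}) accepts because $[\tx_j]\mact[\tx_i]<\mtop$, giving $a\in\stx(\eta_{\mathbb T}(x))$. Conversely, any output-input match in $\eta_{\mathbb T}(x)$ at $a$ must have its output come from some $\utxo([\tx_j])$-summand of the local output set (the $\f{blockedUtxi}([\tx_j])$-summand is excluded, since it would force two distinct transactions to share an input position, violating Definition~\ref{defn.chunk}(\ref{blockchain.in.out})), forcing $a\in\stx(x)$. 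The total input and output position sets of $\eta_{\mathbb T}(x)$ are therefore $(\utxi(x)\cup\stx(x))\setminus\f{blockedUtxi}(x)$ and $\utxo(x)\cup\stx(x)\cup\f{blockedUtxi}(x)$ respectively (by the local description combined with the intermediate claim, noting that $\f{blockedUtxi}(x)\subseteq\utxi(x)$ is disjoint from $\stx(x)$); subtracting $\stx(\eta_{\mathbb T}(x))=\stx(x)$ from each then yields the remaining two equalities.
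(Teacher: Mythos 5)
Your first paragraph --- reading off the inputs and outputs of $\tf{tx}([\tx_i])$ from Definition~\ref{defn.rG}(\ref{F.arrow}) and Proposition~\ref{prop.left.to.blocked}, with $\stx$ of a singleton empty --- is exactly the ``routine calculation'' the paper intends, and for $x$ a single transaction it already finishes the proof. The fatal problem is the step you flagged yourself: the inclusion $\f{blockedUtxi}(x)\subseteq\bigcup_i\f{blockedUtxi}([\tx_i])$ is false, and the renaming argument cannot repair it. For the renamed witness to still validate $\tx_i$'s input at $a$, your $\pi$ must fix the pointed transaction $\tx_i\at(a,k)$, hence every atom of $\pos(\tx_i)$; but a witness $\f{ch}''$ for $a\notin\f{blockedUtxi}([\tx_i])$ may be forced to carry outputs at \emph{other} input positions $b$ of $\tx_i$, and those atoms lie in the set $\pi$ must fix, so they cannot be moved off $\pos(x)$. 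If such a $b$ is already spent inside $x$, the renamed witness still collides with $x$ at $b$, and the appeal to Lemma~\ref{lemm.fresh.chunks.defined} breaks down. (Separately, the paper imposes no finite-support condition on validators, so a $\pi$ fixing ``the full ZFA support of $\tx_i$'' while moving the rest of $\supp(\f{ch}'')$ need not exist; but the position clash is the decisive issue.)

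Concretely: take $\alpha=\mathbb N$ and $\beta=\{*\}$; let $\tx_i$ have inputs $(a,1)$ and $(b,2)$ and no outputs; let $\tx_j$ have the single output $(b,*,V_b)$ with $V_b=\{(*,\tx_i\at(b,2))\}$; let $t$ have the two outputs $(a,*,V_a)$ and $(b,*,V_b)$ with $V_a=\{(*,\tx_i\at(a,1))\}$; and let $\tf{Transaction}$ be the equivariant closure of $\{\tx_i,\tx_j,t\}$. Then $x=[\tx_j,\tx_i]$ is a chunk with $\utxi(x)=\{a\}$ and $\stx(x)=\{b\}$, and $[t,\tx_i]$ is a chunk, so $a\notin\f{blockedUtxi}([\tx_i])$. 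However, because the distinct keys $1\neq 2$ rigidify $\tx_i$, an equivariance computation shows that \emph{every} output at $a$ in this model validating $\tx_i\at(a,1)$ lives in some $\pi\pact t$ with $\pi(a)=a$ and $\pi(b)=b$, hence carries a second output at $b$ colliding with $\tx_j$'s output; so no $\f{ch}'$ with $a\in\utxo(\f{ch}')$ makes $\f{ch}'\mact x$ a chunk, i.e.\ $a\in\f{blockedUtxi}(x)$. This refutes your intermediate claim, and in fact clause~1 of the lemma itself fails for this $x$: since $a\in\f{left}([\tx_i])$, the transaction $\tf{tx}([\tx_i])$ has an input at $a$ which nothing in $\eta_{\mathbb T}(x)$ spends, so $\utxi(\eta_{\mathbb T}(x))=\{a\}$ while $\utxi(x)\setminus\f{blockedUtxi}(x)=\varnothing$. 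Blockedness is not a local notion, and the stated equalities are only safe when $x$ is a singleton (where $\stx(x)=\varnothing$ and the blocked sets of $x$ and of its unique atomic factor coincide) --- which is the case the paper's one-line proof actually covers.
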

\begin{proof}
By routine calculations using Proposition~\ref{prop.left.to.blocked}.
\end{proof}

\begin{prop}
\label{prop.unit.eta.iso}
The unit map $\eta_{\mathbb T}:\tf{Chunk}_{\mathbb T}\to \tf{Chunk}_{\rG\lF(\mathbb T)}\in\tf{IEUTxO}$ is a bijection on underlying sets.
\end{prop}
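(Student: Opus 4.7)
The plan is to verify bijectivity of $\eta_{\mathbb T}$ on underlying sets by checking injectivity and surjectivity separately; well-definedness as a map $\tf{Chunk}_{\mathbb T}\to\tf{Chunk}_{\rG\lF(\mathbb T)}$ is already Lemma~\ref{lemm.eta.up}.

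Injectivity is immediate. By Definition~\ref{defn.eta}, $\eta_{\mathbb T}$ sends $[\tx_1,\dots,\tx_n]$ to $[\tf{tx}([\tx_1]),\dots,\tf{tx}([\tx_n])]$; by Proposition~\ref{prop.lF.monoid.atomic} the atomic elements of $\lF(\mathbb T)$ are precisely the singleton chunks $[\tx]$ for $\tx\in\tf{Transaction}_{\mathbb T}$, and by Lemma~\ref{lemm.tx.inj}(\ref{tx.inj}) the map $\tf{tx}$ is injective on those atomic elements. So equal image lists force the original transaction-lists to agree.

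For surjectivity, fix $\f{ch}'\in\tf{Chunk}_{\rG\lF(\mathbb T)}$. By Definition~\ref{defn.rG}(\ref{F.arrow}) combined with Proposition~\ref{prop.lF.monoid.atomic}, every transaction in $\rG\lF(\mathbb T)$ has the form $\tf{tx}([\tx])$ for a unique $\tx\in\tf{Transaction}_{\mathbb T}$, so we may write $\f{ch}' = [\tf{tx}([\tx_1]),\dots,\tf{tx}([\tx_n])]$ with the $\tx_i$ uniquely determined. It then suffices to show $[\tx_1,\dots,\tx_n]\in\tf{Chunk}_{\mathbb T}$, since then $\eta_{\mathbb T}([\tx_1,\dots,\tx_n]) = \f{ch}'$ by construction.

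The main obstacle is this last step, and I would attack it by using the locality principle of Lemma~\ref{lemm.invalidity.must.be.somewhere} to reduce to showing each pair $[\tx_i,\tx_j]$ with $i<j$ is a chunk in $\mathbb T$. Corollary~\ref{corr.sublist.inclusion.chunks} applied in $\rG\lF(\mathbb T)$ tells us $[\tf{tx}([\tx_i]),\tf{tx}([\tx_j])]$ is a chunk in $\rG\lF(\mathbb T)$. Writing $x=[\tx_i]$ and $y=[\tx_j]$, Definition~\ref{defn.rG}(\ref{F.arrow}) identifies $\f{input}(\tf{tx}(z))$ with $\f{left}(z)$ and $\f{output}(\tf{tx}(z))$ with $\f{right}(z)\cup\f{up}(z)$, and Proposition~\ref{prop.left.to.blocked} translates $\f{left},\f{right},\f{up}$ back into $\f{utxi},\f{utxo},\f{blockedUtxi},\f{blockedUtxo}$ of the singleton chunks --- i.e.\ into inputs and outputs of $\tx_i,\tx_j$ partitioned by blockedness (with $\f{stx}([\tx_k])=\varnothing$). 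A short case split on how positions of $\tx_i$ and $\tx_j$ can overlap (input vs.\ output, blocked vs.\ not) shows that the structural conditions of Definition~\ref{defn.chunk} for $[\tf{tx}(x),\tf{tx}(y)]$ in $\rG\lF(\mathbb T)$ force the structural conditions of Definition~\ref{defn.chunk} for $[\tx_i,\tx_j]$ in $\mathbb T$, and in particular yield $\f{output}(\tx_i)\cap\f{input}(\tx_j)=\f{right}(x)\cap\f{left}(y)$. For the validation condition (Definition~\ref{defn.chunk}(\ref{blockchain.the.point})) the design of $\nu$ in Definition~\ref{defn.rG}(\ref{F.nu}) is decisive: at any position $p\in\f{right}(x)\cap\f{left}(y)$ the validator $\nu(\ast)$ accepts iff $x\mact y<\mtop$ in $\lF(\mathbb T)$, and by Definition~\ref{defn.lF} this is exactly the statement that $[\tx_i,\tx_j]\in\tf{Chunk}_{\mathbb T}$; if instead $\f{right}(x)\cap\f{left}(y)=\varnothing$ then $\f{output}(\tx_i)\cap\f{input}(\tx_j)=\varnothing$ and validation holds vacuously. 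Either way $[\tx_i,\tx_j]\in\tf{Chunk}_{\mathbb T}$, completing the reduction.
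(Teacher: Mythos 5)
Your proof is correct and follows essentially the same route as the paper's: injectivity via Lemma~\ref{lemm.tx.inj}, and surjectivity via the structural fact that every element of $\rG\lF(\mathbb T)$ has the form $[\tf{tx}([\tx_1]),\dots,\tf{tx}([\tx_n])]$. The paper's own proof is a one-liner that elides the point you rightly flag as the main obstacle --- that the preimage list $[\tx_1,\dots,\tx_n]$ is itself a chunk of $\mathbb T$ --- and your pairwise reduction via Lemma~\ref{lemm.invalidity.must.be.somewhere} together with the observation that $\nu(\ast)$ accepts exactly when $x\mact y<\mtop$ in $\lF(\mathbb T)$ supplies precisely the detail the paper leaves implicit.
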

\begin{proof}
Using Lemma~\ref{lemm.tx.inj} and the fact that from Definitions~\ref{defn.lF} and~\ref{defn.rG}, any element of $\rG\lF(\mathbb T)$ has the form $[\tf{tx}(\tx_1),\dots,\tf{tx}(\tx_n)]$ for some transactions $\tx_1,\dots,\tx_n\in\tf{Transaction}_{\mathbb T}$. 
\end{proof}

\subsection{\texorpdfstring{$\lF$}{F} is left adjoint to \texorpdfstring{$\rG$}{G}}

\begin{rmrk}
\label{rmrk.diagram.chasing}
Naturality of $\epsilon$ and $\eta$ is Propositions~\ref{prop.epsilon.natural} and~\ref{prop.eta.natural}.
The proofs are by diagram-chasing.
We do need to be a little careful because of the choice made in the $\f{factor}$ function (Definition~\ref{defn.atomic}(\ref{atomic.factorisation})), which propagates to the action of $\rG$ on arrows (Definition~\ref{defn.rG.arrow}).
In the event, the diagram-chasing is all standard and it works fine.\footnote{Let's pause on this `it works fine'. 
This paper is populated by structures whose definitions can be quite different, and yet which mesh together: consider Proposition~\ref{prop.pos.eq.posi} and Remark~\ref{rmrk.pos.posi.2}, and the adjunction here. 

If aspects of the proof of the equivalence follow without fuss, then this tells us that the different elements mesh correctly and with minimal friction. 
It might even have taken much thought by a certain author, and patient finessing of widely-separated definitions and proofs, for us to enjoy this happy state of affairs.
Thus: the property of this argument that it is \emph{fairly straightforward}, may in and of itself have some \emph{mathematical} significance.} 
\end{rmrk}

\begin{prop}
\label{prop.epsilon.natural}
The counit map $\epsilon$ is a natural transformation from $\lF\rG$ to $1_{\tf{ACS}}$.
\end{prop}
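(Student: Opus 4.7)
The plan is to reduce naturality of $\epsilon$ to a pointwise check on atomic elements. Given an arrow $\rg: \ns X \to \ns Y \in \tf{ACS}$, the goal is to show that $\rg \circ \epsilon_{\ns X} = \epsilon_{\ns Y} \circ \lF\rG(\rg)$ as arrows $\lF\rG(\ns X) \to \ns Y$ in $\tf{ACS}$. Both composites are $\tf{ACS}$-arrows, so by Proposition~\ref{prop.ACS.morphism.atomic}(\ref{atomic.equality.test}) it suffices to verify equality on $\f{atomic}(\lF\rG(\ns X))$. By Proposition~\ref{prop.lF.monoid.atomic} applied to $\mathbb T = \rG(\ns X)$, together with Definition~\ref{defn.rG}(\ref{F.arrow}), these atomic elements are precisely the singleton chunks $[\tf{tx}(x)]$ for $x \in \f{atomic}(\ns X)$.

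I would then unfold both composites on such an element $[\tf{tx}(x)]$. The left-hand side gives $\rg(\epsilon_{\ns X}([\tf{tx}(x)])) = \rg(x)$ by the singleton clause of Definition~\ref{defn.epsilon}. For the right-hand side, Definition~\ref{defn.lF.arrow} reduces $\lF\rG(\rg)([\tf{tx}(x)])$ to $\rG(\rg)(\tf{tx}(x))$; Definition~\ref{defn.rG.arrow} in turn expands this as $\tf{tx}(y_1) \mact \ldots \mact \tf{tx}(y_n)$, where $[y_1,\ldots,y_n] = \f{factor}(\rg(x))$. Applying $\epsilon_{\ns Y}$ yields $y_1 \mact \ldots \mact y_n$, and by the defining factorisation property in Definition~\ref{defn.atomic}(\ref{atomic.factorisation}) this recomposes to $\rg(x)$. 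The two sides therefore agree on every atomic element.

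I do not anticipate a real obstacle here: as the author already flags in Remark~\ref{rmrk.diagram.chasing}, this is a clean diagram chase. The only nontrivial ingredient beyond unfolding definitions is the equation $y_1 \mact \ldots \mact y_n = \rg(x)$ supplied by $\f{factor}$, together with the fact that $\tf{ACS}$-arrows preserve $\mbot$ and $\mtop$ automatically (Definition~\ref{defn.acs}(\ref{acs.mbot.mtop})), which is why reducing to atomic elements alone is sufficient --- the $\mbot$ and $\mtop$ cases require no separate verification. The one point to stay careful about is keeping track of which $\mact$ denotes composition in which structure (list concatenation inside $\lF\rG(\ns X)$ versus the abstract monoid action inside $\ns Y$), but the paper's conventions make this routine.
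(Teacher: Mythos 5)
Your proposal is correct and follows essentially the same route as the paper's own proof: reduce to atomic elements via Proposition~\ref{prop.ACS.morphism.atomic}(\ref{atomic.equality.test}), identify them as the singleton chunks $[\tf{tx}(x)]$ for $x\in\f{atomic}(\ns X)$, and compute both composites down to $\rg(x)$ using the factorisation property of Definition~\ref{defn.atomic}(\ref{atomic.factorisation}). The only cosmetic difference is that the paper cites Lemma~\ref{lemm.lF.gives.valid.singleton.chunks}(\ref{atomic.lF}) for the identification of atomic elements where you cite Proposition~\ref{prop.lF.monoid.atomic} plus Definition~\ref{defn.rG}(\ref{F.arrow}), which amounts to the same thing.
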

\begin{proof}
Consider some arrow $\rg:\ns X\to\ns Y\in\tf{ACS}$.
We must check a commuting square in $\tf{ACS}$ that
$$
\epsilon_{\ns Y}\,\lF\rG(\rg) = \rg\,\epsilon_{\lF\rG(\ns X)}.
$$
Using Proposition~\ref{prop.ACS.morphism.atomic}(\ref{atomic.equality.test}) it suffices to check for each $x'\in\f{atomic}(\lF\rG(\ns X))$ that 
$$
\epsilon_{\ns Y}(\lF\rG(\rg)\,(x')) = g(\epsilon_{\lF\rG(\ns X)}\,x').
$$
From Lemma~\ref{lemm.lF.gives.valid.singleton.chunks}(\ref{atomic.lF}), $x'=[\tf{tx}(x)]$ for $x\in\f{atomic}(\ns X)$.

Write $\rg(x)=y_1\mact\ldots\mact y_n$ where $\f{factor}(\rg(x))=[y_1,\dots,y_n]\in[\f{atomic}(\ns Y)]$ (Definition~\ref{defn.atomic}(\ref{atomic.factorisation})), so that
$$
(\lF\rG(\rg))([\tf{tx}(x)]) = [\tf{tx}(y_1),\dots,\tf{tx}(y_n)]\in\lF\rG(\ns Y) .
$$
Then 
$$
\epsilon_{\ns Y}(\lF\rG(\rg)([\tf{tx}(x)]))=\epsilon_{\ns Y}([\tf{tx}(y_1),\dots,\tf{tx}(y_n)]) = y_1\mact\ldots\mact y_n = y 
$$ 
and
$$
\rg(\epsilon_{\ns X}([\tf{tx}(x)])) =\rg(x) = y 
$$ 
as required.
\end{proof}

Consider $\lf:\mathbb S\to\mathbb T$ and $\lF(\lf):\lF(\mathbb S)\to\lF(\mathbb T)$ and some $[\tx]\in\tf{Chunk}_{\mathbb S}$ and $\lf(\tx)=[\tx_1,\dots,\tx_n]\in\tf{Chunk}_{\mathbb T}$.
Then $\eta_{\mathbb S}([\tx_1,\dots,\tx_n]) = [\tf{tx}(\tx_1),\dots,\tf{tx}(\tx_n)]$ and
$$
\lF(\lf)([\tx])\,([\tx_1,\dots,\tx_n])
$$ 

\begin{prop}
\label{prop.eta.natural}
The unit map $\eta$ is a natural transformation from $1_{\tf{IEUTxO}}$ to $\rG\lF$.
\end{prop}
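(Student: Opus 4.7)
The plan is to verify naturality by direct diagram-chasing on atomic generators, exactly in the style flagged in Remark~\ref{rmrk.diagram.chasing}. Fix an arrow $\lf:\mathbb S\to\mathbb T$ in $\tf{IEUTxO}$. We must show that, as arrows $\mathbb S\to\rG\lF(\mathbb T)$ in $\tf{IEUTxO}$,
$$
\rG\lF(\lf)\circ \eta_{\mathbb S} \;=\; \eta_{\mathbb T}\circ \lf .
$$
Arrows in $\tf{IEUTxO}$ are determined by their action on single transactions (with extension to chunks via Lemma~\ref{lemm.validity.preserved} / Corollary~\ref{corr.ab}), so it suffices to fix $\tx\in\tf{Transaction}_{\mathbb S}$ and compare the two sides on $\tx$.

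First I would compute the right-hand side. Write $\lf(\tx)=[\tx_1,\dots,\tx_n]\in\tf{Chunk}_{\mathbb T}$. By Definition~\ref{defn.eta} extended to chunks (via Lemma~\ref{lemm.eta.up}),
$$
\eta_{\mathbb T}(\lf(\tx)) \;=\; [\tf{tx}([\tx_1]),\dots,\tf{tx}([\tx_n])] \;\in\; \tf{Chunk}_{\rG\lF(\mathbb T)},
$$
where each atomic element $[\tx_i]\in\f{atomic}(\lF(\mathbb T))$ is the singleton chunk corresponding to $\tx_i$ (Proposition~\ref{prop.lF.monoid.atomic}).

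Next I would compute the left-hand side. Unpacking Definition~\ref{defn.eta}, $\eta_{\mathbb S}(\tx)=[\tf{tx}([\tx])]$, a singleton chunk in $\tf{Chunk}_{\rG\lF(\mathbb S)}$. Applying $\rG\lF(\lf)$ and using Definition~\ref{defn.rG.arrow} on the single transaction $\tf{tx}([\tx])$, we get
$$
\rG\lF(\lf)(\tf{tx}([\tx])) \;=\; \tf{tx}(z_1)\mact\ldots\mact\tf{tx}(z_m),
$$
where $[z_1,\dots,z_m]=\f{factor}(\lF(\lf)([\tx]))$. By Definition~\ref{defn.lF.arrow}, $\lF(\lf)([\tx])=\lf(\tx)=[\tx_1,\dots,\tx_n]$, and by Proposition~\ref{prop.lF.monoid.atomic} the factorisation of this chunk in the perfectly atomic monoid $\lF(\mathbb T)$ is unique and equals $[[\tx_1],\dots,[\tx_n]]$. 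Hence $m=n$, $z_i=[\tx_i]$, and the monoid composition in $\lF\rG\lF(\mathbb T)$ matches list concatenation of the singleton chunks (well-defined as a chunk, again by Lemma~\ref{lemm.eta.up}, since $[\tx_1,\dots,\tx_n]$ is a chunk in $\mathbb T$). This yields $[\tf{tx}([\tx_1]),\dots,\tf{tx}([\tx_n])]$, matching the right-hand side.

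The main obstacle is not conceptual but bookkeeping: one has to carry the two overlapping uses of the symbol $\tf{tx}(\cdot)$ (first as the transaction-generating map for atomic elements of $\lF(\mathbb S)$, then of $\lF(\mathbb T)$) and track the identification between transactions $\tx_i$ of $\mathbb T$ and atomic elements $[\tx_i]$ of $\lF(\mathbb T)$. Once the perfect atomicity of $\lF(\mathbb T)$ (Proposition~\ref{prop.lF.monoid.atomic}) is used to force the factorisation to be the canonical list of singletons, the two sides collapse to the same concatenation and naturality follows. Functoriality of $\rG\lF$ on composition of arrows then extends this immediately from single transactions to arbitrary chunks.
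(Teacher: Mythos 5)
Your proposal is correct and follows essentially the same route as the paper's proof: fix an arrow $\lf:\mathbb S\to\mathbb T$, reduce to single transactions via Lemma~\ref{lemm.validity.preserved}, and compute both sides of the naturality square to the common value $[\tf{tx}(\tx_1),\dots,\tf{tx}(\tx_n)]$. The only difference is that you spell out the role of $\f{factor}$ and the perfect atomicity of $\lF(\mathbb T)$ in forcing the canonical factorisation into singletons, a step the paper's proof leaves implicit.
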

\begin{proof}
Consider some arrow $\lf:\mathbb S\to\mathbb T\in\tf{IEUTxO}$.
We must check a commuting square in $\tf{IEUTxO}$ that
$$
\eta_{\mathbb T}\,\lf = \rG\lF(\lf)\,\eta_{\mathbb S}.
$$
Using Lemma~\ref{lemm.validity.preserved} it suffices to check for each $\tx\in\tf{Transaction}_{\mathbb S}$ that 
$$
\eta_{\mathbb T}(\lf([\tx])) = \rG\lF(\lf)(\eta_{\mathbb S}([\tx])) .
$$
Suppose $\lf(\tx)=[\tx_1,\dots,\tx_n]\in\tf{Chunk}_{\mathbb T}$.
Then
$$
\eta_{\mathbb T}(f([\tx])) = \eta_{\mathbb T}([\tx_1,\dots,\tx_n]) = [\tf{tx}(\tx_1),\dots,\tf{tx}(\tx_n)]
$$
and
$$
\rG\lF(\lf)(\eta_{\mathbb S}([\tx]))
=
\rG\lF(\lf)([\tf{tx}(\tx)]) 
= 
[\tf{tx}(\tx_1),\dots,\tf{tx}(\tx_n)]  
$$
as required.
\end{proof}

\begin{thrm}\leavevmode
\label{thrm.adjoints}
\begin{enumerate}
\item
The functors 
$$
\lF:\tf{IEUTxO}\to\tf{ACS}
\quad\text{and}\quad
\rG:\tf{ACS}\to\tf{IEUTxO}
$$ 
from Definitions~\ref{defn.lF} and~\ref{defn.lF.arrow} (for $\lF$) and from Definitions~\ref{defn.rG} and~\ref{defn.rG.arrow} (for $\rG$) form an adjoint pair.
In symbols:
$$
\lF\dashv \rG : \tf{IEUTxO}\to\tf{ACS}
$$
\item
$\lF$ is full and faithful (a bijection on homsets), and $\rG$ is full (a surjection on homsets).

Thus $\lF$ is a \emph{full embedding} of $\tf{IEUTxO}$ into $\tf{ACS}$.\footnote{There seem to be various definitions in the literature of what an `embedding of categories' should mean.  By \emph{full embedding} here, we mean a functor that is injective on objects, and bijective on arrows.}
\item
The image $\rG(\ns X)$ of $\ns X\in\tf{ACS}$ is in fact a pure IUTxO model --- meaning that its validators examine only the input-point of the transaction passed to them, not the entire transaction --- and $\ns G$ maps more specifically to the full subcategory $\tf{IUTxO}$ of $\tf{IEUTxO}$ (Subsection~\ref{subsect.iutxo}).

Thus have the following short chain of maps
\begin{equation}
\label{eq.adj}
\tf{IUTxO}
\ \stackrel{\f{e}\ \dashv\ \rG\lF}\longrightarrow\  
\tf{IEUTxO} 
\ \stackrel{\lF\ \dashv\ \rG}\longrightarrow\  
\tf{ACS}
\end{equation}
where on the left $\f{e}$ denotes the trivial inclusion/embedding map 
(Remark~\ref{rmrk.utxo}) 
with right adjoint $\rG\lF$.\footnote{Does $\rG$ map to $\tf{IEUTxO}$, or to $\tf{IUTxO}$?  Both, because we embedded the latter in the former: $\tf{IUTxO}\subseteq\tf{IEUTxO}$.  See Remark~\ref{rmrk.identify}.}
\item\label{adj.iso}
The image $\rG\lF(\mathbb T)$ of $\mathbb T\in\tf{IEUTxO}$ is a pure IUTxO model and as a corollary, every IEUTxO model $\mathbb T$ is isomorphic via $\eta_{\mathbb T}:\mathbb T\to\rG\lF(\mathbb T)$ to an IUTxO model $\rG\lF(\mathbb T)$, 
\end{enumerate}
\end{thrm}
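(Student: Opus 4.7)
The plan is to handle the four parts in sequence, mostly leveraging what has already been assembled.

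\emph{Part 1 (the adjunction).} Naturality of $\eta$ and $\epsilon$ is Propositions~\ref{prop.epsilon.natural} and~\ref{prop.eta.natural}, so all that remains is the two triangle identities: $\epsilon_{\lF(\mathbb T)}\circ\lF(\eta_{\mathbb T})=\f{id}_{\lF(\mathbb T)}$ and $\rG(\epsilon_{\ns X})\circ\eta_{\rG(\ns X)}=\f{id}_{\rG(\ns X)}$. I would check both componentwise on the canonical generators: for the first, a chunk $[\tx_1,\dots,\tx_n]\in\lF(\mathbb T)$ is sent by $\lF(\eta_{\mathbb T})$ to the concatenation of the singleton chunks $[\tf{tx}([\tx_i])]$ viewed in $\lF\rG\lF(\mathbb T)$, on which $\epsilon_{\lF(\mathbb T)}$ applies its unique atomic factorisation (Proposition~\ref{prop.lF.monoid.atomic}) and returns $[\tx_1]\mact\ldots\mact[\tx_n]$; for the second, a transaction $\tf{tx}(x)\in\rG(\ns X)$ is sent by $\eta_{\rG(\ns X)}$ to $[\tf{tx}(\tf{tx}(x))]$, which $\rG(\epsilon_{\ns X})$ maps via the factorisation $\f{factor}(\epsilon_{\ns X}[\tf{tx}(x)])=[x]$ back to $[\tf{tx}(x)]$. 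Both reduce to routine bookkeeping once the correct atomic factorisations are used.

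\emph{Part 2 ($\lF$ a full embedding, $\rG$ full).} Proposition~\ref{prop.unit.eta.iso} already delivers $\eta_{\mathbb T}$ as a bijection of underlying sets; I would first promote this to a genuine $\tf{IEUTxO}$-isomorphism by exhibiting the obvious inverse $[\tf{tx}(\tx_1),\dots,\tf{tx}(\tx_n)]\mapsto[\tx_1,\dots,\tx_n]$ and verifying it is an arrow in the sense of Definition~\ref{defn.ieutxo.category}(\ref{ieutxo.arrow}) (which is immediate since $\eta_{\mathbb T}$ preserves and reflects the chunk condition, by Lemma~\ref{lemm.eta.up} and the construction of $\rG$). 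The standard adjunction identity $\lF = \lF\,(\eta_{\mathbb T})\text{-pullback of }\Phi$ then gives $\lF$ fully faithful from $\eta$ iso; injectivity on objects is direct from Proposition~\ref{prop.lF.monoid.atomic}(2), since $\mathbb T$ can be recovered from $\lF(\mathbb T)$ by taking atomic elements as its transactions. For $\rG$ full, I would take an arbitrary $h:\rG(\ns X)\to\rG(\ns Y)\in\tf{IEUTxO}$ and define $g:\ns X\to\ns Y$ by $g(\mbot)=\mbot$, $g(\mtop)=\mtop$, and for $x<\mtop_{\ns X}$ factor $\f{factor}(x)=[x_1,\dots,x_n]$, push each atom through $h$ as $h(\tf{tx}(x_i))=[\tf{tx}(y^i_1),\dots,\tf{tx}(y^i_{k_i})]$ (the form of $\tf{Chunk}_{\rG(\ns Y)}$, by Definition~\ref{defn.rG}(\ref{F.arrow})), and set $g(x)=\mact_{i,j}y^i_j$. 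The arrow conditions of Definition~\ref{defn.acs}(\ref{acs.arrow}) follow from the factorisation being a monoid homomorphism (Definition~\ref{defn.atomic}(\ref{factorisation.monoidal}))---this is \emph{precisely} the axiom flagged as load-bearing in Remark~\ref{rmrk.why.factorisation}. A direct unwinding of Definition~\ref{defn.rG.arrow} then shows $\rG(g)=h$ on atomic transactions $\tf{tx}(x)$, and hence everywhere by Corollary~\ref{corr.ab}.

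\emph{Parts 3 and 4 (descent to $\tf{IUTxO}$ and the isomorphism).} Inspection of $\nu(\ast)$ in Definition~\ref{defn.rG}(\ref{F.nu}) shows that the validator reads only the input-point $(p,y)$ of its pointed transaction, so $\rG(\ns X)\in\tf{IUTxO}$ (Remark~\ref{rmrk.its.utxo}); this is immediate and requires no further argument. The chain~\eqref{eq.adj} then drops out: $\f{e}$ is a full embedding (Proposition~\ref{prop.1}), so for $\mathbb S\in\tf{IUTxO}$ and $\mathbb T\in\tf{IEUTxO}$ one has $\tf{IUTxO}(\mathbb S,\rG\lF(\mathbb T))\cong\tf{IEUTxO}(\f{e}(\mathbb S),\f{e}(\rG\lF(\mathbb T)))\cong\tf{IEUTxO}(\f{e}(\mathbb S),\mathbb T)$, the last iso being $\lF\dashv\rG$ composed with the iso $\eta_{\mathbb T}$. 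Part~4 is then a one-line deduction: $\eta_{\mathbb T}:\mathbb T\to\rG\lF(\mathbb T)$ is an iso in $\tf{IEUTxO}$ and the codomain lives in $\tf{IUTxO}$.

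\emph{Expected main obstacle.} The triangle identities and the observation about $\rG$'s image are mechanical; the non-trivial work is fullness of $\rG$, where the delicate point is verifying that the construction $x\mapsto g(x)$ is well-defined and multiplicative regardless of which factorisation $\f{factor}$ chose, and in particular that distinct factorisations of the same $x$ cannot produce distinct $g(x)$---this is exactly why Definition~\ref{defn.atomic}(\ref{factorisation.monoidal}) insists on $\f{factor}$ being a monoid homomorphism rather than merely a choice of factorisation, and this axiom has to carry the whole argument. The second subtlety worth flagging in the write-up is the striking consequence of Part~4 that every $\tf{IEUTxO}$ model is isomorphic to a $\tf{IUTxO}$ one: this is not a collapse of expressivity but a reflection of the fact that $\rG\lF(\mathbb T)$ re-encodes whole $\mathbb T$-transactions as opaque data in its $\alpha=\beta=\f{atomic}(\lF(\mathbb T))$, so the richer validator structure is absorbed into the data type rather than lost.
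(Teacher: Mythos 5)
Your overall architecture tracks the paper's: naturality is delegated to Propositions~\ref{prop.epsilon.natural} and~\ref{prop.eta.natural}, part~2 for $\lF$ is extracted from the unit being a bijection (Proposition~\ref{prop.unit.eta.iso}), and parts~3 and~4 are read off Definition~\ref{defn.rG} (via Remark~\ref{rmrk.its.utxo}) and Proposition~\ref{prop.unit.eta.iso}. In two places you are more careful than the paper: the paper's proof of part~1 never explicitly verifies the triangle identities (your componentwise check, which implicitly uses that $\f{factor}(x)=[x]$ for atomic $x$ --- true in an oriented monoid because $x\mact x=\mtop$ for any $x$ with $\posi(x)\neq\varnothing$ --- is correct and fills a real omission), and the paper does not spell out the promotion of the set-bijection $\eta_{\mathbb T}$ to an isomorphism in $\tf{IEUTxO}$ before using it to conclude that $\lF$ is full and faithful.

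Where you genuinely diverge is fullness of $\rG$. The paper obtains this in one line from surjectivity of the counit (Proposition~\ref{prop.epsilon.bijective}); you instead build an explicit preimage $g$ of an arbitrary $h:\rG(\ns X)\to\rG(\ns Y)$. Your construction has a gap that the homomorphism property of $\f{factor}$ (Definition~\ref{defn.atomic}(\ref{factorisation.monoidal})) does not repair: condition~\ref{f.cdot} of Definition~\ref{defn.acs} demands $g(x)\mact g(y)=g(x\mact y)$ also when $x\mact y=\mtop_{\ns X}$, i.e. you must show $g(x)\mact g(y)=\mtop_{\ns Y}$ whenever $[\tf{tx}(x),\tf{tx}(y)]$ fails to be a chunk in $\rG(\ns X)$. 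But an $\tf{IEUTxO}$ arrow (condition~\eqref{eq.condition.a} of Definition~\ref{defn.ieutxo.category}) only preserves validity in the forward direction; nothing prevents $h$ from sending a non-composable pair of transactions to a composable pair of chunks, in which case your $g$ violates condition~\ref{f.cdot} and no preimage of that form exists. You would need either to rule such $h$ out or to fall back on the paper's route through the counit. (For balance: the paper's own one-line justification is also terse here --- the standard adjunction fact relates fullness of the right adjoint to the counit being a split monomorphism, not to its being a surjection --- so this is the step of the theorem where both write-ups deserve the most scrutiny.)
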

\begin{proof}\leavevmode
\begin{enumerate}
\item
The natural transformations are $\epsilon:\lF\rG\to 1$ and $\eta:1\to\rG\lF$ from Definitions~\ref{defn.epsilon} and~\ref{defn.eta}.
They are natural by Propositions~\ref{prop.epsilon.natural} and~\ref{prop.eta.natural}.
\item
By Proposition~\ref{prop.unit.eta.iso} the unit $\eta$ is a bijection and it follows that $\lf$ is full and faithful.
By Proposition~\ref{prop.epsilon.bijective} the counit $\epsilon$ is a surjection and it follows that $\rG$ is full (this also follows from the fact that $\eta$ is a bijection and so an injection).
\item
This is a structural fact of Definition~\ref{defn.rG}, as observed in Remark~\ref{rmrk.its.utxo}.
\item
From Proposition~\ref{prop.unit.eta.iso} and part~3 of this result.
\end{enumerate} 
\end{proof}

\begin{rmrk}
\label{rmrk.loop.of.embeddings}
We can present the diagram in~\eqref{eq.adj} in Theorem~\ref{thrm.adjoints}(3) (recall that $1$ and $\tf{IUTxO}$ are from Subsection~\ref{subsect.iutxo}) quite nicely as a loop of embeddings:
\begin{center}
\begin{tikzpicture}[start chain=going left,node distance=2cm]   
\node[on chain]                 (2) {$\tf{ACS}$};
\node[on chain]                 (1) {$\tf{IEUTxO}$};
\node[on chain]                 (0) {$\tf{IUTxO}$};
\draw[
    >=latex,
    auto=right,                      
    loop above/.style={out=75,in=105,loop},
    every loop,
    ]
     (0)   edge             node {$1$}     (1)
     (1)   edge             node {$\lF$}   (2)
     (2)   edge[bend right]             node {$\rG$}   (0)
     ;
\end{tikzpicture}
\end{center}
%
\end{rmrk}

Intuitively, $\tf{ACS}$ denotations seem to be the largest denotational class which conveniently maps back down into $\tf{IEUTxO}$ models as above (this is an intuitive observation, not a theorem).
But if we wish to optimise differently and make our denotation more specific, then we can be rewarded with a tighter result:
\begin{prop}
\label{prop.why.pure}
If we strengthen Definition~\ref{defn.acs.system} so that an abstract chunk system is a 
\emph{perfectly atomic} oriented monoid of chunks (Definition~\ref{defn.atomic}(\ref{perfectly.atomic})) instead of just being an atomic one, then 
\begin{itemize*}
\item
$\epsilon_{\ns X}$ in Proposition~\ref{prop.epsilon.bijective} becomes a bijection, and 
\item
the embedding $\lF\dashv \rG$ in Theorem~\ref{thrm.adjoints} becomes an equivalence of categories, and
\item
the loop illustrated in Remark~\ref{rmrk.loop.of.embeddings} becomes a loop of equivalences.
\end{itemize*}
Thus, the full subcategory in $\tf{ACS}$ of \emph{perfectly atomic} abstract chunk systems, and all arrows between them, is the `properly IEUTxO-like' abstract chunk systems.\footnote{$\rG$ and $\lF$ still have non-trivial work to do, as we see e.g. from Propositions~\ref{prop.left.to.blocked} and~\ref{prop.pos.eq.posi}.
}
\end{prop}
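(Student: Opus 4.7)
The plan is to show that under perfect atomicity both the unit $\eta$ and the counit $\epsilon$ of the adjunction $\lF\dashv \rG$ become natural isomorphisms, which upgrades the adjunction to an adjoint equivalence of categories; the loop of embeddings then follows from composing equivalences.

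First, I would establish that $\epsilon_{\ns X}:\lF\rG(\ns X)\to\ns X$ is bijective on underlying sets. Surjectivity is already Proposition~\ref{prop.epsilon.bijective}. For injectivity, suppose $\epsilon_{\ns X}([\tf{tx}(x_1),\dots,\tf{tx}(x_m)]) = \epsilon_{\ns X}([\tf{tx}(y_1),\dots,\tf{tx}(y_n)])$; unpacking Definition~\ref{defn.epsilon} this means $x_1\mact\cdots\mact x_m = y_1\mact\cdots\mact y_n$ as elements of $\ns X$. Perfect atomicity's uniqueness clause (Definition~\ref{defn.atomic}(\ref{perfectly.atomic.unique})) forces $m=n$ and $x_i = y_i$, and injectivity of $\tf{tx}$ (Lemma~\ref{lemm.tx.inj}(\ref{tx.inj})) then gives equality of the two lists. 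The $\mtop$ case is immediate from Definition~\ref{defn.epsilon}.

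Second, I would exhibit the inverse $\epsilon_{\ns X}^{-1}$ as an arrow in $\tf{ACS}$. Define $\epsilon_{\ns X}^{-1}(x) = [\tf{tx}(x_1),\dots,\tf{tx}(x_n)]$ where $\f{factor}(x)=[x_1,\dots,x_n]$ for $x<\mtop_{\ns X}$, and $\epsilon_{\ns X}^{-1}(\mtop_{\ns X})=\mtop_{\lF\rG(\ns X)}$. Checking the arrow axioms of Definition~\ref{defn.acs}(\ref{acs.arrow}): preservation of $\mbot$ and $\mtop$ is by definition; preservation of $\mact$ is precisely Definition~\ref{defn.atomic}(\ref{factorisation.monoidal}) combined with the construction in Definition~\ref{defn.lF} so that concatenation of lists of $\tf{tx}$-singletons corresponds to $\mact$ in $\lF\rG(\ns X)$; and the critical condition $x\leq y<\mtop\Rightarrow \epsilon_{\ns X}^{-1}(x)\leq\epsilon_{\ns X}^{-1}(y)$ is exactly the extra perfect-atomicity clause Definition~\ref{defn.atomic}(\ref{perfectly.atomic.leq}), translated across the sublist ordering on $\lF\rG(\ns X)$ (Definition~\ref{defn.lF}). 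This step is where perfect atomicity pays for itself, and it is the main obstacle: without condition (\ref{perfectly.atomic.leq}) the inverse would still exist set-theoretically (from uniqueness alone) but would fail to be monotone, as anticipated in Remark~\ref{rmrk.factor}.

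Third, I would observe that $\eta_{\mathbb T}$ is already an isomorphism in $\tf{IEUTxO}$: Proposition~\ref{prop.unit.eta.iso} gives a set-theoretic bijection, and its inverse $\tf{tx}([\tx])\mapsto[\tx]$ is well-defined by Lemmas~\ref{lemm.tx.inj}(\ref{tx.inj}) and~\ref{lemm.lF.gives.valid.singleton.chunks}(\ref{atomic.lF}) and extends to a monoid-of-chunks homomorphism by a routine check analogous to that in Proposition~\ref{prop.lF.acts.on.arrow}. (Note this step does not need perfect atomicity.)

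Fourth, combining steps~1--3 with the naturality established in Propositions~\ref{prop.epsilon.natural} and~\ref{prop.eta.natural}, the pair $(\eta,\epsilon)$ becomes a pair of natural isomorphisms, so $\lF\dashv\rG$ is an adjoint equivalence of categories. Finally, for the loop in Remark~\ref{rmrk.loop.of.embeddings}: $\rG$ already lands in the subcategory $\tf{IUTxO}\subseteq\tf{IEUTxO}$ by Remark~\ref{rmrk.its.utxo}, and by Theorem~\ref{thrm.adjoints}(\ref{adj.iso}) every $\mathbb T\in\tf{IEUTxO}$ is isomorphic via $\eta_{\mathbb T}$ to an object $\rG\lF(\mathbb T)$ lying in the image of $\f{e}$; hence $\f{e}:\tf{IUTxO}\to\tf{IEUTxO}$ is a full, faithful, essentially surjective functor, i.e. an equivalence. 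Composing this with the equivalence $\lF\dashv\rG$ yields a loop of equivalences, as claimed.
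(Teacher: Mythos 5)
Your proposal is correct and follows essentially the same route as the paper's (much terser) proof: unique factorisation (Definition~\ref{defn.atomic}(\ref{perfectly.atomic.unique})) gives the set-level bijectivity of $\epsilon_{\ns X}$, and clause~(\ref{perfectly.atomic.leq}) is exactly what makes the inverse monotone and hence an arrow in $\tf{ACS}$, upgrading the adjunction to an equivalence. The only nit is that injectivity of $\tf{tx}$ is used to represent elements of $\lF\rG(\ns X)$ canonically as $[\tf{tx}(x_1),\dots,\tf{tx}(x_n)]$ rather than to deduce list equality from $x_i=y_i$, but this does not affect the argument.
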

\begin{proof}
The bijection is just a structural fact: if by Definition~\ref{defn.atomic}(\ref{perfectly.atomic.unique}) an element $x\in\ns X\in\tf{ACS}$ factorises uniquely into a list of atomic elements $x_1\mact \dots \mact x_n$, then $x$ can be \emph{identified} with that list and $\rG$ just maps it to a list of transactions $[\tf{tx}(x_1),\dots,\tf{tx}(x_n)]$ --- which, by design in Definition~\ref{defn.rG}, has the same composition behaviour in $\rG(\ns X)$ as $x$ does in $\ns X$.

Then Definition~\ref{defn.atomic}(\ref{perfectly.atomic.leq}) is exactly what is required for $\lF\rG(\ns X)$ to `remember' all the $\leq$-structure of $\ns X$.
\end{proof}

\section{Conclusions}
\label{sect.discussion}

We have presented the EUTxO blockchain model in a novel and compact form and derived from it an algebra-style theory of blockchains as partially-ordered partial monoids with channel name communication.
This builds on previous work~\cite{gabbay:utxabs}.

We hope this paper will make two contributions: 
\begin{enumerate}
\item
its specific definitions and theorems, but also, 
\item
an \emph{idea} 
that blockchain structures can be subjected to this kind of analysis.
\end{enumerate}
Or to put it another way: we illustrate that an algebraic theory of blockchains is possible, and what it might look like.

The reader can apply these ideas to their favourite blockchain architecture, and if this were widespread practice then this might help make the field accessible to an even broader audience, ease technical comparisons between systems, add clarity to a fast-changing field --- and as we have argued, it might suggest structures and tests for practical programs, as is already reflected in a recent work~\cite{gabbay:utxabs,gabbay:idealisedeutxo}.

We now reflect on the design decisions made along the way and suggest possibilities for future work. 

\subsection{Observational equivalence}

We touched on notions of observational equivalence in Subsections~\ref{subsect.obs.obs} and~\ref{subsect.obs.eq}.

The theory of EUTxO observational equivalence is a little weaker than we might like, in the sense that not as many things get identified as one might first anticipate.

This is because a validator gets access to the whole transaction from which an input emanates --- see the line for $\tf{Validator}$ in Figure~\ref{fig.ieutxo}.
(This is specific to EUTxO; UTxO is more local, see Remark~\ref{rmrk.utxo}.)

This makes it hard to factor out internal structure.
For instance, even if all but one of the inputs and outputs of a transaction have been spent, so that it has just one dangling input --- then that entire transaction is still observable at the final dangling input.
Recall Proposition~\ref{prop.fresh.chunks.iff}, and consider $\tx,\tx'\in\tf{Transaction}$ such that $\pos(\tx)\cap\pos(\tx')=\varnothing$, so that $\tx$ and $\tx'$ are commuting.
We might reasonably wish to identify $\tx\mact\tx'$ and $\tx'\mact\tx$ with a composite transaction which we might write $\tx\cup\tx'$, but we cannot do this because a validator could see the difference.

Thus, currently the EUTxO framework can observe the difference between 
\begin{itemize}
\item
one large transaction, and 
\item
a composite chain (i.e. a chunk; see Definition~\ref{defn.chunk}) of smaller ones 
\end{itemize}
--- even if they are `morally' the same.
This is not good for developing compositional theories of observational equivalence.

We propose it might be helpful if the EUTxO model allowed us to limit access to the channels in a transaction, such that an input can limit validators to access only certain inputs and outputs in a transaction, e.g. by nominating positions that are `examinable' along that input.\footnote{So a validator would be passed the \emph{restriction} of a transaction obtained by including only nominated inputs and outputs in that transaction, and withholding the rest.} 
The pure UTxO model (Remark~\ref{rmrk.utxo}) would correspond to the special case when an input nominates only itself for validators to access.

Then, we would in suitable circumstances be able to switch between a single large transaction, and a composite chunk with the same inputs and outputs.

Another observable of a transaction is the names of its spent transaction channels, and this brings us to garbage-collection:

\subsection{Garbage-collection}
\label{subsect.garbage-collection}

Our model has no garbage-collection of spent positions --- where by \emph{spent positions} we mean the $\f{stx}$-atoms from Definition~\ref{defn.utxo.utxi}; see also the $\f{up}$-atoms from Definition~\ref{defn.left.right.up} (a connection is precisely stated in Proposition~\ref{prop.left.to.blocked}).

There is nothing wrong with this; the EUTxO presentation in~\cite{chakravarty:extum} does not garbage-collect either (i.e. positions on the blockchain can be occupied at most once, and can never be un-occupied).

However, since we name our positions, it might be nice to consider garbage-collecting (i.e. locally binding) the names of spent positions, by removing or $\alpha$-converting them in some way.
We do this in the implementation in~\cite{gabbay:idealisedeutxo}.

However, the maths indicates that this does come at a certain price. 
For instance, consider a very simple model of finite lists of atoms in which the first atom is an `input' and the last atom is an `output', and we garbage-collect by removing matching atoms, like so:
$$
[a,b]\mact [b,c] = [a,c]
\qquad
[a,c]\mact [c,b] = [a,b] .
$$ 
Thus,
$$
([a,b]\mact [b,c])\mact [c,b] = [a,b] .
$$
Now if we bracket the other way, then $[b,c]\mact [c,b]$ is ill-defined, and therefore so is $[a,b]\mact([b,c]\mact [c,b])$.
We cannot have $[b,b]$ because this would violate the condition that an input must point to an earlier (not a later) output (Definition~\ref{defn.chunk}(\ref{chunk.earlier}); we consider relaxing this condition in item~\ref{loops} of Subsection~\ref{subsect.future.work}).

The partiality is no issue --- chunks are already a partial monoid (Theorem~\ref{thrm.popm}).
But, our monoid of garbage-collecting chunks would not be \emph{associative}.
It would still be \emph{nearly} associative, meaning that $x\mact (y\mact z)=(x\mact y)\mact z$ where both sides are defined. 
However this is a weaker property that would be messier to work with, and for now we do not need it to make our case.

(This could also be read as a mathematical hint that our uniqueness conditions are a little too strict, and that $[b,c]\mact[c,b]$ should be permitted, where the leftmost $b$ does not point to the later $b$ but instead points backwards in time to some earlier $b$, and the rightmost $b$ points forwards towards some later $b$ --- in the style of an \emph{interleaved scope} \cite{gabbay:leatnn}, for example.  In the presence of garbage-collection this would make perfect sense, since we would expect both $b$s to eventually get bound, i.e. `spent'.)
 
In this paper we do not garbage-collect.
We leave deciding whether we should, and if so how, for future work.

\subsection{Tests}
\label{subsect.tests}

A few more words on the equations in Figure~\ref{fig.ieutxo} (IEUTxO type equations) vs. those in Definition~\ref{defn.acs} (its algebraic counterpart of abstract chunk systems).

We want the equations in Figure~\ref{fig.ieutxo} to be short and sweet, so that the system looks simple and solutions are easy to build and manipulate.

The design parameters in Definition~\ref{defn.acs} are somewhat different: we do not mind if there are plenty of algebraic properties, because this means that we have captured as many interesting properties as possible.
The adjunction in Theorem~\ref{thrm.adjoints} gives a formal sense in which the two correspond.

(This leaves it for future work to see how these conditions could be relaxed, as discussed e.g. in Remarks~\ref{rmrk.why.factorisation} and~\ref{rmrk.factor} and Proposition~\ref{prop.why.pure}.)

This has mathematical interest, but not only that.

As noted in Subsection~\ref{subsect.brief.discussion}, modern programming languages support efficient programming on abstract denotations, thus delaying instantiating to specific instances until truly necessary. 
Also, they allow us to express and test against properties --- and equality properties in particular can be helpful for optimising transformations.

So an algebraic theory can be relevant to producing concrete working code, because: 
\begin{enumerate}
\item
it can help structure code; and 
\item
an axiom can be read both as a testable property and as a program transformation; so that 
\item
the more axioms we have, the more transformations and tests are available, and the more scope we have to transform, structure, and test our programs.
\end{enumerate}

\subsection{Connections with nominal techniques}
\label{subsect.nominal}

This paper borrows ideas from \emph{nominal techniques}~\cite{gabbay:newaas-jv} and in particular it follows the ideas on Equivariant ZFA from~\cite{gabbay:equzfn} in handling the atoms which we use to name positions. 
We use atoms to name positions in IEUTxO models and in abstract chunk systems (ACS).

Partly, this is just using nominal-style names and permutations as a standard vocabulary. 
We can think of this application of nominal ideas as reaching for a familiar API, typeclass, or algebra, and there is nothing wrong with that.

The reader can find this implemented in~\cite{gabbay:idealisedeutxo}, where the IEUTxO equations in Figure~\ref{fig.ieutxo} are combined with this author's nominal datatypes package to create first a Haskell typeclass, and then a working implementation of chunks and blockchains, following the IEUTxO model of this paper. 

But in parts of this paper, something deeper is also taking place.
For example:
\begin{enumerate}
\item
The notion of abstract chunk system --- which makes up the more abstract half of this paper --- depends on that of an oriented monoids of chunks from Definition~\ref{defn.oriented}, which depends on the notion of $\posi$ from Definition~\ref{defn.posi}, whose definition depends on the permutation action. 

$\posi$ is a nameful definition, in the nominal sense, and it is not clear how would express it, and thus the notion of ACS, were it not for our nominal use of names.
\item
In~\cite{gabbay:idealisedeutxo} we go somewhat further than in this paper in developing IEUTxO models, in that we permit $\alpha$-conversion to garbage-collect spent positions as discussed in Subsection~\ref{subsect.garbage-collection}.
This too is a fully nominal definition, which uses the nominal model of binding in specific ways.
\end{enumerate}

\begin{rmrk}[No support assumed]
Experts on nominal techniques should note that no finite support conditions are imposed; for example, there is nothing to insist that a validator $\nu(v)\subseteq\beta\times\tf{Transaction}_!$ should be a finitely-supported subset.
This paper is in ZFA, not Fraenkel-Mostowski set theory.

We could construct a finitely-supported account of the theory in this paper in FM, just by imposing finite support conditions appropriately --- but it would cost us complexity, and since we do not seem to need this, we do not do it.
We hope we have struck a good balance in the mathematics between being rigorous in our treatment of names, and not drowning the reader in detail. 

Note that finiteness conditions on sets of names are still important: notably in Definition~\ref{defn.oriented}(\ref{oriented.posi.finite}); and name-management is key to some nontrivial results, notably Corollary~\ref{corr.lr.tx.empty}.
So there is a `finite support' flavour to the maths, just not as a direct translation of the Fraenkel-Mostowski notion of finite support.
\end{rmrk}

\begin{rmrk}[Disjointness conditions]
Continuing the previous remark, `freshness-flavoured' set disjointness conditions like 
\begin{itemize*}
\item
$\f{input}(\tx)\cap\f{output}(\tx)=\varnothing$ in Lemma~\ref{lemm.singleton.chunk.valid}, and 
\item
$\pos(\f{ch})\cap\f{pos}(\f{ch}')=\varnothing$ in Lemma~\ref{lemm.fresh.chunks.defined}, and 
\item
$\posi(x)\cap\posi(y)=\varnothing$ in Definition~\ref{defn.oriented} and elsewhere, 
\end{itemize*}
are quite important in this paper.

We could have imported a nominal notation and written these $\#$, as in $\f{ch}\#\f{ch}'$ or $x\#y$.
This would not be wrong, but it might mislead because, as discussed above, we do not assume nominal notions of support and freshness.
Thus, $\pos(\f{ch})$ and $\posi(x)$ are \emph{not} necessarily equal to the support of $\f{ch}$ and $x$ respectively, and if e.g. $x$ and $y$ were in an abstract chunk system that happened also to be finitely-supported (a plausible scenario), it would not be guaranteed that $x\#y$ would coincide with $\posi(x)\cap\posi(y)=\varnothing$.\footnote{$\posi(x)\#\posi(y)$ in the nominal sense would still mean $\posi(x)\cap\posi(y)=\varnothing$, just because nominal freshness coincides with sets disjointness on finite sets of atoms.}

We could insist that $\supp(x)$ must exist and coincide with $\posi(x)$, of course --- but that would be an additional restriction.
\end{rmrk}

\subsection{Concrete formalisation}
\label{subsect.formalisation}

Can we implement all or parts of this paper in a theorem-prover, and use that to verify properties of a blockchain system?

This paper has a lot of moving parts, and it is not all or nothing: a user can import whichever components (IEUTxO? ACS?) they wish --- though we also hope that the overall mathematical vision could provide useful guidance. 
 
How the ideas in this paper might be brought to bear in the setting of a formal verification cannot have a clear-cut answer, because it depends on interactions between the maths, what we want to verify, the resources available,\footnote{Very important: a single person working alone will make different tradeoffs than a large team, and a short-term project will make different tradeoffs than a longer-term project that can e.g. afford to invest in building basic tooling.} and what facilities are offered by a particular theorem-proving environment. 

A reasonable (but naive) implementation of the EUTxO inductive definition in~\cite{chakravarty:extum} would suggest modelling positions by numbers which are essentially de Bruijn indices.
The maths in this paper suggests against this:
\begin{itemize}
\item
We want to talk about chunks. 
Indices only make sense in a blockchain which has an initial genesis block from which we start to index. 
\item
We want to rearrange transactions and chunks, e.g. to talk about how they commute, as in (for example) Definition~\ref{defn.circ}.
With indices, a transaction in a different place is a \emph{different transaction}, and potentially subject to different validation if a validator were to directly inspect its indices.
\item
Because we use names, we never have to worry about reindexing functions, as can be an issue with the de Bruijn indexed approach.
\end{itemize} 
This is borne out by the practice: the developers of the EUTxO implementation underlying~\cite{chakravarty:extum} have explored theorem-provers and they do not reference transactions by position (even though the mathematical description in the literature makes it look like they do, at least to the uninformed reader). 

What they actually do is maintain an explicit naming context.
This is ongoing research, but the interested reader can find a brief description in~\cite{formal-eutxo}.

The difficulty with this hands-on context-based approach is that we end up having to curate our context of names, using explicit context-weakening and context-combining operations.
These get limited, extremely tedious to formulate and prove, and clutter the proofs of properties.%
\footnote{\emph{Properties} often get called \emph{meta-theorems} in this field.  So wherever we write `(algebraic) property', a reader with a theorem-proving background can read `meta-theorem', and they will not go too far wrong.}

It is standard and known that maintaining contexts of `known names' can get painful.
Indeed, this is one of the issues that nominal techniques were developed to alleviate.

In the context of this paper, we would advise looking at how the implementation in~\cite{gabbay:idealisedeutxo} manages names and binding using permutations and the $\tt{Nom}$ binding context (which is the nominal construct that closely corresponds to a local context of known names).

So it works in Haskell --- but determining the extent to which this can be translated to a \emph{theorem-prover} remains to be seen. 
We could imitate the nominal datatypes package; there is a nominal package in Isabelle~\cite{urban:nomrti}; and this author has written some recommendations on implementing nominal techniques in theorem-proving environments~\cite{gabbay:equzfn}.\footnote{This last paper essentially says: make sure that permuting names in properties is a pushbutton operation.
Once you have that, the rest should follow; and if you do not then there may be trouble.} 
Exploring this is future work.

\subsection{Future work} 
\label{subsect.future.work}

We discussed future work above and in the body of the paper.
We conclude with some further observations:
\begin{enumerate}
\item
The authors of~\cite{chakravarty:extum} map their concrete models to \emph{Constraint Emitting Machines}, which are a novel variant of Mealy machines.
It is future work to see whether the idealised EUTxO solutions from Definition~\ref{defn.solution} admit corresponding descriptions.
For the interested reader, we can note that a body of work on nominal automata does exist~\cite{bojanczyk:auttns,bojancyk:sliis}.
\item\label{loops}
In condition~\ref{chunk.earlier} of Definition~\ref{defn.chunk} we restrict inputs to point to strictly earlier outputs.
This restriction makes operational sense for a blockchain, and it is unavoidable and required for well-definedness in~\cite{chakravarty:extum} because of its inductive construction.

However, there is no mathematical necessity to retain it here; it would be perfectly valid and possible to contemplate a generalisation of Definition~\ref{defn.chunk} which permits loops from a transaction to itself, or even forward pointers from inputs to later outputs (i.e. `feedback loops').

Mathematically and structurally, loops are perfectly admissible in the framework of this paper.

Loops from a transaction to itself are particularly interesting, because this would remove the need for a \emph{genesis block} --- which has no inputs, and therefore exists \emph{sui generis} in that it is not subject to the action of any validation from other blocks.
So, we could insist that all transactions have at least one input, but that input may loop to an output on the same transaction.

This might seem like a minor difference, but it is not, because \emph{we} control the set of validators (the injection $\nu$ in Definition~\ref{defn.solution}), so we can enforce checks of good behaviour --- even of the first transaction, which would e.g. be forced to validate itself via a loop --- by controlling the set of validators. 
\item
We have considered EUTxO blockchains in this paper.
It would be natural to attempt a similar analysis for an accounts-based blockchain architecture such as Ethereum.
A start on this is the \emph{Idealised Ethereum} equations in~\cite[Figure~4]{gabbay:utxabs}; developing this further is future work.
\item
As is often the case, in practice there are desirable features of real systems that would break our model.

For instance, it can be useful to make transactions time-sensitive using \emph{slot ranges}, also called \emph{validity intervals}; a transaction can only be accepted into a block whose slot is inside the transaction's slot range.
Clearly, this could compromise results having to do with chunks and commutations, because these are by design `pure' notions, with no notion of time. 

It is future work to see to what extent a theory of chunks might be compatible with explicit time dependence constraints.
In one sense this temporal aspect should be orthogonal to the nominal techniques used so far, since atoms are just used as positions and we impose no finite support conditions; however in another sense it may also be possible to usefully import notions of e.g. ordered atoms from nominal automata, as presented in~\cite{bojanczyk:auttns,bojancyk:sliis}.
In any case, it is common to find a pure sublanguage with a nice theory embedded in a more expressive and larger language with less good behaviour, as e.g. pure SML is embedded in SML with global variables.
\end{enumerate}

\subsection{Final words} 

The slogan of this paper is \emph{blockchains as algebras},
and more specifically \emph{blockchains as nominal algebras of chunks}.
Based on the maths above, which substantiates this slogan, we can note that: 
\begin{enumerate}
\item
UTxO blockchains can be viewed as an algebraic structure, both intensionally (Definition~\ref{defn.solution}) and extensionally (Definition~\ref{defn.acs.system}), and these views are equivalent (Theorem~\ref{thrm.adjoints}).
\item
The natural and fundamental unit of blockchain algebras is \emph{partial} blockchains --- what we call \emph{chunks} in this paper --- and chunks naturally organise themselves into monoids with extra structure.
\item
It can be convenient to address inputs and outputs by \emph{name} in a `nominal' style --- contrast with a de Bruijn-style index (addressing a location by some kind of offset from a genesis block; cf. Subsection~\ref{subsect.formalisation}), or a blockchain hash.
\item
We have proved some high-level properties (some readers might call \emph{meta-theorems}).
Notably: a kind of Church-Rosser property in Theorem~\ref{thrm.practical}; and Theorem~\ref{thrm.adjoints}(\ref{adj.iso}) that every IEUTxO system admits a presentation as an IUTxO system; and that both can be presented as Abstract Chunk Systems, in senses made formal by a loop of embeddings in Remark~\ref{rmrk.loop.of.embeddings}.
These capture global properties of how the structures fit together, which are not immediately obvious just reading the definitions. 
\item
An open question is how this maths might help the blockchain community to simplify proofs, write better programs, more quickly and safely design new systems, or accommodate existing and new extensions (cf. next point).
The results in this paper, coupled with a toy (but perfectly real) blockchain implementation based on these ideas~\cite{gabbay:idealisedeutxo}, suggests that these things are plausible --- but time and future research will reveal more, and that is as it should be. 
\item
We knowingly threw out concrete structure that real blockchains need.
\emph{Real} blockchains have tokens, monetary policies, smart contracts, and more structure that is being invented literally on a daily basis. 
But instantiating algebraic structures is possible as discussed in Subsection~\ref{subsect.what.its.about} and indeed 
the \emph{raison d'\^etre} of algebra is just this: abstract, find instances, and extend.
\item
One path to applying this paper is essentially to take Proposition~\ref{prop.fresh.chunks.iff} and Remark~\ref{rmrk.separation.logic} seriously and use them to design domain-specific resource-aware logics for reasoning about algebras of chunks equipped with specific features of possibly industrial relevance.

These would be logics for verifying domain-specific elaborations of Theorem~\ref{thrm.practical}, and for tracing resources through the evolution of a blockchain.
The nominal method of tracking resources (e.g. as we have seen used to label inputs and outputs and define Abstract Chunk Systems) lends itself naturally to such applications.\footnote{Nominal techniques were originally designed to track resources (namely: variable symbols in inductive definitions with binding) and they have been elaborated considerably since e.g. through nominal rewriting, automata, and nominal algebra.  So there is a body of theory to draw on.}

Thus the semantics here, enriched with data structures of practical interest --- e.g. currencies, NFTs, or other `real' data coming from specific user needs --- could lead to logics that are both powerful and useful, helping to prove commutativity and other resource-based properties, using logics for algebras of chunks in the style of this paper.
\end{enumerate} 
In summary: this paper introduces the idea of \emph{blockchain algebra}, at least as applied to UTxO-style blockchains.
It is reasonable to hope that this might provide a convenient target semantics for new programs and logics for building and reasoning about blockchain systems.

\begin{acks} 
I thank the editors and three anonymous referees for their detailed and constructive feedback: thank you for your time and input.
Thanks also to Lars Br\"unjes, without whom this paper might not have been written.
I dedicate this paper to the memory of our colleague Martin Hofmann.  May he rest in peace. 
 \end{acks}


\newcommand{\etalchar}[1]{$^{#1}$}
\hyphenation{Mathe-ma-ti-sche}
\providecommand{\bysame}{\leavevmode\hbox to3em{\hrulefill}\thinspace}
\providecommand{\MR}{\relax\ifhmode\unskip\space\fi MR }
\providecommand{\MRhref}[2]{%
  \href{http://www.ams.org/mathscinet-getitem?mr=#1}{#2}
}
\providecommand{\href}[2]{#2}

\end{document}